\newtheorem{theorem}{Theorem}
\newtheorem{cor}{Corollary}
\newtheorem{lem}{Lemma}
\newcommand{\underb}{\underbrace}
\newcommand{\udl}{\underline}
\newcommand{\lf}{\left}
\newcommand{\ri}{\right}
\newcommand{\lv}{\lVert}
\newcommand{\rv}{\rVert}
\newcommand{\ol}{\overline}
\newcommand{\mrm}{\mathrm}
\newcommand{\cube}{\mathrm{cube}}
\newcommand{\eff}{\mathrm{eff}}
\newcommand{\rc}{\mathrm{rc}}
\newcommand{\tr}{\mathrm{tr}}
\newcommand{\ex}{\mathrm{ex}}
\newcommand{\beq}{\begin{equation}}
\newcommand{\eeq}{\end{equation}}
\newcommand{\SNR}{\mathsf{SNR}}
\newcommand{\CN}{\mathcal{CN}}
\newcommand{\RD}{\mathbf{D}}
\newcommand{\DT}{D\cdot T}
\newcommand{\bal}[2]{{\setlength\arraycolsep{2pt}\begin{eqnarray}\label{#1} #2 \end{eqnarray}}}
\newcommand{\baln}[2]{{\setlength\arraycolsep{2pt}\begin{eqnarray*}\label{#1} #2 \end{eqnarray*}}}
\newcommand{\bals}[2]{{\setlength\arraycolsep{1pt}\begin{eqnarray}\label{#1} #2 \end{eqnarray}}}
\newcommand{\beqn}[2]{\begin{equation*}\label{#1} {#2} \end{equation*}}
\newcommand{\singlecolumntype}{1}
\newcommand{\nn}{\nonumber}
\newcommand{\con}[2]{\ifthenelse{\equal{\singlecolumntype}{1}}{#1}{#2}}
\providecommand{\ICs}{IC's}
\begin{document}

%
% paper title
% can use linebreaks \\ within to get better formatting as desired
\title{On the Diversity-Multiplexing Tradeoff of Unconstrained Multiple-Access Channels}

% author names and affiliations
% use a multiple column layout for up to three different
% affiliations
\author{Yair Yona, and Meir Feder,~\IEEEmembership{Fellow,~IEEE}
\thanks{The material in
this paper was presented in part at the Information Theory and
Applications Workshop (ITA), 2014.}
\thanks{This research was supported by THE ISRAEL SCIENCE FOUNDATION
grant No. 634/09.}
\thanks{Y. Yona was with the Department of Electrical Engineering --
Systems, Tel-Aviv University, Ramat-Aviv 69978, Israel. He is now with the department of Electrical Engineering, University of California Los Angeles, Los Angeles, CA 90095, USA (email: yairyo99@ucla.edu).

M. Feder is with the Department of Electrical Engineering --
Systems, Tel-Aviv University, Ramat-Aviv 69978, Israel (e-mail:
meir@eng.tau.ac.il).}}

% make the title area
\maketitle

\begin{abstract}
In this work the optimal diversity-multiplexing tradeoff (DMT) is
investigated for the multiple-input multiple-output fading
multiple-access channel with no power constraints (infinite
constellations). For $K$ users ($K>1$), $M$ transmit antennas for
each user, and $N$ receive antennas, infinite constellations in
general and lattices in particular are shown to attain the optimal
DMT of finite constellations for $N\ge (K+1) M-1$, i.e., user limited
regime. On the other hand for $N<\lf(K+1\ri)M-1$ it is shown that
infinite constellations can not attain the optimal DMT. This is in
contrast to the point-to-point case in which infinite constellations
are DMT optimal for any $M$ and $N$. In general, this work shows
that when the network is heavily loaded, i.e.,
$K>\max\lf(1,\frac{N-M+1}{M}\ri)$, taking into account the shaping
region in the decoding process plays a crucial role in pursuing the
optimal DMT. By investigating the cases in which infinite
constellations are optimal and suboptimal, this work also gives a
geometrical interpretation to the DMT of infinite constellations in
multiple-access channels.
\end{abstract}

\IEEEpeerreviewmaketitle

\section{Introduction}
Employing multiple  antennas in a point-to-point wireless channel
increases the number of degrees of freedom available for
transmission. This is illustrated for the ergodic case in
\cite{TelatarCapacityMIMO},\cite{FoschiniCapacityMIMO}, where $M$
transmit and $N$ receive antennas increase the capacity by a factor
of $\min\lf(M,N\ri)$. The number of degrees of freedom utilized by
the transmission scheme is referred to as \emph{multiplexing gain}.
Another advantage of employing multiple antennas is the potential
increase in the transmitted signal reliability. The fact that
multiple antennas increase the number of independent links between
antenna pairs, enables the error probability to decrease, i.e., add
diversity. If for high signal to noise ratio ($\SNR$) the error
probability is proportional to $\SNR^{-d}$, then we state that the
\emph{diversity order} is $d$.

For the point-to-point setting, Zheng and Tse \cite{TseDivMult2003}
characterized the optimal diversity-multiplexing tradeoff (DMT) of
the quasi-static Rayleigh flat-fading channel, i.e., for each
multiplexing gain they found the best attainable diversity order.
The optimal DMT is a piecewise linear function connecting the points
$\lf(M-l\ri)\lf(N-l\ri)$, $l=0,\dots,\min\lf( M,N\ri)$. The
transmission scheme in \cite{TseDivMult2003} uses random codes.
Subsequent works presented more structured schemes that attain the
optimal DMT. El Gamal et al. \cite{ElGamalLAST2004} showed by using
probabilistic methods that lattice space-time (LAST) codes attain
the optimal DMT by using minimum-mean square error (MMSE) estimation
followed by lattice decoding. Later, explicit coding schemes based
on lattices and cyclic-division algebra \cite{EliaExplicitSTC},
\cite{BelfiorePerfectCodes} were shown to attain the optimal DMT by
using maximum-likelihood (ML) decoding, and also by using MMSE
estimation followed by lattice decoding
\cite{EliaJaldenDMTOptLRLinearLaticeDecoding}. A subtle but very
important point is that these coding schemes take into consideration
the finiteness of the codebook in the decoder. A question that
remained open was whether lattices can achieve the optimal DMT by
using \emph{regular} lattice decoding, i.e., decoder that takes into
account the infinite lattice without considering the shaping region
or the power constraint. In order to answer this question, the work
in \cite{YonaFederICOptimalDMT} presented an analysis of the
performance of infinite constellations (\ICs) in multiple-input
multiple-output (MIMO) fading channels. A new tradeoff was presented
between the \ICs\ average number of dimensions per channel use, i.e.,
the IC dimensionality divided by the number of channel uses, and the
best attainable DMT. By choosing the right average number of
dimensions per channel use, it was shown
\cite{YonaFederICOptimalDMT} that IC's in general and more
specifically lattices using regular lattice decoding, attain the
optimal DMT of finite constellations.

For the multiple-access channel, where a number of users transmit to
a single receiver, the number of users in the network affects the
multiplexing gain and the diversity order. For instance, for a
network with $K$ users transmitting at the same rate, the number of
available degrees of freedom for each user is
$\min\lf(M,\frac{N}{K}\ri)$. Tse, Viswanath and Zheng
\cite{ZhengTseMACDMT2004} characterized the optimal DMT of a network
with $K$ users, where each user has $M$ transmit antennas and the
receiver has $N$ antennas. For the symmetric case, in which the
users transmit at the same multiplexing gain $r$, i.e.,
$r_{1}=\dots=r_{K}=r$, the optimal DMT takes the following elegant
form \cite{ZhengTseMACDMT2004}:
\begin{itemize}
\item For $r\in\lf[0,\min\lf(\frac{N}{K+1},M\ri)\ri]$ the optimal symmetric
DMT equals to the optimal DMT of a point-to-point channel with $M$
transmit and $N$ receive antennas
$d^{\ast,\lf(FC\ri)}_{M,N}\lf(r\ri)$.
\item For $r\in\lf[\min\lf(\frac{N}{K+1},M\ri),\min\lf(M,\frac{N}{K}\ri)\ri]$ the optimal
symmetric DMT equals to the optimal DMT of a point-to-point channel
with all $K$ users pulled together $d^{\ast,\lf(FC\ri)}_{K\cdot
M,N}\lf(Kr\ri)$.
\end{itemize}
Similar to the development in the point-to-point case, random codes
were used in \cite{ZhengTseMACDMT2004}. Later Nam and El Gamal
\cite{NamElGamalOptimalDMT2007} showed that a random ensemble of
LAST codes attains the optimal DMT of the multiple-access channel
using MMSE estimation followed by lattice decoding over the lattice
induced by the $K$ users. An explicit coding scheme based on
lattices and cyclic division algebra that attains the optimal DMT
using ML decoding was presented in \cite{LUDMTOptimalCodesMAC2011}.

In this paper we study the optimal DMT of lattices using regular
lattice decoding, i.e., decoding without taking into consideration
the power constraint, for the MIMO Rayleigh fading multiple-access
channel. The result is rather surprising; unlike the point-to-point
case in which the tradeoff between dimensions and diversity enables
to attain the optimal DMT, we show that for the multiple-access
channel the optimal DMT is attained only for $N\ge \lf(K+1\ri)M-1$,
i.e., user limited regime. On the other hand when the network is
heavily loaded we show that IC's or lattices using regular lattice
decoding, can not attain the optimal DMT.

In the first part of this paper an upper bound on the optimal
\emph{symmetric} DMT IC's can achieve is derived. The upper bound is
attained by finding for each multiplexing gain $r$, the average
number of dimensions per channel use for each user, that maximizes
the diversity order. In the case $N<\lf(K+1\ri)M-1$ it is shown that
the optimal DMT of IC's does not coincide with the optimal DMT of
finite constellations. Moreover, for $N<\lf(K-1\ri)M+1$ it is shown
that the optimal DMT of IC's in the symmetric case is inferior
compared to the optimal DMT of finite constellations, for any value
of $r$ except for the edges $r=0$, $\frac{N}{K}$. On the other hand
for $N\ge (K+1)M-1$, by choosing the correct average number of
dimensions per channel use for each user, it is shown that the upper
bound on the optimal DMT of IC's coincides with the optimal DMT of
finite constellations
$d^{\ast,\lf(FC\ri)}_{M,N}\lf(\max\lf(r_{1},\dots,r_{K}\ri)\ri)$.

In the second part of this paper, a transmission scheme that attains
the optimal DMT for $N\ge (K+1)M-1$ is presented. Each user in this
scheme transmits according to the DMT optimal scheme for the
point-to-point channel, presented in \cite{YonaFederICOptimalDMT}.
By analyzing the receiver joint ML decoding performance, it is shown
that this transmission scheme attains the optimal DMT of finite
constellations. We wish to emphasize that the proposed transmission
scheme is more involved than simply using orthogonalization between
users, which in general is shown to be suboptimal for IC's. The
proposed transmission scheme requires $N+M-1$ channel uses to attain
the optimal DMT,  which is smaller than $N+KM-1$, the number of
channel uses required in \cite{ZhengTseMACDMT2004} (the dependence
in the number of users lies in the fact that $N\ge \lf(K+1\ri)M-1$).
Finally, the algebraic analysis of the transmission scheme
geometrically explains why for $N\ge (K+1)M-1$ the optimal DMT
equals to the optimal DMT of the point-to-point channel of each
user, i.e., why the optimal DMT equals
$d^{\ast,\lf(FC\ri)}_{M,N}\lf(\max\lf(r_{1},\dots,r_{K}\ri)\ri)$.

As a basic illustrative example for the results we consider the
following two cases. For the first case assume a network with two
users ($K=2$), where each user has a single transmit antenna
($M=1$), and a receiver with a single receive antenna ($N=1$). In
this case the optimal DMT of finite constellations in the symmetric
case \cite{ZhengTseMACDMT2004} equals $1-r$ for
$r\in\lf[0,\frac{1}{3}\ri]$, and $2-4r$ for
$r\in\lf[\frac{1}{3},\frac{1}{2}\ri]$. For IC's it is shown in this
setting that the optimal DMT for the symmetric case equals $1-2r$
for $r\in\lf[0,\frac{1}{2}\ri]$, which is strictly inferior except
for $r=0$, $\frac{1}{2}$. In the second case, by merely adding
another receive antenna, i.e., $M=1$, $N=K=2$, the optimal DMT of
IC's coincides with finite constellations optimal DMT
$d^{\ast,\lf(FC\ri)}_{1,2}\lf(\max\lf(r_{1},r_{2}\ri)\ri)$.

It is important to note that for $N<\lf(K+1\ri)M-1$ this paper shows
the sub-optimality of IC's compared to the optimal DMT of finite
constellations. However, in this case an explicit analytical
expression for the upper bound on the optimal DMT of IC's is given
only for the symmetric case, whereas for the general case the upper
bound is presented in the form of optimization problem. Indeed, for
$N<\lf(K+1\ri)M-1$ it still remains an open problem to find an
explicit expression for the general upper bound (the non-symmetric
case) on the optimal DMT of IC's, together with a transmission
scheme that achieves it. On the other hand, when $N\ge
\lf(K+1\ri)M-1$ this paper provides both analytical upper bound to
the optimal DMT of IC's, and also a transmission scheme that attains
it.

The outline of the paper is as follows. In section
\ref{sec:BasicDefinitions} basic definitions for the fading
multiple-access channel and IC's are given. Section
\ref{sec:LowerBoundErrorProb} presents an upper bound on the optimal
DMT of IC's, and shows the sub-optimality of IC's for
$N<\lf(K+1\ri)M-1$. Transmission scheme that attains the optimal DMT
of finite constellations for $N\ge \lf(K+1\ri)M-1$ is presented in
section \ref{sec:AttainingtheOptimalDMT}. Finally, in section
\ref{sec:Discussionsec} we discuss the results in this paper and
present for the multiple-access channel a geometrical interpretation
to the DMT of IC's.

\section{Basic Definitions}\label{sec:BasicDefinitions}
\subsection{Channel Model}
We consider a $K$-user multiple access channel for which each user
has $M$ transmit antennas, and the receiver has $N$ antennas. We
assume perfect knowledge of all channels at the receiver, and no
channel knowledge at the transmitters.  We also assume quasi static
flat-fading channel for each user. The channel model is as follows:
\begin{equation} \label{eq:Channel Fading}
\underline{y}_{t}=\sum_{i=1}^{K}H^{\lf(i\ri)}\cdot\udl{x}_{t}^{\lf(i\ri)}+\rho^{-\frac{1}{2}}\underline{n}_{t}\qquad
t=1,\dots, T
\end{equation}
where $\udl{x}_{t}^{\lf(i\ri)}$, $t=1,\dots,T$ is user $i$
transmitted signal, $\underline{n}_{t}\sim
\CN(\underline{0},\frac{2}{2\pi e}I_{N})$ is the additive noise for
which $\CN$ denotes complex-normal, $I_{N}$ is the $N$-dimensional
unit matrix, and $\udl{y}_{t}\in\mathbb{C}^{N}$. $H^{\lf(i\ri)}$ is
the fading matrix of user $i$. It consists of $N$ rows and $M$
columns, where $h^{\lf(i\ri)}_{l,j}\sim \CN(0,1)$, $1\le l \le N$,
$1\le j \le M$, are the entries of $H^{\lf(i\ri)}$. The scalar
$\rho^{-\frac{1}{2}}$ multiplies each element of
$\underline{n}_{t}$, where $\rho$ can be interpreted as the average
$\SNR$ of each user at the receive antennas for power constrained
constellations that satisfy $\frac{1}{T} \sum_{t=1}^{T}
E\{\lVert\udl{x}^{\lf(i\ri)}_{t}\rVert^{2}\}\le \frac{2}{2\pi e}$.

Next we wish to define an equivalent channel to \eqref{eq:Channel
Fading}. Let us define the extended transmission vector
\begin{equation}\label{eq:ConcatenatedofICsintheTransmitter}
\udl{x}=\lf(\udl{x}_{1}^{\lf(1\ri)\dagger},\dots,\udl{x}_{1}^{\lf(K\ri)\dagger},\dots,\udl{x}_{T}^{\lf(1\ri)\dagger},\dots,\udl{x}_{T}^{\lf(K\ri)\dagger}\ri)^{\dagger}
\end{equation}
i.e., first concatenate the users in each channel use, and then
concatenate the vectors between channel uses. Now we define
$H=\lf(H^{\lf(1\ri)},\dots,H^{\lf(K\ri)}\ri)$ which is an $N\times
KM$ matrix. By defining $H_{ex}$ as an $NT\times KMT$ block diagonal
matrix for which each block on the diagonal equals $H$,
$\underline{n}_{\mathrm{ex}}=\rho^{-\frac{1}{2}}\cdot
\lf(\underline{n}_{1}^{\dagger},
\dots,\underline{n}_{T}^{\dagger}\ri)^{\dagger}\in\mathbb{C}^{NT}$
and $\udl{y}_{\mathrm{ex}}\in\mathbb{C}^{NT}$, we can rewrite the
channel model in \eqref{eq:Channel Fading}
\begin{equation}\label{eq:ExtendedChannelModel}
\underline{y}_{\mathrm{ex}}=H_{\mrm{ex}}\cdot\underline{x}+\underline{n}_{\mathrm{ex}}.
\end{equation}

Let $L=\min\lf(N,KM\ri)$, and let $\sqrt{\lambda}_{i}$, $1\le i\le
L$ be the real valued, non-negative singular values of $H$. We
assume $\sqrt{\lambda}_{L}\ge \dots\ge\sqrt{\lambda}_{1}>0$. For
large values of $\rho$, we state that $f(\rho)\dot{\ge}g(\rho)$ when
$\lim_{\rho\to\infty}\frac{\ln \lf(f(\rho)\ri) }{\ln(\rho)}\ge
\frac{\ln \lf(g(\rho)\ri) }{\ln(\rho)}$, and also define
$\dot{\le}$, $\dot{=}$ in a similar manner by substituting $\ge$
with $\le$, $=$ respectively.

\subsection{Infinite Constellations}
Infinite constellation (IC) is a countable set
$S=\{s_{1},s_{2},\dots\}$ in $\mathbb{C}^{n}$. Let
$\cube_{l}(a)\subset\mathbb{C}^{n}$ be a (probably rotated)
$l$-complex dimensional cube ($l\le n$) with edge of length $a$
centered around zero. We define an IC $S_{l}$ to be $l$-complex
dimensional if there exists rotated $l$-complex dimensional cube
$\cube_{l}(a)$ such that $S_{l}\subset\lim_{a\to\infty}\cube_{l}(a)$
and $l$ is minimal. $M(S_{l},a)=|S_{l}\bigcap \cube_{l}(a)|$ is the
number of points of the IC $S_{l}$ inside $\cube_{l}(a)$. In
\cite{PoltirevJournal}, the $n$-complex dimensional IC density was
defined as
$$\gamma_{\mathrm{G}}=\limsup_{a\to\infty}\frac{M(S_{n},a)}{a^{2n}}$$
and the volume to noise ratio (VNR) for the additive white Gaussian
noise (AWGN) channel was given as
$$\mu_{\mathrm{G}}=\frac{\gamma_{\mathrm{G}}^{-\frac{1}{n}}}{2\pi
e\sigma^{2}}$$ where $\sigma^{2}$ is the noise variance of each
component.

We now turn to the IC definitions at the transmitters. We define the
average number of dimensions per channel use as the IC dimension
divided by the number of channel uses. Let us consider user $i$,
where $1\le i\le K$. We denote the average number of dimensions per
channel use by $D_{i}$. Let us consider a $D_{i}T$-complex
dimensional sequence of IC's - $S^{\lf(i\ri)}_{D_{i}T}(\rho)$, where
$D_{i}\le M$, $T$ is the number of channel uses, and
$\sum_{i=1}^{K}D_{i}\le L$. First we define
$\gamma_{tr}^{\lf(i\ri)}=\rho^{r_{i}T}$ as the density of
$S_{KT}^{\lf(i\ri)}(\rho)$ at transmitter $i$. Similarly to the
definitions in \cite{YonaFederICOptimalDMT} the multiplexing gain of
user's $i$ IC is defined as
\con{
\begin{equation}
r_{i}=\lim_{\rho\to\infty}\frac{1}{T}\log_{\rho}(\gamma_{\mathrm{tr}}^{\lf(i\ri)}+1)=
\lim_{\rho\to\infty}\frac{1}{T}\log_{\rho}(\rho^{r_{i}T}+1),\quad
0\le r_{i}\le D_{i}.
\end{equation}}
{\bal{}{
r_{i}=\lim_{\rho\to\infty}\frac{1}{T}\log_{\rho}(\gamma_{\mathrm{tr}}^{\lf(i\ri)}+1)&=&
\nn\\
\lim_{\rho\to\infty}\frac{1}{T}\log_{\rho}(\rho^{r_{i}T}&+&1),\quad
0\le r_{i}\le D_{i}.}}
The VNR at the transmitter of user $i$ is
\ifthenelse{\equal{\singlecolumntype}{1}}
{\begin{equation}
\mu_{\mathrm{tr}}^{^{\lf(i\ri)}}=\frac{{\gamma_{\mathrm{tr}}^{\lf(i\ri)}}^{-\frac{1}{D_{i}T}}}{2\pi
e\sigma^{2}}=\rho^{1-\frac{r_{i}}{D_{i}}}
\end{equation}}{\begin{equation}
\mu_{\mathrm{tr}}^{^{\lf(i\ri)}}=\frac{{\gamma_{\mathrm{tr}}^{\lf(i\ri)}}^{-\frac{1}{D_{i}T}}}{2\pi
e\sigma^{2}}=\rho^{1-\frac{r_{i}}{D_{i}}}
\end{equation}}
where $\sigma^{2}=\frac{\rho^{-1}}{2\pi e}$ is each component's
additive noise variance. Now let us concatenate the users IC's in
accordance with \eqref{eq:ConcatenatedofICsintheTransmitter}. We
denote $D=\sum_{i=1}^{K}D_{i}$. The concatenation yields an
equivalent $D T$-complex dimensional IC, $S_{\DT}\lf(\rho\ri)$, that
has multiplexing gain $\sum_{i=1}^{K}r_{i}$, density
$\gamma_{tr}=\rho^{\lf(\sum_{i=1}^{K}r_{i}\ri)T}$ and VNR
$\mu_{tr}=\rho^{1-\frac{\sum_{i=1}^{K}r_{i}}{D}}$. In this case we
get in \eqref{eq:ExtendedChannelModel} that the transmitted signal
$\udl{x}\in S_{D T}\lf(\rho\ri)\subset\mathbb{C}^{KMT}$.

At the receiver we first define the set $H_{ex}\cdot cube_{\DT}(a)$
as the multiplication of each point in $cube_{\DT}(a)$ with the
matrix $H_{ex}$. In a similar manner, the IC induced by the channel
at the receiver is $S_{\DT}^{'}=H_{ex}\cdot S_{\DT}$. The set
$H_{ex}\cdot cube_{\DT}(a)$ is almost surely $\DT$-complex
dimensional (where $D\le L$). In this case
\con{\beqn{}{M(S_{\DT},a)=|S_{\DT}\bigcap
\cube_{\DT}(a)|=|S_{\DT}^{'}\bigcap (H_{ex}\cdot\cube_{\DT}(a))|.}}{\baln{}{M(S_{\DT},a)=|S_{\DT}&\bigcap&
\cube_{\DT}(a)|
\nn\\
&=&|S_{\DT}^{'}\bigcap (H_{ex}\cdot\cube_{\DT}(a))|.}}
We define the receiver density as
$$\gamma_{\mathrm{rc}}=\limsup_{a\to\infty}\frac{M(S_{\DT},a)}{\mathbf{Vol}(H_{ex\cdot}\cube_{\DT}(a))}$$
i.e., the upper limit on the ratio of the number of IC points in
$H_{ex}\cdot\cube_{\DT}(a)$, and the volume of
$H_{ex}\cdot\cube_{\DT}(a)$. Note that for $N\ge KM$ and $D=KM$ we
get
$\gamma_{\mathrm{rc}}=\rho^{\sum_{i=1}^{K}r_{i}T}\cdot\prod_{i=1}^{KM}\lambda_{i}^{-T}$
and
$\mu_{\mathrm{rc}}=\rho^{1-\frac{\sum_{i=1}^{K}r_{i}}{KM}}\cdot\prod_{i=1}^{KM}\lambda_{i}^{\frac{1}{KM}}$.
The joint decoder average decoding error probability, over the
points of the effective IC $S_{\DT}(\rho)$, for a certain channel
realization $H$, is defined as
\begin{equation}\label{eq:AverageDecodingErrorProbability}
\overline{Pe}(H,\rho)=\limsup_{a\to\infty}\frac{\sum_{\underline{x}^{'}\in
S_{\DT}^{'}\bigcap
(H_{ex}\cdot\cube_{\DT}(a))}Pe(\underline{x}^{'},H,\rho)}{M(S_{\DT},a)}
\end{equation}
where $Pe(\underline{x}^{'},H,\rho)$ is the error probability
associated with $\udl{x}^{'}$. The average decoding error
probability of $S_{\DT}(\rho)$ over all channel realizations is
$\overline{Pe}(\rho)=E_{H}\{\overline{Pe}(H,\rho)\}$. The
\emph{diversity order} is defined as
\begin{equation}\label{eq:DiversityOrder}
d=-\lim_{\rho\to\infty}\log_{\rho}(\overline{Pe}(\rho)).
\end{equation}

In practice finite constellations are transmitted even when
performing regular lattice decoding at the receiver. Based on the
results in \cite{LoeligerAveragingBounds} it was shown in
\cite{YonaFederICOptimalDMT} that finite constellation with
multiplexing gain $r$ can be carved from a lattice with multiplexing
gain $r$, while maintaining the same performance when regular
lattice decoder is employed at the receiver. In our case it also
applies to each of the users, i.e., carving finite constellations
with multiplexing gains tuple $\lf(r_{1},\dots,r_{K}\ri)$ that
satisfy the power constraint, from lattices with multiplexing gains
tuple $\lf(r_{1},\dots,r_{K}\ri)$. At the receiver the performance
is maintained by performing regular lattice decoding on the
effective lattice.

\subsection{Additional Notations}
We further denote by $d^{\ast,\lf(FC\ri)}_{M,N}\lf(r\ri)$ the
optimal DMT of finite constellations, and by
$d^{\ast,D}_{M,N}\lf(r\ri)$ the upper bound on the optimal DMT of
any IC with average number of dimensions per channel use $D$, both
in a point to point channel with $M$ transmit and $N$ receive
antennas. For the multiple access channel with $K$ users, $M$
transmit antennas for each user, and $N$ receive antennas, we denote
by $d^{\ast,\lf(FC\ri)}_{K,M,N}\lf(r\ri)$ the optimal DMT of finite
constellations in the symmetric case, and by
$d^{\ast,\lf(IC\ri)}_{K,M,N}\lf(r\ri)$,
$d^{\ast,\lf(IC\ri)}_{K,M,N}\lf(r_{1},\dots,r_{K}\ri)$ the upper
bounds on the optimal DMT of the unconstrained multiple-access
channel for the symmetric case, and for multiplexing gains tuple
$\lf(r_{1},\dots,r_{K}\ri)$ respectively.

We denote $r_{max}=\max \lf(r_{1},\dots,r_{K}\ri)$, i.e., the maximal
multiplexing gain in the multiplexing gains tuple. In addition for
any $A\subseteq \lf\{1,\dots,K\ri\}$ we define $R_{A}=\sum_{a\in
A}r_{a}$ and $D_{A}=\sum_{a\in A}D_{a}$.

\section{Upper Bound on the Best Diversity-Multiplexing Tradeoff}\label{sec:LowerBoundErrorProb}
In this section we show that for $N<(K+1)M-1$ the DMT of the
unconstrained multiple-access channel is suboptimal compared to the
optimal DMT of finite constellations. On the other hand for $N\ge
(K+1)M-1$, we derive an upper bound on the optimal DMT that
coincides with the optimal DMT of finite constellations.

In subsection \ref{subsec:LowerBoundErrorProb_A} we lower bound the
error probability of any IC for the multiple-access channel, by
using lower bounds on the error probability of any IC in the
point-to-point channel. We use these lower bounds to formulate an
upper bound on the optimal DMT of IC's for the multiple-access
channel, in the form of an optimization problem. In subsection
\ref{subsec:LowerBoundErrorProb_B} we solve this optimization
problem for the symmetric case. We compare the optimal DMT of IC's
to the optimal DMT of finite constellations, and find the cases for
which IC's are suboptimal in subsection
\ref{subsec:LowerBoundErrorProb_C}. Finally in subsection
\ref{subsec:TheReasonsForSuboptimality} we give a convexity argument
that shows for the symmetric case that whenever the optimal DMT is
not a convex function IC's are suboptimal

\subsection{Upper Bound on the Diversity-Multiplexing-Tradeoff}\label{subsec:LowerBoundErrorProb_A}
We lower bound the error probability of the unconstrained
multiple-access channel in Lemma
\ref{lem:lowerboundMACErrorProbabaility}. Based on this lower bound
we present in Theorem \ref{Th:MDTFormulation} an upper bound on the
optimal DMT of IC's.

Assume user $i$ transmits over $D_{i}T$-complex dimensional IC, with
average number of dimensions per channel use $D_{i}$ and $T$ channel
uses. The following lemma lower bounds the average decoding error
probability of the $K$-users
$\overline{Pe}^{\lf(D_{1},\dots,D_{K},T\ri)}\lf(\rho,r_{1},\dots,r_{K}\ri)$,
where $\lf(D_{1},\dots,D_{K}\ri)$ is the tuple of average number of
dimensions per channel use, $T$ is the number of channel uses and
$\lf(r_{1},\dots,r_{K}\ri)$ is the tuple of multiplexing gains.
\begin{lem}\label{lem:lowerboundMACErrorProbabaility}
\con{
\beqn{}{
\overline{Pe}^{\lf(D_{1},\dots,D_{K},T\ri)}\lf(\rho,r_{1},\dots,r_{K}\ri)\ge \max_{ A\subseteq\lf\{1,\dots,K\ri\}}\lf(Pe^{\lf(D_{A},T\ri)}\lf(\rho,R_{A}\ri)\ri)}}
{
\baln{}{
\overline{Pe}^{\lf(D_{1},\dots,D_{K},T\ri)}&\lf(\rho,r_{1},\dots,r_{K}\ri)&\ge
\nn\\
&\max_{ A\subseteq\lf\{1,\dots,K\ri\}}&\lf(Pe^{\lf(D_{A},T\ri)}\lf(\rho,R_{A}\ri)\ri)}}
where $Pe^{\lf(D_{A},T\ri)}\lf(\rho,R_{A}\ri)$ is the lower bound
derived in \cite{YonaFederICOptimalDMT} for the error probability of
any IC with $T$ channel uses, $D_{A}=\sum_{a\in A}D_{a}$ average
number of dimensions per channel use, and multiplexing gain
$R_{A}=\sum_{a\in A}r_{a}$, in a point-to-point channel with
$|A|\cdot M$ transmit and $N$ receive antennas.
\end{lem}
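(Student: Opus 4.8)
The plan is to prove the lower bound by reducing the $K$-user multiple-access problem to a collection of point-to-point problems, one for each subset $A\subseteq\{1,\dots,K\}$, and then invoking the point-to-point lower bound $Pe^{(D_A,T)}(\rho,R_A)$ from \cite{YonaFederICOptimalDMT}. The key observation is that the genie-aided argument: for a fixed subset $A$, imagine a genie reveals to the receiver the exact transmitted signals of all users \emph{not} in $A$. Once those interfering signals are subtracted from $\udl{y}_{\mrm{ex}}$, what remains is precisely a point-to-point channel in which the users in $A$ are pulled together, with $|A|\cdot M$ transmit antennas, $N$ receive antennas, aggregate average number of dimensions per channel use $D_A=\sum_{a\in A}D_a$, and aggregate multiplexing gain $R_A=\sum_{a\in A}r_a$. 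Since providing side information can only help the decoder, the error probability of the genie-aided receiver lower bounds $\overline{Pe}^{(D_1,\dots,D_K,T)}(\rho,r_1,\dots,r_K)$.

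First I would make the genie reduction precise at the level of the effective IC. The concatenated transmit point $\udl{x}\in S_{\DT}(\rho)$ decomposes across users, and fixing the coordinates belonging to users outside $A$ amounts to restricting attention to a coset of the sub-IC generated by the users in $A$. Because $S_{\DT}(\rho)$ is built by concatenating the per-user IC's via \eqref{eq:ConcatenatedofICsintheTransmitter}, the residual constellation seen after genie subtraction is a $D_A T$-complex dimensional IC passed through the block-diagonal channel $H_{\mrm{ex}}$ restricted to the columns indexed by $A$ — i.e. exactly the effective IC of a point-to-point channel with fading matrix $\lf(H^{(a)}\ri)_{a\in A}$ of dimension $N\times |A|M$. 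I would then verify that the density and VNR of this residual IC match those required to invoke the point-to-point bound: the multiplexing gain is $R_A$ and the average number of dimensions per channel use is $D_A$, so the hypotheses of the lemma cited from \cite{YonaFederICOptimalDMT} are met.

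Having established that for each fixed $A$ the genie-aided error probability is at least $Pe^{(D_A,T)}(\rho,R_A)$, the final step is routine: the true (non-genie) error probability dominates each genie-aided quantity, hence it dominates their maximum over all $A\subseteq\{1,\dots,K\}$, which is exactly the claimed bound. Care is needed that the averaging in \eqref{eq:AverageDecodingErrorProbability} interacts correctly with the genie argument — the genie bound must hold for the \emph{average} error probability over the effective IC points and over channel realizations, not merely pointwise — but since conditioning on side information decreases the average error probability uniformly over cosets, the inequality survives the $\limsup$ and the expectation $E_H\{\cdot\}$.

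The main obstacle I expect is the bookkeeping in the genie reduction: one must argue carefully that subtracting the known interference leaves a genuine point-to-point effective IC whose density-to-noise parameters coincide with those assumed in the point-to-point lower bound of \cite{YonaFederICOptimalDMT}, rather than some distorted or non-lattice object. In particular, the block-diagonal structure of $H_{\mrm{ex}}$ and the concatenation order in \eqref{eq:ConcatenatedofICsintheTransmitter} must be used to confirm that the residual channel for the users in $A$ is again block-diagonal with blocks $\lf(H^{(a)}\ri)_{a\in A}$, so that the cited bound applies verbatim with the substitution $M\mapsto |A|M$. Once this structural identification is in place, the maximization over subsets and the monotonicity of error probability under side information close the argument immediately.
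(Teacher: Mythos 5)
Your proposal is correct and follows essentially the same route as the paper's proof: a genie reveals the signals of the users outside $A$, the receiver subtracts them to obtain an effective point-to-point channel with $|A|\cdot M$ transmit antennas carrying a $D_{A}T$-dimensional IC of multiplexing gain $R_{A}$, the point-to-point lower bound of \cite{YonaFederICOptimalDMT} is invoked, and the maximum over subsets is taken. Your added care about verifying that the residual constellation is a genuine IC with the right density parameters is a point the paper treats only implicitly, but it is the same argument.
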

\begin{proof}
 By considering the extended channel model
\eqref{eq:ExtendedChannelModel}, we get that the $K$ distributed
transmitters transmit an effective
$\lf(\sum_{i=1}^{K}D_{i}\ri)T$-complex dimensional IC, over $T$
channel uses, with multiplexing gain $\sum_{i=1}^{K}r_{i}$. The
error probability of this IC is lower bounded by the lower bound for
the error probability of any IC with average number of dimensions
per channel use $\sum_{i=1}^{K}D_{i}$, $T$ channel uses, and
multiplexing gain $\sum_{i=1}^{K}r_{i}$, in a point-to-point channel
with $KM$ transmit and $N$ receive antennas. Such a lower bound on
the error probability was derived in \cite{YonaFederICOptimalDMT}
for each channel realization (\cite{YonaFederICOptimalDMT} Theorem
1), and then for the average over all channel realizations when
$\rho$ is large (\cite{YonaFederICOptimalDMT} Theorem 2). Now
consider the set $A\subset\lf\{1,\dots,K\ri\}$. In case a genie
tells the receiver the transmitted messages of users
$\lf\{1,\dots,K\ri\}\setminus A$, the optimal receiver attains an
error probability that lower bounds the $K$-user optimal receiver
error probability. Without loss of optimality, the optimal receiver
can subtract them from the received signal, and get a new
$|A|$-users unconstrained multiple-access channel with average
number of dimensions per channel use $\lf\{D_{a}\ri\}_{a\in A}$, $T$
channel uses, and multiplexing gain $\sum_{a\in A}r_{a}$. In a
similar manner, the error probability of this $|A|$-users channel is
lower bounded by the lower bound on the error probability of any IC
with $\sum_{a\in A}D_{a}$ average number of dimensions per channel
use, $T$ channel uses, and multiplexing gain $\sum_{a\in A}r_{a}$,
derived in \cite{YonaFederICOptimalDMT}. Hence, the maximal lower
bound on the error probability for $A\subseteq \lf\{1,\dots,
K\ri\}$, also sets a lower bound for the error probability. This
concludes the proof.
\end{proof}

Next we wish to formulate an upper bound on the DMT of IC's in the
$K$-user unconstrained multiple-access channel. We derive this bound
based on the lower bound on the error probability presented in Lemma
\ref{lem:lowerboundMACErrorProbabaility}, and on an upper bound on
the DMT of IC's for the point-to-point channel, presented in
\cite{YonaFederICOptimalDMT}. Let us begin by presenting the upper
bound on the DMT for the point-to-point channel.

\begin{theorem}[\cite{YonaFederICOptimalDMT} Theorem
2]\label{prop:P2POptimalDMTUpperBoundDMT} For any sequence of IC's
$S_{\DT}\lf(\rho\ri)$ with $D$ average number of dimensions per
channel use, in a point-to-point channel with $M$ transmit and $N$
receive antennas, the DMT $d^{\DT}_{M,N}\lf(r\ri)$ is upper bounded
by
$$d^{\DT}_{M,N}\lf(r\ri)\le d^{\ast,D}_{M,N}\lf(r\ri)=\frac{M\cdot N}{D}\lf(D-r\ri)$$
for $0\le D\le \frac{M\cdot N}{N+M-1}$, and
$$d^{\DT}_{M,N}\lf(r\ri)\le d^{\ast,D}_{M,N}\lf(r\ri)=\frac{\lf(M-l\ri)\lf(N-l\ri)}{D-l}\cdot\lf(D-r\ri)$$
for $\frac{M\cdot N -\lf(l-1\ri)l}{N+M-1-2\lf(l-1\ri)}\le
D\le\frac{M\cdot N -l\lf(l+1\ri)}{N+M-1-2l}$, and
$l=1,\dots,\min\lf(M,N\ri)-1$. In all cases $0\le r\le D$.
\end{theorem}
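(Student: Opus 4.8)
The plan is to establish a converse (an upper bound on the achievable diversity), so the natural tool is a sphere-packing argument rather than a random-coding one. Fix a channel realization $H$ with squared singular values $\lambda_1\le\cdots\le\lambda_L$, $L=\min(M,N)$, and consider the received IC $S_{\DT}^{'}=H_{ex}\cdot S_{\DT}$. The first step is a Poltyrev-type lower bound on $\overline{Pe}(H,\rho)$: since the image $H_{ex}\cdot\cube_{\DT}(a)$ is a $\DT$-dimensional parallelepiped whose volume is controlled by the singular values of the restriction of $H_{ex}$ to the transmit subspace, the effective volume-to-noise ratio $\mu_{\rc}(H)$ is determined by $\{\lambda_i\}$ together with $r$ and $D$. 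A sphere-packing inequality then shows that whenever $\mu_{\rc}(H)\dot{\le}1$ the noise cannot be confined to the effective Voronoi region, so $\overline{Pe}(H,\rho)\dot{\ge}$ const. This bound must be argued for \emph{every} IC, not merely for lattices, which is where care is needed since no power constraint or lattice structure may be assumed.

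Second, I would pass to the large-$\rho$ exponents. Writing $\lambda_i\dot{=}\rho^{-\alpha_i}$ with $\alpha_1\ge\cdots\ge\alpha_L\ge 0$, the joint singular-value density of the Rayleigh ensemble contributes the Zheng--Tse exponent $\sum_{i}(2i-1+|N-M|)\alpha_i$, while the compression event $\mu_{\rc}(H)\dot{\le}1$ that forces errors becomes a collection of linear inequalities in the $\alpha_i$ encoding $r$ and $D$. Averaging $\overline{Pe}(\rho)=E_{H}\{\overline{Pe}(H,\rho)\}$ by Laplace's method then identifies the diversity order with the value of the program
\[
d^{\ast,D}_{M,N}(r)=\min_{\alpha}\ \sum_{i=1}^{L}(2i-1+|N-M|)\,\alpha_i
\]
taken over the admissible, ordered $\alpha$ lying in the error region.

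The third and main step is to solve this linear program. Because the weights $2i-1+|N-M|$ are increasing in $i$ while the ordering constraints $\alpha_i\ge\alpha_{i+1}$ couple the coordinates, the optimizer sits at a corner whose structure is governed by an integer $l$ counting how many singular-value directions are driven into deep outage at the optimum, with the remaining dimensions carrying the compression budget $D-r$. Evaluating the objective at this corner yields the linear-in-$r$ value $\frac{(M-l)(N-l)}{D-l}(D-r)$, and the thresholds at which the optimal $l$ increments are exactly the stated breakpoints $\frac{MN-(l-1)l}{N+M-1-2(l-1)}\le D\le\frac{MN-l(l+1)}{N+M-1-2l}$; the regime $0\le D\le\frac{MN}{N+M-1}$ corresponds to $l=0$.

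I expect the principal obstacles to be two. First, making the per-realization sphere-packing bound rigorous for an arbitrary IC: without a power constraint one must argue geometrically that the densest received configuration is unavoidable once $\mu_{\rc}(H)$ drops below threshold. Second, correctly identifying the error region as linear constraints in $\alpha$ and then carrying out the corner-point analysis of the resulting program, including verifying that the breakpoints in $D$ partition $[0,L]$ and that $l$ ranges over $0,\dots,\min(M,N)-1$ as claimed.
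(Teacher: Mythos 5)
This theorem is imported from \cite{YonaFederICOptimalDMT} and is not re-proved in the present paper; the cited work's argument is precisely the one you outline --- a per-realization sphere-packing (Poltyrev-type) lower bound on the error probability of any IC expressed through the receiver volume-to-noise ratio, followed by averaging over the channel via the Wishart singular-value exponents $\lambda_i\dot{=}\rho^{-\alpha_i}$ and a corner-point analysis of the resulting linear program, which produces the family of straight lines $\frac{(M-l)(N-l)}{D-l}(D-r)$ with the stated breakpoints in $D$. Your proposal therefore takes essentially the same route as the source, and the two obstacles you flag (rigor of the unconstrained sphere-packing step and the partition of $[0,L]$ by the $D$-thresholds) are exactly the points handled in \cite{YonaFederICOptimalDMT}.
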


Based on Lemma \ref{lem:lowerboundMACErrorProbabaility} and Theorem
\ref{prop:P2POptimalDMTUpperBoundDMT} we formulate the following
upper bound on the optimal DMT of the multiple-access channel.
\begin{theorem}\label{Th:MDTFormulation}
The optimal DMT of any sequence of IC's with multiplexing gains
tuple $\lf(r_{1},\dots,r_{K}\ri)$ is upper bounded by
\con{
\begin{equation*}
d^{\ast,\lf(IC\ri)}_{K,M,N} \lf(r_{1},\dots,r_{K}\ri)= \max_{ \lf
(D_{1},\dots,D_{K}\ri)\in \RD} \min_{ A\subseteq
\lf\{1,\dots,K\ri\}} \lf(d^{\ast,D_{A}}_{|A|\cdot
M,N}\lf(R_{A}\ri)\ri)
\end{equation*}
}
{\baln{}{d^{\ast,\lf(IC\ri)}_{K,M,N} &\lf(r_{1},\dots,r_{K}\ri)&=
\nn\\
&\max_{ \lf
(D_{1},\dots,D_{K}\ri)\in \RD}& \min_{ A\subseteq
\lf\{1,\dots,K\ri\}} \lf(d^{\ast,D_{A}}_{|A|\cdot
M,N}\lf(R_{A}\ri)\ri)
}
}
where $\RD= \lf\{D_{1},\dots,D_{K}\mid 0\le D_{i}\le
M,\sum_{i=1}^{K}D_{i}\le L\ri\}$.
\end{theorem}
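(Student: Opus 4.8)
The plan is to combine the two ingredients already in place: the error-probability lower bound of Lemma~\ref{lem:lowerboundMACErrorProbabaility}, which holds at every finite $\rho$, and the point-to-point diversity characterization of Theorem~\ref{prop:P2POptimalDMTUpperBoundDMT}, which pins down the $\rho$-exponent of each point-to-point lower bound. First I would fix an arbitrary sequence of IC's with a \emph{given} tuple of average dimensions per channel use $\lf(D_{1},\dots,D_{K}\ri)$, derive an upper bound on its diversity order, and only afterwards optimize over the admissible allocations; the outer maximization then appears naturally as the designer's freedom to choose how to split dimensions among users.

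For such a fixed allocation, Lemma~\ref{lem:lowerboundMACErrorProbabaility} gives $\overline{Pe}^{\lf(D_{1},\dots,D_{K},T\ri)}\lf(\rho,r_{1},\dots,r_{K}\ri)\ge \max_{A\subseteq\lf\{1,\dots,K\ri\}} Pe^{\lf(D_{A},T\ri)}\lf(\rho,R_{A}\ri)$ for every $\rho$. Taking $-\log_{\rho}$ reverses the inequality, and because the maximum runs over the finite family of subsets $A$, it converts the maximum of error probabilities into the minimum of their exponents; letting $\rho\to\infty$ the diversity order $d$ of the chosen IC thus satisfies $d\le \min_{A\subseteq\lf\{1,\dots,K\ri\}}\lf(-\lim_{\rho\to\infty}\log_{\rho}Pe^{\lf(D_{A},T\ri)}\lf(\rho,R_{A}\ri)\ri)$. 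It remains to evaluate each exponent. By construction $Pe^{\lf(D_{A},T\ri)}\lf(\rho,R_{A}\ri)$ is exactly the point-to-point lower bound of \cite{YonaFederICOptimalDMT} for an IC with $D_{A}$ dimensions per channel use and multiplexing gain $R_{A}$ in an $|A|M\times N$ channel, so Theorem~\ref{prop:P2POptimalDMTUpperBoundDMT} identifies its exponent with $d^{\ast,D_{A}}_{|A|\cdot M,N}\lf(R_{A}\ri)$. This yields $d\le \min_{A\subseteq\lf\{1,\dots,K\ri\}} d^{\ast,D_{A}}_{|A|\cdot M,N}\lf(R_{A}\ri)$ for every IC whose allocation is $\lf(D_{1},\dots,D_{K}\ri)$.

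Finally I would optimize over the designer's choices. Any admissible IC has $0\le D_{i}\le M$ (each user has only $M$ transmit antennas) and $\sum_{i=1}^{K}D_{i}\le L$ (otherwise the effective IC $S_{\DT}^{'}=H_{ex}\cdot S_{\DT}$ occupies fewer than $\DT$ dimensions and cannot attain positive diversity), together with the feasibility requirement $D_{i}\ge r_{i}$ inherited from $0\le R_{A}\le D_{A}$; these are precisely the constraints defining $\RD$. Since the bound of the previous paragraph depends on the IC only through its allocation, taking the supremum over all IC's with the prescribed multiplexing gains reduces to maximizing $\min_{A} d^{\ast,D_{A}}_{|A|\cdot M,N}\lf(R_{A}\ri)$ over $\lf(D_{1},\dots,D_{K}\ri)\in\RD$, and compactness of $\RD$ turns this supremum into a maximum. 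This produces the stated $\max$-$\min$ expression as an upper bound on the optimal DMT.

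The step I expect to require the most care is the asymptotic bookkeeping of the second paragraph: passing from the finite-$\rho$ inequality of Lemma~\ref{lem:lowerboundMACErrorProbabaility} to a diversity-order statement, one must justify interchanging the $\rho\to\infty$ limit with the finite maximum over subsets and confirm that each point-to-point lower bound genuinely decays as $\rho^{-d^{\ast,D_{A}}_{|A|\cdot M,N}(R_{A})}$ rather than merely being bounded by it. Conceptually, the one point deserving emphasis is the $\max$-$\min$ ordering: the inner minimum over $A$ captures the dominant, slowest-decaying error event for a fixed allocation, while the outer maximum captures the transmitter's optimal splitting of dimensions among users; that the splitting is chosen by the designer and the dominant subset by ``nature'' is what forces the order to be $\max$ of $\min$ and not the reverse.
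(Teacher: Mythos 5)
Your proposal is correct and follows essentially the same route as the paper: apply Lemma~\ref{lem:lowerboundMACErrorProbabaility} for a fixed dimension allocation, convert the maximum over subsets of the error-probability lower bounds into a minimum of diversity exponents via Theorem~\ref{prop:P2POptimalDMTUpperBoundDMT}, and then maximize over $\lf(D_{1},\dots,D_{K}\ri)\in\RD$. Your additional care about interchanging the $\rho\to\infty$ limit with the finite maximum over subsets, and your remark that one only needs the point-to-point lower bounds to decay no faster than $\rho^{-d^{\ast,D_{A}}_{|A|\cdot M,N}(R_{A})}$, are sound refinements of the paper's brief argument rather than departures from it.
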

\begin{proof}
Following Lemma \ref{lem:lowerboundMACErrorProbabaility} we get a
lower bound for the error probability of any sequence of effective
IC's $S_{\sum_{i=1}^{K}D_{i}T}\lf(\rho\ri)$, transmitted by the $K$
users. This lower bound can be translated to an upper bound on the
diversity order. In addition, this lower bound on the error
probability depends on lower bounds on the error probabilities for
the point-to-point channel. Hence, we can use the upper bound on the
DMT in the point-to-point channel, presented in Theorem
\ref{prop:P2POptimalDMTUpperBoundDMT}, to get the following upper
bound on the DMT of a tuple of average number of dimensions per
channel use $\lf (D_{1},\dots,D_{K}\ri)$
\begin{equation*}
\min_{ A\subseteq \lf\{1,\dots,K\ri\}} \lf(d^{\ast,D_{A}}_{|A|\cdot
M,N}\lf(R_{A}\ri)\ri).
\end{equation*}
Maximizing over $\lf (D_{1},\dots,D_{K}\ri)\in \RD$ yields the upper
bound on the optimal DMT.
\end{proof}

\subsection{Characterizing the Optimal Symmetric DMT}\label{subsec:LowerBoundErrorProb_B}
We wish to characterize an upper bound on the optimal DMT of IC's in
the symmetric case, i.e., $r_{1}=\dots=r_{K}=r$. Later we will use
this upper bound in order to show the sub-optimality of the
unconstrained multiple-access channel in the case
$N<\lf(K+1\ri)M-1$. In addition, we will show that the upper bound
coincides with the optimal DMT of finite constellations in the case
$N\ge\lf(K+1\ri)M-1$.

Lemmas \ref{lem:TheCasewhereSingleUserIstheWorst},
\ref{lem:TheCasewhereSingleUserIsNotNecessarilytheWorst},
\ref{lem:TheCasewhereDimensionSmallerThanFirstDMTLine},
\ref{lem:TheCasewhereSingleUserandKUserDMTUpperBoundCoincide}
present the relations between $d^{\ast,i\cdot D}_{i\cdot
M,N}\lf(i\cdot r\ri)$, $i=1,\dots, K$ for different values of $N$.
We use these lemmas in order to upper bound the optimal DMT in the
symmetric case in Theorem \ref{th:ICOptimalSymmetricDMTUpperNound}.

Based on Theorem \ref{Th:MDTFormulation} we can state that the
optimal DMT for the symmetric case for $K$ users is upper bounded by
\begin{equation}\label{eq:OptimalUpperBoundMACDMTSymmetric}
d^{\ast,\lf(IC\ri)}_{K,M,N}
\lf(r\ri)=\max_{\lf(D_{1},\dots,D_{K}\ri)\in \RD} \min_{ A\subseteq
\lf\{1,\dots,K\ri\}} \lf(d^{\ast,D_{A}}_{|A|\cdot M,N}\lf(|A|\cdot
r\ri)\ri)
\end{equation}
where $0\le r\le\frac{L}{K}$, i.e., we wish solve the aforementioned optimization problem for each
$0\le r\le\frac{L}{K}$. In order to solve this optimization problem
we first solve a simpler optimization problem for the case
$D_{1}=\dots=D_{K}=D$, i.e., each user transmits over $D$ average
number of dimensions per channel use. In this case the upper bound
in \eqref{eq:OptimalUpperBoundMACDMTSymmetric} takes a simpler form
\begin{equation}\label{eq:OptimalUpperBoundMACDMTSymmetricEqualDim}
\max_{D}\min_{1\le i\le K} \lf(d^{\ast,i\cdot
D}_{i\cdot M,N}\lf(i\cdot r\ri)\ri)
\end{equation}
where $0\le D\le \frac{L}{K}$. After solving this optimization
problem, we will show that choosing $D_{1}=\dots=D_{K}=D$ also
yields the optimal solution for
\eqref{eq:OptimalUpperBoundMACDMTSymmetric}.

In order to solve the optimization problem in
\eqref{eq:OptimalUpperBoundMACDMTSymmetricEqualDim}, we first need
to present some properties on the relations between $d^{\ast,i\cdot D}_{i\cdot
M,N}\lf(i\cdot r\ri)$, $1\le i\le K$. We begin by
presenting a property on the behavior of
$d^{\ast,D}_{M,N}\lf(\cdot\ri)$ as a function of $D$.
\begin{cor}[\cite{YonaFederICOptimalDMT} Corollary 1]\label{prop:ICP2pDMTAnchorPoints}
For $0\le D\le \frac{M\cdot N}{N+M-1}$ we have the following
equality
$$d^{\ast,D}_{M,N}\lf(0\ri)=MN,$$
whereas for $\frac{M\cdot N -\lf(l-1\ri)l}{N+M+1-2 \lf(l-1\ri)}\le
D\le\frac{M\cdot N -l\lf(l+1\ri)}{N+M-1-2l}$, and
$l=1,\dots,\min\lf(M,N\ri)-1$ we get
$$d^{\ast,D}_{M,N}\lf(l\ri)=(M-l)\cdot (N-l).$$
\end{cor}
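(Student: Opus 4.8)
The plan is to obtain both equalities by directly substituting the relevant value of $r$ into the appropriate branch of the piecewise expression for $d^{\ast,D}_{M,N}\lf(r\ri)$ supplied by Theorem \ref{prop:P2POptimalDMTUpperBoundDMT}; the only genuine verification is that the chosen value of $r$ lies in the admissible range $0\le r\le D$ over the stated interval of $D$.

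For the first claim I would take the branch valid for $0\le D\le \frac{MN}{N+M-1}$, namely $d^{\ast,D}_{M,N}\lf(r\ri)=\frac{MN}{D}\lf(D-r\ri)$, and evaluate it at $r=0$. This immediately yields $d^{\ast,D}_{M,N}\lf(0\ri)=\frac{MN}{D}\cdot D=MN$. The constraint $0\le r\le D$ is trivially met by $r=0$, so nothing further is required.

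For the second claim I would take the branch indexed by $l$, namely $d^{\ast,D}_{M,N}\lf(r\ri)=\frac{\lf(M-l\ri)\lf(N-l\ri)}{D-l}\lf(D-r\ri)$, valid on the interval of Theorem \ref{prop:P2POptimalDMTUpperBoundDMT}, and evaluate it at $r=l$. The factor $D-r$ becomes $D-l$, cancelling the denominator and leaving $d^{\ast,D}_{M,N}\lf(l\ri)=\lf(M-l\ri)\lf(N-l\ri)$. The one point requiring care is confirming that $r=l$ is legitimate, i.e. that $l\le D$ throughout the interval. Since the left endpoint is the smallest admissible $D$, it suffices to show $l\le \frac{MN-\lf(l-1\ri)l}{N+M-1-2\lf(l-1\ri)}$. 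Writing the denominator as $N+M+1-2l$ (which is positive because $l\le\min\lf(M,N\ri)-1$ forces $2l\le M+N-2$), and clearing it, this inequality is equivalent to $l\lf(N+M+1-2l\ri)\le MN-\lf(l-1\ri)l$, which after cancelling the common term $l$ and collecting powers of $l$ reduces to $\lf(M-l\ri)\lf(N-l\ri)\ge 0$.

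As $l$ ranges only over $1,\dots,\min\lf(M,N\ri)-1$, both factors $M-l$ and $N-l$ are strictly positive, so the inequality holds and $r=l$ indeed lies in the admissible range. Hence both statements follow purely by evaluation, and no step is a serious obstacle; the only nontrivial observation is that the range check for $r=l$ collapses to the manifestly true positivity of $\lf(M-l\ri)\lf(N-l\ri)$, which is precisely the quantity appearing on the right-hand side.
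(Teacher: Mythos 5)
Your proof is correct and takes the natural (essentially the only available) route: the paper imports this corollary from \cite{YonaFederICOptimalDMT} without reproving it, and verifying it amounts precisely to the substitution of $r=0$ and $r=l$ into the corresponding branches of Theorem \ref{prop:P2POptimalDMTUpperBoundDMT} that you carry out. Your range check that $l\le D$ throughout the $l$-th interval, reducing to $\lf(M-l\ri)\lf(N-l\ri)\ge 0$ (and in fact strict, so $D>l$ and the branch formula is well defined at $r=l$), is the one nontrivial point and you handle it correctly.
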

A simple interpretation of Corollary \ref{prop:ICP2pDMTAnchorPoints}
is that for $0\le D\le \frac{M\cdot N}{N+M-1}$ the straight lines
$d^{\ast,D}_{M,N}\lf(\cdot\ri)$ that represent the upper bounds on
the DMT, all have the same ``anchor'' point at multiplexing gain
$r=0$, i.e., they all have diversity order $MN$ at $r=0$, and each
line equals to zero at $r=D$. On the other hand, for $\frac{M\cdot N
-\lf(l-1\ri)\lf(l\ri)}{N+M+1-2l}\le D\le\frac{M\cdot N
-\lf(l\ri)\lf(l+1\ri)}{N+M-1-2l}$, and
$l=1,\dots,\min\lf(M,N\ri)-1$, the straight lines equal to
$\lf(M-l\ri)\lf(N-l\ri)$ for multiplexing gain $r=l$, and again each
line equals to zero for $r=D$. Figure \ref{fig:PropAnchorPoint}
illustrates this property for $M=N=2$.
\begin{figure}[h]
\centering
\con{\epsfig{figure=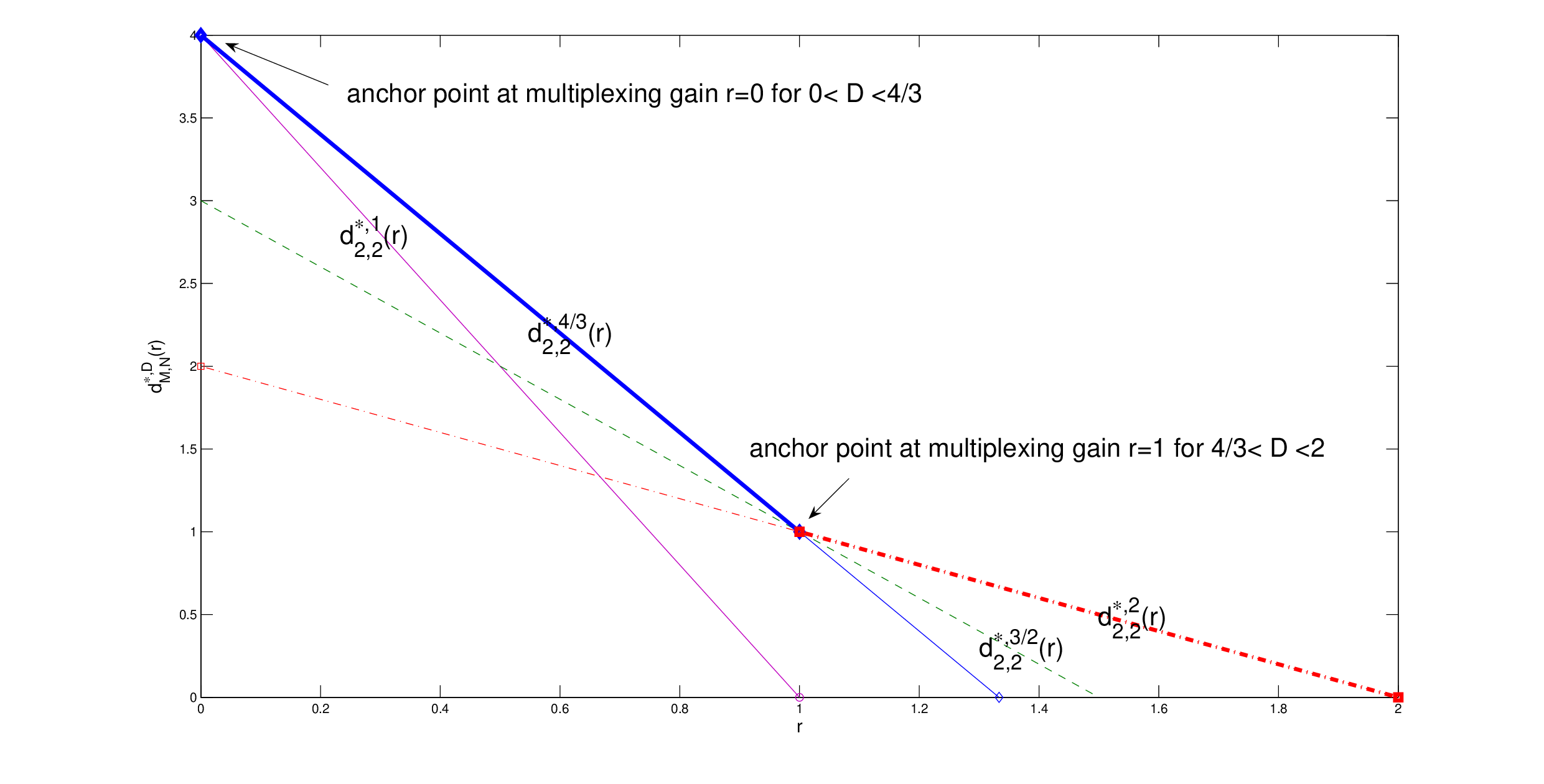,height=5.5cm}\caption{Upper
bound on the DMT for any IC of $D$ average number of
dimensions per channel use, in a point to point channel with $M=N=2$. Note that
$d^{\ast,1}_{2,2}\lf(r\ri)$ and
$d^{\ast,\frac{4}{3}}_{2,2}\lf(r\ri)$ are straight lines that equal to
$MN=4$ at multiplexing gain $r=0$, whereas
$d^{\ast,\frac{3}{2}}_{2,2}\lf(r\ri)$ and
$d^{\ast,2}_{2,2}\lf(r\ri)$ are straight lines that equal to
$\lf(M-1\ri)\lf(N-1\ri)=1$ at multiplexing gain $r=1$, in
accordance with Corollary \ref{prop:ICP2pDMTAnchorPoints}. In bold
is the optimal DMT of finite constellations.}
\label{fig:PropAnchorPoint}
}{
\epsfig{figure=1.eps,height=4.5cm}\caption{Upper
bound on the DMT for any IC of $D$ average number of
dimensions per channel use, in a point to point channel with $M=N=2$. Note that
$d^{\ast,1}_{2,2}\lf(r\ri)$ and
$d^{\ast,\frac{4}{3}}_{2,2}\lf(r\ri)$ are straight lines that equal to
$MN=4$ at multiplexing gain $r=0$, whereas
$d^{\ast,\frac{3}{2}}_{2,2}\lf(r\ri)$ and
$d^{\ast,2}_{2,2}\lf(r\ri)$ are straight lines that equal to
$\lf(M-1\ri)\lf(N-1\ri)=1$ at multiplexing gain $r=1$, in
accordance with Corollary \ref{prop:ICP2pDMTAnchorPoints}. In bold
is the optimal DMT of finite constellations.}
\label{fig:PropAnchorPoint}}
\end{figure}
The next corollary presents the relation between
$d^{\ast,D}_{M,N}\lf(l\ri)$ and
$d^{\ast,\lf(FC\ri)}_{M,N}\lf(r\ri)$.
\begin{cor}\label{prop:RelationBedDandFCd}
For any $0\le D\le \min \lf(M,N\ri)$ we have the following
inequality
\begin{equation*}
d^{\ast,D}_{M,N}\lf(r\ri)\le d^{\ast,\lf(FC\ri)}_{M,N}\lf(r\ri)
\end{equation*}
for $0\le r\le D$. Furthermore, when $l\le r\le l+1$ and
$l=0,\dots,\min \lf(M,N\ri)-1$
\begin{equation*}
d^{\ast,\lf(FC\ri)}_{M,N}\lf(r\ri) = NM-l\cdot
\lf(l+1\ri)-\lf(N+M-1-2\cdot l\ri)r.
\end{equation*}
\end{cor}
\begin{proof}
The proof follows from \cite[Corollary 2]{YonaFederICOptimalDMT}
stating that for any $l=0,\dots,\min \lf(M,N\ri)-1$ and $l\le r\le
l+1$
\begin{equation*}
\max_{D}d^{\ast,D}_{M,N}\lf(r\ri)\le
d^{\ast,D_{l}}_{M,N}\lf(r\ri)=d^{\ast,\lf(FC\ri)}_{M,N}\lf(r\ri)
\end{equation*}
where $D^{\ast}_{l}=\frac{N\cdot M-l\cdot \lf(l+1\ri)}{N+M-1-2l}$.
Therefore, for any $0\le D\le \lf(M,N\ri)-1$ we get
\begin{equation*}
d^{\ast,D}_{M,N}\lf(r\ri)\le d^{\ast,\lf(FC\ri)}_{M,N}\lf(r\ri).
\end{equation*}
for $0\le r\le D$.

The explicit expression for $d^{\ast,\lf(FC\ri)}_{M,N}\lf(r\ri)$ is
obtained by the straight lines that connect the points
$\lf(l,\lf(N-l\ri)\cdot \lf(M-l\ri)\ri)$ and
$\lf(l+1,\lf(N-l-1\ri)\cdot \lf(M-l-1\ri)\ri)$, for $l=0,\dots,\min
\lf(M,N\ri)-1$.
\end{proof}

Another property relates to the optimal DMT of finite constellations
for the multiple-access channel in the symmetric case.
\begin{theorem}[\cite{ZhengTseMACDMT2004} Theorem
3]\label{prop:FCOptDMT} The optimal DMT of finite constellations in
the symmetric case equals
\con{
\begin{equation*}
d^{\ast,\lf(FC\ri)}_{K,M,N}\lf(r\ri)=\lf\{\begin{array}{lr}
d^{\ast,\lf(FC\ri)}_{M,N}\lf(r\ri) &0\le r\le\min \lf(\frac{N}{K+1},M\ri)\\
d^{\ast,\lf(FC\ri)}_{KM,N}\lf(K\cdot r\ri) &\min
\lf(\frac{N}{K+1},M\ri)\le r\le\min \lf(\frac{N}{K},M\ri)
\end{array} \ri.
\end{equation*}}
{\baln{}{
&d^{\ast,\lf(FC\ri)}_{K,M,N}\lf(r\ri)=&
\nn\\
&\lf\{\begin{array}{lr}
d^{\ast,\lf(FC\ri)}_{M,N}\lf(r\ri) &0\le r\le\min \lf(\frac{N}{K+1},M\ri)\\
d^{\ast,\lf(FC\ri)}_{KM,N}\lf(K\cdot r\ri) &\min
\lf(\frac{N}{K+1},M\ri)\le r\le\min \lf(\frac{N}{K},M\ri)
\end{array} \ri.&
}}
\end{theorem}

In order to solve the optimization problem in
\eqref{eq:OptimalUpperBoundMACDMTSymmetricEqualDim} we present
several lemmas related to the inequalities between $d^{\ast,i\cdot
D}_{i\cdot M,N}\lf(i\cdot r\ri)$ for $1\le i\le K$. The proofs of
these lemmas rely mainly on Corollary
\ref{prop:ICP2pDMTAnchorPoints}, Corollary
\ref{prop:RelationBedDandFCd} and Theorem \ref{prop:FCOptDMT}.
\begin{lem}\label{lem:TheCasewhereSingleUserIstheWorst}
For $N\ge \lf(K+1\ri)M-1$ we get
$$d^{\ast,D}_{M,N}\lf(r\ri)\le d^{\ast,i\cdot D}_{i\cdot
M,N}\lf(i\cdot r\ri)\quad 2\le i\le K$$ for any $0\le r\le D$ and
$0\le D\le M$.
\end{lem}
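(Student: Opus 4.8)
The plan is to exploit the piecewise-linear structure of the point-to-point bound in Theorem \ref{prop:P2POptimalDMTUpperBoundDMT} to reduce the claimed inequality to a single comparison of intercepts at $r=0$, and then to use the hypothesis $N\ge\lf(K+1\ri)M-1$ at the extreme point $D=M$.

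First, fix $i\in\lf\{2,\dots,K\ri\}$ and $D\in[0,M]$. Since $N\ge\lf(K+1\ri)M-1\ge\lf(i+1\ri)M-1$, we get $N\ge iM\ge M$, so both the $M\times N$ and the $iM\times N$ channels are ``tall'' and $iD\le iM=\min\lf(iM,N\ri)$ is an admissible dimension. By Theorem \ref{prop:P2POptimalDMTUpperBoundDMT}, once $D$ is fixed it lies in a single dimension-interval (index $l_{1}$), so $r\mapsto d^{\ast,D}_{M,N}\lf(r\ri)$ is affine on $[0,D]$; likewise fixing $iD$ fixes a single interval (index $l_{2}$), so $r\mapsto d^{\ast,iD}_{iM,N}\lf(ir\ri)$ is affine on $[0,D]$. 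Both affine functions vanish at the common right endpoint $r=D$ (for the pooled one because $ir=iD$ is exactly where its line meets zero). Since two affine functions through the common point $(D,0)$ are ordered on $[0,D]$ according to their ordering at $r=0$, the claim on all of $[0,D]$ is equivalent to the single intercept inequality
\begin{equation*}
d^{\ast,D}_{M,N}\lf(0\ri)\le d^{\ast,iD}_{iM,N}\lf(0\ri).
\end{equation*}

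Second, I would evaluate these intercepts from the linear formula of Theorem \ref{prop:P2POptimalDMTUpperBoundDMT} (equivalently from the anchor-point description of Corollary \ref{prop:ICP2pDMTAnchorPoints}), giving $d^{\ast,D}_{M,N}\lf(0\ri)=\lf(M-l_{1}\ri)\lf(N-l_{1}\ri)\frac{D}{D-l_{1}}$ and the analogous pooled expression. The role of the hypothesis is clearest at $D=M$: there $D$ sits at the right end of the last interval $l_{1}=M-1$ and $iD=iM$ at the right end of $l_{2}=iM-1$, so the intercepts collapse to $M\lf(N-M+1\ri)$ and $iM\lf(N-iM+1\ri)$, and the required inequality reduces, after dividing by $M$, to $\lf(i-1\ri)\lf[N-\lf(i+1\ri)M+1\ri]\ge0$, i.e. exactly $N\ge\lf(i+1\ri)M-1$, which holds since $i\le K$. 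This identifies the extreme point as binding and pins down why $N\ge\lf(K+1\ri)M-1$ is the correct threshold.

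Third, for intermediate $D$ I would compare the two intercepts across the whole range. A useful identity is that pairing the single-user index $l$ with the pooled index $il$ yields $\lf(iM-il\ri)\lf(N-il\ri)\frac{iD}{iD-il}=i\,\frac{N-il}{N-l}\cdot\lf(M-l\ri)\lf(N-l\ri)\frac{D}{D-l}$, whose prefactor $i\frac{N-il}{N-l}$ is $\ge1$ precisely when $N\ge\lf(i+1\ri)l$; since $l\le M-1$ and $N\ge\lf(i+1\ri)M-1\ge\lf(i+1\ri)\lf(M-1\ri)$, this holds for every relevant $l$. Combining this with the representation of each intercept as the minimum over the admissible anchor lines lets one dominate the single-user intercept by a pooled candidate line. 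The main obstacle I expect is exactly this intermediate-$D$ step: the intercept is a rational (not affine) function of the dimension within each interval, and the interval partition of $[0,M]$ for the $M\times N$ channel does not align, under the scaling $D\mapsto iD$, with the partition for the $iM\times N$ channel, so the active indices satisfy $l_{2}\ne il_{1}$ in general. Controlling the comparison therefore needs either a piecewise analysis over the common refinement of the two partitions, showing the intercept gap is minimized at $D=M$, or the min-over-anchor-lines argument above; the algebra that makes the threshold $N\ge\lf(i+1\ri)M-1$ emerge is routine once this index bookkeeping is in place.
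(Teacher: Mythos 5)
Your reduction is sound: for fixed $D$ both $d^{\ast,D}_{M,N}\lf(r\ri)$ and $d^{\ast,iD}_{iM,N}\lf(ir\ri)$ are affine in $r$ and vanish at $r=D$, so it suffices to compare them at a single point of $[0,D)$, and your endpoint check at $D=M$ (where the inequality collapses to $N\ge\lf(i+1\ri)M-1$) is correct. But the step you defer as ``routine index bookkeeping'' is the actual content of the lemma, and the patch you sketch does not close it. For intermediate $D$ the active indices $l_{1}$ (for $M,N,D$) and $l_{2}$ (for $iM,N,iD$) satisfy $l_{2}\neq il_{1}$ in general, and your paired-index identity only tells you that the pooled \emph{candidate} line with index $il_{1}$ has intercept at least $i\frac{N-il_{1}}{N-l_{1}}$ times the single-user intercept. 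If the pooled intercept is the minimum over its candidate lines, this bounds the pooled intercept from \emph{above} by a quantity exceeding the single-user intercept, which proves nothing; running the argument the other way requires evaluating a single-user candidate at the anchor index $l_{2}/i$, which is not an integer and hence not an admissible anchor. So the comparison for general $D$ is genuinely open in your write-up.

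The paper closes exactly this gap by comparing at the pooled line's own anchor point $r=l_{2}/i$ rather than at $r=0$, and by routing through the finite-constellation curves: by Corollary \ref{prop:ICP2pDMTAnchorPoints} the pooled line satisfies $d^{\ast,iD}_{iM,N}\lf(l_{2}\ri)=d^{\ast,\lf(FC\ri)}_{iM,N}\lf(l_{2}\ri)$; by Theorem \ref{prop:FCOptDMT} (the Tse et al.\ MAC DMT, which is where $N\ge\lf(K+1\ri)M-1$ enters for \emph{all} $D$, not just $D=M$) one has $d^{\ast,\lf(FC\ri)}_{M,N}\lf(r\ri)\le d^{\ast,\lf(FC\ri)}_{iM,N}\lf(ir\ri)$; and by the cited Corollary 2 of \cite{YonaFederICOptimalDMT} one has $d^{\ast,D}_{M,N}\lf(r\ri)\le d^{\ast,\lf(FC\ri)}_{M,N}\lf(r\ri)$. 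Chaining these at $r=l_{2}/i$ gives the needed ordering at one point, and the piecewise-linear FC curves absorb the non-integrality of $l_{2}/i$. You should either adopt this route or supply a genuine proof of the intercept inequality over the common refinement of the two interval partitions; as it stands, the central case of the lemma is unproven.
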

\begin{proof}
The proof is in appendix
\ref{append:TheCasewhereSingleUserIstheWorst}.
\end{proof}
An example for Lemma \ref{lem:TheCasewhereSingleUserIstheWorst} for
$M=K=2$ and $N=4$ is illustrated in Figure
\ref{fig:TheCasewhereSingleUserIstheWorst}.

\begin{figure}[h]
\centering
\con{\epsfig{figure=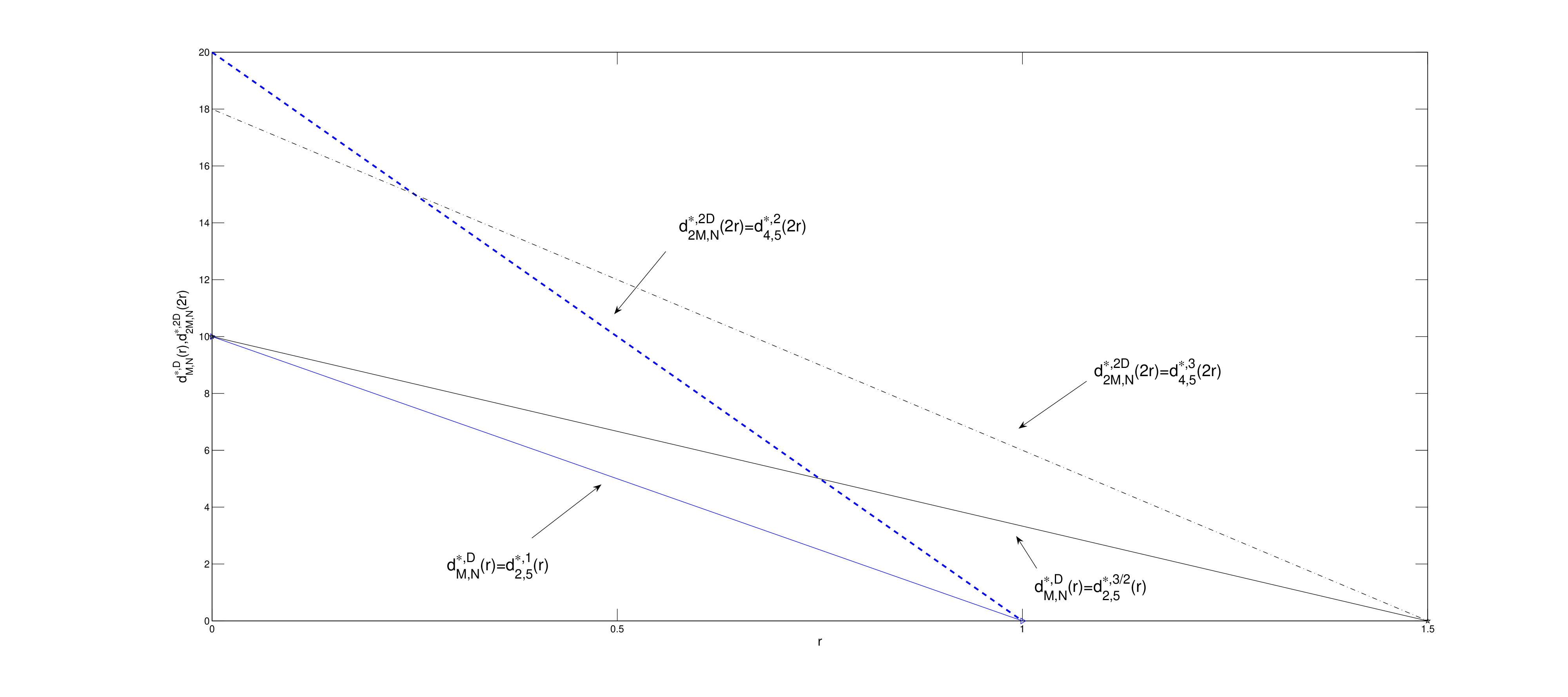,height=5.5cm}\caption{Illustration
of Lemma \ref{lem:TheCasewhereSingleUserIstheWorst} for the case
$M=K=2$ and $N=5$. We compare the straight lines
$d^{\ast,D}_{M,N}\lf(r\ri)$ and $d^{\ast,2D}_{2M,N}\lf(2r\ri)$ for
$D=1$ and $D=\frac{3}{2}$. It can be seen that for this setting
$d^{\ast,2D}_{2M,N}\lf(2r\ri)>d^{\ast,D}_{M,N}\lf(r\ri)$.}\label{fig:TheCasewhereSingleUserIstheWorst}}{\epsfig{figure=2.eps,height=4cm}\caption{Illustration
of Lemma \ref{lem:TheCasewhereSingleUserIstheWorst} for the case
$M=K=2$ and $N=5$. We compare the straight lines
$d^{\ast,D}_{M,N}\lf(r\ri)$ and $d^{\ast,2D}_{2M,N}\lf(2r\ri)$ for
$D=1$ and $D=\frac{3}{2}$. It can be seen that for this setting
$d^{\ast,2D}_{2M,N}\lf(2r\ri)>d^{\ast,D}_{M,N}\lf(r\ri)$.}\label{fig:TheCasewhereSingleUserIstheWorst}}
\end{figure}

\begin{lem}\label{lem:TheCasewhereSingleUserIsNotNecessarilytheWorst}
For $N <  \lf(K+1\ri)M-1$ we get
$$d^{\ast,D}_{M,N}\lf(r\ri)\le d^{\ast,i\cdot D}_{i\cdot
M,N}\lf(i\cdot r\ri)\quad 2\le i\le K-1$$ for any $0\le D\le
\frac{L}{K}$ and $0\le r\le D$.
\end{lem}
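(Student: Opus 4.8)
The plan is to reduce the pointwise inequality to a comparison of the two upper-bound lines at a single multiplexing value, and then to settle that comparison by a case analysis over the linear pieces of Theorem~\ref{prop:P2POptimalDMTUpperBoundDMT}.

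First I would note that for a fixed $D$ the value $D$ lies in exactly one of the $D$-ranges of Theorem~\ref{prop:P2POptimalDMTUpperBoundDMT}, so that $r\mapsto d^{\ast,D}_{M,N}\lf(r\ri)$ is a single affine function on $\lf[0,D\ri]$ of the form $(\mathrm{const})\cdot\lf(D-r\ri)$ that vanishes at $r=D$. Likewise, for $2\le i\le K-1$ the pooled dimension $i\cdot D$ satisfies $i\cdot D\le\min\lf(i\cdot M,N\ri)$ (this uses $D\le\frac{L}{K}$ together with $i\le K-1$), hence lies in a single range of the $\lf(i\cdot M,N\ri)$ channel; thus $x\mapsto d^{\ast,i\cdot D}_{i\cdot M,N}\lf(x\ri)$ is affine of the form $(\mathrm{const})\cdot\lf(i\cdot D-x\ri)$, and substituting $x=i\cdot r$ shows that $r\mapsto d^{\ast,i\cdot D}_{i\cdot M,N}\lf(i\cdot r\ri)$ is affine on $\lf[0,D\ri]$ and also vanishes at $r=D$. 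Two nonnegative affine functions on $\lf[0,D\ri]$ sharing the common root $r=D$ are ordered throughout the interval exactly as their coefficients of $\lf(D-r\ri)$ are ordered. The lemma is therefore equivalent to the single inequality of these coefficients, which after writing $l$ for the range index of $D$ in $\lf(M,N\ri)$ and $l'$ for that of $i\cdot D$ in $\lf(i\cdot M,N\ri)$ (index $0$ denoting the first range) reads
$$\frac{\lf(M-l\ri)\lf(N-l\ri)}{D-l}\le\frac{i\lf(i\cdot M-l'\ri)\lf(N-l'\ri)}{i\cdot D-l'}.$$

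I would then verify this inequality range-by-range. The case $l=l'=0$ is immediate, since it reduces to $\frac{MN}{D}\le\frac{i\cdot MN}{D}$. The substantive cases are those in which the shrinking first-range threshold $\frac{MN}{N+i\cdot M-1}$ has pushed $i\cdot D$ into a higher range ($l'\ge 1$) while $D$ remains low; here I would feed the defining endpoints of the $l'$-th range for $i\cdot D$ into the right-hand side and use $D\le\frac{L}{K}$ to bound the left-hand side, closing the inequality. Because both sides are continuous in $D$ across range boundaries (the adjacent linear pieces agree at the breakpoints, consistent with Corollary~\ref{prop:ICP2pDMTAnchorPoints}), it suffices to check the inequality on each piece, reducing the problem to finitely many rational inequalities in $M,N,i,D$.

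The main obstacle I anticipate is the bookkeeping of the two range indices $l$ and $l'$ and, above all, pinning down exactly where the hypotheses $D\le\frac{L}{K}$ and $i\le K-1$ are used. The exclusion of $i=K$ is not cosmetic: the pooled channel $\lf(K\cdot M,N\ri)$ is precisely the one governed by the condition $N\ge\lf(K+1\ri)M-1$ in Lemma~\ref{lem:TheCasewhereSingleUserIstheWorst}, and when $N<\lf(K+1\ri)M-1$ the $K$-user coefficient can fall below the single-user coefficient, so I expect the case analysis to close for every $i\le K-1$ but to break down exactly at $i=K$. I would therefore organize the analysis so that the step which fails for $i=K$ is isolated and seen to survive precisely under $i\le K-1$, which is the delicate point of the argument.
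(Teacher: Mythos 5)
Your reduction is sound as far as it goes: both $d^{\ast,D}_{M,N}\lf(r\ri)$ and $d^{\ast,i\cdot D}_{i\cdot M,N}\lf(i\cdot r\ri)$ are affine in $r$ on $\lf[0,D\ri]$ with a common root at $r=D$, so the pointwise inequality is equivalent to an ordering of the two coefficients of $\lf(D-r\ri)$, equivalently to checking the inequality at any single $r<D$. This is the same structural observation the paper exploits. The gap is in what comes next: the coefficient inequality
$\frac{\lf(M-l\ri)\lf(N-l\ri)}{D-l}\le\frac{i\lf(i M-l'\ri)\lf(N-l'\ri)}{i D-l'}$
is only \emph{verified} in the trivial case $l=l'=0$, and for the ``substantive cases'' you offer a plan (``feed the defining endpoints of the $l'$-th range into the right-hand side\ldots closing the inequality'') rather than an argument. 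Those cases are not finitely many rational inequalities to check; they form a family parametrized by $M$, $N$, $i$, $l$, $l'$ and $D$, and proving them for all admissible parameters \emph{is} the content of the lemma. You also correctly sense that the hypothesis $i\le K-1$ must be the load-bearing one, but you never exhibit the step where it enters, so the proof cannot be said to close.

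The paper sidesteps the direct computation by routing the comparison through the finite-constellation DMT. Three facts are combined: (i) $d^{\ast,D}_{M,N}\lf(r\ri)\le d^{\ast,\lf(FC\ri)}_{M,N}\lf(r\ri)$ for every admissible $D$ (Corollary 2 of \cite{YonaFederICOptimalDMT}); (ii) by Corollary \ref{prop:ICP2pDMTAnchorPoints}, the pooled line $d^{\ast,i\cdot D}_{i\cdot M,N}$ \emph{touches} $d^{\ast,\lf(FC\ri)}_{i\cdot M,N}$ at the anchor multiplexing gain $l'$ associated with $i\cdot D$; and (iii) $d^{\ast,\lf(FC\ri)}_{M,N}\lf(r\ri)\le d^{\ast,\lf(FC\ri)}_{i\cdot M,N}\lf(i\cdot r\ri)$ for all $0\le r\le\frac{L}{K}$, which follows from Theorem \ref{prop:FCOptDMT} because $\frac{N}{i+1}\ge\frac{L}{K}$ holds precisely for $i\le K-1$ --- this is where the exclusion of $i=K$ is used, and it is a one-line verification rather than a case analysis. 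Chaining (i)--(iii) at the anchor point gives $d^{\ast,D}_{M,N}\lf(\frac{l'}{i}\ri)\le d^{\ast,i\cdot D}_{i\cdot M,N}\lf(l'\ri)$, and your own ``common root at $r=D$'' observation then propagates the ordering to all of $\lf[0,D\ri]$. If you want to salvage your direct route, you would need to actually prove the coefficient inequality for general $l,l'$ under the range constraints of Theorem \ref{prop:P2POptimalDMTUpperBoundDMT} together with $D\le\frac{L}{K}$ and $i\le K-1$; as written, that step is missing.
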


\begin{proof}
The proof is in appendix
\ref{append:TheCasewhereSingleUserIsNotNecessarilytheWorst}
\end{proof}

From Lemmas \ref{lem:TheCasewhereSingleUserIstheWorst},
\ref{lem:TheCasewhereSingleUserIsNotNecessarilytheWorst} we can see
that the optimization problem in
\eqref{eq:OptimalUpperBoundMACDMTSymmetricEqualDim} involves only
$d^{\ast,D}_{M,N}\lf(r\ri)$ and $d^{\ast,K\cdot D}_{K\cdot
M,N}\lf(K\cdot r\ri)$. We now prove two more properties that will
enable us to find the optimal DMT of IC's in the symmetric case.

\begin{lem}\label{lem:TheCasewhereDimensionSmallerThanFirstDMTLine}
For $N< \lf(K-1\ri)M +1$ we get
$$\max_{0\le D\le\frac{L}{K}}\min_{1\le i\le K}d^{\ast,i\cdot D}_{i\cdot M,N} \lf(i\cdot r\ri)=d^{\ast,\frac{N}{K}}_{M,N}\lf(r\ri)= M\cdot N-M\cdot K\cdot r$$
where $0\le r\le \frac{N}{K}$.
\end{lem}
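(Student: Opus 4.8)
The plan is to reduce the inner minimization to the two extreme terms $i=1$ and $i=K$, show the single-user term is always the binding one, and then maximize it over $D$. First, since $N<\lf(K-1\ri)M+1\le KM$ we have $L=N$, so the feasible range is $0\le D\le\frac{N}{K}$. The hypothesis $N<\lf(K-1\ri)M+1$ is equivalent to $\frac{N}{K}\le\frac{MN}{N+M-1}$, so every $D$ in this range lies in the first regime of Theorem \ref{prop:P2POptimalDMTUpperBoundDMT}; hence $d^{\ast,D}_{M,N}\lf(r\ri)=\frac{MN}{D}\lf(D-r\ri)$ is, for fixed $D$, the straight segment from $\lf(0,MN\ri)$ down to $\lf(D,0\ri)$. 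Because $N<\lf(K-1\ri)M+1\le\lf(K+1\ri)M-1$, Lemma \ref{lem:TheCasewhereSingleUserIsNotNecessarilytheWorst} applies and gives $d^{\ast,D}_{M,N}\lf(r\ri)\le d^{\ast,i\cdot D}_{i\cdot M,N}\lf(i\cdot r\ri)$ for $2\le i\le K-1$. Thus the inner minimum over $i$ reduces to $\min\lf(d^{\ast,D}_{M,N}\lf(r\ri),d^{\ast,K\cdot D}_{K\cdot M,N}\lf(K\cdot r\ri)\ri)$, and it remains only to compare the $i=1$ and $i=K$ terms.

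The key observation is that, for fixed $D$, both $d^{\ast,D}_{M,N}\lf(r\ri)$ and $d^{\ast,K\cdot D}_{K\cdot M,N}\lf(K\cdot r\ri)$ are straight segments in $r$ on $\lf[0,D\ri]$ that vanish at $r=D$ --- the latter because for the fixed dimension $KD$ the point-to-point bound is a single linear expression in its argument (Theorem \ref{prop:P2POptimalDMTUpperBoundDMT}) and $d^{\ast,KD}_{KM,N}\lf(KD\ri)=0$. Two segments sharing the same zero crossing are ordered throughout $\lf(0,D\ri]$ by their values at $r=0$, so it suffices to prove $d^{\ast,K\cdot D}_{K\cdot M,N}\lf(0\ri)\ge MN$.

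This is where the hypothesis enters sharply. Writing $D'=KD\in\lf[0,N\ri]$ and letting $l$ be its regime index, Corollary \ref{prop:ICP2pDMTAnchorPoints} gives the anchor value $\lf(KM-l\ri)\lf(N-l\ri)$ at argument $l$; extending the segment back to argument $0$ multiplies it by $\frac{D'}{D'-l}\ge 1$, and evaluating this factor at the top of regime $l$ (where it is smallest) yields the clean bound $d^{\ast,D'}_{KM,N}\lf(0\ri)\ge KMN-l\lf(l+1\ri)$. Since $D'\le N=\min\lf(KM,N\ri)$ forces $l\le N-1$, we obtain $d^{\ast,K\cdot D}_{K\cdot M,N}\lf(0\ri)\ge KMN-\lf(N-1\ri)N=N\lf(KM-N+1\ri)\ge MN$, the last inequality being exactly $N\le\lf(K-1\ri)M+1$. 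Hence the single-user term is the inner minimum for every feasible $D$ and every $r\in\lf[0,D\ri]$.

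Finally, $d^{\ast,D}_{M,N}\lf(r\ri)=MN-\frac{MN}{D}r$ is nondecreasing in $D$ for fixed $r$, so $\max_{0\le D\le N/K}$ is attained at $D=\frac{N}{K}$, giving $d^{\ast,N/K}_{M,N}\lf(r\ri)=MN-MKr$, which is the claim. The main obstacle is the middle step: establishing $d^{\ast,KD}_{KM,N}\lf(0\ri)\ge MN$ uniformly over the feasible $D$. This requires the exact anchor-point bookkeeping of Corollary \ref{prop:ICP2pDMTAnchorPoints}, in particular the computation that at the top of regime $l$ the value at argument $0$ equals $KMN-l\lf(l+1\ri)$ and that the worst case $l=N-1$ reproduces precisely the threshold $N\le\lf(K-1\ri)M+1$, showing the hypothesis is tight.
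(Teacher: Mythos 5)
Your proposal is correct and follows essentially the same route as the paper: eliminate the intermediate terms $2\le i\le K-1$ via Lemma \ref{lem:TheCasewhereSingleUserIsNotNecessarilytheWorst}, compare the $i=1$ and $i=K$ lines through their values at $r=0$ (both vanishing at $r=D$), establish $d^{\ast,KD}_{KM,N}(0)\ge N(KM-N+1)\ge MN$, and then maximize the single-user line over $D$ at $D=\frac{N}{K}$. The only cosmetic difference is that you obtain the lower bound $N(KM-N+1)$ by a regime-by-regime evaluation of the anchor-point formula, whereas the paper evaluates directly at $D=\frac{N}{K}$ and invokes the monotonicity in $D$ from Corollary \ref{prop:ICP2pDMTAnchorPoints}; the computation is the same.
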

\begin{proof}
The proof is in appendix
\ref{append:TheCasewhereDimensionSmallerThanFirstDMTLine}
\end{proof}

From Lemma \ref{lem:TheCasewhereDimensionSmallerThanFirstDMTLine} we
can see that for the multiple-access channel, when $N<
\lf(M-1\ri)K+1$ the optimal DMT of IC's is smaller than finite
constellations optimal DMT for any value of $r$ except for $r=0$ and
$r=\frac{N}{k}$. Figure
\ref{fig:TheCasewhereDimensionSmallerThanFirstDMTLine} illustrates
Lemma \ref{lem:TheCasewhereDimensionSmallerThanFirstDMTLine} for the
case $M=N=K=2$. Now let us show the cases for which
$d^{\ast,D}_{M,N} \lf(r\ri)$ and $d^{\ast,K\cdot D}_{K\cdot M,N}
\lf(K\cdot r\ri)$ coincide.

\begin{figure}[h]
\centering
\con{\epsfig{figure=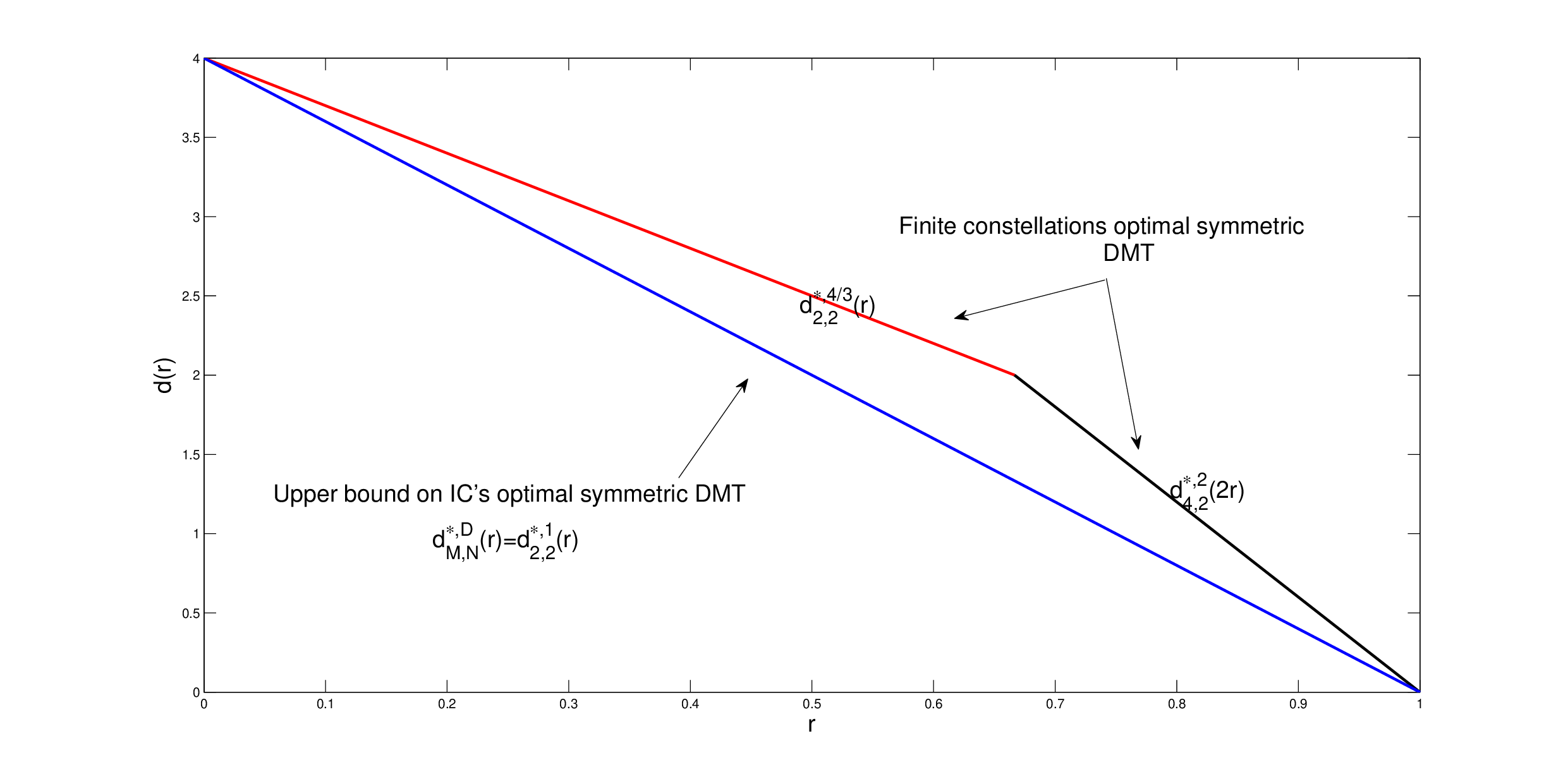,height=5.5cm}\caption{Illustration of Lemma \ref{lem:TheCasewhereDimensionSmallerThanFirstDMTLine} for the case $M=N=K=2$.
 In this case the optimal DMT is smaller than the optimal DMT of finite constellations, for any value of $r$ except for $r=0$, 1.}\label{fig:TheCasewhereDimensionSmallerThanFirstDMTLine}}{\epsfig{figure=3.eps,height=4.8cm}\caption{Illustration of Lemma \ref{lem:TheCasewhereDimensionSmallerThanFirstDMTLine} for the case $M=N=K=2$.
 In this case the optimal DMT is smaller than the optimal DMT of finite constellations, for any value of $r$ except for $r=0$, 1.}\label{fig:TheCasewhereDimensionSmallerThanFirstDMTLine}}
\end{figure}

The following lemma serves as another building block in upper
bounding the optimal DMT in the symmetric case when
$N=\lf(K-1\ri)M+1+l$, $l=0,\dots, 2M-3$. It finds the average number
of dimensions per channel use that leads to the equality
$d^{\ast,D}_{M,N} \lf(r\ri)=d^{\ast,K\cdot D}_{K\cdot M,N}
\lf(K\cdot r\ri)$ for any value of $r$, and also shows for which
values of $r$ these straight lines are equal to the optimal DMT of
finite constellations in a point-to-point channel.
\begin{lem}\label{lem:TheCasewhereSingleUserandKUserDMTUpperBoundCoincide}
For $N= \lf(K-1\ri)M+1+l< \lf(K+1\ri)M-1$, where $l=0,\dots, 2M-3$,
we get for average number of dimensions per channel use per user
$D_{l}=\frac{MN-\lfloor\frac{l}{2}\rfloor\cdot
\lf(\lfloor\frac{l}{2}\rfloor+1\ri)-2\cdot\lf(\lfloor\frac{l}{2}\rfloor+1\ri)\cdot\lf(\frac{l}{2}-\lfloor\frac{l}{2}\rfloor\ri)}{N+M-1-l}$
that
\con{
\begin{equation*}
d^{\ast,D_{l}}_{M,N} \lf(r\ri) =d^{\ast,K\cdot D_{l}}_{K\cdot
M,N}\lf(K\cdot
r\ri)=d^{\ast}\lf(r\ri)=MN-\lfloor\frac{l}{2}\rfloor\cdot
\lf(\lfloor\frac{l}{2}\rfloor+1\ri)-2\cdot\lf(\lfloor\frac{l}{2}\rfloor+1\ri)\cdot\lf(\frac{l}{2}-\lfloor\frac{l}{2}\rfloor\ri)-
\lf(N+M-1-l\ri)r
\end{equation*}}{\baln{}{
&d^{\ast,D_{l}}_{M,N}\lf(r\ri) =d^{\ast,K\cdot D_{l}}_{K\cdot
M,N}\lf(K\cdot
r\ri)=d^{\ast}\lf(r\ri)=MN-&
\nn\\
&\lfloor\frac{l}{2}\rfloor\cdot
\lf(\lfloor\frac{l}{2}\rfloor+1\ri)-2\cdot\lf(\lfloor\frac{l}{2}\rfloor+1\ri)\cdot\lf(\frac{l}{2}-\lfloor\frac{l}{2}\rfloor\ri)&
\nn\\
&-
\lf(N+M-1-l\ri)r&
}}
where $0\le r\le D_{l}$. In addition
\begin{equation*}
d^{\ast,\lf(FC\ri)}_{M,N}\lf(\lfloor\frac{l}{2}
\rfloor+1\ri)=d^{\ast}\lf(\lfloor\frac{l}{2} \rfloor+1\ri)
\end{equation*}
and also
\con{
\begin{equation*}
d^{\ast,\lf(FC\ri)}_{KM,N}\lf( \lf(K-1\ri)M+ \lfloor\frac{l+1}{2}
\rfloor\ri) = d^{\ast}\lf( \frac{\lf(K-1\ri)M+ \lfloor\frac{l+1}{2}
\rfloor}{K}\ri)
\end{equation*}}{\baln{}{&d^{\ast,\lf(FC\ri)}_{KM,N}\lf( \lf(K-1\ri)M+ \lfloor\frac{l+1}{2}
\rfloor\ri) =&
\nn\\
&d^{\ast}\lf( \frac{\lf(K-1\ri)M+ \lfloor\frac{l+1}{2} \rfloor}{K}\ri)&}}
\end{lem}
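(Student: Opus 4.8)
The plan is to exploit the single structural identity behind the statement: since $N=(K-1)M+1+l$, one has $N+M-1-l=KM$, so the common slope $-(N+M-1-l)$ of $d^{\ast}(\cdot)$ is exactly $-KM$. I would first record the intercept $c_l:=MN-\lfloor l/2\rfloor(\lfloor l/2\rfloor+1)-2(\lfloor l/2\rfloor+1)(l/2-\lfloor l/2\rfloor)$, noting that with $m:=\lfloor l/2\rfloor$ it equals $MN-m(m+1)$ when $l$ is even and $MN-(m+1)^2$ when $l$ is odd, and observe that $d^{\ast}(r)=c_l-KM\cdot r=KM(D_l-r)$, so $d^{\ast}$ is the straight line of slope $-KM$ vanishing exactly at $r=D_l$. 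Everything then reduces to showing that this one line coincides, on the one hand, with the point-to-point bound $d^{\ast,D_l}_{M,N}$ and, on the other, with $d^{\ast,KD_l}_{KM,N}(K\cdot)$.

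For the point-to-point identification I would use the anchor-and-slope description of the bound in Theorem~\ref{prop:P2POptimalDMTUpperBoundDMT}: in range $j$ the line $d^{\ast,D}_{M,N}$ passes through the anchor $(j,(M-j)(N-j))$ of Corollary~\ref{prop:ICP2pDMTAnchorPoints}, and a short computation of $D-j$ at the two endpoints of range $j$ shows its slope sweeps monotonically over $[-(N+M+1-2j),\,-(N+M-1-2j)]$ as $D$ runs over the range. Since $-KM=-(N+M-1-l)$ lies in this interval exactly when $l/2\le j\le l/2+1$, for $l$ even ($l=2m$) the value $D_l$ sits at the shared endpoint of ranges $m$ and $m+1$ and $d^{\ast}$ passes through both anchors $(m,(M-m)(N-m))$ and $(m+1,(M-m-1)(N-m-1))$ (so it is the finite-constellation edge between them), while for $l$ odd ($l=2m+1$) the slope is strictly interior to range $m+1$ and $d^{\ast}$ is the supporting line touching only the anchor $(m+1,(M-m-1)(N-m-1))$. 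In each case I would finish by plugging the relevant anchor into the range-$j$ formula: computing $D_l-j$ gives $\frac{(M-j)(N-j)}{KM}$, whence $\frac{(M-j)(N-j)}{D_l-j}=KM$ and the formula collapses to $KM(D_l-r)=d^{\ast}(r)$.

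The $K$-user identification is the same argument applied to the $(KM,N)$ channel, read through the substitution $\rho=Kr$: the line $d^{\ast,KD_l}_{KM,N}(Kr)$ has slope $-K$ times the $\rho$-slope of $d^{\ast,KD_l}_{KM,N}$, so matching $d^{\ast}$'s slope $-KM$ requires the $(KM,N)$ bound to have $\rho$-slope $-M$. Repeating the endpoint computation for this channel shows $\rho$-slope $-M$ forces the range index into $[(K-1)M+l/2,\,(K-1)M+l/2+1]$, i.e.\ $j'=(K-1)M+\lfloor(l+1)/2\rfloor$; using $N-(K-1)M=1+l$ I would check that $d^{\ast}$ indeed passes through the anchor $\big(\tfrac{j'}{K},(KM-j')(N-j')\big)$ (the arithmetic reduces both sides to $(M-m)(m+1)$ for $l$ even and $(M-m-1)(m+1)$ for $l$ odd), and again that $\frac{(KM-j')(N-j')}{KD_l-j'}=M$, so that $d^{\ast,KD_l}_{KM,N}(Kr)=KM(D_l-r)=d^{\ast}(r)$.

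Finally the two finite-constellation statements are immediate corollaries of the anchor computations already made: $d^{\ast,(FC)}_{M,N}(m+1)$ is by definition the anchor value $(M-m-1)(N-m-1)$, which equals $d^{\ast}(m+1)$ by the evaluation above, and likewise $d^{\ast,(FC)}_{KM,N}\big((K-1)M+\lfloor(l+1)/2\rfloor\big)=(KM-j')(N-j')=d^{\ast}(j'/K)$. The main obstacle I anticipate is bookkeeping rather than conceptual: carrying the even/odd split of $l$ cleanly through $c_l$ and the two distinct anchor indices, and verifying that $D_l$ (resp.\ $KD_l$) genuinely lands in the asserted range $j$ (resp.\ $j'$) of the piecewise bound — this is where the threshold expressions of Theorem~\ref{prop:P2POptimalDMTUpperBoundDMT} and the slope-monotonicity must be invoked carefully so that the range formula is legitimately applicable before it is simplified.
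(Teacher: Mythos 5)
Your proposal is correct and takes essentially the same route as the paper: both identify $d^{\ast}$ as the straight line through the relevant anchor point of Corollary \ref{prop:ICP2pDMTAnchorPoints} with zero at $D_l$ (respectively $K\cdot D_l$), and both must certify that $D_l$ lies in the range of average numbers of dimensions to which that anchor is assigned. The only difference is the bookkeeping device for that certification --- you match the slope $-(N+M-1-l)=-KM$ against the interval of slopes swept within each range, while the paper compares the $r=0$ intercepts against the values $MN-j(j+1)$ --- and these two certificates are equivalent for lines constrained to pass through a common anchor.
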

\begin{proof}
The proof is in appendix
\ref{append:TheCasewhereSingleUserandKUserDMTUpperBoundCoincide}.
\end{proof}
An example that illustrates Lemma
\ref{lem:TheCasewhereSingleUserandKUserDMTUpperBoundCoincide} for
$M=K=2$ and $N=4$ is given in Figure
\ref{fig:TheCasewhereSingleUserandKUserDMTUpperBoundCoincide}.

\begin{figure}[h]
\centering
\con{\epsfig{figure=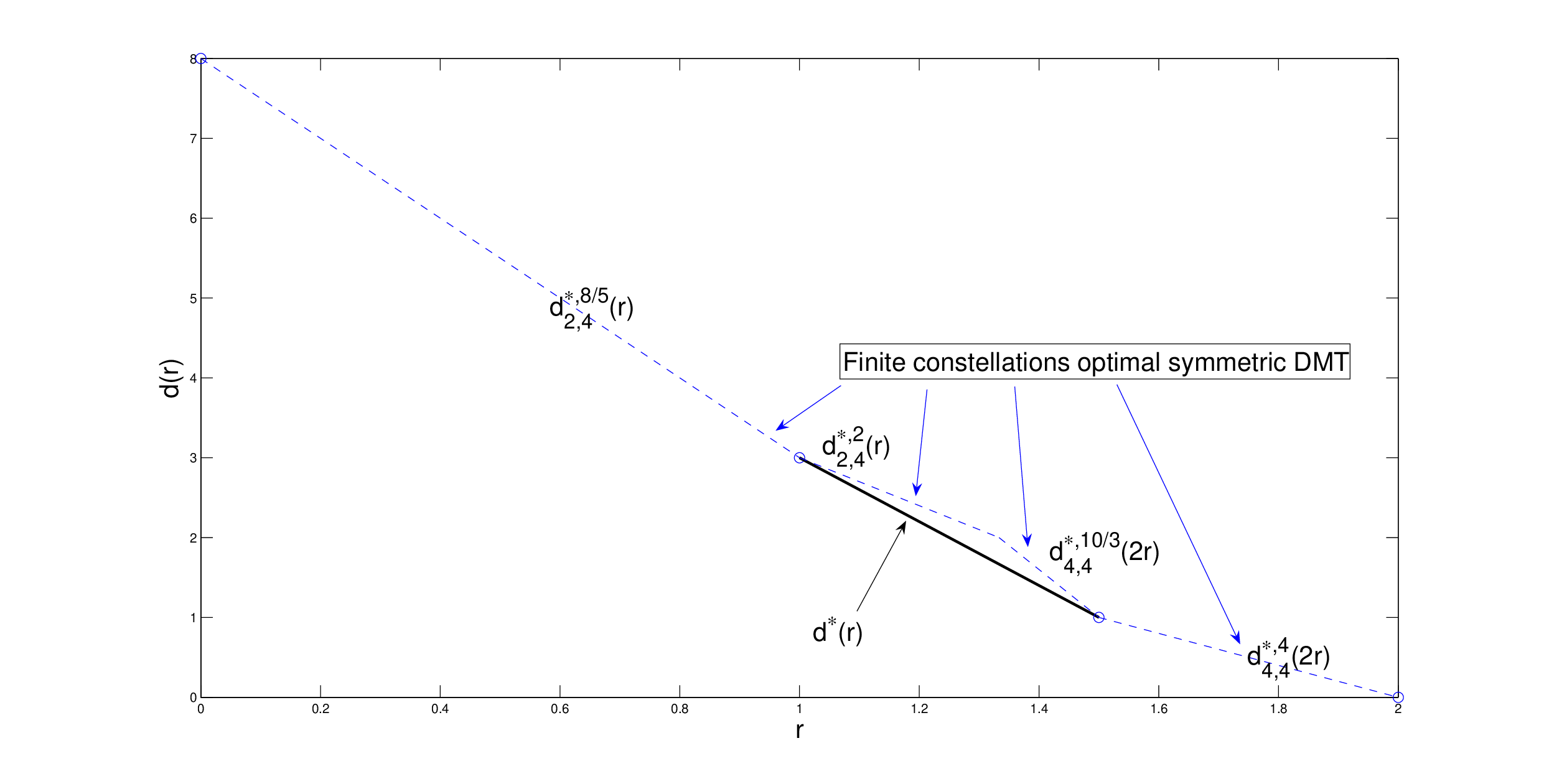,height=7cm}\caption{$d^{\ast}\lf(r\ri)$
for $M=K=2$ and $N=4$, i.e., $l=1$. Note that
$d^{\ast}\lf(1\ri)=d^{\ast,8/5}_{2,4}\lf(1\ri)=d^{\ast,\lf(FC\ri)}_{2,4}\lf(1\ri)=d^{\ast,2}_{2,4}\lf(1\ri)$
and $d^{\ast}\lf(\frac{3}{2}\ri)=
d^{\ast,4}_{4,4}\lf(3\ri)=d^{\ast,\lf(FC\ri)}_{4,4}\lf(3\ri)=d^{\ast,\frac{10}{3}}_{4,4}\lf(3\ri)$.}\label{fig:TheCasewhereSingleUserandKUserDMTUpperBoundCoincide}}{\epsfig{figure=4.eps,height=4.8cm}\caption{$d^{\ast}\lf(r\ri)$
for $M=K=2$ and $N=4$, i.e., $l=1$. Note that
$d^{\ast}\lf(1\ri)=d^{\ast,8/5}_{2,4}\lf(1\ri)=d^{\ast,\lf(FC\ri)}_{2,4}\lf(1\ri)=d^{\ast,2}_{2,4}\lf(1\ri)$
and $d^{\ast}\lf(\frac{3}{2}\ri)=
d^{\ast,4}_{4,4}\lf(3\ri)=d^{\ast,\lf(FC\ri)}_{4,4}\lf(3\ri)=d^{\ast,\frac{10}{3}}_{4,4}\lf(3\ri)$.}\label{fig:TheCasewhereSingleUserandKUserDMTUpperBoundCoincide}}
\end{figure}

We are now are ready to characterize the upper bound on the optimal
DMT of IC's in the symmetric case. Recall that for
$N=\lf(K-1\ri)M+1+l< \lf(K+1\ri)M-1$, $l=0,\dots, 2M-3$
\con{
\begin{equation*}
d^{\ast}\lf(r\ri) = MN-\lfloor\frac{l}{2}\rfloor\cdot
\lf(\lfloor\frac{l}{2}\rfloor+1\ri)-2\cdot\lf(\lfloor\frac{l}{2}\rfloor+1\ri)\cdot\lf(\frac{l}{2}-\lfloor\frac{l}{2}\rfloor\ri)-
\lf(N+M-1-l\ri)r.
\end{equation*}}
{\baln{}{&d^{\ast}\lf(r\ri) = MN-\lfloor\frac{l}{2}\rfloor\cdot
\lf(\lfloor\frac{l}{2}\rfloor+1\ri)&
\nn\\
&-2\cdot\lf(\lfloor\frac{l}{2}\rfloor+1\ri)\cdot\lf(\frac{l}{2}-\lfloor\frac{l}{2}\rfloor\ri)-
\lf(N+M-1-l\ri)r.&}}
\begin{theorem}\label{th:ICOptimalSymmetricDMTUpperNound}
The optimal DMT of any sequence of IC's in the symmetric case is upper bounded by:\\
For $N\ge \lf(K+1\ri)M-1$
\begin{equation*}
d^{\ast,\lf(IC\ri)}_{K,M,N}\lf(r\ri) =
d^{\ast,\lf(FC\ri)}_{M,N}\lf(r\ri).
\end{equation*}
For $N<\lf(K-1\ri)M+1$
\begin{equation*}
d^{\ast,\lf(IC\ri)}_{K,M,N}\lf(r\ri)= M\cdot N-K\cdot M\cdot r.
\end{equation*}
For  $N=\lf(K-1\ri)M+1+l< \lf(K+1\ri)M-1$, where $l=0,\dots,2M-3$
\con{
\begin{equation*}
d^{\ast,\lf(IC\ri)}_{K,M,N}\lf(r\ri)=\left\{\begin{array}{ll}
d^{\ast,\lf(FC\ri)}_{M,N}\lf(r\ri) & 0\le r\le \lfloor\frac{l}{2}\rfloor+1\\
d^{\ast}\lf(r\ri) & \lfloor\frac{l}{2}\rfloor+1\le r\le \frac{\lf(K-1\ri)M+\lfloor\frac{l+1}{2}\rfloor}{K}\\
d^{\ast,\lf(FC\ri)}_{KM,N}\lf(Kr\ri) &
\frac{\lf(K-1\ri)M+\lfloor\frac{l+1}{2}\rfloor}{K}\le r\le
\frac{L}{K}
\end{array}
\right.
\end{equation*}}
{\baln{}{&d^{\ast,\lf(IC\ri)}_{K,M,N}\lf(r\ri)=&
\nn\\
&\left\{\begin{array}{ll}
d^{\ast,\lf(FC\ri)}_{M,N}\lf(r\ri) & 0\le r\le \lfloor\frac{l}{2}\rfloor+1\\
d^{\ast}\lf(r\ri) & \lfloor\frac{l}{2}\rfloor+1\le r\le \frac{\lf(K-1\ri)M+\lfloor\frac{l+1}{2}\rfloor}{K}\\
d^{\ast,\lf(FC\ri)}_{KM,N}\lf(Kr\ri) &
\frac{\lf(K-1\ri)M+\lfloor\frac{l+1}{2}\rfloor}{K}\le r\le
\frac{L}{K}
\end{array}
\right.&}}
\end{theorem}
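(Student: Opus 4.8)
The plan is to evaluate in closed form the max--min optimization that defines $d^{\ast,\lf(IC\ri)}_{K,M,N}\lf(r\ri)$ in \eqref{eq:OptimalUpperBoundMACDMTSymmetric}, in three steps. First I would reduce the full optimization over all allocations $\lf(D_{1},\dots,D_{K}\ri)\in\RD$ to the equal-dimension optimization \eqref{eq:OptimalUpperBoundMACDMTSymmetricEqualDim}. In the symmetric case the objective $\min_{A}d^{\ast,D_{A}}_{|A|M,N}\lf(|A|r\ri)$ is invariant under permutations of $\lf(D_{1},\dots,D_{K}\ri)$ and is concave in the allocation, since each term is the composition of the linear map $D_{A}=\sum_{a\in A}D_{a}$ with the (concave-in-$D$) point-to-point bound of Theorem~\ref{prop:P2POptimalDMTUpperBoundDMT}, and a minimum of concave functions is concave. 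Replacing any allocation by the symmetric average $D=\tfrac{1}{K}\sum_{i}D_{i}$ (a convex combination of its own permutations) can therefore only increase the objective, so the maximizer may be taken with $D_{1}=\dots=D_{K}=D$, and it suffices to solve $\max_{0\le D\le L/K}\min_{1\le i\le K}d^{\ast,iD}_{iM,N}\lf(ir\ri)$. The remaining work is to carry out this scalar max--min using Lemmas~\ref{lem:TheCasewhereSingleUserIstheWorst}--\ref{lem:TheCasewhereSingleUserandKUserDMTUpperBoundCoincide}, split according to the three regimes of $N$.

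For $N\ge\lf(K+1\ri)M-1$ I would invoke Lemma~\ref{lem:TheCasewhereSingleUserIstheWorst}, by which the single-user term ($i=1$) is the smallest for every $0\le D\le M$, so the inner minimum collapses to $d^{\ast,D}_{M,N}\lf(r\ri)$. Here $L=KM$, so $D$ ranges over the full interval $[0,M]$, and maximizing the single-user line over $D$ yields exactly the point-to-point finite-constellation DMT, i.e. $\max_{0\le D\le M}d^{\ast,D}_{M,N}\lf(r\ri)=d^{\ast,\lf(FC\ri)}_{M,N}\lf(r\ri)$; this is precisely the upper-envelope content of Corollary~\ref{prop:ICP2pDMTAnchorPoints} (see Figure~\ref{fig:PropAnchorPoint}), giving the first case. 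For $N<\lf(K-1\ri)M+1$ the answer is read off directly from Lemma~\ref{lem:TheCasewhereDimensionSmallerThanFirstDMTLine}, which already states that the equal-dimension max--min equals $MN-MKr$ (attained at $D=N/K$).

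The bulk of the argument is the intermediate regime $N=\lf(K-1\ri)M+1+l<\lf(K+1\ri)M-1$. By Lemma~\ref{lem:TheCasewhereSingleUserIsNotNecessarilytheWorst} the indices $2\le i\le K-1$ never bind, so for each fixed $r$ the problem reduces to the two-line max--min $\max_{D}\min\lf(d^{\ast,D}_{M,N}\lf(r\ri),\,d^{\ast,KD}_{KM,N}\lf(Kr\ri)\ri)$. I would treat this as the max--min of two unimodal functions of $D$: the single-user term is maximized at the $D$ realizing $d^{\ast,\lf(FC\ri)}_{M,N}\lf(r\ri)$ and the pooled $K$-user term at the $D$ realizing $d^{\ast,\lf(FC\ri)}_{KM,N}\lf(Kr\ri)$. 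When the single-user peak lies below the pooled term there, the maximum of the minimum equals $d^{\ast,\lf(FC\ri)}_{M,N}\lf(r\ri)$; symmetrically it equals $d^{\ast,\lf(FC\ri)}_{KM,N}\lf(Kr\ri)$ when the pooled peak is the lower one; and in the remaining range the optimum is the balanced allocation $D=D_{l}$ at which the two lines coincide, which by Lemma~\ref{lem:TheCasewhereSingleUserandKUserDMTUpperBoundCoincide} equals the single line $d^{\ast}\lf(r\ri)$. The same lemma pins the two break-points: the identity $d^{\ast,\lf(FC\ri)}_{M,N}\lf(\lfloor l/2\rfloor+1\ri)=d^{\ast}\lf(\lfloor l/2\rfloor+1\ri)$ locates the lower transition at $r=\lfloor l/2\rfloor+1$, and $d^{\ast,\lf(FC\ri)}_{KM,N}\lf(\lf(K-1\ri)M+\lfloor(l+1)/2\rfloor\ri)=d^{\ast}\lf(\frac{\lf(K-1\ri)M+\lfloor(l+1)/2\rfloor}{K}\ri)$ the upper transition, producing the three stated pieces with continuity at the junctions.

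I expect the two genuinely delicate points to be (i) the symmetrization reduction to equal dimensions, which requires establishing concavity of the per-subset point-to-point bound in the dimension allocation and is what upgrades the equal-dimension value from a mere achievability (lower) bound to the exact upper bound; without it an unequal allocation could a priori do strictly better, so this step carries the sub-optimality claim. And (ii) the ordering analysis in the intermediate regime, where I must verify that the single-user and pooled terms are each unimodal in $D$ and that their peaks cross exactly at the $r$-values dictated by Lemma~\ref{lem:TheCasewhereSingleUserandKUserDMTUpperBoundCoincide}, so that the three regimes abut without gaps. The first two cases, by contrast, follow almost immediately once the corresponding lemma is in hand.
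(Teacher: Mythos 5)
Your overall architecture is the paper's: solve the equal-dimension problem $\max_{D}\min_{i}d^{\ast,iD}_{iM,N}\lf(ir\ri)$ via Lemmas \ref{lem:TheCasewhereSingleUserIstheWorst}--\ref{lem:TheCasewhereSingleUserandKUserDMTUpperBoundCoincide} in the three regimes of $N$, and then argue that unequal allocations cannot do better. The second and third bullets of your scalar analysis are essentially what Appendix \ref{append:ICOptimalSymmetricDMTUpperNound} does (the ``two unimodal functions'' picture is correct, though the verification that the crossings occur exactly at $r=\lfloor l/2\rfloor+1$ and $r=\lf(\lf(K-1\ri)M+\lfloor (l+1)/2\rfloor\ri)/K$ is the nontrivial anchor-point bookkeeping that Lemma \ref{lem:TheCasewhereSingleUserandKUserDMTUpperBoundCoincide} and the induction in the appendix supply). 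The genuine gap is in your step (i), the symmetrization. The claim that $D\mapsto d^{\ast,D}_{M,N}\lf(r\ri)$ is concave is false: take $M=N=2$ and $r=\tfrac12$; for $D\in\lf[\tfrac43,2\ri]$ Theorem \ref{prop:P2POptimalDMTUpperBoundDMT} gives $d^{\ast,D}_{2,2}\lf(\tfrac12\ri)=\tfrac{D-1/2}{D-1}=1+\tfrac{1/2}{D-1}$, which is strictly \emph{convex} in $D$. In general each piece with anchor $l<r$ has the form $c_{l}\tfrac{D-r}{D-l}$, convex and decreasing in $D$, so the function is unimodal but not concave, and the Jensen step ``replacing the allocation by its average can only increase the objective'' does not follow from what you wrote. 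Since this is precisely the step that upgrades the equal-dimension value from an achievable lower bound to the exact value of the max--min (and hence carries the suboptimality claims), it cannot be left at that.

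The reduction is true and can be repaired in either of two ways. First, quasi-concavity suffices where concavity fails: each term $d^{\ast,D_{A}}_{|A|M,N}\lf(|A|r\ri)$ is a unimodal function of the scalar $D_{A}=\sum_{a\in A}D_{a}$, hence quasi-concave in $\lf(D_{1},\dots,D_{K}\ri)$; a minimum of quasi-concave functions is quasi-concave because its superlevel sets are intersections of convex sets; and a permutation-symmetric quasi-concave function on the convex symmetric set $\RD$ is maximized at a symmetric point, since the average of the permutations of any $\lf(D_{1},\dots,D_{K}\ri)$ lies in the (convex) superlevel set through that point. Second, and this is what the paper actually does at the end of Appendix \ref{append:ICOptimalSymmetricDMTUpperNound}, one can avoid symmetrization altogether: in each regime the symmetric optimum coincides with an envelope value ($d^{\ast,\lf(FC\ri)}_{M,N}\lf(r\ri)$, $d^{\ast,\lf(FC\ri)}_{KM,N}\lf(Kr\ri)$, $MN-KMr$, or $d^{\ast}\lf(r\ri)$), and Corollary \ref{prop:ICP2pDMTAnchorPoints} shows that for \emph{any} allocation the corresponding single-user or pooled term in the min is already upper bounded by that envelope value, so no allocation can exceed the symmetric solution. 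Either repair closes the gap; without one of them the proof is incomplete.
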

\begin{proof}
The proof is in appendix
\ref{append:ICOptimalSymmetricDMTUpperNound}.
\end{proof}
Figure \ref{fig:TheCasewhereSingleUserandKUserDMTUpperBoundCoincide}
also presents $d^{\ast,\lf(IC\ri)}_{K,M,N}\lf(r\ri)$ for $M=K=2$ and
$N=4$ (which leads to $l=1$).

\subsection{Comparison to Finite Constellations}\label{subsec:LowerBoundErrorProb_C}
In this subsection we compare the optimal DMT of finite
constellations to the upper bound on the optimal DMT of IC's (in
general, not only for the symmetric case). This comparison enables
us to show that for $N\ge \lf(K+1\ri)M-1$ the upper bound on the
optimal DMT of IC's coincides with the optimal DMT of finite
constellations. On the other hand for $N<\lf(K+1\ri)M-1$ we show
that the upper bound on the optimal DMT of IC's is inferior compared
to the optimal DMT of finite constellations. This leads to the
conclusion that in the case $N<\lf(K+1\ri)M-1$, the best DMT any
sequence of IC's can attain is suboptimal compared to the optimal
DMT of finite constellations.

In Lemma \ref{lem:ConLemCompareOptimalSymmetricDMTICandFC} we
compare the upper bound on the optimal DMT of IC's in the symmetric
case, to the optimal DMT of finite constellations. Then we use this
result to prove in Theorem
\ref{th:ConvThOptimalityandSuboptimalityofICinMACChannel} that the
optimal DMT of IC's is suboptimal when $N<\lf(K+1\ri)M-1$.

We begin by showing when the upper bound on the optimal DMT of IC's
in the symmetric case, $d^{\ast,\lf(IC\ri)}_{K,M,N}\lf(r\ri)$, is
suboptimal compared to the optimal DMT of finite constellations.
\begin{lem}\label{lem:ConLemCompareOptimalSymmetricDMTICandFC}
For either $N\ge \lf(K+1\ri)M-1$ or $K=2$, $M=s+1$, $N=3\cdot s$,
where $s\ge 1$ and $s\in\mathbb{Z}$ we get
\begin{equation*}
d^{\ast,\lf(IC\ri)}_{K,M,N}\lf(r\ri)=d^{\ast,\lf(FC\ri)}_{K,M,N}\lf(r\ri).
\end{equation*}
For $N<\lf(K-1\ri)M+1$
\begin{equation*}
d^{\ast,\lf(IC\ri)}_{K,M,N}\lf(r\ri)<d^{\ast,\lf(FC\ri)}_{K,M,N}\lf(r\ri)\quad
0<r<\frac{N}{K}.
\end{equation*}
For $N=\lf(K-1\ri)M+1+l< \lf(K+1\ri)M-1$ and $l=0,\dots,2M-3$
\begin{equation*}
d^{\ast,\lf(IC\ri)}_{K,M,N}\lf(r\ri)<d^{\ast,\lf(FC\ri)}_{K,M,N}\lf(r\ri)
\end{equation*}
where $\lfloor\frac{l}{2}\rfloor+1<r<\frac{\lf(K-1\ri)M+\lfloor\frac{l+1}{2}\rfloor}{K}$.
\end{lem}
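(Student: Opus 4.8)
The plan is to treat all three parts through one geometric picture. On each relevant range the IC upper bound of Theorem \ref{th:ICOptimalSymmetricDMTUpperNound} is a single straight line (a chord), whereas the symmetric finite-constellation DMT of Theorem \ref{prop:FCOptDMT} is obtained by gluing the single-user curve $d^{\ast,(FC)}_{M,N}(r)$ to the pooled curve $d^{\ast,(FC)}_{KM,N}(Kr)$ at the transition $r=\frac{N}{K+1}$. Each of these two curves is convex in $r$, since the point-to-point DMT is piecewise linear with slopes $-(P+N-1-2j)$ that increase with $j$; the gluing produces a downward (concave) kink at $\frac{N}{K+1}$. I will repeatedly use the elementary fact that a convex branch which meets the chord at its left endpoint with a strictly shallower slope there, or at its right endpoint with a strictly steeper slope there, stays strictly above the chord on the adjacent open interval (equivalently, $g=\text{branch}-\text{chord}$ is convex with a one-sided derivative of the correct sign at the matched endpoint). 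A unifying arithmetic observation is that in every case the IC chord has slope exactly $-KM$.

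For part (a) with $N\ge (K+1)M-1$, Theorem \ref{th:ICOptimalSymmetricDMTUpperNound} already gives $d^{\ast,(IC)}_{K,M,N}(r)=d^{\ast,(FC)}_{M,N}(r)$ on $[0,L/K]=[0,M]$ (here $L=KM$), so it remains to identify this with the symmetric FC DMT. If $N\ge (K+1)M$ then $\frac{N}{K+1}\ge M$, the transition never occurs inside $[0,M]$, and the claim is immediate. If $N=(K+1)M-1$ I will evaluate both FC branches on the last segment $[M-\frac{1}{K+1},M]$ and check that $d^{\ast,(FC)}_{M,N}(r)=d^{\ast,(FC)}_{KM,N}(Kr)=KM(M-r)$ there (the last segments carry factors $N-M+1=KM$ and $N-KM+1=M$ respectively), so the pooled branch coincides with the single-user branch and again IC$=$FC. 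For the family $K=2$, $M=s+1$, $N=3s$ I will substitute into $N=(K-1)M+1+l$ to get $l=2s-2$ (even), whence $\lfloor l/2\rfloor+1=s=\frac{(K-1)M+\lfloor(l+1)/2\rfloor}{K}$: the middle regime of Theorem \ref{th:ICOptimalSymmetricDMTUpperNound} degenerates to a single point, the two surviving IC branches are exactly $d^{\ast,(FC)}_{M,N}$ on $[0,s]$ and $d^{\ast,(FC)}_{KM,N}(Kr)$ on $[s,L/K]$, and since the FC transition $\min(\frac{N}{K+1},M)=s$ as well, the two DMTs agree termwise.

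For part (b), $N<(K-1)M+1$, the IC bound is the chord from $(0,MN)$ to $(\frac{N}{K},0)$ of slope $-KM$. I will first check that the symmetric FC DMT passes through the same endpoints: it equals $MN$ at $r=0$, and since $N<KM$ the pooled branch vanishes at $r=\frac{N}{K}$. The strict inequality then reduces to two slope comparisons that both encode the hypothesis. The single-user branch leaves $(0,MN)$ with slope $-(M+N-1)$, which exceeds $-KM$ exactly when $N<(K-1)M+1$, so by convexity it stays strictly above the chord up to $\frac{N}{K+1}$. The pooled branch reaches $(\frac{N}{K},0)$ with slope $-K(KM-N+1)$, steeper than $-KM$ under the same inequality, so writing $g=\text{FC}-\text{chord}$ on the pooled part gives $g$ convex with $g(\frac{N}{K})=0$ and negative left derivative at $\frac{N}{K}$, forcing $g$ strictly decreasing and hence $g>0$ on the interior. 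Since these two sub-intervals meet at $\frac{N}{K+1}$ and cover $(0,\frac{N}{K})$, we obtain $d^{\ast,(IC)}_{K,M,N}<d^{\ast,(FC)}_{K,M,N}$ there.

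For part (c), $N=(K-1)M+1+l<(K+1)M-1$, the same chord-versus-kink picture applies, now with the chord $d^{\ast}(r)$ of slope $-(N+M-1-l)=-KM$ joining $(\lfloor l/2\rfloor+1,\,d^{\ast,(FC)}_{M,N}(\lfloor l/2\rfloor+1))$ to $(b,\,d^{\ast,(FC)}_{KM,N}(Kb))$, where $b=\frac{(K-1)M+\lfloor(l+1)/2\rfloor}{K}$; by Lemma \ref{lem:TheCasewhereSingleUserandKUserDMTUpperBoundCoincide} these endpoints lie on the single-user and pooled curves respectively. I will first verify $\lfloor l/2\rfloor+1\le \frac{N}{K+1}\le b$ from the parameter ranges, so that on $[\lfloor l/2\rfloor+1,b]$ the symmetric FC DMT is precisely the single-user branch followed by the pooled branch with a downward kink at $\frac{N}{K+1}$, matching the chord at both ends. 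The decisive computations, using $N=(K-1)M+1+l$ together with $l-2\lfloor l/2\rfloor\in\{0,1\}$ and $l-2\lfloor(l+1)/2\rfloor\in\{0,-1\}$, are that the single-user slope just right of $\lfloor l/2\rfloor+1$ equals $-(KM-2)$ or $-(KM-1)$ (shallower than $-KM$) and the pooled slope just left of $b$ equals $-K(M+1)$ or $-K(M+2)$ (steeper than $-KM$), the alternatives indexed by the parity of $l$. Convexity of each branch then yields strict inequality on the open middle interval exactly as in part (b). I expect the main obstacle to be this last part: correctly pinning down the breakpoints $\lfloor l/2\rfloor+1$, $b$ and the kink $\frac{N}{K+1}$, and carrying the floor-function slope computations uniformly through both parities of $l$ so that the two strict slope inequalities hold in every case.
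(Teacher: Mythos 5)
Your proposal is correct, and it follows the same overall skeleton as the paper's proof (same case split, same reliance on Theorem \ref{th:ICOptimalSymmetricDMTUpperNound} for the form of $d^{\ast,\lf(IC\ri)}_{K,M,N}$, same use of Lemma \ref{lem:TheCasewhereSingleUserandKUserDMTUpperBoundCoincide} for the coincidence points, and the same resolution of the degenerate family $K=2$, $M=s+1$, $N=3s$ via $\lfloor l/2\rfloor+1=\frac{(K-1)M+\lfloor(l+1)/2\rfloor}{K}$). Where you genuinely diverge is in how the \emph{strict} inequalities are obtained: the paper deduces them from Corollary \ref{prop:ICP2pDMTAnchorPoints} together with the previously established placement of $D_{l}$ strictly inside the open interval $\bigl(D^{\ast}_{r_{l}},\,D^{\ast}_{\lfloor l/2\rfloor+1}\bigr)$ (equations \eqref{eq:ConverseStraightLineLem19}--\eqref{eq:ConverseStraightLineLem20}), i.e.\ from the ``anchor-point rotation'' picture, whereas you rederive strictness from scratch by comparing one-sided slopes of the convex FC branches against the chord at its two endpoints and invoking convexity. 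Your slope computations check out: the chord slope is $-(N+M-1-l)=-KM$ in every regime, the single-user branch leaves the left endpoint with slope $-(KM-2)$ or $-(KM-1)$, and the pooled branch enters the right endpoint with slope $-K(M+1)$ or $-K(M+2)$, so the sign conditions hold for both parities of $l$; the inequality $\lfloor l/2\rfloor+1\le\frac{N}{K+1}\le\frac{(K-1)M+\lfloor\frac{l+1}{2}\rfloor}{K}$ that you defer is exactly the paper's equation \eqref{eq:ConverseTheoremOptimalSymmetricDMT5}. What your route buys is self-containment and a cleaner unifying statement (the IC bound is always a chord of slope $-KM$ subtending a concave kink of the FC curve at $r=\frac{N}{K+1}$, and strictness is equivalent to the kink being a genuine one); what the paper's route buys is brevity, since the needed inequalities were already proved en route to Theorem \ref{th:ICOptimalSymmetricDMTUpperNound}.
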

\begin{proof}
The full proof is in appendix
\ref{Append:ConLemCompareOptimalSymmetricDMTICandFC}. In a nutshell
the proof is based on the properties of $d^{\ast,D}_{M,N}\lf(r\ri)$
derived in Corollary \ref{prop:ICP2pDMTAnchorPoints} as well as
Corollary \ref{prop:RelationBedDandFCd}, and also on the results in
Theorem \ref{th:ICOptimalSymmetricDMTUpperNound}. It is important to
note that for $K=2$, $M=s+1$ and $N=3\cdot s$ we get that
$d^{\ast,\lf(IC\ri)}_{K,M,N}\lf(r\ri)=d^{\ast,\lf(FC\ri)}_{K,M,N}\lf(r\ri)$
because in this case
$\lfloor\frac{l}{2}\rfloor+1=\frac{\lf(K-1\ri)M+\lfloor\frac{l+1}{2}\rfloor}{K}$.
\end{proof}
The sub-optimality of $d^{\ast,\lf(IC\ri)}_{K,M,N}\lf(r\ri)$ for
$N<\lf(K-1\ri)M+1$ is illustrated in Figure
\ref{fig:TheCasewhereDimensionSmallerThanFirstDMTLine}, whereas the
sub-optimality for $N=\lf(K-1\ri)M+1+l$ and $l=0,\dots,2m-3$ is
illustrated in Figure
\ref{fig:TheCasewhereSingleUserandKUserDMTUpperBoundCoincide}.

We now present the cases for which the upper bound on the optimal
DMT of the unconstrained multiple-access channel coincides with the
optimal DMT of finite constellations, and the cases where the
optimal DMT of the unconstrained multiple-access channel is
suboptimal compared to the optimal DMT of finite constellations.

\begin{theorem}\label{th:ConvThOptimalityandSuboptimalityofICinMACChannel}
For $N\ge \lf(K+1\ri)M-1$ the optimal DMT of the unconstrained
multiple-access channel is upper bounded by
$d^{\ast,\lf(FC\ri)}_{M,N}\lf(\max \lf(r_{1},\dots,r_{K}\ri) \ri)$
the optimal DMT of finite constellations. In the case
$N<\lf(K+1\ri)M-1$, the best DMT that can be attained for the
unconstrained multiple-access channel is inferior compared to the
optimal DMT of finite constellations.
\end{theorem}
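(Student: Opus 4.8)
**The plan is to derive this theorem as a corollary of the symmetric-case results already established, together with a reduction from the general multiplexing-gains tuple to the symmetric setting.** The key observation is that Theorem \ref{th:ICOptimalSymmetricDMTUpperNound} and Lemma \ref{lem:ConLemCompareOptimalSymmetricDMTICandFC} already do the heavy lifting for the symmetric case; what remains is to handle the general (non-symmetric) tuple $\lf(r_{1},\dots,r_{K}\ri)$ and to assemble the two regimes ($N\ge\lf(K+1\ri)M-1$ versus $N<\lf(K+1\ri)M-1$) into the two claims of the theorem.

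For the regime $N\ge\lf(K+1\ri)M-1$, I would argue as follows. Starting from the general upper bound of Theorem \ref{Th:MDTFormulation}, I would show that the singleton sets $A=\{i\}$ dominate the inner minimization. Concretely, for each user $i$ the term $d^{\ast,D_{i}}_{M,N}\lf(r_{i}\ri)$ is a single-user point-to-point bound, and Lemma \ref{lem:TheCasewhereSingleUserIstheWorst} (suitably applied to arbitrary subsets, not just the symmetric equal-dimension case) tells us that enlarging $A$ only increases the corresponding DMT term when $N\ge\lf(K+1\ri)M-1$. Hence the binding constraint in $\min_{A}$ is the single user achieving $r_{max}=\max\lf(r_{1},\dots,r_{K}\ri)$, and the maximization over $\RD$ reduces to optimizing the dimensions of that worst user. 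By the point-to-point optimization already carried out in \cite{YonaFederICOptimalDMT} (summarized in Corollary \ref{prop:ICP2pDMTAnchorPoints}), the optimal choice of $D$ for that user yields exactly $d^{\ast,\lf(FC\ri)}_{M,N}\lf(r_{max}\ri)$, giving the stated upper bound.

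For the regime $N<\lf(K+1\ri)M-1$, I would reduce the general claim to the symmetric case. The goal is to exhibit, for each such $\lf(K,M,N\ri)$, at least one multiplexing value at which IC's are strictly suboptimal. Lemma \ref{lem:ConLemCompareOptimalSymmetricDMTICandFC} already furnishes a strict gap in the symmetric case on an open interval of $r$ (either $0<r<\frac{N}{K}$ when $N<\lf(K-1\ri)M+1$, or the interval $\lfloor\frac{l}{2}\rfloor+1<r<\frac{\lf(K-1\ri)M+\lfloor\frac{l+1}{2}\rfloor}{K}$ when $N=\lf(K-1\ri)M+1+l$). Since the symmetric tuple $r_{1}=\dots=r_{K}=r$ is a special case of the general tuple, a strict gap for the symmetric DMT immediately implies that the best attainable DMT over all IC's fails to meet the finite-constellation optimum for that tuple. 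I would note the one exceptional sub-case flagged in Lemma \ref{lem:ConLemCompareOptimalSymmetricDMTICandFC} ($K=2$, $M=s+1$, $N=3s$), where the symmetric gap closes because the two interval endpoints coincide; there the suboptimality must instead be witnessed by a genuinely non-symmetric tuple, and I would address it by appealing directly to the subset structure of Theorem \ref{Th:MDTFormulation} to locate an asymmetric tuple where some proper subset $A$ forces a binding constraint below the finite-constellation value.

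\textbf{The main obstacle} is the non-symmetric reduction in the first regime: justifying that the singleton $A=\{i_{max}\}$ dominates the $\min_{A}$ requires extending Lemma \ref{lem:TheCasewhereSingleUserIstheWorst} from the equal-dimension symmetric setting to arbitrary tuples $\lf(D_{1},\dots,D_{K}\ri)$ and arbitrary subsets $A$, and then verifying that the outer $\max$ over $\RD$ can allocate enough dimensions to the worst user without the subset bounds becoming active. The comparison $d^{\ast,D_{A}}_{|A|M,N}\lf(R_{A}\ri)\ge d^{\ast,D_{i_{max}}}_{M,N}\lf(r_{max}\ri)$ for every $A$ is the crux, and it is precisely where the hypothesis $N\ge\lf(K+1\ri)M-1$ enters to guarantee that adding users never tightens the bound below the single worst user's value.
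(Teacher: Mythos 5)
Your overall strategy matches the paper's: for $N\ge\lf(K+1\ri)M-1$ you work from the max--min bound of Theorem \ref{Th:MDTFormulation} and collapse it to the single-user term of the user carrying $r_{\max}$, and for $N<\lf(K+1\ri)M-1$ you reduce to the symmetric-case gap of Lemma \ref{lem:ConLemCompareOptimalSymmetricDMTICandFC}, isolating the exceptional family $K=2$, $M=s+1$, $N=3s$ for a non-symmetric argument. You also correctly identify the crux of the first regime, namely extending Lemma \ref{lem:TheCasewhereSingleUserIstheWorst} to arbitrary dimension tuples and subsets; the paper does exactly this via $d^{\ast,D_{A}}_{|A|M,N}\lf(R_{A}\ri)\ge d^{\ast,D_{A}/|A|}_{M,N}\lf(r_{\max}\ri)$ using the straight-line structure. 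One small remark: for the stated \emph{upper bound} alone, the dominance argument is not needed --- $\min_{A}$ is trivially at most the singleton term for the user with $r_{\max}$, and maximizing that over $D_{i}$ already gives $d^{\ast,\lf(FC\ri)}_{M,N}\lf(r_{\max}\ri)$; the Lemma-\ref{lem:TheCasewhereSingleUserIstheWorst}-based lower bound is what shows the max--min actually \emph{equals} this value, which the paper needs for the achievability half.

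The genuine gap is the exceptional case $K=2$, $M=s+1$, $N=3s$. Saying you would "locate an asymmetric tuple where some proper subset $A$ forces a binding constraint below the finite-constellation value" names the target but does none of the work, and this is the hardest part of the proof (the paper devotes all of appendix \ref{append:ConvThOptimalityandSuboptimalityofICinMACChannel} to it). The difficulty is that here the symmetric IC bound \emph{equals} the FC optimum, so you must choose $r_{1}=r_{0}+\epsilon$, $r_{2}=r_{0}-\epsilon$ with $s<r_{0}<s+\frac{1}{2}$ and $\epsilon$ small enough that the FC optimum is governed by the two-user term $d^{\ast,\lf(FC\ri)}_{2\lf(s+1\ri),3s}\lf(2r_{0}\ri)$, then show that the two-user IC term can only match it when $D_{1}+D_{2}=2D_{\lfloor l/2\rfloor}^{\ast}$ (since $2r_{0}$ is not an anchor point), and finally run a case analysis over $D_{1}$ showing that whenever the two-user term and $d^{\ast,D_{1}}_{s+1,3s}\lf(r_{0}+\epsilon\ri)$ both reach the FC value, the constraint on $D_{1}+D_{2}$ forces $d^{\ast,D_{2}}_{s+1,3s}\lf(r_{0}-\epsilon\ri)$ strictly below it. Without this construction and the verification that all three terms cannot simultaneously meet the FC optimum, the suboptimality claim for this family is unproved, and the second half of the theorem is incomplete.
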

\begin{proof}
The full proof is in appendix
\ref{append:ConvThOptimalityandSuboptimalityofICinMACChannel_1}. The
proof outline is as follows. Recall that in Theorem
\ref{Th:MDTFormulation} we have shown that the optimal DMT of IC's
is upper bounded by
\con{
\begin{equation*}
d^{\ast,\lf(IC\ri)}_{K,M,N} \lf(r_{1},\dots,r_{K}\ri)= \max_{ \lf
(D_{1},\dots,D_{K}\ri)\in \RD} \min_{ A\subseteq
\lf\{1,\dots,K\ri\}} \lf(d^{\ast,D_{A}}_{|A|\cdot M,N}\lf(R_{A}\ri)\ri).
\end{equation*}}
{\baln{}{&d^{\ast,\lf(IC\ri)}_{K,M,N} \lf(r_{1},\dots,r_{K}\ri)=&
\nn\\
&\max_{ \lf
(D_{1},\dots,D_{K}\ri)\in \RD} \min_{ A\subseteq
\lf\{1,\dots,K\ri\}} \lf(d^{\ast,D_{A}}_{|A|\cdot M,N}\lf(R_{A}\ri)\ri).&}}
For $N\ge \lf(K+1\ri)M-1$ we show that this term is upper and lower
bounded by
$d^{\ast,\lf(FC\ri)}_{M,N}\lf(\max\lf(r_{1},\dots,r_{K}\ri)\ri)$,
which is the optimal DMT of finite constellations in this case.

In the case $N< \lf(K+1\ri)M-1$ we show that the optimal DMT is not
attained by finding a set of multiplexing gain tuples
$\lf(r_{1},\dots,r_{K}\ri)\in B$ for which
$d^{\ast,\lf(IC\ri)}_{K,M,N}\lf(r_{1},\dots,r_{K}\ri)<d^{\ast,\lf(FC\ri)}_{K,M,N}\lf(r_{1},\dots,r_{K}\ri)$.
Based on Lemma \ref{lem:ConLemCompareOptimalSymmetricDMTICandFC} we
get for $r_{1}=\dots=r_{K}=r$ that there exists a set of
multiplexing gains for which
$d^{\ast,\lf(IC\ri)}_{K,M,N}\lf(r\ri)<d^{\ast,\lf(FC\ri)}_{K,M,N}\lf(r\ri)$,
except for $K=2$, $M=s+1$ and $N=3\cdot s$, where $s\ge 1$ is an
integer. For this case showing that
$d^{\ast,\lf(IC\ri)}_{2,s+1,3\cdot
s}\lf(r_{1},r_{2}\ri)<d^{\ast,\lf(FC\ri)}_{2,s+1,3\cdot
s}\lf(r_{1},r_{2}\ri)$ is more involved and requires considering the
case $r_{1}\neq r_{2}$ (see appendix
\ref{append:ConvThOptimalityandSuboptimalityofICinMACChannel_1} for
the full proof). An illustrative example for the method of proof for
this case is presented in Figures
\ref{fig:ConvThOptimalityandSuboptimalityofICinMACChannel},
\ref{fig:ConvThOptimalityandSuboptimalityofICinMACChannel_b}.
\end{proof}

\begin{figure}[h]
\centering
\con{\epsfig{figure=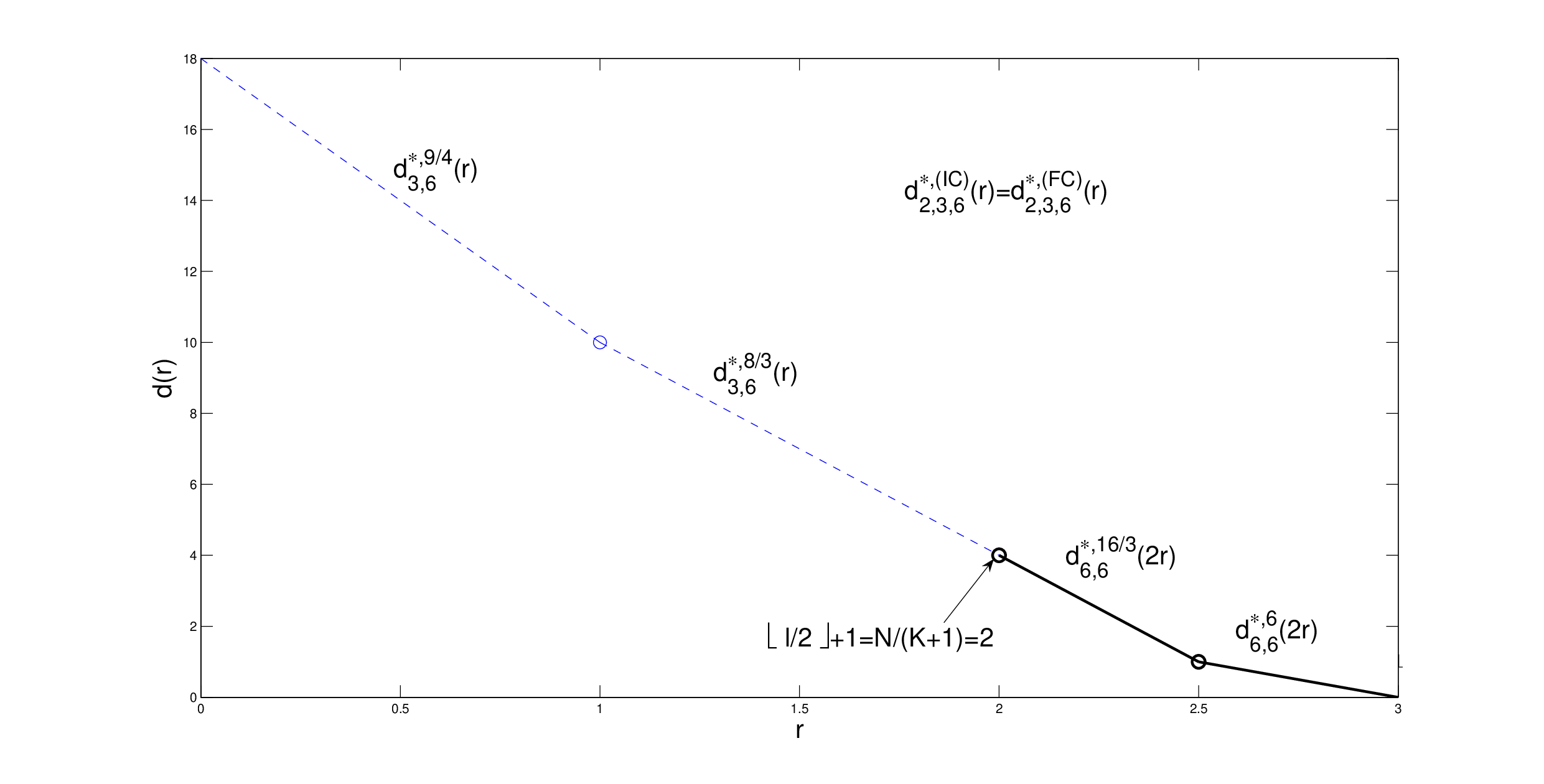,height=5.5cm}\caption{The
upper bound on the optimal DMT of IC's in the symmetric case for
$K=2$, $M=3$, $N=6$. Note that for this case we get
$\lfloor\frac{l}{2}\rfloor+1=\frac{N}{K+1}=\frac{\lf(K-1\ri)
M+1+\lfloor\frac{l+1}{2}\rfloor}{K}$. In addition this upper bound
coincides with the optimal DMT of finite constellations in the
symmetric case. Finally, for this case we get
$d^{\ast,\frac{8}{3}}_{3,6}\lf(r\ri)=d^{\ast,\frac{16}{3}}_{6,6}\lf(2r\ri)$
.}\label{fig:ConvThOptimalityandSuboptimalityofICinMACChannel}
}{\epsfig{figure=5.eps,height=4.8cm}\caption{The
upper bound on the optimal DMT of IC's in the symmetric case for
$K=2$, $M=3$, $N=6$. Note that for this case we get
$\lfloor\frac{l}{2}\rfloor+1=\frac{N}{K+1}=\frac{\lf(K-1\ri)
M+1+\lfloor\frac{l+1}{2}\rfloor}{K}$. In addition this upper bound
coincides with the optimal DMT of finite constellations in the
symmetric case. Finally, for this case we get
$d^{\ast,\frac{8}{3}}_{3,6}\lf(r\ri)=d^{\ast,\frac{16}{3}}_{6,6}\lf(2r\ri)$
.}\label{fig:ConvThOptimalityandSuboptimalityofICinMACChannel}}
\end{figure}

\begin{figure}[h]
\centering
\con{\epsfig{figure=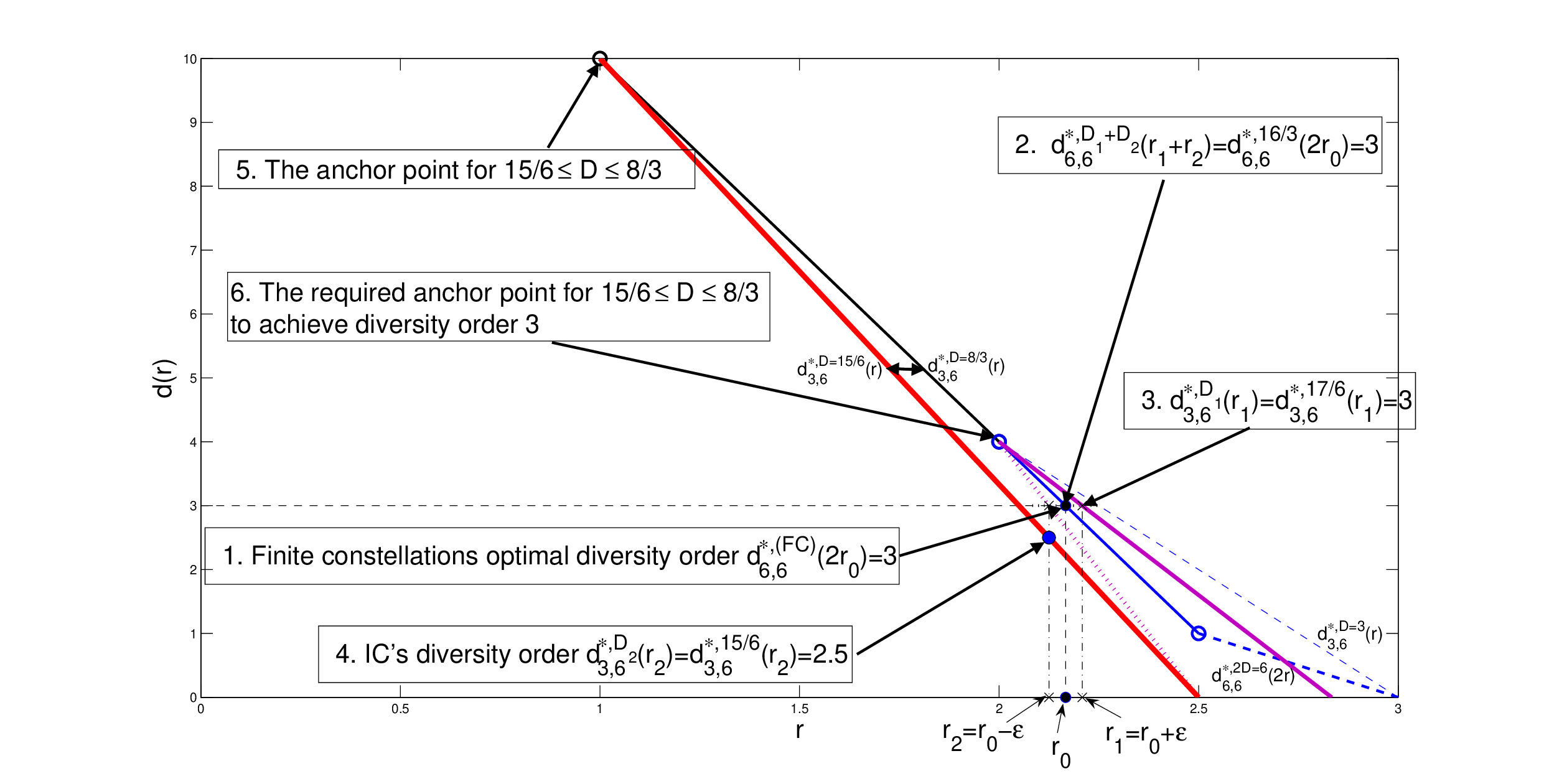,height=9cm}\caption{Illustration
of the sub-optimality of the unconstrained multiple-access channel
for $M=3$, $N=6$ and $K=2$. In this example we take
$r_{1}=r_{0}+\epsilon=\frac{13}{6}+\frac{1}{24}$ and
$r_{2}=r_{0}-\epsilon=\frac{13}{6}-\frac{1}{24}$, where
$r_{0}=\frac{13}{6}$. In this case the optimal diversity order of
finite constellations equals $\min
\lf(d^{\ast,\lf(FC\ri)}_{3,6}\lf(r_{1}\ri),d^{\ast,\lf(FC\ri)}_{3,6}\lf(r_{2}\ri),d^{\ast,\lf(FC\ri)}_{6,6}\lf(r_{1}+r_{2}\ri)\ri)$.
From the figure it can be seen that the minimum is obtained for
$d^{\ast,\lf(FC\ri)}_{6,6}\lf(r_{1}+r_{2}\ri)=d^{\ast,\lf(FC\ri)}_{6,6}\lf(2r_{0}\ri)=3$.
On the other hand IC's diversity order equals $\min
\lf(d^{\ast,D_{1}}_{3,6}\lf(r_{1}\ri),d^{\ast,D_{2}}_{3,6}\lf(r_{2}\ri),d^{\ast,D_{1}+D_{2}}_{6,6}\lf(2
r_{0}\ri)\ri)$. In this example we choose
$D_{1}=\frac{8}{3}+\frac{1}{6}$, $D_{2}=\frac{8}{3}-\frac{1}{6}$. In
this case we get $d^{\ast,D_{1}+D_{2}}_{6,6}\lf(2 r_{0}\ri)=
d^{\ast,\frac{16}{3}}_{6,6}\lf(2 r_{0}\ri)=3$,
$d^{\ast,D_{1}}_{3,6}\lf(r_{1}\ri)=d^{\ast,\frac{17}{6}}_{3,6}\lf(r_{1}\ri)=3$
and
$d^{\ast,D_{2}}_{3,6}\lf(r_{2}\ri)=d^{\ast,\frac{15}{6}}_{3,6}\lf(r_{2}\ri)=\frac{5}{2}<3$.
Hence, in this case the diversity order of IC's is smaller than the
optimal diversity order of finite constellations. It results from
the fact that for $0<D\le\frac{8}{3}$ the straight lines
$d^{\ast,D}_{3,6}\lf(r\ri)$ rotate around anchor points with
multiplexing gain smaller than 2, whereas they should rotate around
anchor point with multiplexing gain
2.}\label{fig:ConvThOptimalityandSuboptimalityofICinMACChannel_b}
}{\epsfig{figure=6.eps,height=4.8cm}\caption{Illustration
of the sub-optimality of the unconstrained multiple-access channel
for $M=3$, $N=6$ and $K=2$. In this example we take
$r_{1}=r_{0}+\epsilon=\frac{13}{6}+\frac{1}{24}$ and
$r_{2}=r_{0}-\epsilon=\frac{13}{6}-\frac{1}{24}$, where
$r_{0}=\frac{13}{6}$. In this case the optimal diversity order of
finite constellations equals $\min
\lf(d^{\ast,\lf(FC\ri)}_{3,6}\lf(r_{1}\ri),d^{\ast,\lf(FC\ri)}_{3,6}\lf(r_{2}\ri),d^{\ast,\lf(FC\ri)}_{6,6}\lf(r_{1}+r_{2}\ri)\ri)$.
From the figure it can be seen that the minimum is obtained for
$d^{\ast,\lf(FC\ri)}_{6,6}\lf(r_{1}+r_{2}\ri)=d^{\ast,\lf(FC\ri)}_{6,6}\lf(2r_{0}\ri)=3$.
On the other hand IC's diversity order equals $\min
\lf(d^{\ast,D_{1}}_{3,6}\lf(r_{1}\ri),d^{\ast,D_{2}}_{3,6}\lf(r_{2}\ri),d^{\ast,D_{1}+D_{2}}_{6,6}\lf(2
r_{0}\ri)\ri)$. In this example we choose
$D_{1}=\frac{8}{3}+\frac{1}{6}$, $D_{2}=\frac{8}{3}-\frac{1}{6}$. In
this case we get $d^{\ast,D_{1}+D_{2}}_{6,6}\lf(2 r_{0}\ri)=
d^{\ast,\frac{16}{3}}_{6,6}\lf(2 r_{0}\ri)=3$,
$d^{\ast,D_{1}}_{3,6}\lf(r_{1}\ri)=d^{\ast,\frac{17}{6}}_{3,6}\lf(r_{1}\ri)=3$
and
$d^{\ast,D_{2}}_{3,6}\lf(r_{2}\ri)=d^{\ast,\frac{15}{6}}_{3,6}\lf(r_{2}\ri)=\frac{5}{2}<3$.
Hence, in this case the diversity order of IC's is smaller than the
optimal diversity order of finite constellations. It results from
the fact that for $0<D\le\frac{8}{3}$ the straight lines
$d^{\ast,D}_{3,6}\lf(r\ri)$ rotate around anchor points with
multiplexing gain smaller than 2, whereas they should rotate around
anchor point with multiplexing gain
2.}\label{fig:ConvThOptimalityandSuboptimalityofICinMACChannel_b}}
\end{figure}

\subsection{Discussion: Convexity Vs. Non-Convexity of the Optimal DMT}\label{subsec:TheReasonsForSuboptimality} It is interesting to note that the
upper bound on the optimal DMT of IC's in the symmetric case is a
convex function, whereas the optimal DMT of finite constellations is
not necessarily so. The convexity of the optimal DMT of IC's can be
shown rather easily by the following arguments. It is based on the
fact that a function that equals to the maximum between straight
lines is a convex function. For $N\ge \lf(K+1\ri)M-1$ the optimal
DMT of IC's in the symmetric case is simply upper bounded by
$d^{\ast,\lf(FC\ri)}_{M,N}\lf(r\ri)$ which is a maximization between
straight lines, and therefore is a convex function. In the case
$N<\lf(K-1\ri)M+1$ the upper bound on the optimal DMT of IC's in the
symmetric case is a straight line. Finally, for $N=\lf(K-1\ri)M+1+l<
\lf(K+1\ri)M-1$, where $l=0,\dots, 2M-3$, the upper bound on the
optimal symmetric DMT of IC's equals to the maximization between the
first $\lfloor\frac{l}{2}\rfloor+1$ straight lines constituting
$d^{\ast,\lf(FC\ri)}_{M,N}\lf(r\ri)$, $d^{\ast}\lf(r\ri)$, and the
last $M-\lfloor\frac{l+1}{2}\rfloor$ straight lines constituting
$d^{\ast,\lf(FC\ri)}_{K\cdot M,N}\lf(K\cdot r\ri)$. This
maximization also yields a convex function.

On the other hand the optimal DMT of finite constellations in the
symmetric case is not necessarily a convex function. See Figure
\ref{fig:TheCasewhereSingleUserandKUserDMTUpperBoundCoincide} for
illustration. In fact the optimal DMT is not a convex function
whenever $N< \lf(K-1\ri)M+1$, or $N=\lf(K-1\ri)M+1+l<
\lf(K+1\ri)M-1$ and $\lfloor\frac{l}{2}\rfloor+1 \neq
\frac{\lf(K-1\ri)M+\lfloor\frac{l+1}{2}\rfloor}{K}$ where
$l=0,\dots, 2M-3$. It results from the following arguments. For
$N<\lf(K-1\ri)M+1$ we get $\frac{MN}{N+M-1}>\frac{N}{K}$, and so
$d^{\ast,\frac{MN}{N+M-1}}_{M,N}\lf(\frac{N}{K}\ri)>0$. In addition
$d^{\ast,\lf(FC\ri)}_{K,M,N}\lf(r\ri)=d^{\ast,\frac{MN}{N+M-1}}_{M,N}\lf(r\ri)$
for $0\le r\le\min \lf(1,\frac{N}{K+1}\ri)$. Based on these facts
and on the facts that $d^{\ast,\lf(FC\ri)}_{K,M,N}\lf(r\ri)$ is a
piecewise linear function and
$d^{\ast,\lf(FC\ri)}_{K,M,N}\lf(\frac{N}{K}\ri)=0$, we get that
$d^{\ast,\lf(FC\ri)}_{K,M,N}\lf(r\ri)$ is not a convex function. For
$N=\lf(K-1\ri)M+1+l< \lf(K+1\ri)M-1$ and $l=0,\dots, 2M-3$, we
know that
\begin{equation*}
d^{\ast,\lf(IC\ri)}_{K,M,N}\lf(r\ri)=d^{\ast}\lf(r\ri)<d^{\ast,\lf(FC\ri)}_{K,M,N}\lf(r\ri)
\end{equation*}
for $\lfloor\frac{l}{2}\rfloor+1<r<\frac{\lf(K-1\ri)M+\lfloor\frac{l+1}{2}\rfloor}{K}$.
Since $d^{\ast}\lf(r\ri)$ is a straight line it necessarily means
that $d^{\ast,\lf(FC\ri)}_{K,M,N}\lf(r\ri)$ is not a convex function
whenever
$\lfloor\frac{l}{2}\rfloor+1\neq\frac{\lf(K-1\ri)M+\lfloor\frac{l+1}{2}\rfloor}{K}$.
For the case $\lfloor\frac{l}{2}\rfloor+1 =
\frac{\lf(K-1\ri)M+\lfloor\frac{l+1}{2}\rfloor}{K}$ we get
$d^{\ast,\lf(FC\ri)}_{K,M,N}\lf(r\ri)=d^{\ast,\lf(IC\ri)}_{K,M,N}\lf(r\ri)$,
and so in this case the optimal DMT of finite constellations in the
symmetric case is also a convex function. Finally, for $N\ge
\lf(K+1\ri)M-1$ the optimal DMT in the symmetric case equals
$d^{\ast,\lf(FC\ri)}_{M,N}$ and as aforementioned it is a convex
function. Therefore, we can state that \emph{whenever the optimal
DMT of finite constellations in the symmetric case is not a convex
function, IC's are suboptimal}.

Finally, a question that may arise is whether it is possible to find
an extension of orthogonal designs
\cite{CalderbankSpaceTimeOrthogonal} to the multiple-access channel,
i.e., a transmission scheme that enables to separate the space-time
code from the symbols required for transmission. The most notable
example of such a transmission scheme is the Alamouti scheme
\cite{AlamoutiScheme} for the case of two transmit antennas and a
single receive antenna. For example, in this case transmitting the
information itself over the space-time code enables to obtain the
optimal DMT $d^{\ast,\lf(FC\ri)}_{2,1}\lf(r\ri)$ regardless of the
constellation size. For the multiple-access channel, if we examine
the optimal DMT of finite constellations for the symmetric case, for
$M=2$, $K=2$ and $N=1$ we get
\begin{equation*}
d^{\ast,\lf(FC\ri)}_{2,2,1}\lf(r\ri)=\left\{\begin{array}{ll}
d^{\ast,\lf(FC\ri)}_{2,1}\lf(r\ri) & 0\le r\le\frac{l}{3}\\
d^{\ast,\lf(FC\ri)}_{4,1}\lf(2r\ri) & \frac{1}{3}\le r\le
\frac{1}{2}
\end{array}
\right.
\end{equation*}
which imply that in the range $0\le r\le\frac{1}{3}$ each user can
obtain the same performance as the Alamouti scheme. However, our
results show that for this setting we get $N=1<\lf(K-1\ri)M+1=3$.
Therefore, the optimal DMT of IC's for the symmetric case is upper bounded by
\begin{equation*}
d^{\ast,\lf(IC\ri)}_{2,2,1}\lf(r\ri)=d^{\ast,\lf(FC\ri)}_{2,1}\lf(2r\ri)
\end{equation*}
which is strictly smaller than $d^{\ast,\lf(FC\ri)}_{2,1}\lf(r\ri)$
except for $r=0$, as illustrated in Figure
\ref{fig:TheReasonsForSuboptimality}. This leads us to the
conclusion that for the multiple-access channel, the signals
required for transmission affect the performance and can not be
separated from the space-time code. This is due to the fact that
when the constellation size is infinite, the performance is
sub-optimal. Hence, in this sense there is no extension of
orthogonal designs to the multiple-access channel.

\begin{figure}[h]
\centering
\con{\epsfig{figure=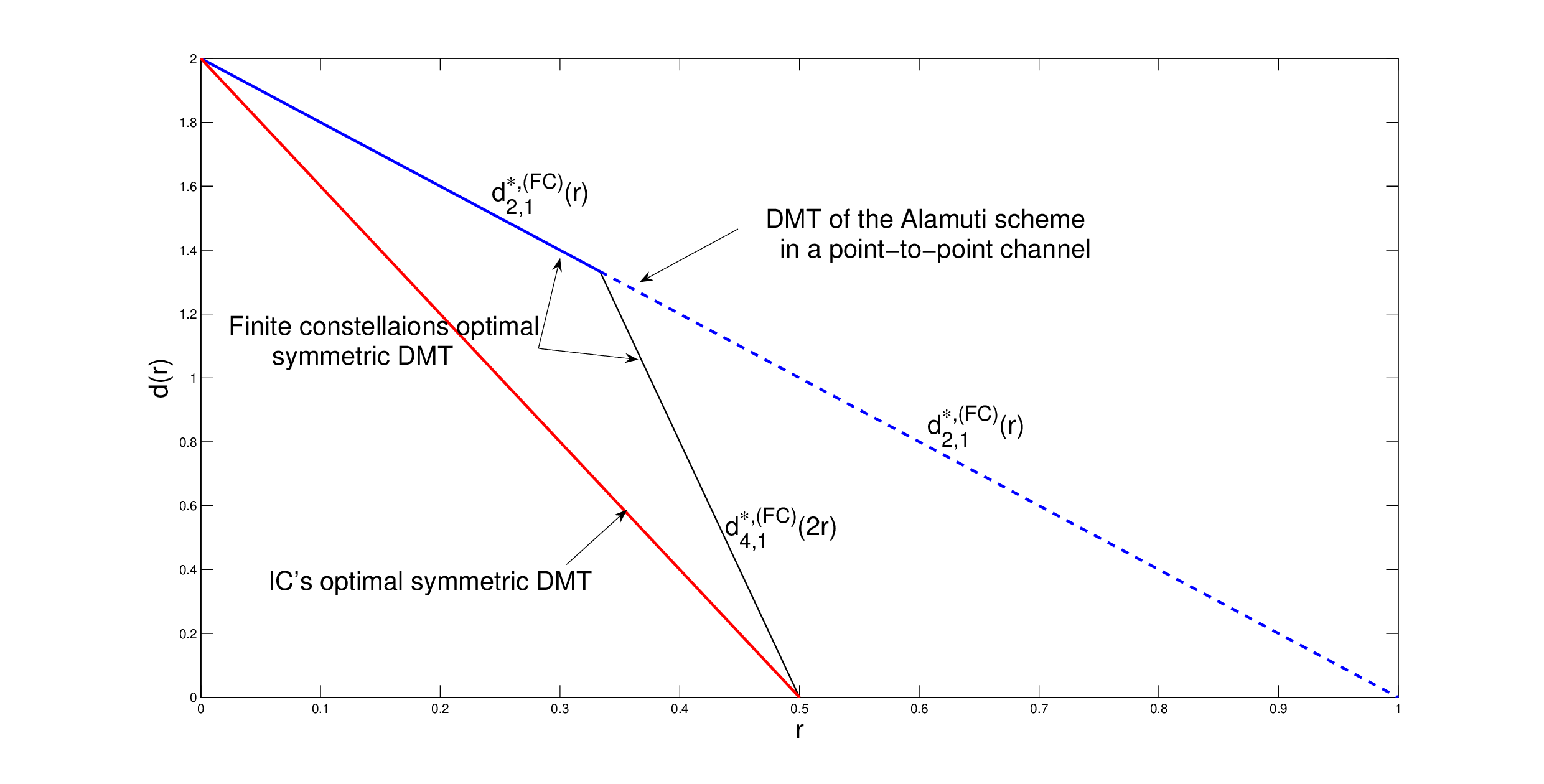,height=5.5cm}\caption{Comparison
between the optimal DMT of finite constellations in the symmetric
case and the upper bound on the optimal DMT of IC's, for $M=K=2$ and
$N=1$. Note that in the range $0\le r\le \frac{1}{3}$ finite
constellations achieve the Alamouti performance, whereas IC's do not. This illustrates that in the multiple-access channel the
constellation and the space-time code can not be
separated.}\label{fig:TheReasonsForSuboptimality}}{\epsfig{figure=7.eps,height=4.8cm}\caption{Comparison
between the optimal DMT of finite constellations in the symmetric
case and the upper bound on the optimal DMT of IC's, for $M=K=2$ and
$N=1$. Note that in the range $0\le r\le \frac{1}{3}$ finite
constellations achieve the Alamouti performance, whereas IC's do not. This illustrates that in the multiple-access channel the
constellation and the space-time code can not be
separated.}\label{fig:TheReasonsForSuboptimality}}
\end{figure}

\section{Attaining the Optimal DMT for $N\ge \lf(K+1\ri)M-1$}\label{sec:AttainingtheOptimalDMT}
In this section we show that the upper bound on the DMT of the
unconstrained multiple-access channel, derived in section
\ref{sec:LowerBoundErrorProb}, is achievable for $N\ge
\lf(K+1\ri)M-1$ by a sequence of IC's in general and lattices in
particular. Essentially, we show for $N\ge \lf(K+1\ri)M-1$ that IC's
attain DMT that equals to
$d^{\ast,\lf(FC\ri)}_{K,M,N}\lf(r_{1},\dots,r_{K}\ri)=d^{\ast,\lf(FC\ri)}_{M,N}\lf(\max
\lf(r_{1},\dots,r_{K}\ri)\ri)$.

We begin by showing in subsection
\ref{subsec:OrthogonalizationSubOptimality} that simple orthogonal
transmission approaches such as time-division multiple-access (TDMA)
or code-division multiple-access (CDMA) will result in sub-optimal
performance for $N\ge \lf(K+1\ri)M-1$. Then, we introduce in
subsection \ref{subsec:TheTransmissionScheme} the transmission
scheme for each user, followed by presentation of the effective
channel induced by the transmission scheme in subsection
\ref{subsec:TheEffectiveChannel}. We derive in subsection
\ref{subsec:DirectsubsecUpperBoundErrorProb} for each channel
realization an upper bound for the error probability of the ML
decoder of an ensemble of $K$ IC's. Finally, in subsection
\ref{subsec:DirectAchieveOptimalDMT} we average this upper bound
over the channel realizations, and show that the optimal DMT is
attained for $N\ge \lf(K+1\ri)M-1$ .

\subsection{Orthogonal Transmission is Sub-optimal}\label{subsec:OrthogonalizationSubOptimality} In
this subsection we show the sub-optimality of transmission methods
that create at the receiver orthogonalization between different
independent streams, for any channel realization. The advantage of
these transmission schemes is their simplicity. By assigning the
IC's or lattices correctly in the space, they enable to consider
each stream independently and reduce the decoding problem to the
point-to-point scenario. Such an approach is very natural when
considering IC's in general and lattices in particular, as it
involves assigning the streams with dimensions or subspaces that
remain orthogonal at the receiver for each channel realization. The
IC related to a certain stream lies within the assigned subspace. We
show for $N\ge \lf(K+1\ri)M-1$ that such transmission method is
sub-optimal as it requires each user to give up too many dimensions
to create the orthogonalization.

At the receiver, orthogonal transmission scheme enables each
independent stream to lie within a subspace orthogonal to the other
streams, for each channel realization. In order for a transmission
scheme to fulfil this property, the streams must be assigned with
orthogonal subspaces already at the transmitter, i.e., must be
assigned with orthogonal subspaces in $\mathbb{C}^{MT}$ assuming
there are $T$ channel uses. Hence, orthogonal transmission schemes
require the partition of at most $M$ number of dimensions per
channel use between \emph{all} users. On the other hand, $N\ge
\lf(K+1\ri)M-1$ leads to $N \ge K\cdot M$, and so potentially the
$K$ users could transmit together up to $KM$ dimensions per channel
use, but not orthogonally. The optimal DMT for the symmetric case
for $N\ge \lf(K+1\ri)M-1$ is $d^{\ast,\lf(FC\ri)}_{M,N}\lf(r\ri)$.
From Corollary \ref{prop:ICP2pDMTAnchorPoints} and Theorem
\ref{th:ICOptimalSymmetricDMTUpperNound} we know that in the range
$M-1\le r\le M$ the optimal DMT is obtained only when each user
transmits over $M$ average number of dimensions per channel use,
i.e., the $K$ users must transmit together $KM$ dimensions per
channel use. Hence, orthogonal transmission is not provided with
enough dimensions per channel use to obtain the last line of the
optimal DMT. This leads to its sub-optimality.

As a first example we consider an orthogonal transmission scheme
that takes the natural partition to $K$ streams induced by the
multiple-access channel. In order to obtain orthogonalization for
this case, at each channel use a different user transmits, while the
others wait for their turn to transmit. This transmission method is
coined TDMA. Let us consider the symmetric case for which each user
transmits at multiplexing gain $r$. For this case, for $T$ channel
uses and $K$ users, each user transmits over $\frac{T}{K}$ channel
uses. Therefore, each user can achieve the point-to-point
performance of a channel with $M$ transmit and $N$ receive antennas,
using $\frac{T}{K}$ channel uses. However, in order for each user to
transmit at multiplexing gain $r$ per channel use, he must transmit
at multiplexing gain $Kr$ over those $\frac{T}{K}$ channel uses,
which leads to DMT performance of
$d^{\ast,\lf(FC\ri)}_{M,N}\lf(Kr\ri)$. This shows the sub-optimality
of TDMA.

Another transmission approach is assigning an independent stream for
each transmit antenna. This is equivalent to considering a
multiple-access channel with $KM$ users, each with a single transmit
antenna. Let us consider for example a multiple-access channel with
$M=1$, $K$ users and $N\ge K$. In this case the optimal DMT for the
symmetric case equals $d^{\ast,\lf(FC\ri)}_{1,N}\lf(r\ri)$. On the
other hand for CDMA each user is assigned with an orthogonal
subspace in $\mathbb{C}^{T}$, assuming there are $T$ channel uses.
In this way each stream can obtain the performance of a
point-to-point channel with a single transmit antenna and $N$
receive antennas. However, for the orthogonalization to hold each
user is assigned with $\frac{T}{K}$ dimensional subspace, which must
be orthogonal to the other users subspaces. Hence, in order for each
user to obtain multiplexing gain $r$ per channel use, he must
transmit at multiplexing gain $Kr$ over the $\frac{T}{K}$
dimensional subspace. This leads to suboptimal DMT performance of
$d^{\ast,\lf(FC\ri)}_{1,N}\lf(Kr\ri)$.

%yoyoyo

\subsection{The Transmission Scheme}\label{subsec:TheTransmissionScheme}
From subsection \ref{subsec:OrthogonalizationSubOptimality} we get
that an optimal transmission scheme must allow different users to
lie in overlapping subspaces at the receiver, i.e., at the receiver
the users can not reside in orthogonal subspaces. Essentially, for
the proposed transmission scheme each user transmits as if the
channel was a point-to-point channel with $M$ transmit and $N$
receive antennas. Hence, each user transmission matrix is identical
to the transmission matrix presented in
\cite{YonaFederICOptimalDMT}.

We denote the transmission matrix of user $i$ by
$G_{l}^{\lf(i\ri)}$, where $l=0,\dots,M-1$ and $i=1,\dots,K$.
$G_{l}^{\lf(i\ri)}$ has $M$ rows that represent the transmission
antennas, and $T_{l}=N+M-1-2\cdot l$ columns that represent the
number of channel uses. $G_{l}^{\lf(i\ri)}$ transmits over
$D_{l}=\frac{NM-l \lf(l+1\ri)}{N+M-1-2l}$ average number of
dimensions per channel use in the following manner.

Consider a channel with $M$ transmit and $N$ receive antennas.
\begin{enumerate}
\item For
$D_{M-1}=\frac{M(N-M+1)}{N-M+1}=M$: the matrix $G_{M-1}^{\lf(i\ri)}$
has $N-M+1$ columns (channel uses). In the first column transmit
symbols $x_{1},\dots, x_{M}$ on the $M$ antennas, and in the $N-M+1$
column transmit symbols $x_{M(N-M)+1},\dots, x_{M(N-M+1)}$ on the
$M$ antennas.
\item For $D_{l}$, $l=0,\dots,L-2$: the matrix $G_{l}^{\lf(i\ri)}$ has $M+N-1-2\cdot l$ columns.
We add to $G_{l+1}^{\lf(i\ri)}$, the transmission scheme for
$D_{l+1}$, two columns in order to get $G_{l}^{\lf(i\ri)}$. In the
first added column transmit $l+1$ symbols on antennas $1,\dots,l+1$.
In the second added column transmit different $l+1$ symbols on
antennas $M-l,\dots,M$.
\end{enumerate}

According to the definition of the transmission scheme we can see
that the different users transmit the same average number of
dimensions per channel use. Let us denote the transmission scheme of
the first $k$ users by
\begin{equation}
G_{l}^{\lf(1,\dots, k\ri)}=\lf(G_{l}^{\lf(1\ri)\dagger},\dots,
G_{l}^{\lf(k\ri)\dagger}\ri)^{\dagger}\quad k=1,\dots,K.
\end{equation}
$G_{l}^{\lf(1,\dots, k\ri)}$ is a $k \cdot M\times T_{l}$ matrix.
Note that $G_{l}^{\lf(1,\dots, k\ri)}$ transmits over $k\cdot
D_{l}\cdot T_{l}$ dimensions. Later in this section we show that
$G_{l}^{\lf(1,\dots, K\ri)}$ attains the optimal DMT in the range
$l\le r_{max}\le l+1$.

\emph{Example}: $M=2$, $N=5$ and $K=2$. In this case the
transmission scheme for $D_{0}=\frac{10}{6}$, $D_{1}=\frac{8}{4}$
($G_{0}^{\lf(1.2\ri)}$, $G_{1}^{\lf(1.2\ri)}$ respectively) is as
follows:
\con{
\begin{equation}\label{eq:DirectSubsecTransmissiomScheme1}
G_{l}^{\lf(1,2\ri)}=\lf(\begin{array}{c}
G_{l}^{\lf(1\ri)}\\
G_{l}^{\lf(2\ri)}
\end{array}\ri)=\underb{\underb{\left(\begin{array}{ccccc}
x_{1} & x_{3} & x_{5} & x_{7} & | \\
x_{2}   &x_{4} & x_{6} & x_{8} & |\\
-- & -- & -- & -- & | \\
x_{9} & x_{11} & x_{13} & x_{15} & | \\
x_{10}   &   x_{12} & x_{14} & x_{16} & |
\end{array}\right.}_{D_{1}=\frac{8}{4},G_{1}^{\lf(1,2\ri)}}
\left.\begin{array}{cc}
x_{17} & 0 \\
0 & x_{18}\\
-- & -- \\
x_{19} & 0  \\
0 & x_{20}
\end{array}\right)}_{D_{0}=\frac{10}{6},G_{0}^{\lf(1,2\ri)}}.\end{equation}}
{\bal{eq:DirectSubsecTransmissiomScheme1}{&G_{l}^{\lf(1,2\ri)}=\lf(\begin{array}{c}
G_{l}^{\lf(1\ri)}\\
G_{l}^{\lf(2\ri)}
\end{array}\ri)=&
\nn\\
&\underb{\underb{\left(\begin{array}{ccccc}
x_{1} & x_{3} & x_{5} & x_{7} & | \\
x_{2}   &x_{4} & x_{6} & x_{8} & |\\
-- & -- & -- & -- & | \\
x_{9} & x_{11} & x_{13} & x_{15} & | \\
x_{10}   &   x_{12} & x_{14} & x_{16} & |
\end{array}\right.}_{D_{1}=\frac{8}{4},G_{1}^{\lf(1,2\ri)}}
\left.\begin{array}{cc}
x_{17} & 0 \\
0 & x_{18}\\
-- & -- \\
x_{19} & 0  \\
0 & x_{20}
\end{array}\right)}_{D_{0}=\frac{10}{6},G_{0}^{\lf(1,2\ri)}}.&}}

\subsection{The Effective Channel}\label{subsec:TheEffectiveChannel}
Next we define the effective channel matrix induced by the
transmission scheme of the first $k$ users
$G_{l}^{\lf(1,\dots,k\ri)}$, where $k=1,\dots,K$. Let us denote the
first $k$ users transmission at time instance $t$ by
\begin{equation*}
\udl{x}_{t}=\lf(\udl{x}_{t}^{\lf(1\ri)\dagger},\dots,\udl{x}_{t}^{\lf(k\ri)\dagger}\ri)^{\dagger}\quad
t=1,\dots, T_{l}.
\end{equation*}
In accordance with the channel model from \eqref{eq:Channel Fading}
we get
\begin{equation*}
\udl{y}_{t}=H^{\lf(1,\dots,k\ri)}\cdot \udl{x}_{t}\quad
t=1,\dots,T_{l}.
\end{equation*}
where
$H^{\lf(1,\dots,k\ri)}=\lf(H^{\lf(1\ri)},\dots,H^{\lf(k\ri)}\ri)$,
is an $N\times k\cdot M$ matrix. The multiplication
$H^{\lf(1,\dots,k\ri)}\cdot G_{l}^{\lf(1,\dots,k\ri)}$ yields a
matrix with $N$ rows and $T_{l}$ columns, for which each column
equals to $H^{\lf(1,\dots,k\ri)}\cdot\udl{x}_{t}$, $t=1\dots T_{l}$.
Each user is transmitting $D_{l}\cdot T_{l}$-complex dimensional IC
with $D_{l}\cdot T_{l}$-complex symbols, i.e., $G_{l}^{\lf(i\ri)}$
has exactly $D_{l}\cdot T_{l}$ non-zero values representing the
$D_{l}\cdot T_{l}$ complex-dimensional IC within
$\mathbb{C}^{MT_{l}}$. Together, the first $k$ users transmit an
effective $k\cdot D_{l}\cdot T_{l}$-dimensional complex IC within
$\mathbb{C}^{k\cdot M T_{l}}$. For each column of
$G_{l}^{\lf(1,\dots,k\ri)}$, denoted by $\udl{g}_{m}^{\lf(k\ri)}$,
$m=1\dots ,T_{l}$, we define the effective channel that
$\udl{g}_{m}^{\lf(k\ri)}$ sees as $\widehat{H}_{m}$. It consists of
the columns of $H^{\lf(1,\dots,k\ri)}$ that correspond to the
non-zero entries of $\udl{g}_{m}^{\lf(k\ri)}$, i.e.,
$H^{\lf(1,\dots,k\ri)}\cdot\udl{g}_{m}^{\lf(k\ri)}=\widehat{H}_{m}\cdot\udl{\widehat{g}}_{m}^{\lf(k\ri)}$,
where $\udl{\widehat{g}}_{m}^{\lf(k\ri)}$ equals to the non-zero
entries of $\udl{g}_{m}^{\lf(k\ri)}$. As an example assume without
loss of generality that only the first $l_{m}$ entries of
$\udl{g}_{m}^{\lf(k\ri)}$ are not zero. In this case
$\widehat{H}_{m}$ is an $N\times l_{m}$ matrix that equals to the
first $l_{m}$ columns of $H^{\lf(1,\dots,k\ri)}$. In accordance with
\eqref{eq:ExtendedChannelModel}, $H_{\eff}^{(l),k}$ is an
$NT_{l}\times k D_{l}\cdot T_{l}$ block diagonal matrix consisting
of $T_{l}$ blocks. Since each block in $H_{\eff}^{(l),k}$
corresponds to the multiplication of $H^{\lf(1,\dots,k\ri)}$ with
different column in $G_{l}^{\lf(1,\dots,k\ri)}$, the blocks of
$H_{\eff}^{(l),k}$ equal $\widehat{H}_{m}$, $m=1,\dots,T_{l}$. Note
that in the effective matrix $NT_{l}\ge k\cdot D_{l}\cdot T_{l}$.

Next we elaborate on the structure of the blocks of
$H_{\eff}^{(l),k}$. For this reason we denote the \emph{m'th} column
of $H^{\lf(1,\dots,k\ri)}$ by $\udl{h}_{m}$, $m=1,\dots, k\cdot M$.
The transmission scheme has $N+M-1-2\cdot l$ columns. The entries of
the first $N-M+1$ columns of $G_{l}^{\lf(1,\dots,k\ri)}$,
$\udl{g}_{1}^{\lf(k\ri)},\dots,\udl{g}_{N-M+1}^{\lf(k\ri)}$ are all
different from zero.  Hence, the first $N-M+1$ blocks of
$H_{\eff}^{(l),k}$ are
\begin{equation}\label{eq:DirectSubsecTheEffectiveChannel1}
\widehat{H}_{m}=H^{\lf(1,\dots,k\ri)}\qquad m=1,\cdots, N-M+1.
\end{equation}
After the first $N-M+1$ columns we have $M-1-l$ pairs of columns. For each pair we have
\con{\begin{align}\label{eq:DirectSubsecTheEffectiveChannel2}
\widehat{H}_{N-M+2v}=\widehat{H}_{N-M+2 \lf(v-1\ri)}&\setminus
\lf\{\udl{h}_{M-\lf(v-1\ri)},\udl{h}_{2M-\lf(v-1\ri)},\dots,\udl{h}_{kM-\lf(v-1\ri)}\ri\}\nonumber\\
&=\{\udl{h}_{1},\dots,\udl{h}_{M-v},\udl{h}_{M+1},\dots,\udl{h}_{2M-v},\dots,\udl{h}_{\lf(k-1\ri)M+1},\dots,\udl{h}_{k\cdot
M-v}\}
\end{align}}{\bal{eq:DirectSubsecTheEffectiveChannel2}{&\widehat{H}_{N-M+2v}=&
\nn\\
&\widehat{H}_{N-M+2 \lf(v-1\ri)}\setminus
\lf\{\udl{h}_{M-\lf(v-1\ri)},\udl{h}_{2M-\lf(v-1\ri)},\dots,\udl{h}_{kM-\lf(v-1\ri)}\ri\}&
\nn\\
&=\{\udl{h}_{1},\dots,\udl{h}_{M-v},\udl{h}_{M+1},\dots,\udl{h}_{2M-v},\dots&
\nn\\
&\dots,\udl{h}_{\lf(k-1\ri)M+1},\dots,\udl{h}_{k\cdot
M-v}\}&}}
and
\con{\begin{align}\label{eq:DirectSubsecTheEffectiveChannel3}
\widehat{H}_{N-M+2v+1}=\widehat{H}_{N-M+2 \lf(v-1\ri)+1}&\setminus
\lf\{\udl{h}_{v},\udl{h}_{v+M},\dots,\udl{h}_{v+kM}\ri\}\nonumber\\
&=\{\udl{h}_{v+1},\dots,\udl{h}_{M},\udl{h}_{M+v+1},\dots,\udl{h}_{2M},\dots,\udl{h}_{\lf(k-1\ri)M+v+1},\dots,\udl{h}_{k\cdot
M}\}
\end{align}}{\bal{eq:DirectSubsecTheEffectiveChannel3}{&\widehat{H}_{N-M+2v+1}=&
\nn\\
&=\widehat{H}_{N-M+2 \lf(v-1\ri)+1}\setminus
\lf\{\udl{h}_{v},\udl{h}_{v+M},\dots,\udl{h}_{v+kM}\ri\}&
\nn\\
&\{\udl{h}_{v+1},\dots,\udl{h}_{M},\udl{h}_{M+v+1},\dots,\udl{h}_{2M},\dots&
\nn\\
&\dots,\udl{h}_{\lf(k-1\ri)M+v+1},\dots,\udl{h}_{k\cdot
M}\}&}}
where $v=1,\dots, M-1-l$.

\emph{Example}: consider $M=2$, $N=5$ and $K=2$ as presented in
\eqref{eq:DirectSubsecTransmissiomScheme1}. In this case $l=0,1$ and
we have $D_{0}=\frac{10}{6}$ and $D_{1}=\frac{8}{4}=2$ respectively.
In addition
$H^{\lf(1,2\ri)}=\lf(H^{\lf(1\ri)},H^{\lf(2\ri)}\ri)=\lf(\udl{h}_{1},\udl{h}_{2},\udl{h}_{3},\udl{h}_{4}\ri)$.
We begin with $k=1$. In this case we get a point-to-point channel
with $2$ transmit and 5 receive antennas
$H^{\lf(1\ri)}=\lf(\udl{h}_{1},\udl{h}_{2}\ri)$, which leads to the
following effective channels
\begin{enumerate}
\item $D_{1}=2$: $H_{\eff}^{(l=1),k=1}$ is generated from the
multiplication of the $5\times 2$ matrix $H^{\lf(1\ri)}$ with the
four columns of the transmission matrix $G_{1}^{\lf(1\ri)}$. In this
case $H_{\eff}^{(1),1}$ is a $20\times 8$ block diagonal matrix,
consisting of four blocks, where each block equals to
$H^{\lf(1\ri)}$.
\item $D_{0}=\frac{10}{6}$: $H_{\eff}^{(l=0),k=1}$ is a $30\times 10$ block
diagonal matrix consisting of six blocks. The first four blocks are
equal to $H^{\lf(1\ri)}$. The additional two blocks (induced by
columns 5-6 of $G_{0}^{\lf(1\ri)}$) are vectors. We get that
$\widehat{H}_{5}=\udl{h}_{1}$ and $\widehat{H}_{6}=\udl{h}_{2}$.
\end{enumerate}
For $k=2$ the effective channel induced by $G_{l}^{\lf(1,2\ri)}$ is
as follows.
\begin{enumerate}
\item $D_{1}=2$: In this case the effective channel $H_{eff}^{\lf(l=1\ri),k=2}$ is a $20\times 16$ matrix consisting of four blocks, where each block equals $H^{\lf(1,2\ri)}=\lf(H^{\lf(1\ri)},H^{\lf(2\ri)}\ri)$.
\item $D_{0}=\frac{10}{6}$: In this case the effective channel $H_{eff}^{\lf(l=0\ri),k=2}$ is a $30\times
20$ matrix consisting of six blocks. The first four blocks equal to
$H^{\lf(1,2\ri)}$, whereas the other two blocks are
$\widehat{H}_{5}=\lf(\udl{h}_{1},\udl{h}_{3}\ri)$ and
$\widehat{H}_{6}=\lf(\udl{h}_{2},\udl{h}_{4}\ri)$.
\end{enumerate}
We present $H_{\eff}^{(0),2}$ of our example in equation
\eqref{eq:DirectSubsecTheEffectiveChannel4}.

\begin{figure*}
\begin{equation}\label{eq:DirectSubsecTheEffectiveChannel4}
H_{\eff}^{(l=0),k=2}=\left(\begin{array}{cccccc}
H^{\lf(1,2\ri)} &\boldsymbol{0}  &\boldsymbol{0}  &\boldsymbol{0} &\boldsymbol{0} &\boldsymbol{0}\\
\boldsymbol{0} &H^{\lf(1,2\ri)} &\boldsymbol{0} &\boldsymbol{0} &\boldsymbol{0} &\boldsymbol{0}\\
\boldsymbol{0} &\boldsymbol{0} &H^{\lf(1,2\ri)} &\boldsymbol{0} &\boldsymbol{0} &\boldsymbol{0}\\
\boldsymbol{0} &\boldsymbol{0} &\boldsymbol{0} &H^{\lf(1,2\ri)} &\boldsymbol{0} &\boldsymbol{0}\\
\boldsymbol{0} &\boldsymbol{0} &\boldsymbol{0} &\boldsymbol{0} &\lf(\udl{h}_{1},\udl{h}_{3}\ri) &\boldsymbol{0}\\
\boldsymbol{0} &\boldsymbol{0} &\boldsymbol{0} &\boldsymbol{0}
&\boldsymbol{0} &\lf(\udl{h}_{2},\udl{h}_{4}\ri)
\end{array}\right)
\end{equation}
\end{figure*}

Now let us consider the rows of $G_{l}^{\lf(1,\dots,k\ri)}$. Each
row of the transmission matrix is related to the column of
$H^{\lf(1,\dots,k\ri)}$ that multiplies it, i.e., row $j$ in
$G_{l}^{\lf(1,\dots,k\ri)}$ corresponds to column $\udl{h}_{j}$. In
case there is a non zero entry of row $j$ in column $m$ of
$G_{l}^{\lf(1,\dots,k\ri)}$, it means that $\udl{h}_{j}$ occurs in
$\widehat{H}_{m}$. In the next lemma we examine the number of
occurrences of a certain column of $H^{\lf(1,\dots,k\ri)}$ in the
blocks of $H_{eff}^{\lf(l\ri),k}$.

\begin{lem}\label{lem:DirectSubsecTheEffectiveChannel1}
For any $k=1,\dots,K$ consider column $\udl{h}_{a\cdot M+b}$ in
$H^{\lf(1,\dots,k\ri)}$, where $a=0,\dots,k-1$ and $b=1,\dots,M$. In
this case $\udl{h}_{a\cdot M+b}$ occurs only in the first
$m=N-M+1+\min \lf(M-l-1,M-b\ri)+ \min \lf(M-l-1,b-1\ri)$ blocks of
$H_{eff}^{\lf(l\ri),k}$.
\end{lem}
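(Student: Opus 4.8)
The plan is to count, for a fixed column $\udl{h}_{a\cdot M+b}$, exactly in which blocks $\widehat{H}_{m}$ it appears, by walking through the structure of the transmission matrix $G_{l}^{\lf(1,\dots,k\ri)}$ column by column. The key observation is that column $\udl{h}_{a\cdot M+b}$ corresponds to row $a\cdot M+b$ of the transmission matrix, so I need to count how many of the $T_{l}=N+M-1-2l$ columns of $G_{l}^{\lf(1,\dots,k\ri)}$ have a nonzero entry in that row. Because the transmission scheme is built identically for each of the $k$ users (each block $G_{l}^{\lf(i\ri)}$ has the same sparsity pattern), row $a\cdot M+b$ has the same occupancy pattern as row $b$ within a single user's $M\times T_{l}$ block; hence it suffices to analyze the single-user pattern in rows indexed by antenna number $b\in\{1,\dots,M\}$.

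First I would recall from subsection \ref{subsec:TheTransmissionScheme} that the first $N-M+1$ columns of $G_{l}^{\lf(1,\dots,k\ri)}$ are fully populated (this is exactly \eqref{eq:DirectSubsecTheEffectiveChannel1}), so $\udl{h}_{a\cdot M+b}$ occurs in all of the first $N-M+1$ blocks regardless of $b$. The remaining $M-1-l$ pairs of columns are the nontrivial part: by the recursive construction, the $v$-th pair (for $v=1,\dots,M-1-l$) drops antennas $M-(v-1),\dots$ from one added column and antennas $1,\dots,v$ from the other, as formalized in \eqref{eq:DirectSubsecTheEffectiveChannel2} and \eqref{eq:DirectSubsecTheEffectiveChannel3}. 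I would track antenna $b$ through these pairs: the ``top-side'' added columns keep antenna $b$ active precisely while $b\le M-v$, i.e. for $v=1,\dots,M-b$, and the ``bottom-side'' added columns keep antenna $b$ active precisely while $b\ge v+1$, i.e. for $v=1,\dots,b-1$. Capping each count at the available number of pairs $M-1-l$ gives $\min\lf(M-l-1,M-b\ri)$ top-side occurrences and $\min\lf(M-l-1,b-1\ri)$ bottom-side occurrences.

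Adding these to the $N-M+1$ fully populated blocks yields exactly the claimed bound $N-M+1+\min\lf(M-l-1,M-b\ri)+\min\lf(M-l-1,b-1\ri)$. I would then argue that these occurrences are \emph{contiguous from the start}, i.e. the claim that $\udl{h}_{a\cdot M+b}$ occurs \emph{only} in the first $m$ blocks. This follows from the nested structure: each pair of added columns is obtained by \emph{removing} columns from the previous pair (the set-minus operations in \eqref{eq:DirectSubsecTheEffectiveChannel2} and \eqref{eq:DirectSubsecTheEffectiveChannel3} only delete $\udl{h}$'s), so once antenna $b$ is dropped on a given side it never reappears on that side. Hence the top-side occurrences occupy an initial run of the odd-indexed added columns and the bottom-side occurrences occupy an initial run of the even-indexed added columns; the overall last block containing $\udl{h}_{a\cdot M+b}$ is the larger of the two runs, giving the single cutoff $m$.

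**The main obstacle** I expect is the bookkeeping of the two interleaved column sequences. The pairs alternate between a ``top'' column (antennas $1,\dots,l+1$ in the general construction, reflected in the $v$-indexing of \eqref{eq:DirectSubsecTheEffectiveChannel3}) and a ``bottom'' column (antennas $M-l,\dots,M$, reflected in \eqref{eq:DirectSubsecTheEffectiveChannel2}), so I must verify that the last \emph{block index} at which $\udl{h}_{a\cdot M+b}$ appears is correctly the maximum of the two run-lengths offset by $N-M+1$, rather than their sum or some interleaved value. The cleanest way to handle this is to note that the statement only asserts the \emph{total} number of blocks in which the column occurs equals $m$ and that these are the first $m$ blocks; since every added column either retains or permanently drops antenna $b$, the total count equals the number of active columns, and contiguity from block $1$ follows from monotonicity of the drops — so the sum form in the lemma is indeed the correct (and consistent) bound.
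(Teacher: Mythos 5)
Your counting argument is exactly the paper's intended proof (the paper simply declares the lemma ``straightforward from the definitions'' of the blocks in \eqref{eq:DirectSubsecTheEffectiveChannel1}--\eqref{eq:DirectSubsecTheEffectiveChannel3}), and the count itself is correct: $N-M+1$ occurrences in the fully populated blocks, plus $p=\min\lf(M-l-1,M-b\ri)$ occurrences among the blocks $\widehat{H}_{N-M+2v}$ (where $\udl{h}_{a\cdot M+b}$ is present iff $b\le M-v$) and $q=\min\lf(M-l-1,b-1\ri)$ among the blocks $\widehat{H}_{N-M+2v+1}$ (present iff $b\ge v+1$).

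However, the contiguity claim you add at the end --- that these occurrences form the initial segment $\{1,\dots,m\}$ --- is false, and ``monotonicity of the drops'' does not rescue it. Monotonicity gives you an initial run of length $p$ among the even-offset added blocks and an initial run of length $q$ among the odd-offset ones, but the two runs are interleaved, so their union is an initial segment only when $q\le p\le q+1$. The paper's own example ($M=2$, $N=5$, $l=0$) already breaks it: for $\udl{h}_{2}$ one has $p=0$, $q=1$, so it occurs in blocks $1,2,3,4,6$ and skips block $5$, even though $m=5$. The lemma's phrase ``occurs only in the first $m$ blocks'' is therefore itself loose; what is true, and what is actually used in the proof of Lemma \ref{lem:DirectLowerBoundDeterminantHeff}, is only that $\udl{h}_{a\cdot M+b}$ occurs in exactly $m$ of the $T_{l}$ blocks (together with the identity of the columns to its left in each occurrence). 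You correctly sensed this tension in your ``main obstacle'' paragraph but resolved it the wrong way: rather than asserting contiguity, you should state and prove only the occurrence count, which is all the downstream argument needs.
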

\begin{proof}
Straight forward from the definition of the blocks of
$H_{eff}^{\lf(l\ri),k}$ in
\eqref{eq:DirectSubsecTheEffectiveChannel1},
\eqref{eq:DirectSubsecTheEffectiveChannel2} and
\eqref{eq:DirectSubsecTheEffectiveChannel3}.
\end{proof}

\subsection{Upper Bound on the Error Probability}\label{subsec:DirectsubsecUpperBoundErrorProb} In this
subsection we derive for each channel realization an upper bound for
the error probability of the joint ML decoder of $K$ ensembles of
IC's transmitted on the unconstrained multiple-access channel,
assuming each IC is $D_{l}\cdot T_{l}$-complex dimensional.

In accordance with the definitions in
\ref{subsec:TheEffectiveChannel} we denote the effective channel of
any set of users pulled together by $H_{eff}^{\lf(l\ri),\lf(s\ri)}$,
where $s\subseteq \lf\{1,\dots,K\ri\}$\footnote{Note that in
\ref{subsec:TheEffectiveChannel} we considered the case of the first
$k$ users for $k=1,\dots, K$. The extension to any $s\subseteq
\lf\{1,\dots,K\ri\}$ is straight forward.}. We define
$|H_{\eff}^{(l),\lf(s\ri) \dagger}\cdot
H_{\eff}^{(l),\lf(s\ri)}|=\rho^{-\sum_{i=1}^{|s|\cdot D_{l}\cdot
T_{l}}\eta_{i}^{\lf(s\ri)}}$ , where
$\rho^{-\frac{\eta_{i}^{\lf(s\ri)}}{2}}$ is the \emph{i'th} singular
value of $H_{\eff}^{(l),\lf(s\ri)}$, $1\le i\le |s|\cdot D_{l}\cdot
T_{l}$. We also define
$\udl{\eta}^{\lf(s\ri)}=(\eta_{1}^{\lf(s\ri)},\dots,\eta_{|s|\cdot
D_{l}\cdot T_{l}}^{\lf(s\ri)})^{T}$. Note that in our setting
$NT_{l}\ge K\cdot D_{l}\cdot T_{l}$.

\begin{theorem}\label{Th:DirectUpperBoundErrorProb}
Consider $K$ ensembles of $D_{l}\cdot T_{l}$-complex dimensional
IC's transmitted on the unconstrained multiple-access channel with
effective channel $H_{eff}^{\lf(l\ri),K}$ and densities
$\gamma_{tr}^{\lf(i\ri)}=\rho^{T_{l}r_{i}}$, $i=1,\dots,K$. The
average decoding error probability of the joint ML decoder is upper
bounded by
\con{\begin{align*}
\ol{Pe}(H_{\eff}^{(l),K},\rho)\le \sum_{s\subseteq
\lf\{1,\dots,K\ri\}}\ol{Pe}(\udl{\eta}^{\lf(s\ri)},\rho)=
\sum_{s\subseteq \lf\{1,\dots,K\ri\}} D(|s|\cdot D_{l}\cdot
T_{l})\rho^{-\cdot T_{l}(|s|D_{l}-\sum_{i\in
s}r_{i})+\sum_{i=1}^{|s|\cdot
D_{l}\cdot T_{l}}\eta_{i}^{\lf(s\ri)}}\nonumber\\
 =\sum_{s\subseteq \lf\{1,\dots,K\ri\}}D(|s|\cdot
D_{l}\cdot T_{l})\rho^{-T_{l}(|s|D_{l}-\sum_{i\in
s}r_{i})}\cdot|H_{\eff}^{(l),\lf(s\ri)\dagger}\cdot
H_{\eff}^{(l),\lf(s\ri)}|^{-1}
\end{align*}}{\baln{}{&\ol{Pe}(H_{\eff}^{(l),K},\rho)\le \sum_{s\subseteq
\lf\{1,\dots,K\ri\}}\ol{Pe}(\udl{\eta}^{\lf(s\ri)},\rho)=&
\nn\\
&\sum_{s\subseteq \lf\{1,\dots,K\ri\}} D(|s|\cdot D_{l}\cdot
T_{l})\rho^{-T_{l}(|s|D_{l}-\sum_{i\in
s}r_{i})+\sum_{i=1}^{|s|\cdot
D_{l}\cdot T_{l}}\eta_{i}^{\lf(s\ri)}}&
\nn\\
&=\sum_{s\subseteq \lf\{1,\dots,K\ri\}}D(|s|\cdot
D_{l}\cdot T_{l})\rho^{-T_{l}(|s|D_{l}-\sum_{i\in
s}r_{i})}\times&
\nn\\
&|H_{\eff}^{(l),\lf(s\ri)\dagger}\cdot
H_{\eff}^{(l),\lf(s\ri)}|^{-1}&
}}
where $D(|s|\cdot D_{l}\cdot T_{l})$ is a constant independent of
$\rho$, and $\eta_{i}^{\lf(s\ri)}\ge 0$ for any $s\subseteq
\lf\{1,\dots,K\ri\}$ and any $1\le i\le |s|\cdot D_{l}\cdot T_{l}$.
\end{theorem}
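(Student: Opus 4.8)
The plan is to bound the joint ML error probability by a union bound over all subsets $s\subseteq\{1,\dots,K\}$, where the $s$-term captures the error event in which precisely the users in $s$ are decoded incorrectly. First I would set up the error event decomposition: when the joint ML decoder errs, there is a nonempty set $s$ of users whose decoded codewords differ from the transmitted ones, while the remaining users $\{1,\dots,K\}\setminus s$ are decoded correctly. Conditioning on the correctly-decoded users, I would subtract their (correct) contribution from $\udl{y}_{\eff}$, reducing the event to an error in the effective IC formed by pulling together only the users in $s$. This effective IC is $|s|\cdot D_{l}T_{l}$-complex dimensional and sees the effective channel $H_{\eff}^{(l),(s)}$, exactly as defined in subsection~\ref{subsec:TheEffectiveChannel}. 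Thus $\ol{Pe}(H_{\eff}^{(l),K},\rho)\le\sum_{s}\ol{Pe}(\udl{\eta}^{(s)},\rho)$, where each summand is a genuine point-to-point-type IC error probability over $H_{\eff}^{(l),(s)}$.

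Next I would invoke the per-realization sphere-bound / Minkowski-Hlawka-type estimate from \cite{YonaFederICOptimalDMT} (the same machinery underlying Theorem~\ref{prop:P2POptimalDMTUpperBoundDMT}). For a $|s|\cdot D_{l}T_{l}$-dimensional IC with density $\gamma_{tr}^{(s)}=\rho^{T_{l}\sum_{i\in s}r_{i}}$ transmitted through $H_{\eff}^{(l),(s)}$, the averaged ML error probability over the IC points is bounded by a constant $D(|s|\cdot D_{l}\cdot T_{l})$ times the ratio of a noise-ball volume to the effective cell volume. The effective density in the receiver scales as $\gamma_{tr}^{(s)}\cdot|H_{\eff}^{(l),(s)\dagger}H_{\eff}^{(l),(s)}|^{-1/2}$ (the Jacobian of the channel map), and translating density and noise variance $\sigma^{2}=\rho^{-1}/(2\pi e)$ into the VNR yields the exponent. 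Collecting the $\rho$-powers gives the factor $\rho^{-T_{l}(|s|D_{l}-\sum_{i\in s}r_{i})}$ from the volume-to-density ratio, together with the singular-value factor $\rho^{\sum_{i}\eta_{i}^{(s)}}=|H_{\eff}^{(l),(s)\dagger}H_{\eff}^{(l),(s)}|^{-1}$, which matches the two equivalent expressions in the statement. The constraint $NT_{l}\ge K\cdot D_{l}T_{l}$ guarantees that each $H_{\eff}^{(l),(s)}$ has full column rank almost surely, so all $\eta_{i}^{(s)}\ge 0$ and the inverse determinant is well defined.

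The main obstacle I anticipate is justifying the genie-aided decoupling rigorously: the union-bound term for a set $s$ must be shown to dominate the true probability that exactly the users in $s$ err, uniformly over which codewords the complementary users take. Since the complementary users' contribution, once correctly decoded, is simply subtracted, the residual problem is an ML decoding problem on the sublattice spanned by the users in $s$; but one must verify that restricting the joint ML decision region to this sublattice only enlarges the error region (so the bound is an upper bound), and that the resulting effective IC indeed has the claimed density and channel Gram matrix $H_{\eff}^{(l),(s)\dagger}H_{\eff}^{(l),(s)}$. Once this reduction is in place, the remaining work is a direct application of the single-IC per-realization bound of \cite{YonaFederICOptimalDMT} to each $H_{\eff}^{(l),(s)}$ and a bookkeeping of exponents, which I would treat as routine.
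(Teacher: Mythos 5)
Your proposal follows essentially the same route as the paper's proof: decompose the error event according to the exact subset $s$ of users decoded incorrectly, observe that each such event is contained in the event that some competing codeword whose difference from the transmitted word is supported on the users in $s$ beats the transmitted word, and bound that by the point-to-point per-realization ensemble bound of \cite{YonaFederICOptimalDMT} applied to the $|s|D_{l}T_{l}$-complex dimensional effective IC with channel $H_{\eff}^{(l),(s)}$ and density $\rho^{T_{l}\sum_{i\in s}r_{i}}$. The obstacle you flag at the end is resolved exactly as you suggest (competitors restricted to the sub-IC supported on $s$ still witness the error, so the restriction yields an upper bound); the paper makes this explicit by constructing the auxiliary IC of words agreeing with the transmitted word outside $s$. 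Note only that your ``conditioning on the correctly decoded users and subtracting'' phrasing is the genie argument used for the \emph{lower} bound of Lemma \ref{lem:lowerboundMACErrorProbabaility}; for the upper bound the correct mechanism is the pairwise-error/difference-vector one you describe afterwards.

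Two caveats. First, your claim that $NT_{l}\ge K D_{l}T_{l}$ (full column rank of $H_{\eff}^{(l),(s)}$) guarantees $\eta_{i}^{(s)}\ge 0$ is incorrect: full rank only gives $\eta_{i}^{(s)}<\infty$, whereas $\eta_{i}^{(s)}\ge 0$ says the singular values do not exceed $\rho^{0}$, which is a genuine restriction on the channel realization. In the theorem it is a hypothesis, not a consequence; it arises because one fixed ensemble must be bounded uniformly over realizations, and the realizations violating it are handled separately when averaging over $H$ in Theorem \ref{Th:DirectLowerBoundDiversityOrder}. Second, the per-realization bound you invoke is an average over a random ensemble, so the ensemble must actually be constructed: finite constellations drawn uniformly in cubes, independently across users, then tiled periodically into IC's with a guard spacing $b'$ chosen large enough that the periodic replicas contribute negligibly and $b$ large enough to preserve the densities up to a constant. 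The independence across users is precisely what makes the competitors supported on $s$ into $\prod_{i\in s}\lfloor\gamma_{\mathrm{tr}}^{(i)}b^{2D_{l}T_{l}}\rfloor$ i.i.d.\ uniform points, which is what the point-to-point bound requires; this bookkeeping is routine but it is where the effective density $\rho^{T_{l}\sum_{i\in s}r_{i}}$ you use actually comes from.
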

\begin{proof}
The proof is based on dividing the error event into events of error
for different sets of users (disjoint events). Then we show that the
upper bound on the error probability for the point-to-point channel
derived in \cite{YonaFederICOptimalDMT} can be used to upper bound
the probability for each of these events. The full proof is in
appendix \ref{Append:DirectUpperBoundErrorProb}.
\end{proof}

We wish to emphasize that the constraint of $\eta_{i}^{\lf(s\ri)}\ge
0$, for $i=1,\dots, |s|\cdot D_{l}\cdot T_{l}$ and for any
$s\subseteq \lf\{1,\dots,K\ri\}$ results from the fact that the
\emph{same} ensemble is upper bounded for \emph{any} channel
realization. In cases where it is possible to fit an ensemble to
each channel realization, i.e., in the case where the transmitter
knows the channel, the upper bound applies also without this
restriction.

\subsection{Achieving the Optimal
DMT}\label{subsec:DirectAchieveOptimalDMT} In this subsection we
show that the transmission scheme proposed in
\ref{subsec:TheTransmissionScheme} attains the optimal DMT for $N\ge
\lf(K+1\ri)M-1$, $d^{\ast,\lf(FC\ri)}_{M,N} \lf(\max
\lf(r_{1},\dots, r_{K}\ri)\ri)$. We base the proof on the upper
bound for the error probability derived in Theorem
\ref{Th:DirectUpperBoundErrorProb}. This upper bound consists of the
sum of several terms, one for each $s\subseteq \lf\{1,\dots,K\ri\}$.
Each term depends on the determinant corresponding to its effective
channel $|H_{eff}^{\lf(l\ri),\lf(s\ri)\dagger}\cdot
H_{eff}^{\lf(l\ri),\lf(s\ri)}|^{-1}$. For each term (for each $s$)
we upper bound this determinant in Lemma
\ref{lem:DirectLowerBoundDeterminantHeff} (different bounds than the
bounds used in \cite{YonaFederICOptimalDMT}) to get a new upper
bound on the error probability. The upper bound is based on the fact
that a determinant equals to the multiplication of the orthogonal
elements of its columns (when the number of rows is larger than the
number of columns). We average the upper bound over the channel
realizations and show it attains the optimal DMT in Theorem
\ref{Th:DirectLowerBoundDiversityOrder}, and also prove that the
results apply to lattices when regular lattice decoder is employed
at the receiver, in Theorem
\ref{Th:LowerBoundDiversityOrderLattices}.

Each term in the upper bound in Theorem
\ref{Th:DirectUpperBoundErrorProb} can be viewed as the error
probability of a point-to-point channel with $|s|\cdot M$ transmit
antennas and $N$ receive antennas, while transmitting an $|s|\cdot
D_{l}\cdot T_{l}$-complex dimensional IC in the method described in
\ref{subsec:TheTransmissionScheme}. We wish to emphasize that in
this subsection we show that the terms corresponding to $|s|=1$
attain the required optimal DMT since each user uses an optimal
transmission scheme for the point-to-point channel with $M$ transmit
and $N$ receive antennas. However, for the terms corresponding to
$1<|s|\le K$ the effective transmission scheme is no longer optimal
and does not necessarily attain the optimal DMT for a point-to-point
channel with $|s|\cdot M$ transmit and $N$ receive antennas. In fact
it does not even necessarily attain $d^{\ast,|s|\cdot
D_{l}}_{|s|\cdot M,N}\lf(\max \lf(r_{1},\dots,r_{K}\ri)\ri)$. Hence,
the challenge in this subsection is to upper bound the DMT of these
terms and show that, although not optimal for the corresponding
point-to-point channel, they attain the optimal DMT of the
multiple-access channel for $N\ge \lf(K+1\ri)M-1$.

The average decoding error probability equals to the average over
all channel realizations, i.e.,
\begin{equation}
\ol{Pe}(\rho)=E_{H}\lf(\ol{Pe}\lf(H_{\eff}^{(l),K},\rho\ri)\ri).
\end{equation}
Based on Theorem \ref{Th:DirectUpperBoundErrorProb} we get the
following upper bound on the average decoding error probability
\con{\begin{equation}\label{eq:DirectAchieveOptimalDMT13}
\ol{Pe}(\rho)\le \sum_{s\subseteq
\lf\{1,\dots,K\ri\}}E_{H}\lf(D(|s|\cdot D_{l}\cdot
T_{l})\rho^{-T_{l}(|s|D_{l}-\sum_{i\in
s}r_{i})}\cdot|H_{\eff}^{(l),\lf(s\ri)\dagger}\cdot
H_{\eff}^{(l),\lf(s\ri)}|^{-1}\ri).
\end{equation}}{\bal{eq:DirectAchieveOptimalDMT13}
{&\ol{Pe}(\rho)\le \sum_{s\subseteq
\lf\{1,\dots,K\ri\}}D(|s|\cdot D_{l}\cdot
T_{l})\rho^{-T_{l}(|s|D_{l}-\sum_{i\in
s}r_{i})}\times&
\nn\\
&E_{H}\lf(|H_{\eff}^{(l),\lf(s\ri)\dagger}\cdot
H_{\eff}^{(l),\lf(s\ri)}|^{-1}\ri).&}}
Note that
$E_{H}\lf(|H_{\eff}^{(l),\lf(s\ri)\dagger}H_{\eff}^{(l),\lf(s\ri)}|^{-1}\ri)=E_{H}\lf(|H_{\eff}^{(l),|s|\dagger}H_{\eff}^{(l),|s|}|^{-1}\ri)$
for any $|s|=k$, where $k=1,\dots,K$, i.e., the mean value for any
the users equals to the mean value for the first $k$ users.
Therefore, by replacing $H_{\eff}^{(l),\lf(s\ri)}$ with
$H_{\eff}^{(l),|s|}$ we can write
\eqref{eq:DirectAchieveOptimalDMT13} as follows
\con{\begin{equation}\label{eq:DirectAchieveOptimalDMT4}
\ol{Pe}(\rho)\le \sum_{s\subseteq \lf\{1,\dots,K\ri\}}D(|s|\cdot
D_{l}\cdot T_{l})\rho^{-T_{l}(|s|D_{l}-\sum_{i\in s}r_{i})}\cdot
E_{H}\lf(|H_{\eff}^{(l),|s|\dagger}\cdot
H_{\eff}^{(l),|s|}|^{-1}\ri).
\end{equation}}{\bal{eq:DirectAchieveOptimalDMT4}
{&\ol{Pe}(\rho)\le \sum_{s\subseteq \lf\{1,\dots,K\ri\}}D(|s|\cdot
D_{l}\cdot T_{l})\rho^{-T_{l}(|s|D_{l}-\sum_{i\in s}r_{i})}\times&
\nn\\
&E_{H}\lf(|H_{\eff}^{(l),|s|\dagger}\cdot
H_{\eff}^{(l),|s|}|^{-1}\ri).&}}
where $H_{\eff}^{(l),|s|}$ is the effective channel of the first
$|s|$ users, as defined in subsection
\ref{subsec:TheEffectiveChannel}.

The channel matrix $H$ consists of $N\cdot K\cdot M$ i.i.d entries,
where each entry has distribution $h_{i,j}\sim \CN(0,1)$, $1\le i\le
N$ and $1\le j\le K\cdot M$. Without loss of generality we consider
the case where the columns of $H$ are drawn sequentially from left
to right, i.e., $\udl{h}_{1}$ is drawn first, then $\udl{h}_{2}$ is
drawn et cetera. Column $\udl{h}_{j}$ is an $N$-dimensional vector.
Given $\udl{h}_{1},\dots,\udl{h}_{j-1}$, let us denote by
$\udl{\widetilde{h}}_{j}\in\mathbb{C}^{N}$ the elements of the
projection of $\udl{h}_{j}$ on an orthonormal basis that depends on
$\udl{h}_{1},\dots,\udl{h}_{j-1}$. We can write
\begin{equation}
\udl{h}_{j}=\Theta(\udl{h}_{1},\dots,\udl{h}_{j-1})\cdot\udl{\widetilde{h}}_{j}
\end{equation}
where $\Theta(\cdot)$ is an $N\times N$ unitary matrix.
$\Theta(\cdot)$ is chosen such that:
\begin{enumerate}
\item The first element of $\udl{\widetilde{h}}_{j}$, $\widetilde{h}_{1,j}$,  is
in the direction of $\udl{h}_{j-1}$.
\item The second element, $\widetilde{h}_{2,j}$, is in the direction orthogonal to
$\udl{h}_{j-1}$, in the hyperplane spanned by
$\{\udl{h}_{j-1},\udl{h}_{j-2}\}$.
\item Element $\widetilde{h}_{j-1,j}$ is in the direction
orthogonal to the hyperplane spanned by
$\{\udl{h}_{2},\dots,\udl{h}_{j-1}\}$ inside the hyperplane spanned
by $\{\udl{h}_{1},\dots,\udl{h}_{j-1}\}$.
\item The rest of the $N-j+1$
elements are in directions orthogonal to the hyperplane
$\{\udl{h}_{1},\dots,\udl{h}_{j-1}\}$.
\end{enumerate}
Note that $\widetilde{h}_{i,j}$, $1\le i\le N$, $1\le j\le K\cdot M$
are i.i.d random variables with distribution $\CN(0,1)$. Let us
denote by $\udl{h}_{j\perp j-1,\dots,j-k}$ the component of
$\udl{h}_{j}$ which resides in the $N-k$ subspace which is
perpendicular to the space spanned by
$\{\udl{h}_{j-1},\dots,\udl{h}_{j-k}\}$. In this case we get

\begin{equation}\label{eq:DirectAchieveOptimalDMT1}
\lv\udl{h}_{j\perp
j-1,\dots,j-k}\rv^{2}=\sum_{i=k+1}^{N}|\widetilde{h}_{i,j}|^2\quad
1\le k\le j-1.
\end{equation}

If we assign $|\widetilde{h}_{i,j}|^{2}=\rho^{-\xi_{i,j}}$, we get
that the probability density function (PDF) of $\xi_{i,f}$ is
\begin{equation}\label{eq:DirectAchieveOptimalDMT2}
f(\xi_{i,j})=C\cdot\log{\rho}\cdot\rho^{-\xi_{i,j}}\cdot
e^{-\rho^{-\xi_{i,j}}}
\end{equation}
where $C$ is a normalization factor. In our analysis we assume a
very large value for $\rho$. Hence, we can neglect events in which
$\xi_{i,j}<0$ since in this case the PDF
\eqref{eq:DirectAchieveOptimalDMT2} decreases exponentially as a
function of $\rho$. For a very large $\rho$, $\xi_{i,j}\ge 0$, $1\le
i\le N$ and $1\le j\le K\cdot M$, the PDF takes the following form
\begin{equation}\label{eq:DirectAchieveOptimalDMT3}
f(\xi_{i,j})\propto \rho^{-\xi_{i,j}} \qquad \xi_{i,j}\ge 0.
\end{equation}
In this case by assigning in \eqref{eq:DirectAchieveOptimalDMT1} the
vector $\udl{\xi}_{j}=(\xi_{1,j},\dots,\xi_{N,j})^{T}$ with PDF
which is proportional to $\rho^{-\sum_{i=1}^{N}\xi_{i,j}}$, we get

\begin{equation}\label{eq:DirectAchieveOptimalDMT5}
\lv\udl{h}_{j\perp
j-1,\dots,j-k}\rv^{2}\dot{=}\rho^{-\min_{z\in\{k+1,\dots,N\}}\xi_{z,j}}
\end{equation}
where $1\le k\le j-1$. In addition
\begin{equation}\label{eq:DirectAchieveOptimalDMT6}
\lv\udl{h}_{j}\rv^{2}\dot{=}\rho^{-\min_{z\in\{1,\dots,N\}}\xi_{z,j}}.
\end{equation}

As presented in \eqref{eq:DirectAchieveOptimalDMT4}, in order to
calculate the upper bound on the error probability we need to
consider only the effective channel of the first $|s|$ users, $1\le
|s|\le K$. Hence, in order to obtain an upper bound for the error
probability we wish to lower bound the determinant
$|H_{\eff}^{(l),|s|\dagger}\cdot H_{\eff}^{(l),|s|}|$ by lower
bounding the contribution of each column in the channel matrix $H$
to the determinant. The following lemma presents a lower bound on
the determinant.

\begin{lem}\label{lem:DirectLowerBoundDeterminantHeff}
\con{\begin{align*}
|H_{\eff}^{(l),|s|\dagger}\cdot
H_{\eff}^{(l),|s|}|\dot{\ge}\prod_{a=0}^{|s|-1}
\prod_{b=1}^{M}\rho^{-\lf(N-M+1+\min
\lf(M-l-1,M-b\ri)\ri)\cdot\min_{z\in
\lf\{aM+b,\dots,N\ri\}}\xi_{z,aM+b}}
\nonumber\\
\cdot\prod_{b^{'}=2}^{M}\rho^{-\sum_{i=1}^{\min
\lf(M-l-1,b^{'}-1\ri)}\min_
{z\in\lf\{aM+b^{'}-i,\dots,N\ri\}}\xi_{z,aM+b^{'}}}.
\end{align*}}{\baln{}{&|H_{\eff}^{(l),|s|\dagger}\cdot
H_{\eff}^{(l),|s|}|\dot{\ge}\prod_{a=0}^{|s|-1}
\prod_{b=1}^{M}\times&
\nn\\
&\rho^{-\lf(N-M+1+\min
\lf(M-l-1,M-b\ri)\ri)\cdot\min_{z\in
\lf\{aM+b,\dots,N\ri\}}\xi_{z,aM+b}}&
\nn\\
&\times\prod_{b^{'}=2}^{M}\rho^{-\sum_{i=1}^{\min
\lf(M-l-1,b^{'}-1\ri)}\min_
{z\in\lf\{aM+b^{'}-i,\dots,N\ri\}}\xi_{z,aM+b^{'}}}.&
}}
\end{lem}
\begin{proof}
The proof is in appendix
\ref{Append:DirectLowerBoundDeterminantHeff}. Essentially, the term
$\lf(N-M+1+\min \lf(M-l-1,M-b\ri)\ri)\cdot\min_{z\in
\lf\{aM+b,\dots,N\ri\}}\xi_{z,aM+b}$ indicates that in the lower
bound column $\udl{h}_{aM+b}$ occurs $N-M+1+\min \lf(M-l-1,M-b\ri)$
times with $\udl{h}_{1},\dots,\udl{h}_{aM+b-1}$ to its left.
Therefore, only the elements of $\udl{h}_{aM+b}$ which are
orthogonal to this set of columns, $\xi_{z,aM+b}$, where $aM+b\le
z\le N$ contribute to the lower bound.

The term
\beqn{}{\sum_{i=1}^{\min \lf(M-l-1,b^{'}-1\ri)}\cdot\min_
{z\in\lf\{aM+b^{'}-i,\dots,N\ri\}}\xi_{z,aM+b^{'}}}
indicates that
column $\udl{h}_{aM+b^{'}}$ occurs $\min \lf(M-l-1,b^{'}-1\ri)$
times. However, this time we handle the contribution of the
orthogonal elements more carefully. For $1\le i\le\min
\lf(M-l-1,b^{'}-1\ri)$ we consider the elements in
$\udl{h}_{aM+b^{'}}$ which are orthogonal to the set of columns
$\udl{h}_{1},\dots,\udl{h}_{aM+b^{'}-i-1}$.
\end{proof}

Now we are ready to lower bound the transmission scheme DMT, based
on the lower bound on the determinant in Lemma
\ref{lem:DirectLowerBoundDeterminantHeff}. Let us denote the maximal
multiplexing gain by $r_{max}=\max \lf(1,\dots,K\ri)$, and also
assume $l=\lfloor r_{max}\rfloor$.

\begin{theorem}\label{Th:DirectLowerBoundDiversityOrder}
Consider $K$ sequences of ensembles of $D_{l}\cdot T_{l}$-complex
dimensional IC's transmitted over the unconstrained multiple-access
channel, where each user transmits at multiplexing-gain $r_{i}$
using $G_{\lfloor r_{max}\rfloor}^{\lf(i\ri)}$, $i=1,\dots,K$. The
DMT this transmission scheme attains is lower bounded by
$d^{\ast,\lf(FC\ri)}_{M,N}\lf(r_{max}\ri)$.
\end{theorem}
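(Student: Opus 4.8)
The plan is to begin from the summed upper bound \eqref{eq:DirectAchieveOptimalDMT4}, which writes $\ol{Pe}(\rho)$ as a finite sum of nonnegative contributions indexed by $s\subseteq\lf\{1,\dots,K\ri\}$. Since a finite sum obeys $\sum_{s}f_{s}(\rho)\dot{=}\max_{s}f_{s}(\rho)$, it is enough to show that every summand decays at least as fast as $\rho^{-d^{\ast,\lf(FC\ri)}_{M,N}\lf(r_{max}\ri)}$, i.e. that each $s$ contributes diversity order at least $d^{\ast,\lf(FC\ri)}_{M,N}\lf(r_{max}\ri)$. I would treat the single-user terms ($|s|=1$) and the genuinely multi-user terms ($1<|s|\le K$) separately, as the former reduce to the known point-to-point result while the latter carry the real difficulty.

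For $s=\lf\{i\ri\}$ the summand is exactly the point-to-point error-probability bound of user $i$'s scheme $G_{l}^{\lf(i\ri)}$ with $l=\lfloor r_{max}\rfloor$, whose decay exponent by \cite{YonaFederICOptimalDMT} is $d^{\ast,D_{l}}_{M,N}\lf(r_{i}\ri)$. The useful fact is that the straight line $d^{\ast,D_{l}}_{M,N}\lf(\cdot\ri)$ coincides with the finite-constellation DMT on $\lf[l,l+1\ri]$: by Corollary \ref{prop:ICP2pDMTAnchorPoints} it passes through the anchor $\lf(l,\lf(M-l\ri)\lf(N-l\ri)\ri)$, and using $D_{l}-l=\frac{\lf(M-l\ri)\lf(N-l\ri)}{N+M-1-2l}$ its slope is $-\lf(N+M-1-2l\ri)$, matching that of $d^{\ast,\lf(FC\ri)}_{M,N}$ there. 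As $r_{i}\le r_{max}\in\lf[l,l+1\ri]$ and the line decreases, $d^{\ast,D_{l}}_{M,N}\lf(r_{i}\ri)\ge d^{\ast,D_{l}}_{M,N}\lf(r_{max}\ri)=d^{\ast,\lf(FC\ri)}_{M,N}\lf(r_{max}\ri)$.

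For the multi-user terms I would insert the determinant lower bound of Lemma \ref{lem:DirectLowerBoundDeterminantHeff} into \eqref{eq:DirectAchieveOptimalDMT4} and evaluate the expectation over $H$ by Laplace's method against the density \eqref{eq:DirectAchieveOptimalDMT3}, which is proportional to $\rho^{-\sum_{z,j}\xi_{z,j}}$ on $\xi\ge 0$. This expresses the diversity of the $k$-user summand as
\[
d_{s}=T_{l}\lf(kD_{l}-R_{s}\ri)-\sup_{\xi\ge 0}\lf(F_{k}(\xi)-\sum_{z,j}\xi_{z,j}\ri),
\]
with $F_{k}$ the exponent of Lemma \ref{lem:DirectLowerBoundDeterminantHeff} in the variables $\xi_{z,j}$. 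Because each term of $F_{k}$ involves only the $\xi_{z,j}$ of a single column $j=aM+b$, the supremum decouples into $kM$ scalar linear programs, one per column, in which the number of blocks of $H_{\eff}^{\lf(l\ri),k}$ containing that column (the weight from Lemma \ref{lem:DirectLowerBoundDeterminantHeff}, consistent with Lemma \ref{lem:DirectSubsecTheEffectiveChannel1}) is traded against the density penalty $\sum_{z=1}^{N}\xi_{z,j}$. I would solve each column program explicitly — the maximizer being a layered ``staircase'' assignment of the $\xi_{z,j}$ — and then recombine, using $T_{l}\lf(kD_{l}-R_{s}\ri)=\sum_{i\in s}T_{l}\lf(D_{l}-r_{i}\ri)$ together with $r_{i}\le r_{max}$ to reduce to the worst case $R_{s}=|s|\,r_{max}$, in which $d_{s}$ is smallest.

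The main obstacle is this recombination step: verifying that the resulting value is at least $d^{\ast,\lf(FC\ri)}_{M,N}\lf(r_{max}\ri)$, which is exactly where $N\ge\lf(K+1\ri)M-1$ is indispensable. Rewritten as $N-M+1\ge KM\ge|s|M$, the hypothesis forces the first $N-M+1$ blocks of $H_{\eff}^{\lf(l\ri),k}$ to each equal the full $N\times kM$ matrix $H^{\lf(1,\dots,k\ri)}$, so that every column occurs often enough for its determinant weight to dominate its density penalty by the margin needed for the recombined diversity $d_{s}$ to be no smaller than $d^{\ast,D_{l}}_{M,N}\lf(r_{max}\ri)=d^{\ast,\lf(FC\ri)}_{M,N}\lf(r_{max}\ri)$. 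The delicate point, which I expect to absorb most of the effort, is controlling the cross-terms $\min_{z\ge j-i}\xi_{z,j}$ in $F_{k}$ arising from columns shared between users, and confirming they cannot pull the exponent below $d^{\ast,\lf(FC\ri)}_{M,N}\lf(r_{max}\ri)$. Here the effective scheme is genuinely suboptimal for the $kM\times N$ point-to-point channel, so Lemma \ref{lem:TheCasewhereSingleUserIstheWorst} cannot be invoked directly; it is precisely the explicit determinant bound that rescues the argument, and the same computation would return a strictly smaller value once $N<\lf(K+1\ri)M-1$, in agreement with the suboptimality established earlier in the paper.
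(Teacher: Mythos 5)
Your overall strategy---union bound over subsets $s$, single-user terms via the point-to-point result, multi-user terms via the determinant bound of Lemma \ref{lem:DirectLowerBoundDeterminantHeff} and Laplace's method---is the same as the paper's, and your observation that $d^{\ast,D_{l}}_{M,N}\lf(\cdot\ri)$ coincides with $d^{\ast,\lf(FC\ri)}_{M,N}\lf(\cdot\ri)$ on $\lf[l,l+1\ri]$ is correct. But there is a genuine gap in the multi-user step: you have dropped the truncation of the per-realization error probability at $1$. Writing the diversity of the $s$-term as $d_{s}=T_{l}\lf(kD_{l}-R_{s}\ri)-\sup_{\xi\ge 0}\lf(F_{k}(\xi)-\sum_{z,j}\xi_{z,j}\ri)$ is only valid if the summand is $\rho^{-x}$ with $x=T_{l}\lf(kD_{l}-R_{s}\ri)-F_{k}(\xi)$; the correct exponent is $\min_{\xi\ge 0}\bigl[\lf(x\ri)^{+}+\sum_{z,j}\xi_{z,j}\bigr]$. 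The difference is fatal here: for any column $j=aM+b$ with $a\ge 1$ (or $b\ge 2$, $l<M-1$), setting $\xi_{z,j}=t$ for $z\ge j$ gives $F_{k}$ growing at rate $\lf(N-b+1\ri)+\min\lf(M-l-1,b-1\ri)$ while the density penalty grows only at rate $N-aM-b+1$, so your unclipped supremum is $+\infty$ (equivalently, the integral $\int\rho^{F_{k}(\xi)-\sum\xi}d\xi$ diverges), and your $d_{s}$ is vacuous. The fact that the determinant weight of a shared column exceeds its density penalty is not what ``rescues'' the argument --- it is precisely what breaks the unclipped version.

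The clipping is also what destroys your claimed per-column decoupling, and this is where the paper does the real work. Its Lemma \ref{lem:DirectLowerBoundDiversityOrderEquivalentOptimization} shows that the minimizer of $\lf(x\ri)^{+}+\sum\xi$ sits on the hyperplane where the clipped term vanishes, i.e.\ $\sum_{a,b}\lf(N-b+1\ri)\alpha_{aM+b}=|s|\lf(MN-l\lf(l+1\ri)-\lf(N+M-1-2l\ri)r_{max}\ri)$, reducing the problem to a single \emph{coupled} linear program: minimize $\sum_{a,b}\lf(N-aM-b+1\ri)\alpha_{aM+b}$ subject to that equality. The hypothesis $N\ge\lf(K+1\ri)M-1$ then enters through Lemma \ref{lem:DirectLowerBoundDiversityOrderBetaLowerBound} as the coefficient-ratio inequality $|s|\lf(N-aM-b+1\ri)/\lf(N-b+1\ri)\ge 1$, not through the per-column occurrence count you describe. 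Two smaller omissions: you need to verify that $\xi_{i,j}\ge 0$ implies $\eta_{i}^{\lf(s\ri)}\ge 0$ so that Theorem \ref{Th:DirectUpperBoundErrorProb} is applicable at all (the paper does this by a trace argument), and you need to split the integration domain into a bounded region and its complement, bounding the error probability by $1$ on the latter, before Laplace's method can be applied.
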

\begin{proof}
We use the upper bound for the error probability derived in Theorem
\ref{Th:DirectUpperBoundErrorProb}, and the lower bound on the
determinant \eqref{eq:DirectAchieveOptimalDMT12} in order to give a
new upper bound on the error probability. We average this upper
bound over the channel realization, and show that for large $\rho$
the diversity order of the most dominant error event is lower
bounded by $d^{\ast,\lf(FC\ri)}_{M,N}\lf(r_{max}\ri)$. The full
proof is in appendix \ref{Append:DirectLowerBoundDiversityOrder}.
\end{proof}

In Theorem \ref{th:ConvThOptimalityandSuboptimalityofICinMACChannel}
we have shown that for $N\ge \lf(K+1\ri)M-1$ the DMT of any IC is
upper bounded by $d^{\ast,\lf(FC\ri)}_{M,N}\lf(r_{max}\ri)$. On the
other hand in Theorem \ref{Th:DirectLowerBoundDiversityOrder} we
have shown that there exist sequences of IC's that attain DMT which
is lower bounded by $d^{\ast,\lf(FC\ri)}_{M,N}\lf(r_{max}\ri)$.
Hence, the transmission scheme must attain the optimal DMT.

In the next theorem we prove the existence of a sequence of lattices
that attains the optimal DMT as in Theorem
\ref{Th:DirectLowerBoundDiversityOrder}.

\begin{theorem}\label{Th:LowerBoundDiversityOrderLattices}
For each tuple of multiplexing gains $\lf(r_{1},\dots,r_{K}\ri)$
there exist $K$ sequences of $2D_{l}\cdot T_{l}$-real dimensional
lattices transmitted over the unconstrained multiple access channel
that attain diversity order of
$d^{\ast,\lf(FC\ri)}_{M,N}\lf(r_{max}\ri)$, when regular lattice
decoder is employed, where $l=\lfloor r_{max}\rfloor$.
\end{theorem}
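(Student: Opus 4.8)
The plan is to upgrade the infinite-constellation existence result of Theorem~\ref{Th:DirectLowerBoundDiversityOrder} to a statement about lattices by inspecting the ensemble over which the bound of Theorem~\ref{Th:DirectUpperBoundErrorProb} was averaged. The upper bound in Theorem~\ref{Th:DirectUpperBoundErrorProb} is not the error probability of one fixed IC but an expectation over the Minkowski--Hlawka-type lattice ensemble underlying the point-to-point bound of \cite{YonaFederICOptimalDMT}, with user $i$ drawing its $D_{l}T_{l}$-complex dimensional lattice independently from this ensemble. First I would make this explicit: the concatenation \eqref{eq:ConcatenatedofICsintheTransmitter} of the $K$ independently drawn lattices is again a lattice, and its image $H_{\eff}^{(l),K}\cdot S$ under the (fixed, for each channel realization) effective channel is the effective lattice on which the joint decoder operates. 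Since Theorem~\ref{Th:DirectLowerBoundDiversityOrder} shows that the \emph{ensemble-averaged} error probability, further averaged over $H$, decays as $\rho^{-d^{\ast,\lf(FC\ri)}_{M,N}\lf(r_{max}\ri)}$, the probabilistic method yields at least one deterministic realization --- a tuple of $K$ lattices --- whose error probability also attains diversity order $d^{\ast,\lf(FC\ri)}_{M,N}\lf(r_{max}\ri)$.

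Next I would address the two points that make the passage from ``IC'' to ``lattice with regular lattice decoding'' rigorous. The bound of Theorem~\ref{Th:DirectUpperBoundErrorProb} is a sum of the $2^{K}$ subset terms indexed by $s\subseteq\lf\{1,\dots,K\ri\}$, and each term is a functional of the \emph{same} $K$ drawn lattices (through $H_{\eff}^{(l),\lf(s\ri)}$); because $2^{K}$ is a constant independent of $\rho$, a single good realization simultaneously controls every subset term, so the dominant error event still decays with the claimed exponent. For an infinite lattice the joint nearest-point (ML) decoder coincides with the regular lattice decoder, since there is no shaping region to exploit; hence the chosen lattice tuple already attains $d^{\ast,\lf(FC\ri)}_{M,N}\lf(r_{max}\ri)$ under regular lattice decoding of the effective lattice. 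Finally, to obtain a transmittable scheme I would invoke the carving argument recalled in Section~\ref{sec:BasicDefinitions}: based on \cite{LoeligerAveragingBounds} and its use in \cite{YonaFederICOptimalDMT}, finite constellations carved from each user's lattice with multiplexing gains $\lf(r_{1},\dots,r_{K}\ri)$ preserve the diversity order when regular lattice decoding is performed on the effective lattice in the receiver.

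The main obstacle is the bookkeeping needed to guarantee that one realization is good for all $2^{K}$ subset events at once while the decoder remains the single joint regular lattice decoder: the subset decomposition is only an analytical device inside the proof of Theorem~\ref{Th:DirectUpperBoundErrorProb} (arising from the genie-aided lower-bounding step), not a family of separate decoders, so I must verify that the averaging/expurgation step does not require conditioning that would break the common-lattice structure shared by the subset terms. A secondary technical point is confirming that the shaping applied when carving is compatible with the non-negativity constraint $\eta_{i}^{\lf(s\ri)}\ge 0$ imposed in Theorem~\ref{Th:DirectUpperBoundErrorProb}, which reflects using the \emph{same} ensemble for every channel realization; once these are in place the result follows from the point-to-point lattice machinery of \cite{YonaFederICOptimalDMT} applied subset by subset.
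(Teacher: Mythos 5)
The central gap is your premise that the bound of Theorem~\ref{Th:DirectUpperBoundErrorProb} is already an average over a lattice ensemble. It is not: the ensemble constructed in its proof draws, for each user $j$, $\lfloor\gamma_{\mathrm{tr}}^{\lf(j\ri)}b^{2D_{l}T_{l}}\rfloor$ codewords i.i.d.\ uniformly in a cube and then replicates the resulting finite constellation periodically over $(b+b')\cdot\mathbb{Z}^{2D_{l}T_{l}}$. A typical member of that ensemble is an IC but not a lattice, so the probabilistic method applied to the averaging in Theorem~\ref{Th:DirectLowerBoundDiversityOrder} only yields a good tuple of ICs --- which is exactly Theorem~\ref{Th:DirectLowerBoundDiversityOrder}, not the present statement. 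To get lattices one must introduce a genuinely different ensemble (the paper takes, for each user, a Loeliger-type ensemble of $2D_{l}T_{l}$-real dimensional lattices of covolume $V_{f}^{\lf(i\ri)}=\rho^{-r_{i}T_{l}}$), and the step your proposal is missing is the evaluation of $E_{\Lambda_{\tr}}\bigl(\sum_{\udl{x}\in\Lambda_{\tr},\,\udl{x}\neq 0}g_{rc}(\udl{x})\bigr)$ for the product lattice $\Lambda_{\tr}=\Lambda_{1}\times\dots\times\Lambda_{K}$. This is where the subset decomposition actually comes from in the lattice setting: the nonzero points of a product lattice split according to which users' components are nonzero, and the multiple-access extension of the Minkowski--Hlawka theorem \cite{NamElGamalOptimalDMT2007} yields $E_{\Lambda_{\tr}}(\sum f)=\sum_{S\subseteq\{1,\dots,K\}}\prod_{s\in S}(V_{f}^{\lf(s\ri)})^{-1}\int f(\udl{x}^{\lf(S\ri)})\,d\udl{x}^{\lf(S\ri)}$, which is what reproduces the same subset-sum upper bound as Theorem~\ref{Th:DirectUpperBoundErrorProb} and lets the channel-averaging of Theorem~\ref{Th:DirectLowerBoundDiversityOrder} be reused verbatim. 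Your proposal instead treats the subset decomposition as an artifact of the error-event partition in Theorem~\ref{Th:DirectUpperBoundErrorProb} and assumes it transfers to a lattice ensemble for free; without the MAC Minkowski--Hlawka identity there is no justification for that transfer.

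Two smaller points. Minkowski--Hlawka requires a Riemann-integrable function of bounded support, so the paper applies it only for channel realizations in the bounded set $\mathcal{A}$ and bounds the error probability by $1$ on $\ol{\mathcal{A}}$; your proposal does not say where bounded support comes from. Also, the carving of finite constellations is not needed for this theorem --- the statement concerns infinite lattices under regular lattice decoding --- so that part of your argument is extraneous rather than wrong.
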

\begin{proof}
See appendix \ref{Append:LatticesDiversityOrder}
\end{proof}

Now we show that for each segment of the optimal DMT there exists a
sequence of $K$ lattices that attains it, i.e., the optimal DMT
consists of $M$ segments, each in the range $l\le r_{max}\le l+1$
for $l=0,\dots, M-1$, and there are $M$ sequences of lattices that
attain it.
\begin{cor}\label{Cor:SequencesLattocesAttainOptimalDMT}
For $N\ge \lf(K+1\ri)M-1$ each segment of the optimal DMT for the
unconstrained multiple-access channel,
$d^{\ast,\lf(FC\ri)}_{M,N}\lf(r_{max}\ri)$, is attained by a
sequence of $K$, $2D_{\lfloor r_{max}\rfloor}T_{\lfloor
r_{max}\rfloor}$-real dimensional lattices.

\end{cor}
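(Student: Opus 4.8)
The plan is to obtain this essentially for free from Theorem \ref{Th:LowerBoundDiversityOrderLattices}, by reindexing the achievability result along the linear pieces of the optimal DMT curve. First I would record the shape of the target curve: since $N\ge \lf(K+1\ri)M-1$ forces $N\ge M$, we have $\min\lf(M,N\ri)=M$, so $d^{\ast,\lf(FC\ri)}_{M,N}\lf(r_{max}\ri)$ is the piecewise-linear function joining the corner points $\lf(l,\lf(M-l\ri)\lf(N-l\ri)\ri)$ for $l=0,\dots,M$. This produces exactly $M$ linear segments, the $l$-th being $r_{max}\in\lf[l,l+1\ri]$ for $l=0,\dots,M-1$. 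On the interior of that segment $\lfloor r_{max}\rfloor=l$, so the dimension $2D_{\lfloor r_{max}\rfloor}T_{\lfloor r_{max}\rfloor}$ named in the statement is exactly $2D_{l}T_{l}$, with $D_{l}=\frac{NM-l\lf(l+1\ri)}{N+M-1-2l}$ and $T_{l}=N+M-1-2l$.

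Next I would fix $l$ and apply Theorem \ref{Th:LowerBoundDiversityOrderLattices} to every multiplexing-gains tuple $\lf(r_{1},\dots,r_{K}\ri)$ whose maximum $r_{max}$ lies in the $l$-th segment. That theorem already furnishes, for each such tuple, a sequence of $K$ lattices of real dimension $2D_{\lfloor r_{max}\rfloor}T_{\lfloor r_{max}\rfloor}=2D_{l}T_{l}$ that attains diversity order $d^{\ast,\lf(FC\ri)}_{M,N}\lf(r_{max}\ri)$ under regular lattice decoding. The point I would stress is that both this dimension and the underlying transmission matrices $G_{l}^{\lf(i\ri)}$ depend only on the segment index $l$ and not on the precise multiplexing gains; hence one and the same family $\lf\{G_{l}^{\lf(i\ri)}\ri\}_{i=1}^{K}$ sweeps out the entire $l$-th segment as the tuple varies, which is precisely the per-segment statement of the corollary.

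Finally I would dispose of the segment endpoints. The integer corner $r_{max}=l$ is covered by scheme $G_{l}$ itself (there $\lfloor r_{max}\rfloor=l$), while the right endpoint $r_{max}=l+1$, with corner diversity $\lf(M-l-1\ri)\lf(N-l-1\ri)$, is the left endpoint of the next segment and is reached by $G_{l+1}$ (equivalently, as the limiting case of $G_{l}$), so no point of the curve is left uncovered. I do not expect any genuine analytic obstacle here: the work that matters---the determinant lower bound of Lemma \ref{lem:DirectLowerBoundDeterminantHeff} and the averaging that produces the diversity-order estimate---was already carried out in Theorems \ref{Th:DirectLowerBoundDiversityOrder} and \ref{Th:LowerBoundDiversityOrderLattices}. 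The only care required is the purely bookkeeping verification that the dimension recorded in the statement matches the one supplied by Theorem \ref{Th:LowerBoundDiversityOrderLattices} for each $l$, which is immediate from the definitions of $D_{l}$ and $T_{l}$.
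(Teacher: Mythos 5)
There is a genuine gap. You have read the corollary as a per-point statement --- ``for every tuple with $r_{max}$ in the $l$-th segment, Theorem \ref{Th:LowerBoundDiversityOrderLattices} hands me a lattice sequence of dimension $2D_{l}T_{l}$'' --- and then argued that since the transmission matrices $G_{l}^{\lf(i\ri)}$ and the dimension depend only on $l$, ``one and the same family sweeps out the segment.'' But the family $\lf\{G_{l}^{\lf(i\ri)}\ri\}$ is the transmission \emph{scheme}, not the lattice. Theorem \ref{Th:LowerBoundDiversityOrderLattices} is an existence statement extracted from an ensemble average (Minkowski--Hlawka over Loeliger ensembles), so the good lattice it produces may a priori be a different member of the ensemble for each multiplexing-gain tuple. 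The corollary asserts something strictly stronger: that $M$ sequences of lattices suffice in total, one \emph{single} sequence covering each entire segment $l\le r_{max}\le l+1$. Your argument never rules out that the witnessing lattice varies continuously with $r_{max}$ inside a segment, so it proves only a restatement of Theorem \ref{Th:LowerBoundDiversityOrderLattices}, not the corollary.

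The missing ingredient is the scaling argument the paper uses (the analogue of Corollary 3 of \cite{YonaFederICOptimalDMT}): for a fixed lattice sequence, changing the symmetric multiplexing gain from $0$ to $r_{max}$ merely rescales the effective lattice in the receiver, so a sequence attaining diversity $d$ at the corner $r_{max}=l$ automatically attains the linearly decreasing diversity along the whole straight-line segment --- and this line is exactly $d^{\ast,\lf(FC\ri)}_{M,N}\lf(r_{max}\ri)$ on $\lf[l,l+1\ri]$ because $d^{\ast,D_{l}}_{M,N}$ is a straight line through the right anchor point. One then needs the further observation that for a non-symmetric tuple with maximum $r_{max}$, the error probability of that same sequence can only improve relative to the symmetric tuple $\lf(r_{max},\dots,r_{max}\ri)$, and is capped above by the optimal DMT, so the same single sequence covers all tuples in the segment. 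Both steps are absent from your proposal and neither is pure bookkeeping.
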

\begin{proof}
See appendix \ref{Append:SequencesLattocesAttainOptimalDMT}.
\end{proof}

\subsection{The Gap from the Upper Bound for $N< \lf(K+1\ri)M-1$}
%%%Working on this
In section \ref{sec:LowerBoundErrorProb} we presented an upper bound
on the optimal DMT of IC's; We showed that when $N<\lf(K+1\ri)M-1$
IC's can not achieve the optimal DMT of finite constellations.
However, a question that remains open is how tight is the upper
bound in this range. In this subsection we give two examples for the
performance of IC's when $N< \lf(K+1\ri)M-1$, using the transmission
scheme presented in subsection \ref{subsec:TheTransmissionScheme}.
From the examples it follows that there are cases in which IC's
achieve the upper bound for the symmetric case; however in general
the upper bound is not necessarily tight when $N< \lf(K+1\ri)M-1$.

As a first example let us consider the case where $N=M=K=2$, for
which the upper bound on the optimal DMT of IC's in the symmetric
case is
\begin{equation*}
d^{\ast,\lf(IC\ri)}_{2,2,2} \lf(r\ri)=4-4r.
\end{equation*}
It can be shown by using the technique we presented in this section,
that for the transmission matrix
\begin{equation*}
G^{\lf(1,2\ri)} = \lf(\begin{array}{cc} x_{1} &0\\
0 &x_{2}\\
x_{3} &0\\
0 &x_{4}
\end{array}\ri)
\end{equation*}
a random ensemble of IC's can achieve $d^{\ast,\lf(IC\ri)}_{2,2,2}
\lf(r\ri)$. Thus, in this setting the upper bound on the DMT of IC's
is tight in the symmetric case.

We now consider the case where $M=K=2$ and $N=4$. In this case the
upper bound consists of the following three straight lines
\begin{equation*}
d^{\ast,\lf(IC\ri)}_{2,2,4} \lf(r\ri)=\lf\{\begin{array}{cc} 8-5r
&0\le r\le 1\\
7-4r & 1\le r\le\frac{3}{2}\\
4-2r & \frac{3}{2}\le r\le 2
\end{array}\ri.
\end{equation*}
Consider the case where each user uses the optimal transmission
scheme for a point-to-point channel with $M=2$ and $N=4$ by using
the transmission matrix
\begin{equation*}
G^{\lf(1,2\ri)}_{0} = \lf(\begin{array}{ccccc}
x_{1} &x_{3} &x_{5} &x_{7} &0\\
x_{2} &x_{4} &x_{6} &0  &x_{8}\\
x_{9} &x_{11} &x_{13} &x_{15} &0\\
x_{10} &x_{12} &x_{14} &0  &x_{16}
\end{array}\ri)
\end{equation*}
for $0\le r\le 1$, and
\begin{equation*}
G^{\lf(1,2\ri)}_{1} = \lf(\begin{array}{ccc}
x_{1} &x_{3} &x_{5}\\
x_{2} &x_{4} &x_{6}\\
x_{7} &x_{9} &x_{11}\\
x_{8} &x_{10} &x_{12}
\end{array}\ri)
\end{equation*}
when $1\le r\le 2$. The DMT of this transmission scheme
$\frac{16}{3}-\frac{10}{3}r$ for $0\le r\le 1$, and $4-2r$ when
$1\le 1\le 2$, as shown in Figure \ref{fig:SuboptforNsmkpl1timesM}.
Therefore, this transmission scheme DMT coincides with the upper
bound only when $\frac{3}{2}\le r\le 2$. We wish to emphasize that
using this transmission scheme simply provides a lower bound for the
optimal DMT of IC's in this setting, and there may exist other
transmission schemes that attain $d^{\ast,\lf(IC\ri)}_{2,2,4}
\lf(r\ri)$.

\begin{figure}[h]
\centering
\epsfig{figure=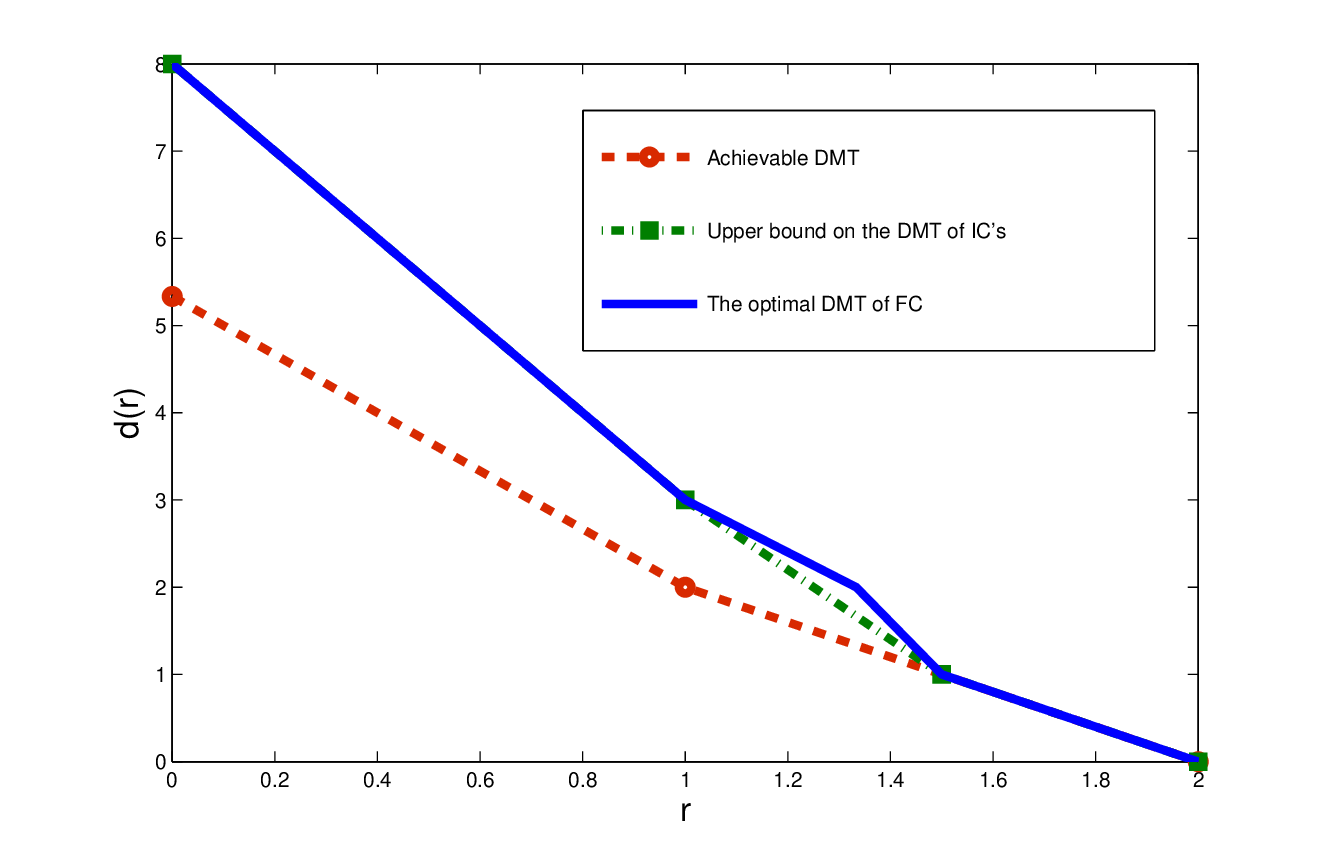,height=5.5cm}\caption{The
gap between the upper bound on the DMT of IC's, and the DMT of the
transmission scheme from subsection
\ref{subsec:TheTransmissionScheme}, for $M=K=2$ and $N=4$.}
\label{fig:SuboptforNsmkpl1timesM}
\end{figure}

In summary, from these examples it follows that when $N<
\lf(K+1\ri)M-1$ the upper bound on the DMT of IC's is not
necessarily tight; nonetheless it enables to show the suboptimality
of IC's in this range.

\section{Discussion}\label{sec:Discussionsec}
In this section we discuss the results presented in the paper. As an
illustrative example we consider the case in which there are two
users, each with two transmit antennas, i.e., $K=M=2$. We consider
the symmetric case in which $r_{1}=r_{2}=r$, and explain based on
Theorem \ref{th:ICOptimalSymmetricDMTUpperNound} why for $N=2,4$
IC's are suboptimal. On the other hand based on Theorem
\ref{Th:DirectUpperBoundErrorProb} and Theorem
\ref{Th:DirectLowerBoundDiversityOrder} we explain why the optimal
DMT is attained for $N\ge 5$. The analysis in this section is
somewhat loosed and we refer the reader to Sections
\ref{sec:LowerBoundErrorProb}, \ref{sec:AttainingtheOptimalDMT} for
the full analysis.

We begin by giving a short reminder to the behavior of lattices in a
point-to-point channel for $M=N=2$, as presented in
\cite{YonaFederICOptimalDMT}. We consider in this discussion
lattices although the results apply to IC's in general. In this
case, the optimal DMT equals
$d^{\ast,\lf(FC\ri)}_{2,2}\lf(r\ri)=4-3r$ in the range $0\le r\le
1$, and in order to attain it the average number of dimensions per
channel use, $D$, must be equal to $\frac{4}{3}$. We wish to explain
why for $D\neq \frac{4}{3}$ the optimal DMT is not attained in the
range $0\le r\le 1$. For lattices, obtaining multiplexing gain $r>0$
requires \emph{scaling} each dimension of the lattice by
$\rho^{-\frac{r}{2D}}$. When $D<\frac{4}{3}$ diversity order of $4$
may be attained for $r=0$. However, the scaling is too strong and
does not enable to attain the optimal DMT for any $r>0$ (there are
not enough degrees of freedom to attain the straight line $4-3r$).
On the other hand when $D>\frac{4}{3}$, the lattice ``fills'' too
much of the space and the \emph{channel} induces error probability
that does not enable to attain diversity order of $4$ for $r=0$, and
therefore does not allow attaining the optimal DMT in the range
$0\le r\le 1$. Hence, choosing $D=\frac{4}{3}$ balances the effect
of the scaling and the channel on the lattice and allows to attain
the optimal DMT in the range $0\le r\le 1$. We now follow this
intuition to discuss the multiple-access channel.

\subsection{Why IC's are Suboptimal for $N< \lf(K+1\ri)M-1$}
The error event for the multiple-access channel can be divided into
the disjoint error events of any subset of the users, as described
in Theorem \ref{Th:DirectUpperBoundErrorProb}. Consider a certain
subset of users $s\subseteq \lf\{1,\dots,K\ri\}$. Due to the
distributed nature of the multiple-access channel, the error
probability for this subset is upper bounded by the error
probability of a point-to-point channel with $|s|\cdot M$ transmit
and $N$ receive antennas, i.e., corresponding to a point-to-point
channel in which the users in $s$ are pulled together. Hence, the
DMT in the multiple-access channel is determined by the most
probable error event. For the unconstrained multiple-access channel
the problem is more involved as each IC has a certain average number
of dimensions per channel use. Assume user $i$ has $D_{i}$ average
number of dimensions per channel use, where $1\le i\le K$. When
considering the error event of users in $s$, we consider an IC with
$\sum_{i\in s}D_{i}$ average number of dimensions per channel use.
The DMT in this error event is upper bounded by $d^{\ast,\sum_{i\in
s}D_{i}}_{|s|\cdot M,N} \lf(|s|\cdot r\ri)$, i.e., the bounds derived
in \cite{YonaFederICOptimalDMT} for the point-to-point channel. In
case the dimensions of \emph{any} subset of the users do not
``align'', i.e., in case a certain subset of the users has average
number of dimensions per channel use that is too large or too small
to attain the optimal DMT, we get sub-optimality. In this subsection
we take as example the case $M=K=2$ and explain why for $N=2,4$ the
dimensions do not align, and therefore the optimal DMT is not
attained.

Let us begin with the case $M=K=N=2$. In this case the optimal DMT
in the symmetric case equals
\con{\begin{equation}
d^{\ast,\lf(FC\ri)}_{K,M,N}\lf(r\ri)=d^{\ast,\lf(FC\ri)}_{2,2,2}\lf(r\ri)=\lf\{\begin{array}{cc}
d^{\ast,\lf(FC\ri)}_{2,2}\lf(r\ri) &0\le r\le \frac{2}{3}\\
d^{\ast,\lf(FC\ri)}_{4,2}\lf(2r\ri) &\frac{2}{3}<r\le 1
\end{array}\ri. =\lf\{\begin{array}{cc}
4-3r &0\le r\le \frac{2}{3}\\
6-6r &\frac{2}{3}<r\le 1
\end{array}\ri..
\end{equation}}{\bal{}{&d^{\ast,\lf(FC\ri)}_{K,M,N}\lf(r\ri)=d^{\ast,\lf(FC\ri)}_{2,2,2}\lf(r\ri)=&
\nn\\
&\lf\{\begin{array}{cc}
d^{\ast,\lf(FC\ri)}_{2,2}\lf(r\ri) &0\le r\le \frac{2}{3}\\
d^{\ast,\lf(FC\ri)}_{4,2}\lf(2r\ri) &\frac{2}{3}<r\le 1
\end{array}\ri. =\lf\{\begin{array}{cc}
4-3r &0\le r\le \frac{2}{3}\\
6-6r &\frac{2}{3}<r\le 1
\end{array}\ri..&}}
On the other hand the optimal DMT of IC's in this case is upper
bounded by $d^{\ast,\lf(IC\ri)}_{2,2,2}\lf(r\ri)=4 \lf(1-r\ri)$,
which is smaller than the optimal DMT for any $0<r<1$. Let us
explain the reason for the sub-optimality. First, note that in the
symmetric case we must choose $D_{1}=D_{2}$ to maximize the IC's
DMT, i.e., the users have the same average number of dimensions per
channel use. Since $N=2$ each user can not transmit more than one
average number of dimensions per channel use, whereas in
\cite{YonaFederICOptimalDMT} it was shown that each user needs to
transmit $\frac{4}{3}$ average number of dimensions per channel use
in order to attain $d^{\ast,\lf(FC\ri)}_{2,2}\lf(r\ri)$ in the range
$0\le r \le \frac{2}{3}$. In addition, the maximal diversity order
each user may attain is $4$ since $M=N=2$, and also
$d^{\ast,1}_{2,2} \lf(r\ri)$ is a straight line. Hence, even when
transmitting one dimension per channel use the DMT must be smaller
than $6-6r$. Therefore, in this case the dimension mismatch manifest
itself in the fact that $N$ is too small even to attain the first
line of $d^{\ast,\lf(FC\ri)}_{2,2}\lf(r\ri)$. This sub-optimality is
presented in Figure
\ref{fig:TheCasewhereDimensionSmallerThanFirstDMTLine}.

For $K=M=2$ and $N=4$ it was shown in Theorem
\ref{th:ICOptimalSymmetricDMTUpperNound} for the symmetric case that
IC's are suboptimal in the range $1<r<\frac{3}{2}$. In this range
the DMT of IC's is upper bounded by $7-4r$, attained at
$D_{1}=D_{2}=\frac{7}{4}$. The dimension mismatch manifests itself
in this example both in error events of a single user, and the error
event of both users. For error events of a single user the optimal
DMT is $d^{\ast,\lf(FC\ri)}_{2,4} \lf(r\ri)$ which is also the
optimal DMT of the multiple-access channel in the range $1\le r \le
\frac{N}{K+1}=\frac{4}{3}$. The average number of dimensions per
channel use required to attain $d^{\ast,\lf(FC\ri)}_{2,4} \lf(r\ri)$
for $1\le r \le 2$ is $2$ which is larger than
$D_{1}=D_{2}=\frac{7}{4}$. Therefore, for the single user error
events the scaling of the IC of each user is too strong and does not
enable to attain the optimal DMT. On the other hand, for the two
users error event the optimal DMT is $d^{\ast,\lf(FC\ri)}_{4,4}
\lf(2r\ri)$ which is also the optimal DMT in the range
$\frac{4}{3}\le r \le 2$. The effective IC of the two users pulled
together has average number of dimensions per channel use
$D_{1}+D_{2}=\frac{7}{2}$, which is too large compared to what is
required to attain $d^{\ast,\lf(FC\ri)}_{2,2} \lf(2r\ri)$ in the
range $1<r<\frac{3}{2}$. Hence, for this error event we get that the
effective IC fills too much of the space and so the channel does not
enable to attain the optimal DMT.

\subsection{Why IC's Attain the Optimal DMT for $N\ge \lf(K+1\ri)M-1$}
For $N\ge \lf(K+1\ri)M-1$ there is no longer a dimension mismatch.
However, the condition that there is no dimension mismatch is merely
a necessary condition in order to attain the optimal DMT. Hence, in
this subsection we will explain why the optimal DMT is attained
based on the transmission scheme presented in subsection
\ref{subsec:TheTransmissionScheme} and on the effective channel
presented in \ref{subsec:TheEffectiveChannel}.

We consider as an example the case $M=K=2$ and $N=5$. We show why
for this case the single user performance $d^{\ast,\lf(FC\ri)}_{2,5}
\lf(r_{\max}\ri)$ is attained. For simplicity we will focus on the
symmetric case. Essentially, we show for this example that IC's
achieve the first DMT line, $10-6r$, which coincides with the
optimal DMT $d^{\ast,\lf(FC\ri)}_{2,5} \lf(r\ri)$ in the range $0\le
r\le 1$. The transmission scheme $G_{0}^{\lf(1,2\ri)}$ is presented
in \eqref{eq:DirectSubsecTransmissiomScheme1}. Note that each user
uses an optimal transmission scheme for the point-to-point channel
with $2$ transmit and $5$ receive antennas. Hence, for the error
event of each of the users, the DMT is upper bounded by $10-6r$
which is the optimal DMT in the range $0\le r\le 1$. Now, it is left
to show for the error event of the two users, that the DMT is also
upper bounded by $10-6r$. For this case we consider the effective
lattice of the two users pulled together, i.e., an error event for a
lattice transmitted over a point-to-point channel with $4$ transmit
and $5$ receive antennas. For this lattice the average number of
dimensions per channel use equals $D_{1}+D_{2}=\frac{10}{3}$. We
will show that at $r=0$ this lattice attains diversity order $10$.
This will lead to DMT $10-6r$ since the DMT of a lattice is a
straight line, and $D_{1}+D_{2}=\frac{10}{3}$.

At the receiver, the effective radius of the lattice of the two
users pulled together at $r=0$ is
\con{\begin{equation}\label{eq:DiscussionDirect1}
r_{eff}^{2}\dot{=}|V|^{\frac{1}{\lf(D_{1}+D_{2}\ri)T}}=\gamma_{\rc}^{-\frac{1}{\lf(D_{1}+D_{2}\ri)T}}\dot{=}|H_{\eff}^{(l=0),K\dagger}H_{\eff}^{(l=0),K}|^{\frac{1}{\lf(D_{1}+D_{2}\ri)T}}
\end{equation}}{\bal{eq:DiscussionDirect1}
{&r_{eff}^{2}\dot{=}|V|^{\frac{1}{\lf(D_{1}+D_{2}\ri)T}}=\gamma_{\rc}^{-\frac{1}{\lf(D_{1}+D_{2}\ri)T}}&
\nn\\
&\dot{=}|H_{\eff}^{(l=0),K\dagger}H_{\eff}^{(l=0),K}|^{\frac{1}{\lf(D_{1}+D_{2}\ri)T}}&}}
where $|V|=\gamma_{\rc}^{-1}$ is the volume of the Voronoi region of
the effective lattice at the receiver. Recall that for lattices
$r_{\eff}\ge r_{packing}=\frac{d_{\min}^{\lf(lattice\ri)}}{2}$,
where $r_{packing}$ and $d_{\min}^{\lf(lattice\ri)}$ are the packing
radius and the minimal distance of the lattice respectively. We are
interested in the event where $r_{\eff}^{2}$ is in the order of the
additive noise variance $\rho^{-1}$. In this case
$\lf(d_{\min}^{\lf(lattice\ri)}\ri)^{2}$ is in the order of the
noise variance or worse, and so the error probability does not
reduce with $\rho$. In subsection
\ref{subsec:DirectAchieveOptimalDMT} it is shown that this event is
the dominant error event in determining the DMT of the transmission
scheme. From \eqref{eq:DiscussionDirect1} we get that
$H_{\eff}^{(l=0),K}$ determines the effective radius at the
receiver. From \eqref{eq:DirectSubsecTransmissiomScheme1} and the
description of the effective channel in subsection
\ref{subsec:TheEffectiveChannel} we get that $H_{\eff}^{(l=0),K}$ is
a block diagonal matrix, where $4$ of its blocks equal
$H\in\mathbb{C}^{5\times 4}$. For large $\rho$, the most probable
error event ($r_{\eff}^{2}\dot{=}\rho^{-1}$) occurs when the
determinant of $H$ reduces with $\rho$, and the determinants of the
rest of the blocks in $H_{\eff}^{(l=0),K}$ remain constant with
$\rho$. Note that if $|H^{\dagger}H|=\rho^{-\alpha}$, then most
likely that the smallest singular value of $H$ equals
$\rho^{-\alpha}$ and the rest of the singular values remain constant
\cite{TseDivMult2003}. In this case we get
$|H^{\dagger}H|\dot{=}\rho^{-\alpha}$ with a PDF which is
proportional to $\rho^{-\lf(5-4+1\ri)\alpha}=\rho^{-2\alpha}$. By
assigning $\lf(D_{1}+D_{2}\ri)T=20$ and
$|H_{\eff}^{(l=0),K\dagger}H_{\eff}^{(l=0),K}|\dot{=}|H^{\dagger}H|^{4}\dot{=}\rho^{-4\alpha}$
in \eqref{eq:DiscussionDirect1} we get that
\begin{equation}
r_{\eff}^{2}\dot{=}|H^{\dagger}H|^{-\frac{4}{20}}\dot{=}\rho^{-\frac{\alpha}{5}}
\end{equation}
with a PDF which is proportional to $\rho^{-2\alpha}$. Hence,
$r_{\eff}^{2}=\rho^{-1}$ at $\alpha=-5$. Based on subsection
\ref{subsec:DirectAchieveOptimalDMT} we get for large $\rho$ that
this is the most dominant error event, and by assigning $\alpha=5$
we get that it happens with probability $\rho^{-10}$. Therefore, in
this case diversity order of $10$ is attained.

For general $N= \lf(K+1\ri)M-1$ each user uses an optimal
transmission scheme for a point-to-point channel with $M$ transmit
and $N$ receive antennas. Since the users do not cooperate, at worst
we get that $H_{\eff}^{(l=0),K}$ has $N-M+1$ blocks that equal
$H\in\mathbb{C}^{N\times K\cdot M}$. For large $\rho$, we get that
$|H^{\dagger}H|=\rho^{-\alpha}$ with PDF proportional to
$\rho^{-\lf(N-K\cdot M+1\ri)\alpha}$. For this case
$\lf(\sum_{i=1}^{K}D_{i}\ri)T=K\cdot M\cdot M$ and so we get
\begin{equation}\label{eq:DiscussionDirect2}
r_{\eff}^{2}\dot{=}|H^{\dagger}H|^{-\frac{N-M+1}{\lf(\sum_{i=1}^{K}D_{i}\ri)T}}\dot{=}\rho^{-\frac{\lf(N-M+1\ri)\alpha}{KMN}}.
\end{equation}
Since $N= \lf(K+1\ri)M-1$, there is a sufficient amount of equations
at the receiver to get $N-M+1=K\cdot M$ and $N-K\cdot M+1=M$. Hence,
by substituting in \eqref{eq:DiscussionDirect2} we get
\begin{equation}
r_{\eff}^{2}\dot{=}\rho^{-\frac{\alpha}{N}}
\end{equation}
with PDF proportional to $\rho^{-\lf(N-KM+1\ri)\cdot
\alpha}=\rho^{-M\cdot \alpha}$. Therefore, at $\alpha=N$ we get that
$r_{\eff}^{2}=\rho^{-1}$ with probability $\rho^{-MN}$, which leads
to diversity order $MN$ at $r=0$. In addition,
$\sum_{i=1}^{K}D_{i}=\frac{KMN}{N-M+1}$ and so the first line of the
optimal DMT is attained. Note that we considered the error event for
the $K$ users pulled together. For any of the other error events,
which considers a subset $s\subseteq \lf(1,\dots,K\ri)$ of the $K$
users, the diversity order is larger or equal to $MN$ at $r=0$.

In summary, since the users do not cooperate we get at worst $N-M+1$
occurrences of $H$ in the blocks of $H_{\eff}^{(l=0),K}$. However,
when $N\ge \lf(K+1\ri)M-1$ there is a sufficient amount of receive
antennas to compensate for the impact of $H$ on $r_{\eff}^{2}$, by
decreasing the probability that $H$ has small determinant.

\section{Summary and Further Research}
This work studies the DMT of the unconstrained multiple-access
channel. For $N\ge \lf(K+1\ri)M-1$ an explicit upper bound on the
optimal DMT of IC's for any multiplexing-gain tuple is presented.
The upper bound coincides with the optimal DMT of finite
constellations, for the multiple-access channel . A transmission
scheme that attains this upper bound is also introduced and
analyzed.

In the case $N< \lf(K+1\ri)M-1$ an upper bound on the optimal DMT of
IC's is derived. For the general case this upper bound remains in
the form of a maximization problem. This maximization problem
depends on $|s|$, the number of IC's pulled together for $1\le
|s|\le K$, and on the average number of dimensions per channel use
for each user. On the other hand for finite constellations the
maximization depends only on the number of users pulled together.
Hence, finding the upper bound on the optimal DMT of IC's is more
involved. In the symmetric case, where all users transmit with the
same multiplexing gain, an explicit upper bound on the optimal DMT
of IC's is presented for $N<\lf(K+1\ri)M-1$. By using this upper
bound, it is shown that IC's are suboptimal compared to finite
constellations in this case.

While this work presents a transmission scheme that attains the
optimal DMT for $N\ge \lf(K+1\ri)M-1$, for the case
$N<\lf(K+1\ri)M-1$ the upper bound on the optimal DMT of IC's is
attained only for some cases. For instance whenever $N=1$,
orthogonalization attains the optimal DMT of IC's for the symmetric
case. Also for $K=2$, $M=2$ and $N=3$, the transmission scheme
presented in this paper attains the upper bound on the optimal DMT
of IC's for the symmetric case. However, finding a transmission
scheme that attains the upper bound on the optimal DMT for all $N<
\lf(K+1\ri)M-1$, remains an open problem even for the symmetric
case.
\begin{appendices}

\section{Proof of Lemma \ref{lem:TheCasewhereSingleUserIstheWorst}}\label{append:TheCasewhereSingleUserIstheWorst}
The proof outline is as follows. First we show that for finite
constellations, the single user DMT is smaller than the contracted
optimal DMT of any number of users (up to $K$) pulled together. Then
we use this relation, together with the anchor points presented in
Corollary \ref{prop:ICP2pDMTAnchorPoints} for the upper bound on
IC's DMT, in order to prove the lemma.

Since $K>1$ and $M$ are positive integers, we get for $N\ge
\lf(K+1\ri)M-1$ that $M\le\frac{N}{i}$, where $1\le i\le K$. Hence
for any $d^{\ast,i\cdot D}_{i\cdot M,N}\lf(i\cdot r\ri)$, the range
of average number of dimensions per channel use per user is $0\le
D\le\min \lf(M,\frac{N}{i}\ri)=M$, where $1\le i\le K$.

We begin by showing that $d^{\ast,\lf(FC\ri)}_{M,N}\lf(r\ri)$ is
smaller or equal to $d^{\ast,\lf(FC\ri)}_{i\cdot M,N}\lf(i\cdot
r\ri)$ for $2\le i\le K$, where $d^{\ast,\lf(FC\ri)}_{i\cdot
M,N}\lf(i\cdot r\ri)$ is the optimal DMT of finite constellations
contracted by $i$, in a point-to-point channel with $i\cdot M$
transmit and $N$ receive antennas. In the case $N> \lf(K+1\ri)M-1$
we get that $\frac{N}{K+1}\ge M$. Hence we also get that
$\frac{N}{i+1}\ge M$ for $1\le i\le K$. Hence, from Theorem
\ref{prop:FCOptDMT} we can see that
\begin{equation}\label{eq:FCDMTBOUNDForGreater}
d^{\ast,\lf(FC\ri)}_{M,N} \lf(r\ri)\le d^{\ast,\lf(FC\ri)}_{i\cdot
M,N} \lf(i\cdot r\ri)\quad 2\le i\le K
\end{equation}
by replacing $K$ with $i$.

For $N= \lf(K+1\ri)M-1$ we still get that $\frac{N}{i+1}\ge M$ for
$1\le i\le K-1$, and again based on Theorem \ref{prop:FCOptDMT}
\begin{equation}
d^{\ast,\lf(FC\ri)}_{M,N} \lf(r\ri)\le d^{\ast,\lf(FC\ri)}_{i\cdot
M,N} \lf(i\cdot r\ri)\quad 2\le i\le K-1.
\end{equation}
For the remaining case of $i=K$, we can see that for
$N=\lf(K+1\ri)M-1$ we get $M-\frac{1}{K}\le\frac{N}{K+1}\le M$.
Hence we get from Theorem \ref{prop:FCOptDMT}
\begin{equation}
d^{\ast,\lf(FC\ri)}_{M,N} \lf(r\ri)\le d^{\ast,\lf(FC\ri)}_{K\cdot
M,N} \lf(K\cdot r\ri) \quad 0\le r\le M-\frac{1}{K}.
\end{equation}
For $M-\frac{1}{K}\le r\le M$ both $d^{\ast,\lf(FC\ri)}_{M,N}
\lf(r\ri)$ and $d^{\ast,\lf(FC\ri)}_{K\cdot M,N} \lf(K\cdot r\ri)$
are on the last straight line of the piecewise linear functions. By
simply assigning $N= \lf(K+1\ri)M-1$ we get for $M-\frac{1}{K}\le
r\le M$
\begin{equation}\label{eq:FCDMTBOUNDForEqual}
d^{\ast,\lf(FC\ri)}_{M,N} \lf(r\ri) = d^{\ast,\lf(FC\ri)}_{K\cdot
M,N} \lf(K\cdot r\ri)= KM\lf(M-r\ri).
\end{equation}
From \eqref{eq:FCDMTBOUNDForGreater}-\eqref{eq:FCDMTBOUNDForEqual}
we get for $N\ge \lf(K+1\ri)M-1$ and $0\le r\le M$ that
\begin{equation}\label{eq:FCDMTRelationFinal}
d^{\ast,\lf(FC\ri)}_{M,N}\lf(r\ri)\le d^{\ast,\lf(FC\ri)}_{i\cdot
M,N}\lf(i\cdot r\ri)\quad 2\le i\le K.
\end{equation}

So far we have proved the relation between the contracted optimal
DMT of finite constellations with different number of users pulled
together. We now use it in order to prove the relation between
$d^{\ast,i\cdot D}_{i\cdot M,N} \lf(i\cdot r\ri)$ for $1\le i\le K$.
In Corollary \ref{prop:RelationBedDandFCd} it was shown that for $0<
D\le \min \lf(M,N\ri)$
\begin{equation}\label{eq:ICOptimalDMTCor2}
d^{\ast,D}_{M,N} \lf(r\ri)\le
d^{\ast,\lf(FC\ri)}_{M,N}\lf(r\ri)\quad 0\le r\le D.
\end{equation}
On the other hand from Corollary \ref{prop:ICP2pDMTAnchorPoints} we
can see that
\begin{equation}\label{eq:ICFCEquationPoint}
d^{\ast,i\cdot D}_{i\cdot M,N} \lf(l\ri)= d^{\ast,
\lf(FC\ri)}_{i\cdot M,N} \lf(l\ri)= \lf(i\cdot M-l\ri)
\lf(N-l\ri)\quad 1\le i\le K
\end{equation}
at $l=0$ when $0\le i\cdot D\le\frac{i\cdot MN}{i\cdot M+N-1}$, and
also for $l=1,\dots,i\cdot M-1$ when $\frac{i\cdot MN-l
\lf(l-1\ri)}{i\cdot M+N-1-2 \lf(l-1\ri) }\le i\cdot D\le
\frac{i\cdot MN-l \lf(l+1\ri)}{i\cdot M+N-1-2l}$. Hence based on
\eqref{eq:FCDMTRelationFinal}-\eqref{eq:ICFCEquationPoint}, and the
fact that $d^{\ast,i\cdot D}_{i\cdot M,N}\lf(i\cdot r\ri)$ is a
contraction of $d^{\ast,i\cdot D}_{i\cdot M,N}\lf(r\ri)$ for $2\le
i\le K$ we get
\begin{equation}\label{eq:ICDMTUpperDifferentUsersSameDFirst}
d^{\ast,i\cdot D}_{i\cdot M,N}\lf(0\ri)\ge
d^{\ast,D}_{M,N}\lf(0\ri)\quad 2\le i\le K
\end{equation}
for $0\le D\le\frac{MN}{i\cdot M+N-1}$, and
\begin{equation}\label{eq:ICDMTUpperDifferentUsersSameDSecond}
d^{\ast,i\cdot D}_{i\cdot M,N}\lf(l\ri)\ge
d^{\ast,D}_{M,N}\lf(\frac{l}{i}\ri)\quad 2\le i\le K
\end{equation}
for $l=1,\dots, i\cdot M-1$ and $\frac{MN-\frac{l}{i}
\lf(l-1\ri)}{i\cdot M+N-1-2 \lf(l-1\ri) }\le D\le
\frac{MN-\frac{l}{i} \lf(l+1\ri)}{i\cdot M+N-1-2l}$. Since
$d^{\ast,i\cdot D}_{i\cdot M,N}\lf(i\cdot r\ri)$, $1\le i\le K$, are
straight lines as a function of $r$, and also all of these straight
lines are equal zero for $r=D$ , i.e., $d^{\ast,i\cdot D}_{i\cdot
M,N} \lf(i\cdot D\ri)=0$ for $1\le i\le K$, the inequalities in
\eqref{eq:ICDMTUpperDifferentUsersSameDFirst},
\eqref{eq:ICDMTUpperDifferentUsersSameDSecond} leads to
$$d^{\ast,D}_{M,N}\lf(r\ri)\le d^{\ast,i\cdot D}_{i\cdot M,N}\lf(i\cdot r\ri)\quad 2\le i\le K$$ for any $0\le D\le M$ and
$0\le r\le D$. This concludes the proof.

\section{Proof of Lemma
\ref{lem:TheCasewhereSingleUserIsNotNecessarilytheWorst}}\label{append:TheCasewhereSingleUserIsNotNecessarilytheWorst}
First note that $\frac{N}{i+1}\ge \frac{L}{K}$ for $1\le i\le K-1$.
Hence from Theorem \ref{prop:FCOptDMT} we get that
\begin{equation}\label{eq:Kmunus1UsersAboveSingleUserSecondLemma}
d^{\ast,\lf(FC\ri)}_{M,N}\lf(r\ri)\le d^{\ast,\lf(FC\ri)}_{i\cdot
M,N}\lf(i\cdot r\ri)\quad 2\le i\le K-1
\end{equation}
for $0\le r\le \frac{L}{K}$. Based on \eqref{eq:ICOptimalDMTCor2},
\eqref{eq:ICFCEquationPoint},
\eqref{eq:Kmunus1UsersAboveSingleUserSecondLemma} and Corollary
\ref{prop:ICP2pDMTAnchorPoints} we get that
\begin{equation}\label{eq:ICDMTUpperDifferentUsersSameDFirstLem2}
d^{\ast,i\cdot D}_{i\cdot M,N}\lf(0\ri)\ge
d^{\ast,D}_{M,N}\lf(0\ri)\quad 2\le i\le K-1
\end{equation}
for $0\le D\le\frac{MN}{i\cdot M+N-1} $, and
\begin{equation}\label{eq:ICDMTUpperDifferentUsersSameDSecondLem2}
d^{\ast,i\cdot D}_{i\cdot M,N}\lf(l\ri)\ge
d^{\ast,D}_{M,N}\lf(\frac{l}{i}\ri)\quad 2\le i\le K-1
\end{equation}
for $l=1,\dots, i\cdot M-1$ and $\frac{MN-\frac{l}{i}
\lf(l-1\ri)}{i\cdot M+N-1-2 \lf(l-1\ri) }\le
D\le\frac{MN-\frac{l}{i} \lf(l+1\ri)}{i\cdot M+N-1-2l}$. Again,
since $d^{\ast,i\cdot D}_{i\cdot M,N}\lf(i\cdot r\ri)$, $1\le i\le
K$, are straight lines as a function of $r$, and also all of these
straight lines are equal to zero for $r=D$, the inequalities in
\eqref{eq:ICDMTUpperDifferentUsersSameDFirstLem2},
\eqref{eq:ICDMTUpperDifferentUsersSameDSecondLem2} lead to
$$d^{\ast,D}_{M,N}\lf(r\ri)\le d^{\ast,i\cdot D}_{i\cdot M,N}\lf(i\cdot r\ri)\quad 2\le i\le K-1$$ for any $0\le D\le
\frac{L}{K}$ and $0\le r\le D$.

\section{Proof of Lemma
\ref{lem:TheCasewhereDimensionSmallerThanFirstDMTLine}}\label{append:TheCasewhereDimensionSmallerThanFirstDMTLine}
Since $M\ge 1$ we get for $N< \lf(K-1\ri)M+1$ that $L=\frac{N}{K}$.
Hence we can consider the range $0\le r\le\frac{N}{K}$. We begin the
proof by showing that for $N< \lf(K-1\ri)M+1$,
$d^{\ast,D}_{M,N}\lf(r\ri)$ is inferior compared to $d^{\ast,K\cdot
D}_{K\cdot M,N} \lf(K\cdot r\ri)$, for any $0\le D\le\frac{N}{K}$.
Then we show that the maximization over $d^{\ast,D}_{M,N} \lf(r\ri)$
yields $M\cdot N-M\cdot K\cdot r$.

We begin by showing that
\begin{equation*}
d^{\ast,D}_{M,N}\lf(r\ri)\le d^{\ast,K\cdot D}_{K\cdot
M,N}\lf(K\cdot r\ri)\quad 0\le D\le\frac{N}{K}
\end{equation*}
for $0\le r\le D$. By assigning $D=\frac{N}{K}$ in $d^{\ast,K\cdot
D}_{K\cdot M,N}\lf(K\cdot r\ri)$ we get
\begin{equation*}
d^{\ast,N}_{K\cdot M,N}\lf(K\cdot r\ri)=\lf(K\cdot M-N+1\ri)\cdot
\lf(N-Kr\ri).
\end{equation*}
Since $N< \lf(K-1\ri)M+1$ we get
\begin{equation}\label{eq:KUserSmallestDiversityOrder}
d^{\ast,N}_{K\cdot M,N}\lf(0\ri)= \lf(K\cdot M-N+1\ri)\cdot N >
M\cdot N.
\end{equation}
It follows from Corollary \ref{prop:ICP2pDMTAnchorPoints} that
\begin{equation}\label{eq:KUserSmallestDiversityOrderLowerBound}
d^{\ast,N}_{K\cdot M,N}\lf(0\ri)\le d^{\ast,K\cdot D}_{K\cdot
M,N}\lf(0\ri) \quad 0\le D\le \frac{N}{K}
\end{equation}
and also
\begin{equation}\label{eq:SingleUserLargestDiversityOrderUpperBound}
d^{\ast,D}_{M,N}\lf(0\ri)\le M\cdot N\quad 0\le D\le\frac{N}{K}.
\end{equation}
Since $d^{\ast,i\cdot D}_{i\cdot M,N} \lf(i\cdot r\ri)$ $1\le i\le
K$ are straight lines as a function of $r$, that equal to zero for
$r=D$, and also based on \eqref{eq:KUserSmallestDiversityOrder},
\eqref{eq:KUserSmallestDiversityOrderLowerBound},
\eqref{eq:SingleUserLargestDiversityOrderUpperBound} and Lemma
\ref{lem:TheCasewhereSingleUserIsNotNecessarilytheWorst} we get
\begin{equation}
d^{\ast,D}_{M,N}\lf(r\ri)\le d^{\ast,i\cdot D}_{i\cdot
M,N}\lf(i\cdot r\ri)\quad 1\le i\le K
\end{equation}
for any $0\le D\le \frac{N}{K}$ and $0\le r\le D$. Hence the
optimization problem takes the following form
\begin{equation}
\max_{D}\min_{1\le i\le K}d^{\ast,i\cdot D}_{i\cdot M,N} \lf(i\cdot
r\ri)= \max_{D}d^{\ast,D}_{M,N} \lf(r\ri)\quad 0\le r\le\frac{N}{K}.
\end{equation}

For $N< \lf(K-1\ri)M+1$ we get that $\frac{N}{K}<\frac{MN}{N+M-1}$.
Also, from Corollary \ref{prop:ICP2pDMTAnchorPoints} we get that
$d^{\ast,D}_{M,N} \lf(0\ri)=M\cdot N$ for $0\le
D\le\frac{MN}{N+M-1}$. Hence, in the range $0\le D\le \frac{N}{K}$
we get a set of straight lines as a function of $r$,
$d^{\ast,D}_{M,N} \lf(r\ri)$, where $d^{\ast,D}_{M,N} \lf(0\ri)=MN$
and $d^{\ast,D}_{M,N} \lf(D\ri)=0$. As a result the maximal value
for each $r$ is attained for $D=\frac{N}{K}$, and equals
\begin{equation}
\max_{D}d^{\ast,D}_{M,N} \lf(r\ri)=d^{\ast,\frac{N}{K}}_{M,N}
\lf(r\ri)=MN-KMr\quad 0\le r\le \frac{N}{K}.
\end{equation}

\section{Proof of Lemma
\ref{lem:TheCasewhereSingleUserandKUserDMTUpperBoundCoincide}}\label{append:TheCasewhereSingleUserandKUserDMTUpperBoundCoincide}
The outline of the proof is as follows. We begin by finding the
straight line that equals $d^{\ast,\lf(FC\ri)}_{M,N}
\lf(\lfloor\frac{l}{2}\rfloor+1\ri)$ at
$r=\lfloor\frac{l}{2}\rfloor+1$, and also equals
$d^{\ast,\lf(FC\ri)}_{K\cdot M,N}
\lf(\lf(K-1\ri)M+\lfloor\frac{l+1}{2}\rfloor\ri)$ for $r=\frac{
\lf(K-1\ri)M+\lfloor\frac{l+1}{2}\rfloor}{K}$; it follows from the
setting in the lemma that $\lfloor\frac{l}{2}\rfloor+1<\min
\lf(M,N\ri)$ and $\lf(K-1\ri)M+\lfloor\frac{l+1}{2}\rfloor<\min
\lf(KM,N\ri)$ for $l=0,\dots,2M-3$. Then we show that the average
number of dimensions per channel use per user,
 $D_{l}$, corresponding to this straight line fulfils Corollary
\ref{prop:ICP2pDMTAnchorPoints}, i.e., for $d^{\ast,D}_{M,N}
\lf(r\ri)$, $D_{l}$ is in the range of average number of dimensions
per channel use that rotate around the anchor point
$d^{\ast,\lf(FC\ri)}_{M,N} \lf(\lfloor\frac{l}{2}\rfloor+1\ri)$, and
also for $d^{\ast,K\cdot D}_{K\cdot M,N} \lf(K\cdot r\ri)$, $D_{l}$
is in the range of average number of dimensions per channel use that
rotate around the anchor point $d^{\ast,\lf(FC\ri)}_{K\cdot M,N}
\lf(K\cdot \frac{\lf(K-1\ri)M+\lfloor\frac{l+1}{2}\rfloor}{K}\ri)$.
By showing that the straight line fulfils Corollary
\ref{prop:ICP2pDMTAnchorPoints} for both cases, we get that the
straight line equals $d^{\ast,D_{l}}_{M,N} \lf(r\ri)$ and also
$d^{\ast,K\cdot D_{l}}_{K\cdot M,N} \lf(K\cdot r\ri)$.

Let us denote the straight line by
\con{\begin{equation*}
d^{\ast}\lf(r\ri) = MN-\lfloor\frac{l}{2}\rfloor\cdot
\lf(\lfloor\frac{l}{2}\rfloor+1\ri)-2\cdot\lf(\lfloor\frac{l}{2}\rfloor+1\ri)\cdot\lf(\frac{l}{2}-\lfloor\frac{l}{2}\rfloor\ri)-
\lf(N+M-1-l\ri)r.
\end{equation*}}{\baln{}{&d^{\ast}\lf(r\ri) = MN-\lfloor\frac{l}{2}\rfloor\cdot
\lf(\lfloor\frac{l}{2}\rfloor+1\ri)-&
\nn\\
&2\cdot\lf(\lfloor\frac{l}{2}\rfloor+1\ri)\cdot\lf(\frac{l}{2}-\lfloor\frac{l}{2}\rfloor\ri)-
\lf(N+M-1-l\ri)r.&}}
First we wish to show that $d^{\ast}
\lf(\lfloor\frac{l}{2}\rfloor+1\ri)=d^{\ast,\lf(FC\ri)}_{M,N}
\lf(\lfloor\frac{l}{2}\rfloor+1\ri)$, and also that
$d^{\ast}\lf(\frac{ \lf(K-1\ri)M+
\lfloor\frac{l+1}{2}\rfloor}{K}\ri)=d^{\ast,\lf(FC\ri)}_{K\cdot
M,N}\lf( \lf(K-1\ri)M +\lfloor\frac{l+1}{2}\rfloor\ri)$. By simply
assigning $r=\lfloor\frac{l}{2}\rfloor+1$ we get
\con{\begin{equation}\label{eq:ConverseStraightLineLem1}
d^{\ast}\lf(\lfloor\frac{l}{2}\rfloor+1\ri)=\lf(N-\lfloor\frac{l}{2}
\rfloor-1\ri)\cdot \lf(M-\lfloor\frac{l}{2} \rfloor-1\ri)=
d^{\ast,\lf(FC\ri)}_{M,N}\lf(\lfloor\frac{l}{2}\rfloor+1\ri).
\end{equation}}{\bal{eq:ConverseStraightLineLem1}
{&d^{\ast}\lf(\lfloor\frac{l}{2}\rfloor+1\ri)=\lf(N-\lfloor\frac{l}{2}
\rfloor-1\ri)\cdot \lf(M-\lfloor\frac{l}{2} \rfloor-1\ri)=&
\nn\\
&d^{\ast,\lf(FC\ri)}_{M,N}\lf(\lfloor\frac{l}{2}\rfloor+1\ri).&}}
For $r=\frac{ \lf(K-1\ri)M+ \lfloor\frac{l+1}{2}\rfloor}{K}$ we
consider two cases. In the first case assume $l=2b$, i.e., $l$ is
even. Under this assumption
$\lfloor\frac{l+1}{2}\rfloor=\lfloor\frac{l}{2}\rfloor=b$, and so
$r=\frac{ \lf(K-1\ri)M+ b}{K}$. By assigning $KM=N+M-1-2b$ in
$d^{\ast} \lf(r\ri)$ we get
\con{\begin{equation*}
d^{\ast}\lf(\frac{ \lf(K-1\ri)M+ b}{K}\ri)=MN-b \lf(b+M+1\ri)-
\lf(K-1\ri)M^{2}=\lf(N- \lf(K-1\ri)M -b\ri)\cdot
\lf(M-b\ri)=d^{\ast,\lf(FC\ri)}_{K\cdot M,N}\lf( \lf(K-1\ri)M
+b\ri).
\end{equation*}}{\baln{}{&d^{\ast}\lf(\frac{ \lf(K-1\ri)M+ b}{K}\ri)=MN-b \lf(b+M+1\ri)-
\lf(K-1\ri)M^{2}&
\nn\\
&=\lf(N- \lf(K-1\ri)M -b\ri)\cdot
\lf(M-b\ri)&
\nn\\
&=d^{\ast,\lf(FC\ri)}_{K\cdot M,N}\lf( \lf(K-1\ri)M
+b\ri).&}}
In the second case $l=2b+1$, i.e., $l$ is odd. In this case we get
$\lfloor\frac{l+1}{2}\rfloor=b+1$, $\lfloor\frac{l}{2}\rfloor=b$ and
$r=\frac{ \lf(K-1\ri)M+ b+1}{K}$. By assigning $KM=N+M-2-2b$ in
$d^{\ast} \lf(r\ri)$ we get
\con{\begin{equation*}
d^{\ast}\lf(\frac{ \lf(K-1\ri)M+ b+1}{K}\ri)=MN- \lf(b+1\ri)\cdot
\lf(b+M+1\ri)- \lf(K-1\ri)M^{2}=d^{\ast,\lf(FC\ri)}_{K\cdot M,N}\lf(
\lf(K-1\ri)M +b+1\ri).
\end{equation*}}{\baln{}{&d^{\ast}\lf(\frac{ \lf(K-1\ri)M+ b+1}{K}\ri)=MN- \lf(b+1\ri)\cdot
\lf(b+M+1\ri)&
\nn\\
&- \lf(K-1\ri)M^{2}=d^{\ast,\lf(FC\ri)}_{K\cdot M,N}\lf(
\lf(K-1\ri)M +b+1\ri).&
}}
Hence from both cases we get
\con{\begin{equation}\label{eq:ConverseStraightLineLem2}
d^{\ast}\lf(\frac{\lf(K-1\ri)M+
\lfloor\frac{l+1}{2}\rfloor}{K}\ri)=d^{\ast,\lf(FC\ri)}_{K\cdot
M,N}\lf( \lf(K-1\ri)M+ \lfloor\frac{l+1}{2}\rfloor\ri).
\end{equation}}{\bal{eq:ConverseStraightLineLem2}
{&d^{\ast}\lf(\frac{\lf(K-1\ri)M+
\lfloor\frac{l+1}{2}\rfloor}{K}\ri)=&
\nn\\
&d^{\ast,\lf(FC\ri)}_{K\cdot
M,N}\lf( \lf(K-1\ri)M+ \lfloor\frac{l+1}{2}\rfloor\ri).&}}

Now we wish to show that $d^{\ast} \lf(r\ri)=d^{\ast,D_{l}}_{M,N}
\lf(r\ri)=d^{\ast,K\cdot D_{l}}_{K\cdot M,N}\lf(K\cdot r\ri)$. We
begin by showing that $d^{\ast} \lf(r\ri)=d^{\ast,D_{l}}_{M,N}
\lf(r\ri)$. First note that
\begin{equation}\label{eq:ConverseStraightLineLem4}
d^{\ast} \lf(D_{l}\ri)=d^{\ast,D_{l}}_{M,N}
\lf(D_{l}\ri)=d^{\ast,K\cdot D_{l}}_{K\cdot M,N} \lf(K\cdot
D_{l}\ri) =0.
\end{equation}
Now let us denote $D_{\lfloor
\frac{l}{2}\rfloor}^{\ast}=\frac{M\cdot
N-\lfloor\frac{l}{2}\rfloor\cdot
\lf(\lfloor\frac{l}{2}\rfloor+1\ri)}{N+M-1-2\lfloor\frac{l}{2}\rfloor}$
and $D_{\lfloor \frac{l}{2}\rfloor+1}^{\ast}=\frac{M\cdot N-
\lf(\lfloor\frac{l}{2}\rfloor+1\ri)\cdot
\lf(\lfloor\frac{l}{2}\rfloor+2\ri)}{N+M-1-
2\lf(\lfloor\frac{l}{2}\rfloor+1\ri)}$; note that $D_{\lfloor
\frac{l}{2}\rfloor+1}^{\ast}>D_{\lfloor \frac{l}{2}\rfloor}^{\ast}$.
We wish to show that
\con{\begin{equation}\label{eq:ConverseStraightLineLem5}
d^{\ast,D_{\lfloor
\frac{l}{2}\rfloor+1}^{\ast}}_{M,N}\lf(0\ri)=M\cdot N- \lf(\lfloor
\frac{l}{2}\rfloor+1\ri) \cdot \lf(\lfloor \frac{l}{2}\rfloor+2\ri)<
d^{\ast} \lf(0\ri)\le M\cdot N-\lfloor \frac{l}{2}\rfloor\cdot
\lf(\lfloor \frac{l}{2}\rfloor+1\ri)=d^{\ast,D_{\lfloor
\frac{l}{2}\rfloor}^{\ast}}_{M,N}\lf(0\ri).
\end{equation}}{\bal{eq:ConverseStraightLineLem5}
{&d^{\ast,D_{\lfloor
\frac{l}{2}\rfloor+1}^{\ast}}_{M,N}\lf(0\ri)=M\cdot N- \lf(\lfloor
\frac{l}{2}\rfloor+1\ri) \cdot \lf(\lfloor \frac{l}{2}\rfloor+2\ri)<&
\nn\\
&d^{\ast} \lf(0\ri)\le M\cdot N-\lfloor \frac{l}{2}\rfloor\cdot
\lf(\lfloor \frac{l}{2}\rfloor+1\ri)=d^{\ast,D_{\lfloor
\frac{l}{2}\rfloor}^{\ast}}_{M,N}\lf(0\ri).&}}
In the first case we take $l=2b$. In this case
\begin{equation*}
d^{\ast} \lf(0\ri) = M\cdot N-b \lf(b+1\ri).
\end{equation*}
On the other hand we also get
\begin{equation*}
M\cdot N-\lfloor \frac{l}{2}\rfloor\cdot \lf(\lfloor
\frac{l}{2}\rfloor+1\ri) = M\cdot N-b\cdot \lf(b+1\ri)=d^{\ast}
\lf(0\ri)
\end{equation*}
which proves \eqref{eq:ConverseStraightLineLem5} for the first case.
In the second case we consider $l=2b+1$. In this case
\begin{equation*}
d^{\ast} \lf(0\ri) = M\cdot N -\lf(b+1\ri)^{2}.
\end{equation*}
For this case we also get $M\cdot N-\lfloor \frac{l}{2}\rfloor\cdot
\lf(\lfloor \frac{l}{2}\rfloor+1\ri)=M\cdot N-b\cdot \lf(b+1\ri)$
and $M\cdot N- \lf(\lfloor \frac{l}{2}\rfloor+1\ri) \cdot
\lf(\lfloor \frac{l}{2}\rfloor+2\ri) = M\cdot
N-\lf(b+1\ri)\cdot\lf(b+2\ri)$. It can be easily shown that for
$b\ge 0$
\con{\begin{equation*}
M\cdot N-\lf(b+1\ri)\cdot\lf(b+2\ri)< d^{\ast} \lf(0\ri)=M\cdot
N-\lf(b+1\ri)^{2} \le  M\cdot N-b\cdot\lf(b+1\ri)
\end{equation*}}{\baln{}{&M\cdot N-\lf(b+1\ri)\cdot\lf(b+2\ri)< d^{\ast} \lf(0\ri)=&
\nn\\
&M\cdot
N-\lf(b+1\ri)^{2} \le  M\cdot N-b\cdot\lf(b+1\ri)&}}
which proves \eqref{eq:ConverseStraightLineLem5} for the second
case. From Corollary \ref{prop:ICP2pDMTAnchorPoints} and
\eqref{eq:ConverseStraightLineLem1} we know that
\con{\begin{equation}\label{eq:ConverseStraightLineLem6}
d^{\ast}\lf(\lfloor\frac{l}{2}\rfloor+1\ri)=d^{\ast,D_{\lfloor
\frac{l}{2}\rfloor}^{\ast}}_{M,N}\lf(\lfloor\frac{l}{2}\rfloor+1\ri)=d^{\ast,D_{\lfloor
\frac{l}{2}\rfloor+1}^{\ast}}_{M,N}\lf(\lfloor\frac{l}{2}\rfloor+1\ri)=d^{\ast,\lf(FC\ri)}_{M,N}\lf(\lfloor\frac{l}{2}\rfloor+1\ri)>0.
\end{equation}}{\bal{eq:ConverseStraightLineLem6}
{&d^{\ast}\lf(\lfloor\frac{l}{2}\rfloor+1\ri)=d^{\ast,D_{\lfloor
\frac{l}{2}\rfloor}^{\ast}}_{M,N}\lf(\lfloor\frac{l}{2}\rfloor+1\ri)=&
\nn\\
&d^{\ast,D_{\lfloor
\frac{l}{2}\rfloor+1}^{\ast}}_{M,N}\lf(\lfloor\frac{l}{2}\rfloor+1\ri)=d^{\ast,\lf(FC\ri)}_{M,N}\lf(\lfloor\frac{l}{2}\rfloor+1\ri)>0.&}}
Since $d^{\ast}\lf(r\ri)$, $d^{\ast,D_{\lfloor
\frac{l}{2}\rfloor}^{\ast}}_{M,N}\lf(r\ri)$ and $d^{\ast,D_{\lfloor
\frac{l}{2}\rfloor+1}^{\ast}}_{M,N}\lf(r\ri)$ are all straight lines
that fulfil \eqref{eq:ConverseStraightLineLem5},
\eqref{eq:ConverseStraightLineLem6} we get for
$r>\lfloor\frac{l}{2}\rfloor+1$
\begin{equation}\label{eq:ConverseStraightLineLem19}
d^{\ast,D_{\lfloor \frac{l}{2}\rfloor}^{\ast}}_{M,N}\lf(r\ri)\le
d^{\ast}\lf(r\ri)<d^{\ast,D_{\lfloor
\frac{l}{2}\rfloor+1}^{\ast}}_{M,N}\lf(r\ri),
\end{equation}
whereas
\begin{equation}\label{eq:ConverseStraightLineLem19_A}
d^{\ast,D_{\lfloor \frac{l}{2}\rfloor}^{\ast}}_{M,N}\lf(D_{\lfloor
\frac{l}{2}\rfloor}^{\ast}\ri)=
d^{\ast}\lf(D_{l}\ri)=d^{\ast,D_{\lfloor
\frac{l}{2}\rfloor+1}^{\ast}}_{M,N}\lf(D_{\lfloor
\frac{l}{2}\rfloor+1}^{\ast}\ri)=0.
\end{equation}
Therefore, it follows from \eqref{eq:ConverseStraightLineLem6},
\eqref{eq:ConverseStraightLineLem19} and
\eqref{eq:ConverseStraightLineLem19_A} that
\begin{equation}\label{eq:ConverseStraightLineLem22}
D_{\lfloor \frac{l}{2}\rfloor}^{\ast}\le D_{l}< D_{\lfloor
\frac{l}{2}\rfloor+1}^{\ast}.
\end{equation}
As a result, from Corollary \ref{prop:ICP2pDMTAnchorPoints} and
\eqref{eq:ConverseStraightLineLem22} we get
\begin{equation}\label{eq:ConverseStraightLineLem7}
d^{\ast,D_{l}}_{M,N}\lf(\lfloor\frac{l}{2}\rfloor+1\ri)=
d^{\ast,\lf(FC\ri)}_{M,N}\lf(\lfloor\frac{l}{2}\rfloor+1\ri).
\end{equation}
Since $d^{\ast}\lf(r\ri)$ and $d^{\ast,D_{l}}_{M,N}\lf(r\ri)$ are
straight lines and based on the equalities in
\eqref{eq:ConverseStraightLineLem1},
\eqref{eq:ConverseStraightLineLem4} and
\eqref{eq:ConverseStraightLineLem7} we get
\begin{equation}\label{eq:ConverseStraightLineLem8}
d^{\ast} \lf(r\ri)=d^{\ast,D_{l}}_{M,N} \lf(r\ri).
\end{equation}

Next we prove $d^{\ast}\lf(r\ri)=d^{\ast,K\cdot D_{l}}_{K\cdot
M,N}\lf(K\cdot r\ri)$. Let us denote
$r_{l}=\frac{\lf(K-1\ri)M+\lfloor \frac{l+1}{2}\rfloor}{K}$ and
$D_{r_{l}}^{\ast}=\frac{MN- \lf(K\cdot r_{l}-1\ri)r_{l}}{K\cdot
M+N-1-2 \lf(K\cdot r_{l}-1\ri) }$. We wish to show
\begin{equation}\label{eq:ConverseStraightLineLem9}
d^{\ast,K\cdot D_{r_{l}+\frac{1}{K}}^{\ast}}_{K\cdot
M,N}\lf(0\ri)\le d^{\ast} \lf(0\ri)<  d^{\ast,K\cdot
D_{r_{l}}^{\ast}}_{K\cdot M,N}\lf(0\ri).
\end{equation}
We consider two cases. For the first case we take $l=2\cdot b$. In
this case we get $r_{2b} = \frac{\lf(K-1\ri)M+b}{K}$,
$d^{\ast}\lf(0\ri)=M\cdot N-b \lf(b+1\ri)$ and $N=
\lf(K-1\ri)M+1+2b$. Hence we get
\con{\begin{equation}\label{eq:ConverseStraightLineLem10}
d^{\ast,K\cdot D^{\ast}_{r_{2b}+\frac{1}{K}}}_{K\cdot
M,N}\lf(0\ri)=KMN-\lf(\lf(K-1\ri)M+b \ri) \lf(N-b\ri)=MN-b
\lf(N-\lf(K-1\ri)M\ri)+b^{2}.
\end{equation}}{\bal{eq:ConverseStraightLineLem10}
{&d^{\ast,K\cdot D^{\ast}_{r_{2b}+\frac{1}{K}}}_{K\cdot
M,N}\lf(0\ri)=KMN-\lf(\lf(K-1\ri)M+b \ri) \lf(N-b\ri)=&
\nn\\
&MN-b\lf(N-\lf(K-1\ri)M\ri)+b^{2}.&}}
Since $N-\lf(K-1\ri)M=1+2b$ we get
\con{\begin{equation}\label{eq:ConverseStraightLineLem11}
MN-b \lf(N-\lf(K-1\ri)M\ri)+b^{2} = MN-b \lf(2b+1\ri)+b^{2}=MN-b
\lf(b+1\ri).
\end{equation}}{\bal{eq:ConverseStraightLineLem11}
{&MN-b \lf(N-\lf(K-1\ri)M\ri)+b^{2} = MN-b \lf(2b+1\ri)+b^{2}&
\nn\\
&=MN-b
\lf(b+1\ri).&}}
From \eqref{eq:ConverseStraightLineLem10} and
\eqref{eq:ConverseStraightLineLem11} we get
$d^{\ast}\lf(0\ri)=d^{\ast,K\cdot
D_{r_{l}+\frac{1}{K}}^{\ast}}_{K\cdot M,N}\lf(0\ri)$, which proves
\eqref{eq:ConverseStraightLineLem9} for the first case. For the
second case we take $l=2b+1$. In this case
$r_{2b+1}=\frac{\lf(K-1\ri)M+b+1}{K}$,
$d^{\ast}\lf(0\ri)=MN-\lf(b+1\ri)^{2}$ and $N=\lf(K-1\ri)M+2b+2$.
For this case we get
\con{\begin{equation}\label{eq:ConverseStraightLineLem12}
d^{\ast,K\cdot D^{\ast}_{r_{2b+1}}}_{K\cdot
M,N}\lf(0\ri)=KMN-\lf(\lf(K-1\ri)M+b \ri)
\lf(N-b-1\ri)=MN+\lf(b+1\ri)\lf(K-1\ri)M-bN+b \lf(b+1\ri).
\end{equation}}{\bal{eq:ConverseStraightLineLem12}
{&d^{\ast,K\cdot D^{\ast}_{r_{2b+1}}}_{K\cdot
M,N}\lf(0\ri)=KMN-\lf(\lf(K-1\ri)M+b \ri)
\lf(N-b-1\ri)&
\nn\\
&=MN+\lf(b+1\ri)\lf(K-1\ri)M-bN+b \lf(b+1\ri).&}}
Hence according to \eqref{eq:ConverseStraightLineLem9} we need to
show
\begin{equation}\label{eq:ConverseStraightLineLem13}
MN+\lf(b+1\ri)\lf(K-1\ri)M-bN+b \lf(b+1\ri)> MN-\lf(b+1\ri)^{2}.
\end{equation}
By assigning $\lf(K-1\ri)M=N-2b-2$ we get from
\eqref{eq:ConverseStraightLineLem13} $N> b+1$. Since $0\le l=2b+1\le
2M-3$, the maximal value of $b$ is $b=M-2$, which gives for
$N=\lf(K-1\ri)M+2b+l$
\begin{equation*}
N>M>M-1\ge b+1.
\end{equation*}
Hence we get
\begin{equation}\label{eq:ConverseStraightLineLem16}
d^{\ast}\lf(0\ri) < d^{\ast,K\cdot D^{\ast}_{r_{2b+1}}}_{K\cdot
M,N}\lf(0\ri)=d^{\ast,K\cdot D^{\ast}_{r_{l}}}_{K\cdot
M,N}\lf(0\ri).
\end{equation}
On the other hand we get
\begin{equation}\label{eq:ConverseStraightLineLem14}
d^{\ast,D^{\ast}_{r_{2b+1}+\frac{1}{K}}}_{K\cdot M,N} \lf(0\ri)
=KMN-\lf(\lf(K-1\ri)M+1+b\ri)\lf(N-b\ri).
\end{equation}
Hence according to \eqref{eq:ConverseStraightLineLem9},
\eqref{eq:ConverseStraightLineLem14} we need to show that
\begin{equation}
MN+b \lf(K-1\ri)M-N \lf(b+1\ri)+b \lf(b+1\ri)\le MN-\lf(b+1\ri)^{2}
\end{equation}
which again leads to $N> b+1$. Hence we get
\begin{equation}\label{eq:ConverseStraightLineLem15}
d^{\ast,K\cdot D^{\ast}_{r_{l}+\frac{1}{K}}}_{K\cdot M,N}
\lf(0\ri)=d^{\ast,K\cdot D^{\ast}_{r_{2b+1}+\frac{1}{K}}}_{K\cdot
M,N} \lf(0\ri)\le d^{\ast}\lf(0\ri).
\end{equation}
From \eqref{eq:ConverseStraightLineLem16} and
\eqref{eq:ConverseStraightLineLem15} we get
\eqref{eq:ConverseStraightLineLem9} for the second case. Hence we
have proved \eqref{eq:ConverseStraightLineLem9}. From Corollary
\ref{prop:ICP2pDMTAnchorPoints} and
\eqref{eq:ConverseStraightLineLem2} we know that
\con{\begin{align}\label{eq:ConverseStraightLineLem21}
d^{\ast}\lf(\frac{\lf(K-1\ri)M+
\lfloor\frac{l+1}{2}\rfloor}{K}\ri)&=d^{\ast,K\cdot
D^{\ast}_{r_{l}+\frac{1}{K}}}_{K\cdot M,N} \lf(\lf(K-1\ri)M+
\lfloor\frac{l+1}{2}\rfloor\ri)\nonumber\\
&=d^{\ast,K\cdot D^{\ast}_{r_{l}}}_{K\cdot M,N} \lf(\lf(K-1\ri)M+
\lfloor\frac{l+1}{2}\rfloor\ri)=d^{\ast,\lf(FC\ri)}_{K\cdot
M,N}\lf(\lf(K-1\ri)M+ \lfloor\frac{l+1}{2}\rfloor\ri).
\end{align}}{\bal{eq:ConverseStraightLineLem21}
{&d^{\ast}\lf(\frac{\lf(K-1\ri)M+
\lfloor\frac{l+1}{2}\rfloor}{K}\ri)=d^{\ast,K\cdot
D^{\ast}_{r_{l}+\frac{1}{K}}}_{K\cdot M,N} \lf(\lf(K-1\ri)M+
\lfloor\frac{l+1}{2}\rfloor\ri)&
\nn\\
&=d^{\ast,K\cdot D^{\ast}_{r_{l}}}_{K\cdot M,N} \lf(\lf(K-1\ri)M+
\lfloor\frac{l+1}{2}\rfloor\ri)&
\nn\\
&=d^{\ast,\lf(FC\ri)}_{K\cdot
M,N}\lf(\lf(K-1\ri)M+ \lfloor\frac{l+1}{2}\rfloor\ri).&}}
Since $d^{\ast}\lf(r\ri)$, $d^{\ast,K\cdot D^{\ast}_{r_{l}}}_{K\cdot
M,N} \lf(K\cdot r\ri)$ and $d^{\ast,K\cdot
D^{\ast}_{r_{l}+\frac{1}{K}}}_{K\cdot M,N} \lf(K\cdot r\ri)$ are all
straight lines that fulfil \eqref{eq:ConverseStraightLineLem9},
\eqref{eq:ConverseStraightLineLem21}, we get similarly to
\eqref{eq:ConverseStraightLineLem22} that
\begin{equation}\label{eq:ConverseStraightLineLem20}
D_{r_{l}}^{\ast}< D_{l}\le D_{r_{l}+\frac{1}{K}}^{\ast}.
\end{equation}
As a result, from Corollary \ref{prop:ICP2pDMTAnchorPoints} and
\eqref{eq:ConverseStraightLineLem20} we get
\con{\begin{equation}\label{eq:ConverseStraightLineLem17}
d^{\ast,K\cdot D_{l}}_{K\cdot M,N}\lf(\lf(K-1\ri)M+
\lfloor\frac{l+1}{2}\rfloor\ri)= d^{\ast,\lf(FC\ri)}_{K\cdot
M,N}\lf(\lf(K-1\ri)M+ \lfloor\frac{l+1}{2}\rfloor\ri).
\end{equation}}{\bal{eq:ConverseStraightLineLem17}
{&d^{\ast,K\cdot D_{l}}_{K\cdot M,N}\lf(\lf(K-1\ri)M+
\lfloor\frac{l+1}{2}\rfloor\ri)&
\nn\\
&= d^{\ast,\lf(FC\ri)}_{K\cdot
M,N}\lf(\lf(K-1\ri)M+ \lfloor\frac{l+1}{2}\rfloor\ri).&}}
Since $d^{\ast}\lf(r\ri)$ and $d^{\ast,K\cdot D_{l}}_{K\cdot
M,N}\lf(K\cdot r\ri)$ are straight lines, and based on the
equalities in \eqref{eq:ConverseStraightLineLem2},
\eqref{eq:ConverseStraightLineLem4} and
\eqref{eq:ConverseStraightLineLem17} we get
\begin{equation}\label{eq:ConverseStraightLineLem18}
d^{\ast} \lf(r\ri)=d^{\ast,K\cdot D_{l}}_{K\cdot M,N} \lf(K\cdot
r\ri).
\end{equation}
From \eqref{eq:ConverseStraightLineLem8},
\eqref{eq:ConverseStraightLineLem18} we get the first part of the
Lemma, whereas from \eqref{eq:ConverseStraightLineLem7},
\eqref{eq:ConverseStraightLineLem17} we get the second part of the
Lemma.

\section{Proof of Theorem
\ref{th:ICOptimalSymmetricDMTUpperNound}}\label{append:ICOptimalSymmetricDMTUpperNound}
We begin by showing that $d^{\ast,\lf(IC\ri)}_{K,M,N}\lf(r\ri)$ is
the solution of the optimization problem in
\eqref{eq:OptimalUpperBoundMACDMTSymmetricEqualDim}, i.e., the case
in which all users have the same average number of dimensions per
channel use, $D$. Then we show that this is also the solution for
\eqref{eq:OptimalUpperBoundMACDMTSymmetric}.

First we find  $\max_{D}\,\min_{1\le i\le K} \lf(d^{\ast,i\cdot
D}_{i\cdot M,N}\lf(i\cdot r\ri)\ri)$, where $0\le r\le \frac{L}{K}$.
In the case $N\ge \lf(K+1\ri)M-1$, we can see from Lemma
\ref{lem:TheCasewhereSingleUserIstheWorst} that
\begin{equation*}
\max_{D}\,\min_{1\le i\le K} \lf(d^{\ast,i\cdot
D}_{i\cdot M,N}\lf(i\cdot
r\ri)\ri)=\max_{D}d^{\ast,D}_{M,N}\lf(r\ri)=d^{\ast,\lf(FC\ri)}_{M,N}\lf(r\ri).
\end{equation*}
For $N<\lf(K-1\ri)M+1$ it was shown in Lemma
\ref{lem:TheCasewhereDimensionSmallerThanFirstDMTLine} that
$d^{\ast,\lf(IC\ri)}_{K,M,N}\lf(r\ri)$ is the optimization problem
solution. For $N=\lf(K-1\ri)M+1+l$ and $l=0,\dots,2M-3$ it follows
from Lemma \ref{lem:TheCasewhereSingleUserIsNotNecessarilytheWorst}
that $d^{\ast,D}_{M,N}\lf(r\ri)$ is smaller than $d^{\ast,i\cdot
D}_{i\cdot M,N}\lf(i\cdot r\ri)$ for $2\le i\le K-1$ and any $0\le
D\le \frac{L}{K}$, $0\le r\le D$. Hence the optimization problem for
this case boils down to
\begin{equation}
\max_{D}\,\min \lf\{d^{\ast,D}_{M,N} \lf(r\ri),d^{\ast,K\cdot
D}_{K\cdot M,N} \lf(K\cdot r\ri) \ri\}
\end{equation}
for $0\le D\le \frac{L}{K}$ and $0\le r\le D$. From Lemma
\ref{lem:TheCasewhereSingleUserandKUserDMTUpperBoundCoincide} we
know that
$d^{\ast,D_{l}}_{M,N}\lf(\lfloor\frac{l}{2}\rfloor+1\ri)=d^{\ast,\lf(FC\ri)}_{M,N}\lf(\lfloor\frac{l}{2}\rfloor+1\ri)$.
As a result, based on Corollary \ref{prop:ICP2pDMTAnchorPoints} we
get that for $0< D\le D_{l}$
\begin{equation*}
d^{\ast,D}_{M,N}\lf(\lfloor\frac{l}{2}\rfloor+1\ri)\le
d^{\ast,\lf(FC\ri)}_{M,N}\lf(\lfloor\frac{l}{2}\rfloor+1\ri)=d^{\ast,D_{l}}_{M,N}\lf(\lfloor\frac{l}{2}\rfloor+1\ri)
\end{equation*}
 and also
\begin{equation*}
d^{\ast,D}_{M,N}\lf(r\ri)=0\le d^{\ast,D_{l}}_{M,N}\lf(r\ri)\quad
 r\ge D.
\end{equation*}
Hence we get for $0<D\le D_{l}$
\begin{equation}\label{eq:ConverseTheoremOptimalSymmetricDMT1}
d^{\ast,D}_{M,N}\lf(r\ri)\le d^{\ast,D_{l}}_{M,N}\lf(r\ri)\quad
\lfloor\frac{l}{2}\rfloor+1\le r\le\frac{L}{K} .
\end{equation}
In a similar manner we also know from Lemma
\ref{lem:TheCasewhereSingleUserandKUserDMTUpperBoundCoincide} that
$d^{\ast,K\cdot D_{l}}_{K\cdot
M,N}\lf(\lf(K-1\ri)M+\lfloor\frac{l+1}{2}\rfloor\ri)=d^{\ast,\lf(FC\ri)}_{K\cdot
M,N}\lf(\lf(K-1\ri)M+\lfloor\frac{l+1}{2}\rfloor\ri)$. As a result,
based on Corollary \ref{prop:ICP2pDMTAnchorPoints} we get that for
$D_{l}\le D\le \frac{L}{K}$
\con{\begin{equation*}
d^{\ast,K\cdot D}_{K\cdot
M,N}\lf(\lf(K-1\ri)M+\lfloor\frac{l+1}{2}\rfloor\ri)\le
d^{\ast,\lf(FC\ri)}_{K\cdot
M,N}\lf(\lf(K-1\ri)M+\lfloor\frac{l+1}{2}\rfloor\ri)=d^{\ast,K\cdot
D_{l}}_{K\cdot M,N}\lf(\lf(K-1\ri)M+\lfloor\frac{l+1}{2}\rfloor\ri)
\end{equation*}}{\baln{}{&d^{\ast,K\cdot D}_{K\cdot
M,N}\lf(\lf(K-1\ri)M+\lfloor\frac{l+1}{2}\rfloor\ri)\le &
\nn\\
&d^{\ast,\lf(FC\ri)}_{K\cdot
M,N}\lf(\lf(K-1\ri)M+\lfloor\frac{l+1}{2}\rfloor\ri)=&
\nn\\
&d^{\ast,K\cdot
D_{l}}_{K\cdot M,N}\lf(\lf(K-1\ri)M+\lfloor\frac{l+1}{2}\rfloor\ri)&}}
 and also
\begin{equation*}
d^{\ast,K\cdot D_{l}}_{K\cdot M,N}\lf(K\cdot r\ri)=0\le
d^{\ast,K\cdot D}_{K\cdot M,N}\lf(K\cdot r\ri)\quad  r\ge D_{l}.
\end{equation*}
Since $D_{l}\ge \frac{\lf(K-1\ri)M+\lfloor\frac{l+1}{2}\rfloor}{K}$
and these are straight lines, we also get for $D_{l}\le D\le
\frac{L}{K}$
\begin{equation}\label{eq:ConverseTheoremOptimalSymmetricDMT2}
d^{\ast,K\cdot D}_{K\cdot M,N}\lf(K\cdot r\ri)\le d^{\ast,K\cdot
D_{l}}_{K\cdot M,N}\lf(K\cdot r\ri)
\end{equation}
where $0\le r\le
\frac{\lf(K-1\ri)M+\lfloor\frac{l+1}{2}\rfloor}{K}$. Hence, based on \eqref{eq:ConverseTheoremOptimalSymmetricDMT1},
\eqref{eq:ConverseTheoremOptimalSymmetricDMT2} and the fact that
$d^{\ast,D_{l}}_{M,N}\lf(r\ri)=d^{\ast,K\cdot D_{l}}_{K\cdot
M,N}\lf(K\cdot r\ri)=d^{\ast}\lf(r\ri)$ (Lemma
\ref{lem:TheCasewhereSingleUserandKUserDMTUpperBoundCoincide}), we
get that
\con{\begin{equation}\label{eq:ConverseTheoremOptimalSymmetricDMT3}
\max_{D}\,\min\lf\{d^{\ast,D}_{M,N}\lf(r\ri),d^{\ast,K\cdot
D}_{K\cdot M,N}\lf(K\cdot r\ri)
\ri\}=d^{\ast}\lf(r\ri)=d^{\ast,\lf(IC\ri)}_{K,M,N}\lf(r\ri)\quad
\lfloor\frac{l}{2}\rfloor+1\le r\le
\frac{\lf(K-1\ri)M+\lfloor\frac{l+1}{2}\rfloor}{K}.
\end{equation}}{\bal{eq:ConverseTheoremOptimalSymmetricDMT3}
{&\max_{D}\,\min\lf\{d^{\ast,D}_{M,N}\lf(r\ri),d^{\ast,K\cdot
D}_{K\cdot M,N}\lf(K\cdot r\ri)
\ri\}=&
\nn\\
&d^{\ast}\lf(r\ri)=d^{\ast,\lf(IC\ri)}_{K,M,N}\lf(r\ri)&}}
for $\lfloor\frac{l}{2}\rfloor+1\le r\le
\frac{\lf(K-1\ri)M+\lfloor\frac{l+1}{2}\rfloor}{K}$.

We now find the solution for $0\le r\le
\lfloor\frac{l}{2}\rfloor+1$. Our starting point is $D=D_{l}$ for
which $d^{\ast,D_{l}}_{M,N}\lf(r\ri)=d^{\ast,K\cdot D_{l}}_{K\cdot
M,N}\lf(K\cdot r\ri)$. Since
$d^{\ast}\lf(\lfloor\frac{l}{2}\rfloor+1\ri)=d^{\ast,\lf(FC\ri)}_{M,N}\lf(\lfloor\frac{l}{2}\rfloor+1\ri)$
we get from Corollary \ref{prop:ICP2pDMTAnchorPoints} and
\eqref{eq:ConverseStraightLineLem22} that
\begin{equation}\label{eq:ConverseTheoremOptimalSymmetricDMT22}
\frac{MN-\lfloor\frac{l}{2}\rfloor
\lf(\lfloor\frac{l}{2}\rfloor+1\ri)}{M+N-1-2\lfloor\frac{l}{2}\rfloor}\le
D_{l}<\frac{MN- \lf(\lfloor\frac{l}{2}\rfloor+1\ri)
\lf(\lfloor\frac{l}{2}\rfloor+2\ri)}{M+N-1-2\lf(\lfloor\frac{l}{2}\rfloor+1\ri)}.
\end{equation}
It follows from Corollary \ref{prop:RelationBedDandFCd} that for
$D_{l}\le D\le\frac{L}{K}$
\begin{equation}\label{eq:ConverseTheoremOptimalSymmetricDMT4}
d^{\ast,D}_{M,N}\lf(r\ri)\le d^{\ast,\lf(FC\ri)}_{M,N}\lf(r\ri).
\end{equation}
In addition it can be easily shown that for $N=\lf(K-1\ri)M+1+l$ and
$l=0,\dots,2M-3$
\begin{equation}\label{eq:ConverseTheoremOptimalSymmetricDMT5}
\lfloor\frac{l}{2}\rfloor+1\le \frac{N}{K+1}\le
\frac{\lf(K-1\ri)M+\lfloor\frac{l+1}{2}\rfloor}{K}
\end{equation}
by considering the cases in which $l$ is even and odd, i.e., the
cases where $l=2b$ and $l=2b+1$. In the case
$\frac{MN-\lfloor\frac{l}{2}\rfloor
\lf(\lfloor\frac{l}{2}\rfloor+1\ri)}{M+N-1-2\lfloor\frac{l}{2}\rfloor}\le
D\le D_{l}$ assume $d^{\ast,K\cdot D}_{K\cdot M,N}\lf(K\cdot r\ri)$
rotates around anchor point with multiplexing gain $m$. In this case
there are two possibilities. The first possibility is
$\lfloor\frac{l}{2}\rfloor+2\le m\le\frac{L}{K}$ where
$m\in\mathbb{Z}$. In this case we get from Corollary
\ref{prop:ICP2pDMTAnchorPoints} that in the range
$\frac{MN-\lfloor\frac{l}{2}\rfloor
\lf(\lfloor\frac{l}{2}\rfloor+1\ri)}{M+N-1-2\lfloor\frac{l}{2}\rfloor}\le
D< D_{l}$
\con{\begin{equation}\label{eq:ConverseTheoremOptimalSymmetricDMT6}
d^{\ast,D}_{M,N}\lf(\lfloor\frac{l}{2}\rfloor+1\ri)=d^{\ast,K\cdot
D_{l}}_{K\cdot M,N}\lf(\lfloor\frac{l}{2}\rfloor+1\ri)\le
d^{\ast,K\cdot D}_{K\cdot M,N}\lf(\lfloor\frac{l}{2}\rfloor+1\ri).
\end{equation}}{\bal{eq:ConverseTheoremOptimalSymmetricDMT6}
{&d^{\ast,D}_{M,N}\lf(\lfloor\frac{l}{2}\rfloor+1\ri)=d^{\ast,K\cdot
D_{l}}_{K\cdot M,N}\lf(\lfloor\frac{l}{2}\rfloor+1\ri)\le&
\nn\\
&d^{\ast,K\cdot D}_{K\cdot M,N}\lf(\lfloor\frac{l}{2}\rfloor+1\ri).&}}
For the second possibility $0\le m\le\lfloor\frac{l}{2}\rfloor+1$ we
get from \eqref{eq:ConverseTheoremOptimalSymmetricDMT5}, Corollary
\ref{prop:RelationBedDandFCd} and Theorem \ref{prop:FCOptDMT} that
\con{\begin{equation}\label{eq:ConverseTheoremOptimalSymmetricDMT9}
d^{\ast,K\cdot D}_{K\cdot M,N}\lf(K\cdot
m\ri)=d^{\ast,\lf(FC\ri)}_{K\cdot M,N}\lf(K\cdot m\ri)\ge
d^{\ast,\lf(FC\ri)}_{M,N}\lf(m\ri)\ge d^{\ast,D}_{M,N}\lf(m\ri).
\end{equation}}{\bal{eq:ConverseTheoremOptimalSymmetricDMT9}
{&d^{\ast,K\cdot D}_{K\cdot M,N}\lf(K\cdot
m\ri)=d^{\ast,\lf(FC\ri)}_{K\cdot M,N}\lf(K\cdot m\ri)\ge&
\nn\\
&d^{\ast,\lf(FC\ri)}_{M,N}\lf(m\ri)\ge d^{\ast,D}_{M,N}\lf(m\ri).&}}
In addition $d^{\ast,D}_{M,N}\lf(D\ri)=d^{\ast,K\cdot D}_{K\cdot
M,N}\lf(K\cdot D\ri)=0$. Since these are straight lines we get in
the range $\frac{MN-\lfloor\frac{l}{2}\rfloor
\lf(\lfloor\frac{l}{2}\rfloor+1\ri)}{M+N-1-2\lfloor\frac{l}{2}\rfloor}\le
D\le D_{l}$
\begin{equation}\label{eq:ConverseTheoremOptimalSymmetricDMT23}
d^{\ast,D}_{M,N}\lf(r\ri)\le d^{\ast,K\cdot D}_{K\cdot
M,N}\lf(K\cdot r\ri).
\end{equation}
By induction, for $\frac{MN-\lf(s-1\ri)s}{M+N-1-2\lf(s-1\ri)}\le
D\le \frac{MN-s \lf(s+1\ri)}{M+N-1-2s}$,
$s=\lfloor\frac{l}{2}\rfloor,\dots,1$, assuming $d^{\ast,K\cdot
D^{\lf(s\ri)}}_{K\cdot M,N}\lf(K\cdot r\ri)\ge
d^{\ast,D^{\lf(s\ri)}}_{M,N}\lf(r\ri)$ at $D^{\lf(s\ri)}=\frac{MN-s
\lf(s+1\ri)}{M+N-1-2s}$, we get from similar arguments to
\eqref{eq:ConverseTheoremOptimalSymmetricDMT5}-\eqref{eq:ConverseTheoremOptimalSymmetricDMT23}
that
\begin{equation}\label{eq:ConverseTheoremOptimalSymmetricDMT24}
d^{\ast,D}_{M,N}\lf(r\ri)\le d^{\ast,K\cdot D}_{K\cdot
M,N}\lf(K\cdot r\ri).
\end{equation}
Finally for $0<D\le\frac{MN}{N+-1}$, from the same arguments as in
\eqref{eq:ConverseTheoremOptimalSymmetricDMT24} we also get
\begin{equation}\label{eq:ConverseTheoremOptimalSymmetricDMT25}
d^{\ast,D}_{M,N}\lf(r\ri)\le d^{\ast,K\cdot D}_{K\cdot
M,N}\lf(K\cdot r\ri).
\end{equation}
Hence, from \eqref{eq:ConverseTheoremOptimalSymmetricDMT23},
\eqref{eq:ConverseTheoremOptimalSymmetricDMT24} and
\eqref{eq:ConverseTheoremOptimalSymmetricDMT25} we get that in the
range $0<D\le D_{l}$
\begin{equation}\label{eq:ConverseTheoremOptimalSymmetricDMT26}
\max_{D}\,\min \lf\{d^{\ast,D}_{M,N}\lf(r\ri),d^{\ast,K\cdot
D}_{K\cdot M,N}\lf(K\cdot
r\ri)\ri\}=\max_{D}d^{\ast,D}_{M,N}\lf(r\ri).
\end{equation}
Since $D_{l}\ge \frac{MN-\lfloor\frac{l}{2}\rfloor
\lf(\lfloor\frac{l}{2}\rfloor+1\ri)}{M+N-1-2\lfloor\frac{l}{2}\rfloor}$
\eqref{eq:ConverseTheoremOptimalSymmetricDMT22}, and also from
\eqref{eq:ConverseTheoremOptimalSymmetricDMT4},
\eqref{eq:ConverseTheoremOptimalSymmetricDMT26} we get based on
Corollary \ref{prop:RelationBedDandFCd}
\con{\begin{equation}\label{eq:ConverseTheoremOptimalSymmetricDMT27}
\max_{D}\,\min \lf\{d^{\ast,D}_{M,N}\lf(r\ri),d^{\ast,K\cdot
D}_{K\cdot M,N}\lf(K\cdot
r\ri)\ri\}=d^{\ast,\lf(FC\ri)}_{M,N}\lf(r\ri)=d^{\ast,\lf(IC\ri)}_{K,M,N}\lf(r\ri)\quad
0\le r\le \lfloor\frac{l}{2}\rfloor+1.
\end{equation}}{\bal{eq:ConverseTheoremOptimalSymmetricDMT27}
{&\max_{D}\,\min \lf\{d^{\ast,D}_{M,N}\lf(r\ri),d^{\ast,K\cdot
D}_{K\cdot M,N}\lf(K\cdot
r\ri)\ri\}=&
\nn\\
&d^{\ast,\lf(FC\ri)}_{M,N}\lf(r\ri)=d^{\ast,\lf(IC\ri)}_{K,M,N}\lf(r\ri)\quad
0\le r\le \lfloor\frac{l}{2}\rfloor+1.&}}

Now we wish to find $d^{\ast,\lf(IC\ri)}_{K,M,N}\lf(r\ri)$ for
$\frac{\lf(K-1\ri)M+\lfloor\frac{l+1}{2}\rfloor}{K}\le r\le
\frac{L}{K}$. Let us denote
$r_{l}=\frac{\lf(K-1\ri)M+\lfloor\frac{l+1}{2}\rfloor}{K}$. Since
\con{\begin{equation*}
d^{\ast,K\cdot D_{l}}_{K\cdot
M,N}\lf(\lf(K-1\ri)M+\lfloor\frac{l+1}{2}\rfloor\ri)=d^{\ast,\lf(FC\ri)}_{K\cdot
M,N}\lf(\lf(K-1\ri)M+\lfloor\frac{l+1}{2}\rfloor\ri)
\end{equation*}}{\baln{}{&d^{\ast,K\cdot D_{l}}_{K\cdot
M,N}\lf(\lf(K-1\ri)M+\lfloor\frac{l+1}{2}\rfloor\ri)=&
\nn\\
&d^{\ast,\lf(FC\ri)}_{K\cdot
M,N}\lf(\lf(K-1\ri)M+\lfloor\frac{l+1}{2}\rfloor\ri)&
}}
we get \eqref{eq:ConverseStraightLineLem20}
\con{\begin{equation}\label{eq:ConverseTheoremOptimalSymmetricDMT28}
\frac{NM-\lf(K\cdot r_{l}-1\ri)r_{l}}{KM+N-1-2\lf(K\cdot
r_{l}-1\ri)}< D_{l}\le \frac{NM-r_{l}\lf(K\cdot
r_{l}+1\ri)}{KM+N-1-2\cdot K\cdot r_{l}}.
\end{equation}}{\bal{eq:ConverseTheoremOptimalSymmetricDMT28}
{&\frac{NM-\lf(K\cdot r_{l}-1\ri)r_{l}}{KM+N-1-2\lf(K\cdot
r_{l}-1\ri)}< D_{l}\le&
\nn\\
&\frac{NM-r_{l}\lf(K\cdot
r_{l}+1\ri)}{KM+N-1-2\cdot K\cdot r_{l}}.&}}
It follows from Corollary \ref{prop:RelationBedDandFCd} that in the
range $0<D\le D_{l}$
\begin{equation}\label{eq:ConverseTheoremOptimalSymmetricDMT15}
d^{\ast,K\cdot D}_{K\cdot M,N}\lf(K\cdot r\ri)\le
d^{\ast,\lf(FC\ri)}_{K\cdot M,N}\lf(K\cdot r\ri).
\end{equation}
For $D_{l}<D\le
\frac{NM-\frac{r_{l}}{K}\lf(r_{l}+1\ri)}{KM+N-1-2r_{l}}$ assume
$d^{\ast,D}_{M,N}\lf(r\ri)$ rotates around anchor point with
multiplexing gain $\frac{m}{K}$, where $m\in\mathbb{Z}$. For $0\le
m<\lf(K-1\ri)M+\lfloor\frac{l+1}{2}\rfloor$, based on Corollary
\ref{prop:ICP2pDMTAnchorPoints} and Lemma
\ref{lem:TheCasewhereSingleUserandKUserDMTUpperBoundCoincide} we get
\con{\begin{align}\label{eq:ConverseTheoremOptimalSymmetricDMT16}
d^{\ast,D}_{M,N}\lf(\frac{\lf(K-1\ri)M+\lfloor\frac{l+1}{2}\rfloor}{K}\ri)&\ge
d^{\ast,D_{l}}_{M,N}\lf(\frac{\lf(K-1\ri)M+\lfloor\frac{l+1}{2}\rfloor}{K}\ri)\nonumber\\
&=d^{\ast,\lf(FC\ri)}_{K\cdot
M,N}\lf(\lf(K-1\ri)M+\lfloor\frac{l+1}{2}\rfloor\ri)\ge
d^{\ast,K\cdot D}_{K\cdot
M,N}\lf(\lf(K-1\ri)M+\lfloor\frac{l+1}{2}\rfloor\ri).
\end{align}}{\bal{eq:ConverseTheoremOptimalSymmetricDMT16}
{&d^{\ast,D}_{M,N}\lf(\frac{\lf(K-1\ri)M+\lfloor\frac{l+1}{2}\rfloor}{K}\ri)\ge
d^{\ast,D_{l}}_{M,N}\lf(\frac{\lf(K-1\ri)M+\lfloor\frac{l+1}{2}\rfloor}{K}\ri)&
\nn\\
&=d^{\ast,\lf(FC\ri)}_{K\cdot
M,N}\lf(\lf(K-1\ri)M+\lfloor\frac{l+1}{2}\rfloor\ri)\ge&
\nn\\
&d^{\ast,K\cdot D}_{K\cdot
M,N}\lf(\lf(K-1\ri)M+\lfloor\frac{l+1}{2}\rfloor\ri).&}}
For $\lf(K-1\ri)M+\lfloor\frac{l+1}{2}\le m\le L$ we get from
\eqref{eq:ConverseTheoremOptimalSymmetricDMT5} and Theorem
\ref{prop:FCOptDMT} that
\con{\begin{equation}\label{eq:ConverseTheoremOptimalSymmetricDMT18}
d^{\ast,D}_{M,N}\lf(m\ri)=d^{\ast,\lf(FC\ri)}_{M,N}\lf(m\ri)\ge
d^{\ast,\lf(FC\ri)}_{K\cdot M,N}\lf(K\cdot m\ri)\ge d^{\ast,K\cdot
D}_{K\cdot M,N}\lf(K\cdot m\ri).
\end{equation}}{\bal{eq:ConverseTheoremOptimalSymmetricDMT18}
{&d^{\ast,D}_{M,N}\lf(m\ri)=d^{\ast,\lf(FC\ri)}_{M,N}\lf(m\ri)\ge
d^{\ast,\lf(FC\ri)}_{K\cdot M,N}\lf(K\cdot m\ri)&
\nn\\
&\ge d^{\ast,K\cdot
D}_{K\cdot M,N}\lf(K\cdot m\ri).&}}
We also get $d^{\ast,D}_{M,N}\lf(D\ri)= d^{\ast,K\cdot D}_{K\cdot
M,N}\lf(K\cdot D\ri)=0$. Since these are straight lines, we get for
$D_{l}<D\le \frac{NM-\frac{r_{l}}{K}\lf(r_{l}+1\ri)}{KM+N-1-2r_{l}}$
\begin{equation}\label{eq:ConverseTheoremOptimalSymmetricDMT19}
d^{\ast,D}_{M,N}\lf(r\ri)\ge d^{\ast,K\cdot D}_{K\cdot
M,N}\lf(K\cdot r\ri).
\end{equation}
Similarly to \eqref{eq:ConverseTheoremOptimalSymmetricDMT24} it can
be shown by induction for
$\frac{MN-\frac{s}{K}\lf(s-1\ri)}{KM+N-1-2\lf(s-1\ri)}\le D\le
\frac{MN-\frac{s}{K}\lf(s+1\ri)}{KM+N-1-2s}$,
$s=\lf(K-1\ri)M+\lfloor\frac{l+1}{2}\rfloor+1,\dots,L-1$, that
\begin{equation}\label{eq:ConverseTheoremOptimalSymmetricDMT21}
d^{\ast,D}_{M,N}\lf(r\ri)\ge d^{\ast,K\cdot D}_{K\cdot
M,N}\lf(K\cdot r\ri).
\end{equation}
Hence, from \eqref{eq:ConverseTheoremOptimalSymmetricDMT15},
\eqref{eq:ConverseTheoremOptimalSymmetricDMT19} and
\eqref{eq:ConverseTheoremOptimalSymmetricDMT21} we get
\con{\begin{equation}
\max_{D}\,\min \lf\{d^{\ast,D}_{M,N}\lf(r\ri),d^{\ast,K\cdot
D}_{K\cdot M,N}\lf(K\cdot r\ri)\ri\}=d^{\ast,\lf(FC\ri)}_{K\cdot
M,N}\lf(K\cdot r\ri)=d^{\ast,\lf(IC\ri)}_{K,M,N}\lf(r\ri)
\end{equation}}{\bal{}{&\max_{D}\,\min \lf\{d^{\ast,D}_{M,N}\lf(r\ri),d^{\ast,K\cdot
D}_{K\cdot M,N}\lf(K\cdot r\ri)\ri\}=&
\nn\\
&d^{\ast,\lf(FC\ri)}_{K\cdot
M,N}\lf(K\cdot r\ri)=d^{\ast,\lf(IC\ri)}_{K,M,N}\lf(r\ri)
&}}
where $\frac{\lf(K-1\ri)M+\lfloor\frac{l+1}{2}\rfloor}{K}\le
r\le\frac{L}{K}$.

The remaining open point for $N=\lf(K-1\ri)M+1+l$, $l=0,\dots,2M-3$
is the case
\begin{equation}\label{eq:ConverseTheoremOptimalSymmetricDMT30}
\lfloor\frac{l}{2}\rfloor+1=\frac{\lf(K-1\ri)M+\lfloor\frac{l+1}{2}\rfloor}{K}.
\end{equation}
First we would like to find when this equality takes place. For this
we consider two cases. First let us consider $l=2b$. For this case
\eqref{eq:ConverseTheoremOptimalSymmetricDMT30} takes the following
form
\begin{equation*}
K\cdot \lf(b+1\ri)=\lf(K-1\ri)M+b
\end{equation*}
which leads to
\begin{equation*}
b=M-\frac{K}{K-1}.
\end{equation*}
Since $b\ge 0$, $M\ge 1$ and $K\ge 2$ are integers, we get that this
equality can only hold at $K=2$. In this case we get $M=b+2$ and
$N=3\lf(b+1\ri)$. Since both $M\ge 1$ and $N\ge 1$, we get that
$b\ge 2$. Hence by assigning $s=b+1$ we get
\eqref{eq:ConverseTheoremOptimalSymmetricDMT30} for $K=2$, $M=s+1$
and $N=3\cdot s$, where $s\ge 1$ is an integer. For the second case
we consider $l=2b+1$. In this case by assigning in
\eqref{eq:ConverseTheoremOptimalSymmetricDMT30} we get $b=M-1$.
However we know that $l=2b+1\le 2M-3$, and so $b\le M-2$. Hence for
$l=2b+1$ \eqref{eq:ConverseTheoremOptimalSymmetricDMT30} can not
take place. From \eqref{eq:ConverseTheoremOptimalSymmetricDMT5},
\eqref{eq:ConverseTheoremOptimalSymmetricDMT30} we get
\begin{equation}\label{eq:ConverseTheoremOptimalSymmetricDMT31}
\lfloor\frac{l}{2}\rfloor+1=\frac{N}{K+1}=\frac{\lf(K-1\ri)M+\lfloor\frac{l+1}{2}\rfloor}{K}.
\end{equation}
In addition, \eqref{eq:ConverseTheoremOptimalSymmetricDMT30} holds
only for $l=2b$. For this case simply by assigning $l=2b$ we get
\begin{equation}\label{eq:ConverseTheoremOptimalSymmetricDMT32}
D^{\ast}_{\lfloor\frac{l}{2}\rfloor}=D_{l}=D^{\ast}_{r_{l}}.
\end{equation}
Hence, we are interested in finding
$d^{\ast,\lf(IC\ri)}_{K,M,N}\lf(r\ri)$ for $K=2$, $M=s+1$ and
$N=3\cdot s$, where $s\ge 1$ is an integer. For $D>D_{l}$ we get
$d^{\ast,D}_{s+1,3\cdot s}\lf(r\ri)\le
d^{\ast,\lf(FC\ri)}_{s+1,3\cdot s}\lf(r\ri)$. On the other hand for
$0<D<D_{l}=D^{\ast}_{\lfloor\frac{l}{2}\rfloor}$ we know from
Corollary \ref{prop:ICP2pDMTAnchorPoints} and
\eqref{eq:ConverseTheoremOptimalSymmetricDMT31} that
$d^{\ast,D}_{s+1,3\cdot s}\lf(r\ri)$ rotates around anchor point at
multiplexing gain $m\le\frac{N}{K+1}$. Hence, by similar arguments
to the ones used in \eqref{eq:ConverseTheoremOptimalSymmetricDMT9}
we get $d^{\ast,D}_{s+1,3\cdot s}\lf(m\ri)\le d^{\ast,2\cdot
D}_{2\cdot \lf(s+1\ri),3\cdot s}\lf(2\cdot m\ri)$, which leads to
$d^{\ast,D}_{s+1,3\cdot s}\lf(r\ri)\le d^{\ast,2\cdot
D}_{2\cdot\lf(s+1\ri),3\cdot s}\lf(2\cdot r\ri)$ for $0<D< D_{l}$.
Hence in the range $0\le r\le \frac{N}{K+1}$ the optimal solution is
$d^{\ast,\lf(FC\ri)}_{s+1,3\cdot s}\lf(r\ri)$. For the same
arguments we get for $\frac{N}{K+1}\le r\le \frac{L}{K}$ that the
optimal solution is $d^{\ast,\lf(FC\ri)}_{2\cdot \lf(s+1\ri),3\cdot
s}\lf(2\cdot r\ri)$. Hence we get
\con{\begin{equation}\label{eq:ConverseTheoremOptimalSymmetricDMT37}
d^{\ast,\lf(IC\ri)}_{K,M,N}\lf(r\ri)=d^{\ast,\lf(IC\ri)}_{2,s+1,3\cdot
s}\lf(r\ri)=\lf\{\begin{array}{lr}
d^{\ast,\lf(FC\ri)}_{s+1,3\cdot s}\lf(r\ri) &0\le r\le \frac{N}{K+1}=s\\
d^{\ast,\lf(FC\ri)}_{2 \lf(s+1\ri),3\cdot s}\lf(2\cdot r\ri) &s\le
r\le 3\cdot s.
\end{array}\ri.
\end{equation}}{\bal{eq:ConverseTheoremOptimalSymmetricDMT37}
{&d^{\ast,\lf(IC\ri)}_{K,M,N}\lf(r\ri)=d^{\ast,\lf(IC\ri)}_{2,s+1,3\cdot
s}\lf(r\ri)=&
\nn\\
&\lf\{\begin{array}{lr}
d^{\ast,\lf(FC\ri)}_{s+1,3\cdot s}\lf(r\ri) &0\le r\le \frac{N}{K+1}=s\\
d^{\ast,\lf(FC\ri)}_{2 \lf(s+1\ri),3\cdot s}\lf(2\cdot r\ri) &s\le
r\le 3\cdot s.
\end{array}\ri.&}}

So far we have shown that
\begin{equation}\label{eq:ConverseTheoremOptimalSymmetricDMT29}
\max_{D}\,\min \lf\{d^{\ast,D}_{M,N}\lf(r\ri),d^{\ast,K\cdot
D}_{K\cdot M,N}\lf(K\cdot
r\ri)\ri\}=d^{\ast,\lf(IC\ri)}_{K,M,N}\lf(r\ri).
\end{equation}
Now we wish to show that this is also the solution of
\eqref{eq:OptimalUpperBoundMACDMTSymmetric}. We begin with the case
for which
$d^{\ast,\lf(IC\ri)}_{K,M,N}\lf(r\ri)=d^{\ast,\lf(FC\ri)}_{M,N}\lf(r\ri)$.
This is the case for $N\ge \lf(K+1\ri)M-1$, and also for
$N=\lf(K-1\ri)M-1+l$, $l=0,\dots,2M-3$ when $0\le
r\le\lfloor\frac{l}{2}\rfloor+1$. As a base line we consider the
case $D_{1}=\dots,D_{K}=D^{\ast}_{r}$, where $D^{\ast}_{r}$ is the
average number of dimensions per channel use per user, that
maximizes the expression in
\eqref{eq:ConverseTheoremOptimalSymmetricDMT29}. Without loss of
generality assume user $i$ has $D_{i}\neq D_{r}^{\ast}$. In this
case based on \eqref{eq:ConverseTheoremOptimalSymmetricDMT29} and
Corollary \ref{prop:RelationBedDandFCd} we get
\con{\begin{align}
\min_{ A\subseteq \lf\{1,\dots,K\ri\}, D_{i}\neq D_{r}^{\ast}}
\lf(d^{\ast,\sum_{a\in A}D_{a}}_{|A|\cdot M,N}\lf(|A|\cdot
r\ri)\ri)\le d^{\ast,D_{i}}_{M,N}\lf(r\ri)\le
d^{\ast,\lf(FC\ri)}_{M,N}\lf(r\ri)=\max_{D}\,\min
\lf\{d^{\ast,D}_{M,N}\lf(r\ri),d^{\ast,K\cdot D}_{K\cdot
M,N}\lf(K\cdot r\ri)\ri\}.
\end{align}}{\bal{}{&\min_{ A\subseteq \lf\{1,\dots,K\ri\}, D_{i}\neq D_{r}^{\ast}}
\lf(d^{\ast,\sum_{a\in A}D_{a}}_{|A|\cdot M,N}\lf(|A|\cdot
r\ri)\ri)\le&
\nn\\
&d^{\ast,D_{i}}_{M,N}\lf(r\ri)\le d^{\ast,\lf(FC\ri)}_{M,N}\lf(r\ri)=&
\nn\\
&\max_{D}\,\min
\lf\{d^{\ast,D}_{M,N}\lf(r\ri),d^{\ast,K\cdot D}_{K\cdot
M,N}\lf(K\cdot r\ri)\ri\}.&}}
Hence the optimal solution must be
$d^{\ast,\lf(IC\ri)}_{K,M,N}\lf(r\ri)$, attained for
$D_{1}=\dots=D_{K}=D_{r}^{\ast}$. We now consider the case in which
$d^{\ast,\lf(IC\ri)}_{K,M,N}\lf(r\ri)=d^{\ast,\lf(FC\ri)}_{K\cdot
M,N}\lf(K\cdot r\ri)$, for which $N=\lf(K-1\ri)M+1+l$, where
$l=0,\dots,2M-3$ and
$\frac{\lf(K-1\ri)M+\lfloor\frac{l+1}{2}\rfloor}{K}\le r
\le\frac{L}{K}$. In this case the optimal solution in
\eqref{eq:ConverseTheoremOptimalSymmetricDMT29} for the $K$ users
pulled together is attained for $K\cdot D_{r}^{\ast}$. Let us assume
that $\sum_{i=1}^{K}D_{i}\neq K\cdot D_{r}^{\ast}$. In this case we
get
\con{\begin{equation}
\min_{ A\subseteq \lf\{1,\dots,K\ri\}, \sum_{i=1}^{K}D_{i}\neq
K\cdot D_{r}^{\ast}} \lf(d^{\ast,\sum_{a\in A}D_{a}}_{|A|\cdot M,N}\lf(|A|\cdot r\ri)\ri)\le d^{\ast,\lf(FC\ri)}_{K\cdot
M,N}\lf(K\cdot r\ri)=\max_{D}\,\min
\lf\{d^{\ast,D}_{M,N}\lf(r\ri),d^{\ast,K\cdot D}_{K\cdot
M,N}\lf(K\cdot r\ri)\ri\}.
\end{equation}}{\bal{}{&\min_{ A\subseteq \lf\{1,\dots,K\ri\}, \sum_{i=1}^{K}D_{i}\neq
K\cdot D_{r}^{\ast}} \lf(d^{\ast,\sum_{a\in A}D_{a}}_{|A|\cdot M,N}\lf(|A|\cdot r\ri)\ri)\le&
\nn\\
&d^{\ast,\lf(FC\ri)}_{K\cdot
M,N}\lf(K\cdot r\ri)=&
\nn\\
&\max_{D}\,\min
\lf\{d^{\ast,D}_{M,N}\lf(r\ri),d^{\ast,K\cdot D}_{K\cdot
M,N}\lf(K\cdot r\ri)\ri\}.&}}
Hence the optimal solution must be
$d^{\ast,\lf(IC\ri)}_{K,M,N}\lf(r\ri)$. Now let us consider the case
$N<\lf(K-1\ri)M+1$. In this case the optimal solution in
\eqref{eq:ConverseTheoremOptimalSymmetricDMT29} is attained for
$D_{r}^{\ast}=\frac{N}{K}$. Without loss of generality assume
$D_{i}<\frac{N}{K}$. In this case we get from Corollary
\ref{prop:RelationBedDandFCd} that
\con{\begin{equation}
\min_{ A\subseteq \lf\{1,\dots,K\ri\}, D_{i}< \frac{N}{K}}
\lf(d^{\ast,\sum_{a\in A}D_{a}}_{|A|\cdot M,N}\lf(|A|\cdot
r\ri)\ri)\le MN-KMr=\max_{D}\,\min
\lf\{d^{\ast,D}_{M,N}\lf(r\ri),d^{\ast,K\cdot D}_{K\cdot
M,N}\lf(K\cdot r\ri)\ri\}.
\end{equation}}{\bal{}{&\min_{ A\subseteq \lf\{1,\dots,K\ri\}, D_{i}< \frac{N}{K}}
\lf(d^{\ast,\sum_{a\in A}D_{a}}_{|A|\cdot M,N}\lf(|A|\cdot
r\ri)\ri)\le MN&
\nn\\
&-KMr=\max_{D}\,\min
\lf\{d^{\ast,D}_{M,N}\lf(r\ri),d^{\ast,K\cdot D}_{K\cdot
M,N}\lf(K\cdot r\ri)\ri\}.&}}
which shows again that $d^{\ast,\lf(IC\ri)}_{K,M,N}\lf(r\ri)$ is the
solution. Finally we consider the case where
$d^{\ast,\lf(IC\ri)}_{K,M,N}\lf(r\ri)=d^{\ast}\lf(r\ri)$, i.e., the
case in which $N=\lf(K-1\ri)M+1+l$, $l=0,\dots,2M-3$ and
$\lfloor\frac{l}{2}\rfloor+1\le
r\le\frac{\lf(K-1\ri)M+\lfloor\frac{l+1}{2}\rfloor}{K}$. Following
Lemma \ref{lem:TheCasewhereSingleUserandKUserDMTUpperBoundCoincide}
and Corollary \ref{prop:ICP2pDMTAnchorPoints} we get without loss of
generality that when $D_{1}<D_{l}$
\con{\begin{equation}
\min_{ A\subseteq \lf\{1,\dots,K\ri\}, D_{1}< D_{l}
}\lf(d^{\ast,\sum_{a\in A}D_{a}}_{|A|\cdot M,N}\lf(|A|\cdot
r\ri)\ri)\le d^{\ast,D_{1}}_{M,N}\lf(r\ri)\le d^{\ast}\lf(r\ri)
=d^{\ast,D_{l}}_{M,N}\lf(r\ri),
\end{equation}}{\bal{}{&\min_{ A\subseteq \lf\{1,\dots,K\ri\}, D_{1}< D_{l}
}\lf(d^{\ast,\sum_{a\in A}D_{a}}_{|A|\cdot M,N}\lf(|A|\cdot
r\ri)\ri)\le&
\nn\\
&d^{\ast,D_{1}}_{M,N}\lf(r\ri)\le d^{\ast}\lf(r\ri)
=d^{\ast,D_{l}}_{M,N}\lf(r\ri),&}}
whereas for $\sum_{i=1}^{K}D_{i}>K\cdot D_{l}$
\con{\begin{equation}
\min_{ A\subseteq \lf\{1,\dots,K\ri\}, \sum_{i=1}^{K}D_{i}> K\cdot
D_{l} }\lf(d^{\ast,\sum_{a\in A}D_{a}}_{|A|\cdot M,N}\lf(|A|\cdot
r\ri)\ri)\le d^{\ast,\sum_{i=1}^{K}D_{i}}_{M,N}\lf(K\cdot r\ri)\le
d^{\ast}\lf(r\ri) =d^{\ast,K\cdot D_{l}}_{M,N}\lf(K\cdot r\ri),
\end{equation}}{\bal{}{&\min_{ A\subseteq \lf\{1,\dots,K\ri\}, \sum_{i=1}^{K}D_{i}> K\cdot
D_{l} }\lf(d^{\ast,\sum_{a\in A}D_{a}}_{|A|\cdot M,N}\lf(|A|\cdot
r\ri)\ri)\le&
\nn\\
&d^{\ast,\sum_{i=1}^{K}D_{i}}_{M,N}\lf(K\cdot r\ri)\le
d^{\ast}\lf(r\ri) =d^{\ast,K\cdot D_{l}}_{M,N}\lf(K\cdot r\ri),&}}
which shows that $d^{\ast,\lf(IC\ri)}_{K,M,N}\lf(r\ri)$ is the
optimal solution. This concludes the proof.

\section{Proof of Lemma \ref{lem:ConLemCompareOptimalSymmetricDMTICandFC}} \label{Append:ConLemCompareOptimalSymmetricDMTICandFC}
For $N\ge \lf(K+1\ri)M-1$ it can be easily shown based on Lemma
\ref{lem:TheCasewhereSingleUserIstheWorst} and Corollary
\ref{prop:ICP2pDMTAnchorPoints} that
\begin{equation}\label{eq:ConverseLemCompareBetweenFCandICOptimalSymmetricDMT1}
d^{\ast,\lf(FC\ri)}_{K,M,N}\lf(r\ri)=d^{\ast,\lf(FC\ri)}_{M,N}\lf(r\ri)=d^{\ast,\lf(IC\ri)}_{K,M,N}\lf(r\ri).
\end{equation}

For $N<\lf(K-1\ri)M+1$ we get $\frac{L}{K}=\frac{N}{K}$. It follows
from \eqref{eq:KUserSmallestDiversityOrder},
\eqref{eq:KUserSmallestDiversityOrderLowerBound},
\eqref{eq:SingleUserLargestDiversityOrderUpperBound} that
\begin{equation*}
d^{\ast,D}_{M,N}\lf(0\ri)<d^{\ast,K\cdot D}_{K\cdot M,N}\lf(0\ri).
\end{equation*}
In addition, $d^{\ast,D}_{M,N}\lf(r\ri)$, $d^{\ast,K\cdot D}_{K\cdot
M,N}\lf(K\cdot r\ri)$ are straight lines, and
$d^{\ast,D}_{M,N}\lf(D\ri)=d^{\ast,K\cdot D}_{K\cdot M,N}\lf(K\cdot
D\ri)=0$. As a consequence we get
\con{\begin{equation}\label{eq:ConverseLemCompareBetweenFCandICOptimalSymmetricDMT2}
d^{\ast,D}_{M,N}\lf(r\ri)<d^{\ast,K\cdot D}_{K\cdot M,N}\lf(K\cdot
r\ri)\le d^{\ast,\lf(FC\ri)}_{KM,N}\lf(K\cdot r\ri)\quad
0<D\le\frac{N}{K}
\end{equation}}{\bal{eq:ConverseLemCompareBetweenFCandICOptimalSymmetricDMT2}{
&d^{\ast,D}_{M,N}\lf(r\ri)<d^{\ast,K\cdot D}_{K\cdot M,N}\lf(K\cdot
r\ri)\le&
\nn\\
&d^{\ast,\lf(FC\ri)}_{KM,N}\lf(K\cdot r\ri)\quad
0<D\le\frac{N}{K}&}} for $0<r<D$, where the second inequality results from Corollary
\ref{prop:RelationBedDandFCd}. In addition, since
$\frac{N}{K}<\frac{MN}{N+M-1}$, $0<D\le \frac{N}{K}$ and
$\lf(N+M-1\ri)<K\cdot M$ we get
\con{\begin{equation}\label{eq:ConverseLemCompareBetweenFCandICOptimalSymmetricDMT3}
d^{\ast,\lf(IC\ri)}_{K,M,N}\lf(r\ri)=MN-KMr<d^{\ast,\lf(FC\ri)}_{M,N}\lf(r\ri)=MN-\lf(N+M-1\ri)r
\end{equation}}{\bal{eq:ConverseLemCompareBetweenFCandICOptimalSymmetricDMT3}
{&d^{\ast,\lf(IC\ri)}_{K,M,N}\lf(r\ri)=MN-KMr<d^{\ast,\lf(FC\ri)}_{M,N}\lf(r\ri)&
\nn\\
&=MN-\lf(N+M-1\ri)r&}}
for $0<r\le\frac{N}{K}$. Since
$d^{\ast,\lf(FC\ri)}_{K,M,N}\lf(r\ri)$ consists of
$d^{\ast,\lf(FC\ri)}_{M,N}\lf(r\ri)$ and
$d^{\ast,\lf(FC\ri)}_{KM,N}\lf(K\cdot r\ri)$ we get from
\eqref{eq:ConverseLemCompareBetweenFCandICOptimalSymmetricDMT2},
\eqref{eq:ConverseLemCompareBetweenFCandICOptimalSymmetricDMT3} that
\begin{equation*}
d^{\ast,\lf(IC\ri)}_{K,M,N}\lf(r\ri)<d^{\ast,\lf(FC\ri)}_{K,M,N}\lf(r\ri)\quad
0<r<\frac{N}{K}.
\end{equation*}

For $N=\lf(K-1\ri)M+1+l$ and $l=0,\dots,2M-3$, recall that we
denoted $D_{l}=\frac{MN-\lfloor\frac{l}{2}\rfloor\cdot
\lf(\lfloor\frac{l}{2}\rfloor+1\ri)-2\cdot\lf(\lfloor\frac{l}{2}\rfloor+1\ri)\cdot\lf(\frac{l}{2}-\lfloor\frac{l}{2}\rfloor\ri)}{N+M-1-l}$
and also $r_{l}=\frac{\lf(K-1\ri)M+\lfloor\frac{l+1}{2}\rfloor}{K}$.
In \eqref{eq:ConverseStraightLineLem22} it was shown that
$D_{l}<\frac{MN-\lf(\lfloor\frac{l}{2}\rfloor+1\ri)\lf(\lfloor\frac{l}{2}\rfloor+2\ri)}{M+N-1-2\lf(\lfloor\frac{l}{2}\rfloor+1\ri)}$;
following the behavior of the straight lines around the anchor
points as presented in Lemma
\ref{lem:TheCasewhereSingleUserandKUserDMTUpperBoundCoincide} and
Corollary \ref{prop:ICP2pDMTAnchorPoints}, it is straightforward to
see that
\begin{equation}\label{eq:ConverseLemCompareBetweenFCandICOptimalSymmetricDMT4}
d^{\ast}\lf(r\ri)=
d^{\ast,D_{l}}_{M,N}\lf(r\ri)<d^{\ast,\lf(FC\ri)}_{M,N}\lf(r\ri)\quad
\lfloor\frac{l}{2}\rfloor+1<r\le\frac{L}{K}.
\end{equation}
On the other hand from \eqref{eq:ConverseStraightLineLem20} we get
$D_{l}>\frac{MN-r_{l}\lf(K\cdot r_{l}-1\ri)}{K\cdot M+N-1-2\lf(\cdot
K\cdot r_{l}-1\ri)}$. From similar arguments to
\eqref{eq:ConverseLemCompareBetweenFCandICOptimalSymmetricDMT4} it
follows that
\begin{equation}\label{eq:ConverseLemCompareBetweenFCandICOptimalSymmetricDMT5}
d^{\ast}\lf(r\ri)=d^{\ast,K\cdot D_{l}}_{K\cdot M,N}\lf(K\cdot
r\ri)<d^{\ast,\lf(FC\ri)}_{K\cdot M,N}\lf(K\cdot r\ri)
\end{equation}
where $0\le
r<\frac{\lf(K-1\ri)M+\lfloor\frac{l+1}{2}\rfloor}{K}$.
Since $d^{\ast,\lf(FC\ri)}_{K,M,N}\lf(r\ri)$ consists of
$d^{\ast,\lf(FC\ri)}_{M,N}\lf(r\ri)$ and
$d^{\ast,\lf(FC\ri)}_{K\cdot M,N}\lf(K\cdot r\ri)$, we get from
\eqref{eq:ConverseLemCompareBetweenFCandICOptimalSymmetricDMT4},
\eqref{eq:ConverseLemCompareBetweenFCandICOptimalSymmetricDMT5}
\begin{equation}
d^{\ast}\lf(r\ri)<d^{\ast,\lf(FC\ri)}_{K,M,N}\lf(r\ri)\quad
\lfloor\frac{l}{2}\rfloor+1<r<\frac{\lf(K-1\ri)M+\lfloor\frac{l+1}{2}\rfloor}{K}.
\end{equation}

The remaining open point for $N=\lf(K-1\ri)M+1+l$ and
$l=0,\dots,2M-3$ is the case
\begin{equation*}
\lfloor\frac{l}{2}\rfloor+1=\frac{\lf(K-1\ri)M+\lfloor\frac{l+1}{2}\rfloor}{K}.
\end{equation*}
In Theorem \ref{th:ICOptimalSymmetricDMTUpperNound} it was shown
(see equation \eqref{eq:ConverseTheoremOptimalSymmetricDMT31}
appendix \ref{append:ICOptimalSymmetricDMTUpperNound}) that we get
equality for $K=2$, $M=s+1$ and $N=3\cdot s$, where $s\ge 1$ is an
integer. According to Theorem \ref{prop:FCOptDMT}, for this case the
optimal DMT of finite constellations equals
\begin{equation*}
d^{\ast,\lf(FC\ri)}_{2,s+1,3\cdot s}\lf(r\ri)=\lf\{\begin{array}{lr}
d^{\ast,\lf(FC\ri)}_{s+1,3\cdot s}\lf(r\ri) &0\le r\le \frac{N}{K+1}=s\\
d^{\ast,\lf(FC\ri)}_{2 \lf(s+1\ri),3\cdot s}\lf(2\cdot r\ri) &s\le
r\le 3\cdot s.
\end{array}\ri.
\end{equation*}
Hence, from \eqref{eq:ConverseTheoremOptimalSymmetricDMT37} we get
$d^{\ast,\lf(FC\ri)}_{2,s+1,3\cdot
s}\lf(r\ri)=d^{\ast,\lf(IC\ri)}_{2,s+1,3\cdot s}\lf(r\ri)$. By
simply assigning we get that in this case $N<\lf(K+1\ri)M-1$. This
concludes the proof.

\section{Proof of Theorem
\ref{th:ConvThOptimalityandSuboptimalityofICinMACChannel}}\label{append:ConvThOptimalityandSuboptimalityofICinMACChannel_1}
We begin by finding for $N\ge \lf(K+1\ri)M-1$ an upper bound on the
DMT of the unconstrained multiple-access channel, that equals to the
optimal DMT of finite constellations
$d^{\ast,\lf(FC\ri)}_{M,N}\lf(\max \lf(r_{1},\dots,r_{K}\ri)  \ri)$.
The proof relies on the upper bound on the optimal DMT in the
symmetric case $d^{\ast,\lf(IC\ri)}_{K,M,N}\lf(r\ri)$. For $N\ge
\lf(K+1\ri)M-1$ it was shown in Lemma
\ref{lem:ConLemCompareOptimalSymmetricDMTICandFC} that
\begin{equation}\label{eq:ConvThOptimalityandSuboptimalityofICinMACChannel1}
d^{\ast,\lf(IC\ri)}_{K,M,N}\lf(r\ri)=d^{\ast,\lf(FC\ri)}_{M,N}\lf(r\ri).
\end{equation}
From Theorem \ref{Th:MDTFormulation} we get that the optimal DMT is
upper bounded by
\begin{equation}\label{eq:ConvThOptimalityandSuboptimalityofICinMACChannel2}
\max_{\lf(D_{1},\dots,D_{K}\ri)\in \RD}\min_{ A\subseteq
\lf\{1,\dots,K\ri\}} d^{\ast,D_{A}}_{|A|\cdot M,N}\lf(R_{A}\ri).
\end{equation}
We wish to solve
\eqref{eq:ConvThOptimalityandSuboptimalityofICinMACChannel2}. We
solve it by finding upper and lower bounds on
\eqref{eq:ConvThOptimalityandSuboptimalityofICinMACChannel2} that
coincide. For the rate tuple $\lf(r_{1},\dots,r_{K}\ri)$ recall the
definition $r_{max}=\max \lf(r_{1},\dots,r_{K}\ri)$. We begin by
lower bounding the optimization problem terms. Based on Lemma
\ref{lem:TheCasewhereSingleUserIstheWorst} and the fact that
$d^{\ast,i\cdot D}_{i\dot M,N}\lf(i\cdot r\ri)$, $i=1,\dots,K$ are
straight lines as a function of $r$ we get
\con{\begin{equation}\label{eq:ConvThOptimalityandSuboptimalityofICinMACChannel3}
d^{\ast,\sum_{a\in A}D_{a}}_{|A|\cdot M,N}\lf(\sum_{a\in
A}r_{a}\ri)\ge d^{\ast,\sum_{a\in A}D_{a}}_{|A|\cdot
M,N}\lf(|A|\cdot r_{max}\ri)\ge d^{\ast,\frac{\sum_{a\in
A}D_{a}}{|A|}}_{M,N}\lf(r_{max}\ri)\quad \forall A\subseteq
\lf\{1,\dots,K\ri\}.
\end{equation}}{\bal{eq:ConvThOptimalityandSuboptimalityofICinMACChannel3}
{&d^{\ast,\sum_{a\in A}D_{a}}_{|A|\cdot M,N}\lf(\sum_{a\in
A}r_{a}\ri)\ge d^{\ast,\sum_{a\in A}D_{a}}_{|A|\cdot
M,N}\lf(|A|\cdot r_{max}\ri)&
\nn\\
&\ge d^{\ast,\frac{\sum_{a\in
A}D_{a}}{|A|}}_{M,N}\lf(r_{max}\ri)\quad \forall A\subseteq
\lf\{1,\dots,K\ri\}.&}}
Hence, we get
\con{\begin{equation}\label{eq:ConvThOptimalityandSuboptimalityofICinMACChannel4}
\min_{ A\subseteq \lf\{1,\dots,K\ri\}} d^{\ast,\sum_{a\in
A}D_{a}}_{|A|\cdot M,N}\lf(\sum_{a\in A}r_{a}\ri)\ge \min_{
A\subseteq \lf\{1,\dots,K\ri\}} d^{\ast,\frac{\sum_{a\in
A}D_{a}}{|A|}}_{M,N}\lf(r_{max}\ri).
\end{equation}}{\bal{eq:ConvThOptimalityandSuboptimalityofICinMACChannel4}
{&\min_{ A\subseteq \lf\{1,\dots,K\ri\}} d^{\ast,\sum_{a\in
A}D_{a}}_{|A|\cdot M,N}\lf(\sum_{a\in A}r_{a}\ri)\ge&
\nn\\
&\min_{A\subseteq \lf\{1,\dots,K\ri\}} d^{\ast,\frac{\sum_{a\in
A}D_{a}}{|A|}}_{M,N}\lf(r_{max}\ri).&}}
From Corollary \ref{prop:RelationBedDandFCd} we know that
\begin{equation}\label{eq:ConvThOptimalityandSuboptimalityofICinMACChannel6}
\max_{D}d^{\ast,D}_{M,N}\lf(r_{max}\ri)=d^{\ast,\lf(FC\ri)}_{M,N}\lf(r_{max}\ri)
\end{equation}
is obtained for
\begin{equation}
D_{max}=\lf\{\begin{array}{cc} \frac{MN-\lfloor r_{max}\rfloor\cdot
\lf(\lfloor r_{max}\rfloor+1\ri)}{N+M-1-2\cdot \lfloor
r_{max}\rfloor} & 0\le r_{max}<M\\
M & r_{max}=M
\end{array}\ri
.
\end{equation}
Hence, from
\eqref{eq:ConvThOptimalityandSuboptimalityofICinMACChannel4},
\eqref{eq:ConvThOptimalityandSuboptimalityofICinMACChannel6} we get
\con{\begin{equation}\label{eq:ConvThOptimalityandSuboptimalityofICinMACChannel7}
\max_{\lf(D_{1},\dots,D_{K}\ri)\in \RD}\min_{ A\subseteq
\lf\{1,\dots,K\ri\}} d^{\ast,D_{A}}_{|A|\cdot M,N}\lf(R_{A}\ri)\ge
\max_{\lf(D_{1},\dots,D_{K}\ri)\in \RD}\min_{ A\subseteq
\lf\{1,\dots,K\ri\}}d^{\ast,\frac{\sum_{a\in
A}D_{a}}{|A|}}_{M,N}\lf(r_{max}\ri)=d^{\ast,\lf(FC\ri)}_{M,N}\lf(r_{max}\ri)
\end{equation}}{\bal{eq:ConvThOptimalityandSuboptimalityofICinMACChannel7}
{&\max_{\lf(D_{1},\dots,D_{K}\ri)\in \RD}\min_{ A\subseteq
\lf\{1,\dots,K\ri\}} d^{\ast,D_{A}}_{|A|\cdot M,N}\lf(R_{A}\ri)\ge&
\nn\\
&\max_{\lf(D_{1},\dots,D_{K}\ri)\in \RD}\min_{ A\subseteq
\lf\{1,\dots,K\ri\}}d^{\ast,\frac{\sum_{a\in
A}D_{a}}{|A|}}_{M,N}\lf(r_{max}\ri)&
\nn\\
&=d^{\ast,\lf(FC\ri)}_{M,N}\lf(r_{max}\ri)&}}
obtained for $D_{1}=\dots=D_{K}=D_{max}$; note that $N\ge
\lf(K+1\ri)M-1$ and so $K\cdot D_{max}\le K\cdot M\le N$. We now
upper bound the optimization problem and show it coincides with the
lower bound. Without loss of generality assume $r_{i}=r_{max}$. In
this case we get
\begin{equation}\label{eq:ConvThOptimalityandSuboptimalityofICinMACChannel8}
\min_{ A\subseteq \lf\{1,\dots,K\ri\}} d^{\ast,\sum_{a\in
A}D_{a}}_{|A|\cdot M,N}\lf(\sum_{a\in A}r_{a}\ri)\le
d^{\ast,D_{i}}_{M,N}\lf(r_{max}\ri).
\end{equation}
From \eqref{eq:ConvThOptimalityandSuboptimalityofICinMACChannel6},
\eqref{eq:ConvThOptimalityandSuboptimalityofICinMACChannel8} we can
write
\con{\begin{equation}\label{eq:ConvThOptimalityandSuboptimalityofICinMACChannel9}
\max_{\lf(D_{1},\dots,D_{K}\ri)\in \RD}\min_{ A\subseteq
\lf\{1,\dots,K\ri\}} d^{\ast,D_{A}}_{|A|\cdot M,N}\lf(R_{A}\ri)\le
\max_{D_{i}}d^{\ast,D_{i}}_{M,N}\lf(r_{max}\ri) =
d^{\ast,\lf(FC\ri)}_{M,N}\lf(r_{max}\ri)
\end{equation}}{\bal{eq:ConvThOptimalityandSuboptimalityofICinMACChannel9}
{&\max_{\lf(D_{1},\dots,D_{K}\ri)\in \RD}\min_{ A\subseteq
\lf\{1,\dots,K\ri\}} d^{\ast,D_{A}}_{|A|\cdot M,N}\lf(R_{A}\ri)\le&
\nn\\
&\max_{D_{i}}d^{\ast,D_{i}}_{M,N}\lf(r_{max}\ri) =
d^{\ast,\lf(FC\ri)}_{M,N}\lf(r_{max}\ri)&}}
obtained for $D_{i}=D_{max}$. Hence, from
\eqref{eq:ConvThOptimalityandSuboptimalityofICinMACChannel7},
\eqref{eq:ConvThOptimalityandSuboptimalityofICinMACChannel9} we get
\con{\begin{equation}
\max_{\lf(D_{1},\dots,D_{K}\ri)\in \RD}\min_{ A\subseteq
\lf\{1,\dots,K\ri\}} d^{\ast,D_{A}}_{|A|\cdot
M,N}\lf(R_{A}\ri)=d^{\ast,\lf(FC\ri)}_{M,N}\lf(r_{max}\ri)
\end{equation}}{\bal{}{&\max_{\lf(D_{1},\dots,D_{K}\ri)\in \RD}\min_{ A\subseteq
\lf\{1,\dots,K\ri\}} d^{\ast,D_{A}}_{|A|\cdot
M,N}\lf(R_{A}\ri)=&
\nn\\
&d^{\ast,\lf(FC\ri)}_{M,N}\lf(r_{max}\ri)&}}
which is the optimal DMT of finite constellations.

Now we show for $N<\lf(K+1\ri)M-1$ that the optimal DMT of the
unconstrained multiple-access channel is suboptimal compared to the
optimal DMT of finite constellations. We do that by showing that
there exists a set $B$ of multiplexing gain tuples
$\lf(r_{1},\dots,r_{K}\ri)$ for which
\con{\begin{equation*}
\max_{\lf(D_{1},\dots,D_{K}\ri)\in \RD}\min_{ A\subseteq
\lf\{1,\dots,K\ri\}} d^{\ast,D_{A}}_{|A|\cdot
M,N}\lf(R_{A}\ri)<d^{\ast,\lf(FC\ri)}_{K,M,N}
\lf(r_{1},\dots,r_{K}\ri)\quad \forall\lf(r_{1},\dots,r_{K}\ri)\in B
\end{equation*}}{\baln{}{&\max_{\lf(D_{1},\dots,D_{K}\ri)\in \RD}\min_{ A\subseteq
\lf\{1,\dots,K\ri\}} d^{\ast,D_{A}}_{|A|\cdot
M,N}\lf(R_{A}\ri)<&
\nn\\
&d^{\ast,\lf(FC\ri)}_{K,M,N}
\lf(r_{1},\dots,r_{K}\ri)\quad \forall\lf(r_{1},\dots,r_{K}\ri)\in B&}}
where $d^{\ast,\lf(FC\ri)}_{K,M,N}\lf(r_{1},\dots,r_{K}\ri)$ is the
optimal DMT of finite constellations. We divide the sub-optimality
proof of $N<\lf(K+1\ri)M-1$ to several cases. We begin with the case
$N<\lf(K-1\ri)M+1$. For this case we show the sub-optimality by
considering symmetric multiplexing gain tuples, i.e.,
$r_{1}=\dots=r_{K}=r$. In this case the optimization problem
\eqref{eq:ConvThOptimalityandSuboptimalityofICinMACChannel2}
solution equals $d^{\ast,\lf(IC\ri)}_{K,M,N}\lf(r\ri)$. From Lemma
\ref{lem:ConLemCompareOptimalSymmetricDMTICandFC} we get that
\begin{equation*}
d^{\ast,\lf(IC\ri)}_{K,M,N}\lf(r\ri)<d^{\ast,\lf(FC\ri)}_{K,M,N}\lf(r\ri)=d^{\ast,\lf(FC\ri)}_{K,M,N}\lf(r,\dots,r\ri)
\end{equation*}
for $0<r<\frac{N}{K}$. Hence, in this case we have proved the sub-optimality based on the
optimal DMT in the symmetric case. We now prove the sub-optimality
for $N=\lf(K-1\ri)M+1+l$, where $l=0,\dots,2M-3$. In Lemma
\ref{lem:ConLemCompareOptimalSymmetricDMTICandFC} we have showed for
$r_{1}=\dots=r_{K}=r$ that
\begin{equation}\label{eq:ConvThOptimalityandSuboptimalityofICinMACChannel10}
d^{\ast,\lf(IC\ri)}_{K,M,N}\lf(r\ri)<d^{\ast,\lf(FC\ri)}_{K,M,N}\lf(r\ri)
\end{equation}
$\lfloor\frac{l}{2}\rfloor+1<r<\frac{\lf(K-1\ri)M+\lfloor\frac{l+1}{2}\rfloor}{K}$. Hence, for $\lfloor\frac{l}{2}\rfloor+1\neq
\frac{\lf(K-1\ri)M+\lfloor\frac{l+1}{2}\rfloor}{K}$ this shows the
sub-optimality of any IC's DMT. Therefore, in order to complete the
sub-optimality proof we are left only with the case
$\lfloor\frac{l}{2}\rfloor+1=
\frac{\lf(K-1\ri)M+\lfloor\frac{l+1}{2}\rfloor}{K}$.

In Theorem \ref{th:ICOptimalSymmetricDMTUpperNound} we have shown
that $\lfloor\frac{l}{2}\rfloor+1=
\frac{\lf(K-1\ri)M+\lfloor\frac{l+1}{2}\rfloor}{K}$ only at $K=2$,
$M=s+1$ and $N=3\cdot s$, where $s\ge 1$ is an integer. Note that in
this case the upper bound on the optimal DMT of IC's in the
symmetric case equals to the optimal DMT of finite constellations.
Hence, in this case we can not obtain the sub-optimality from the
symmetric case and we need to find a set of multiplexing gain tuples
$B$ for which
\con{\begin{equation}\label{eq:ConvThOptimalityandSuboptimalityofICinMACChannel15}
\max_{\lf(D_{1},D_{2}\ri)}\min\lf(d^{\ast,D_{1}}_{s+1,3\cdot
s}\lf(r_{1}\ri),d^{\ast,D_{1}+D_{2}}_{2\lf(s+1\ri),3\cdot
s}\lf(r_{1}+r_{2}\ri),d^{\ast,D_{2}}_{s+1,3\cdot
s}\lf(r_{2}\ri)\ri)<d^{\ast,\lf(FC\ri)}_{2,s+1,3\cdot s}
\lf(r_{1},r_{2}\ri)\quad \lf(r_{1},r_{2}\ri)\in B.
\end{equation}}{\bal{eq:ConvThOptimalityandSuboptimalityofICinMACChannel15}
{&\max_{\lf(D_{1},D_{2}\ri)}&
\nn\\
&\min\lf(d^{\ast,D_{1}}_{s+1,3\cdot
s}\lf(r_{1}\ri),d^{\ast,D_{1}+D_{2}}_{2\lf(s+1\ri),3\cdot
s}\lf(r_{1}+r_{2}\ri),d^{\ast,D_{2}}_{s+1,3\cdot
s}\lf(r_{2}\ri)\ri)&
\nn\\
&<d^{\ast,\lf(FC\ri)}_{2,s+1,3\cdot s}
\lf(r_{1},r_{2}\ri)\quad \lf(r_{1},r_{2}\ri)\in B.&}}
We defer the proof of
\eqref{eq:ConvThOptimalityandSuboptimalityofICinMACChannel15} to
appendix
\ref{append:ConvThOptimalityandSuboptimalityofICinMACChannel}. In a
nutshell we are interested in finding a set such that the optimal
DMT of finite constellations equals to the two user optimal DMT,
i.e., $d^{\ast,D_{1}+D_{2}}_{2\lf(s+1\ri),3\cdot
s}\lf(r_{1}+r_{2}\ri)=d^{\ast,\lf(FC\ri)}_{2,s+1,3\cdot s}
\lf(r_{1},r_{2}\ri)$, whereas the IC's single user expressions
$d^{\ast,D_{1}}_{s+1,3\cdot s}\lf(r_{1}\ri)$ or
$d^{\ast,D_{2}}_{s+1,3\cdot s}\lf(r_{2}\ri)$  will be smaller than
$d^{\ast,\lf(FC\ri)}_{2,s+1,3\cdot s} \lf(r_{1},r_{2}\ri)$ for any
$D_{1}$, $D_{2}$ for which
$d^{\ast,D_{1}+D_{2}}_{2\lf(s+1\ri),3\cdot
s}\lf(r_{1}+r_{2}\ri)=d^{\ast,\lf(FC\ri)}_{2,s+1,3\cdot s}
\lf(r_{1},r_{2}\ri)$. Figure
\ref{fig:ConvThOptimalityandSuboptimalityofICinMACChannel} shows the
optimal DMT of finite constellations for the case $K=2$, $M=3$ and
$N=6$, and Figure
\ref{fig:ConvThOptimalityandSuboptimalityofICinMACChannel_b}
illustrates the aforementioned description of the proof method for
the same setting.

\section{Final Part of the Proof of Theorem
\ref{th:ConvThOptimalityandSuboptimalityofICinMACChannel}}\label{append:ConvThOptimalityandSuboptimalityofICinMACChannel}
In order to find the set $B$ we first present several properties of
$d^{\ast,\lf(IC\ri)}_{2,s+1,3\cdot s}\lf(r\ri)$, i.e., the optimal
DMT of IC's in the symmetric case, for this case. First note that
from Theorem \ref{th:ICOptimalSymmetricDMTUpperNound} we get
\con{\begin{equation*}
d^{\ast,\lf(IC\ri)}_{2,s+1,3\cdot s}\lf(r\ri)=\lf\{\begin{array}{ll}
d^{\ast,\lf(FC\ri)}_{s+1,3\cdot s}\lf(r\ri) &0\le r\le \frac{N}{K+1}=s\\
d^{\ast,\lf(FC\ri)}_{2 \lf(s+1\ri),3\cdot s}\lf(2\cdot r\ri) &s\le
r\le \min \lf(s+1,\frac{3}{2}s\ri)
\end{array}
\ri. =d^{\ast,\lf(FC\ri)}_{2,s+1,3\cdot s}\lf(r\ri) .
\end{equation*}}{\baln{}{&d^{\ast,\lf(IC\ri)}_{2,s+1,3\cdot s}\lf(r\ri)=\lf\{\begin{array}{ll}
d^{\ast,\lf(FC\ri)}_{s+1,3\cdot s}\lf(r\ri) &0\le r\le \frac{N}{K+1}=s\\
d^{\ast,\lf(FC\ri)}_{2 \lf(s+1\ri),3\cdot s}\lf(2\cdot r\ri) &s\le
r\le \min \lf(s+1,\frac{3}{2}s\ri)
\end{array}
\ri.&
\nn\\
&=d^{\ast,\lf(FC\ri)}_{2,s+1,3\cdot s}\lf(r\ri) .&}}
An example of $d^{\ast,\lf(IC\ri)}_{2,s+1,3\cdot s}\lf(r\ri)$ for
$M=3$, $N=6$ and $K=2$, i.e., $s=2$, is given in Figure
\ref{fig:ConvThOptimalityandSuboptimalityofICinMACChannel}.

From simple assignment of the values of $M$, $N$ and $K$ we get that
$l=2 \lf(s-1\ri)$. We know from Lemma
\ref{lem:TheCasewhereSingleUserandKUserDMTUpperBoundCoincide},
Theorem \ref{prop:FCOptDMT} and
\eqref{eq:ConverseTheoremOptimalSymmetricDMT31} that
\con{\begin{equation}\label{eq:ConvThOptimalityandSuboptimalityofICinMACChannel11}
d^{\ast,D_{l}}_{s+1,3\cdot
s}\lf(\frac{N}{K+1}\ri)=d^{\ast,\lf(FC\ri)}_{s+1,3\cdot
s}\lf(\frac{N}{K+1}\ri)=d^{\ast,\lf(FC\ri)}_{2\cdot
\lf(s+1\ri),3\cdot s}\lf(\frac{K\cdot N}{K+1}\ri)=d^{\ast,2\cdot
D_{l}}_{2 \lf(s+1\ri),3\cdot s}\lf(\frac{K\cdot N}{K+1}\ri).
\end{equation}}{\bal{eq:ConvThOptimalityandSuboptimalityofICinMACChannel11}
{&d^{\ast,D_{l}}_{s+1,3\cdot
s}\lf(\frac{N}{K+1}\ri)=d^{\ast,\lf(FC\ri)}_{s+1,3\cdot
s}\lf(\frac{N}{K+1}\ri)&
\nn\\
&=d^{\ast,\lf(FC\ri)}_{2\cdot
\lf(s+1\ri),3\cdot s}\lf(\frac{K\cdot N}{K+1}\ri)=d^{\ast,2\cdot
D_{l}}_{2 \lf(s+1\ri),3\cdot s}\lf(\frac{K\cdot N}{K+1}\ri).&}}
Hence, from \eqref{eq:ConverseTheoremOptimalSymmetricDMT32} and
\eqref{eq:ConvThOptimalityandSuboptimalityofICinMACChannel11} we get
\begin{equation}\label{eq:ConvThOptimalityandSuboptimalityofICinMACChannel12}
d^{\ast,D_{\lfloor\frac{l}{2}\rfloor}^{\ast}}_{s+1,3\cdot
s}\lf(r\ri)=d^{\ast,2\cdot D_{r_{l}}^{\ast}}_{2 \lf(s+1\ri),3\cdot
s}\lf(2\cdot r\ri).
\end{equation}
Finally, it follows from Corollary \ref{prop:ICP2pDMTAnchorPoints}
that at $D_{\lfloor\frac{l}{2}\rfloor}^{\ast}$
\begin{equation}\label{eq:ConvThOptimalityandSuboptimalityofICinMACChannel12_A}
d^{\ast,D_{\lfloor\frac{l}{2}\rfloor}^{\ast}}_{s+1,3\cdot
s}\lf(s-1\ri)=d^{\ast,\lf(FC\ri)}_{s+1,3\cdot s}\lf(s-1\ri)
\end{equation}
and therefore from \eqref{eq:ConverseTheoremOptimalSymmetricDMT32},
\eqref{eq:ConvThOptimalityandSuboptimalityofICinMACChannel11},
\eqref{eq:ConvThOptimalityandSuboptimalityofICinMACChannel12},
\eqref{eq:ConvThOptimalityandSuboptimalityofICinMACChannel12_A} and
the fact that $d^{\ast,\lf(FC\ri)}_{s+1,3\cdot s}\lf(s-1\ri)$ is a
straight line in the range $s-1\le r\le s$ we get
\begin{equation}\label{eq:ConvThOptimalityandSuboptimalityofICinMACChannel13}
d^{\ast,D_{\lfloor\frac{l}{2}\rfloor}^{\ast}}_{s+1,3\cdot
s}\lf(r\ri)=d^{\ast,2\cdot D_{r_{l}}^{\ast}}_{2\cdot
\lf(s+1\ri),3\cdot s}\lf(2\cdot
r\ri)=d^{\ast,\lf(FC\ri)}_{s+1,3\cdot s}\lf(r\ri)
\end{equation}
where $s-1\le
r\le\frac{N}{K+1}=s$. From similar arguments we get
\begin{equation}\label{eq:ConvThOptimalityandSuboptimalityofICinMACChannel14}
d^{\ast,D_{\lfloor\frac{l}{2}\rfloor}^{\ast}}_{s+1,3\cdot
s}\lf(r\ri)=d^{\ast,2\cdot D_{r_{l}}^{\ast}}_{2\cdot
\lf(s+1\ri),3\cdot s}\lf(2\cdot r\ri)=d^{\ast,\lf(FC\ri)}_{2\cdot
\lf(s+1\ri),3\cdot s}\lf(2\cdot r\ri)
\end{equation}
where $s\le r\le s+\frac{1}{2}$, i.e., The last line of $d^{\ast,\lf(FC\ri)}_{s+1,3\cdot s}\lf(r\ri)$
before $\frac{N}{K+1}=s$, and the first line of
$d^{\ast,\lf(FC\ri)}_{2 \lf(s+1\ri),3\cdot s}\lf(2r\ri)$ after $s$
are equal. To sum up, for $\lfloor\frac{l}{2}\rfloor+1=
\frac{\lf(K-1\ri)M+\lfloor\frac{l+1}{2}\rfloor}{K}$ the optimal DMT
of IC's in the symmetric case is upper bounded by a piecewise linear
function as expected, and we have found the straight line coincide
with it for $s-1\le r\le s+\frac{1}{2}$. We are interested in
finding a set of multiplexing gain tuples $B$, for which
\eqref{eq:ConvThOptimalityandSuboptimalityofICinMACChannel15} is
fulfilled. In a nutshell we are interested in finding a set such
that the optimal DMT of finite constellations equals to the two user
optimal DMT, whereas IC's single user expressions will be smaller
than the optimal DMT of finite constellations for any $D_{1}$,
$D_{2}$ for which the IC's two users expression equals to the
optimal DMT of finite constellations. Figure
\ref{fig:ConvThOptimalityandSuboptimalityofICinMACChannel_b}
illustrates the aforementioned description of the proof method.

From Corollary \ref{prop:RelationBedDandFCd} we know that
\begin{equation}\label{eq:ConvThOptimalityandSuboptimalityofICinMACChannel16}
d^{\ast,\lf(FC\ri)}_{s+1,3\cdot
s}\lf(r\ri)=d^{\ast,D_{\lfloor\frac{l}{2}\rfloor+1}^{\ast}}_{s+1,3\cdot
s}\lf(r\ri)\quad s\le r\le s+1.
\end{equation}
Hence, for certain $s<r_{0}<s+\frac{1}{2}$, we are interested in the
set for which $r_{1}=r_{0}+\epsilon$, $r_{2}=r_{0}-\epsilon$ such
that $s<r_{0}+\epsilon<s+\frac{1}{2}$ and also
\con{\begin{equation}\label{eq:ConvThOptimalityandSuboptimalityofICinMACChannel17}
d^{\ast,D_{\lfloor\frac{l}{2}\rfloor}^{\ast}}_{s+1,3\cdot
s}\lf(r_{0}\ri)= d^{\ast,\lf(FC\ri)}_{2\lf(s+1\ri),3\cdot
s}\lf(2r_{0}\ri)< d^{\ast,\lf(FC\ri)}_{s+1,3\cdot
s}\lf(r_{0}+\epsilon\ri)=d^{\ast,D_{\lfloor\frac{l}{2}\rfloor+1}^{\ast}}_{s+1,3\cdot
s}\lf(r_{0}+\epsilon\ri)
\end{equation}}{\bal{eq:ConvThOptimalityandSuboptimalityofICinMACChannel17}
{&d^{\ast,D_{\lfloor\frac{l}{2}\rfloor}^{\ast}}_{s+1,3\cdot
s}\lf(r_{0}\ri)= d^{\ast,\lf(FC\ri)}_{2\lf(s+1\ri),3\cdot
s}\lf(2r_{0}\ri)<&
\nn\\
&d^{\ast,\lf(FC\ri)}_{s+1,3\cdot
s}\lf(r_{0}+\epsilon\ri)=d^{\ast,D_{\lfloor\frac{l}{2}\rfloor+1}^{\ast}}_{s+1,3\cdot
s}\lf(r_{0}+\epsilon\ri)&}}
where the first equality results from
\eqref{eq:ConvThOptimalityandSuboptimalityofICinMACChannel14}. Note
that the inequality in
\eqref{eq:ConvThOptimalityandSuboptimalityofICinMACChannel17} holds
as, based on Corollary \ref{prop:ICP2pDMTAnchorPoints} and Corollary
\ref{prop:RelationBedDandFCd},
$d^{\ast,D_{\lfloor\frac{l}{2}\rfloor}^{\ast}}_{s+1,3\cdot
s}\lf(r\ri)<d^{\ast,D_{\lfloor\frac{l}{2}\rfloor+1}^{\ast}}_{s+1,3\cdot
s}\lf(r\ri)$ for $r>s$. In order to translate this condition to
$\epsilon$ we write the following inequality
\con{\begin{align}\label{eq:ConvThOptimalityandSuboptimalityofICinMACChannel18}
d^{\ast,D_{\lfloor\frac{l}{2}\rfloor+1}^{\ast}}_{s+1,3\cdot s}\lf(r_{0}+\epsilon\ri)=MN &-\lf(\lfloor\frac{l}{2}\rfloor+1\ri)\cdot \lf(\lfloor\frac{l}{2}\rfloor+2\ri)-\lf(N+M-1-2\cdot \lf(\lfloor\frac{l}{2}\rfloor+1\ri)\ri) \lf(r_{0}+\epsilon\ri)>\nonumber\\
&MN-\lfloor\frac{l}{2}\rfloor\cdot
\lf(\lfloor\frac{l}{2}\rfloor+1\ri)-\lf(N+M-1-2\cdot
\lfloor\frac{l}{2}\rfloor
\ri)r_{0}=d^{\ast,D_{\lfloor\frac{l}{2}\rfloor}^{\ast}}_{s+1,3\cdot
s}\lf(r_{0}\ri)
\end{align}}{\bal{eq:ConvThOptimalityandSuboptimalityofICinMACChannel18}
{&d^{\ast,D_{\lfloor\frac{l}{2}\rfloor+1}^{\ast}}_{s+1,3\cdot s}\lf(r_{0}+\epsilon\ri)=MN -\lf(\lfloor\frac{l}{2}\rfloor+1\ri)\cdot \lf(\lfloor\frac{l}{2}\rfloor+2\ri)-&
\nn\\
&\lf(N+M-1-2\cdot \lf(\lfloor\frac{l}{2}\rfloor+1\ri)\ri) \lf(r_{0}+\epsilon\ri)>
MN-&
\nn\\
&\lfloor\frac{l}{2}\rfloor\cdot\lf(\lfloor\frac{l}{2}\rfloor+1\ri)-\lf(N+M-1-2\cdot
\lfloor\frac{l}{2}\rfloor
\ri)r_{0}=&
\nn\\
&d^{\ast,D_{\lfloor\frac{l}{2}\rfloor}^{\ast}}_{s+1,3\cdot
s}\lf(r_{0}\ri)&}}
for $K=2$, $M=s+1$ and $N=3\cdot s$ we get
\begin{equation}
\epsilon<\frac{r_{0}}{s}-1.
\end{equation}
Hence, the set of multiplexing gain tuples we are considering is
\con{\begin{equation}\label{eq:ConvThOptimalityandSuboptimalityofICinMACChannel19}
B_{r_{0}}=\lf\{r_{1},r_{2}|r_{1}=r_{0}+\epsilon,r_{2}=r_{0}-\epsilon,0<\epsilon<\min
\lf(r_{0}+\frac{r_{0}}{s}-1,s+\frac{1}{2}\ri)-r_{0}\ri\}
\end{equation}}{\bals{eq:ConvThOptimalityandSuboptimalityofICinMACChannel19}
{&B_{r_{0}}=\lf\{r_{1},r_{2}|r_{1}=r_{0}+\epsilon, r_{2}=r_{0}-\epsilon,\ri.&
\nn\\
&\lf. 0<\epsilon<\min
\lf(r_{0}+\frac{r_{0}}{s}-1,s+\frac{1}{2}\ri)-r_{0}\ri\}&}}
where $s<r_{0}<s+\frac{1}{2}$ is a parameter determining the set.
From \cite[Lemma 7]{ZhengTseMACDMT2004} we get that the optimal DMT
of finite constellations equals
\con{\begin{equation}
d^{\ast,\lf(FC\ri)}_{2,s+1,3\cdot s}\lf(r_{1},r_{2}\ri)=\min
\lf(d^{\ast,\lf(FC\ri)}_{s+1,3\cdot
s}\lf(r_{1}\ri),d^{\ast,\lf(FC\ri)}_{s+1,3\cdot
s}\lf(r_{2}\ri),d^{\ast,\lf(FC\ri)}_{2 \lf(s+1\ri),3\cdot
s}\lf(r_{1}+r_{2}\ri)\ri).
\end{equation}}{\bal{}{&d^{\ast,\lf(FC\ri)}_{2,s+1,3\cdot s}\lf(r_{1},r_{2}\ri)=\min\lf(d^{\ast,\lf(FC\ri)}_{s+1,3\cdot
s}\lf(r_{1}\ri),\ri.&
\nn\\
&\lf. d^{\ast,\lf(FC\ri)}_{s+1,3\cdot
s}\lf(r_{2}\ri),d^{\ast,\lf(FC\ri)}_{2 \lf(s+1\ri),3\cdot
s}\lf(r_{1}+r_{2}\ri)\ri).&}}
Considering $\lf(r_{1},r_{2}\ri)\in B_{r_{0}}$, based on
\eqref{eq:ConvThOptimalityandSuboptimalityofICinMACChannel17},
\eqref{eq:ConvThOptimalityandSuboptimalityofICinMACChannel19} and
the fact that $d^{\ast,\lf(FC\ri)}_{s+1,3\cdot s}\lf(r\ri)$ is a
straight line, we get
\con{\begin{align}\label{eq:ConvThOptimalityandSuboptimalityofICinMACChannel20}
d^{\ast,\lf(FC\ri)}_{2,s+1,3\cdot s}\lf(r_{1},r_{2}\ri)=\min \lf(d^{\ast,\lf(FC\ri)}_{s+1,3\cdot s}\lf(r_{0}+\epsilon\ri),d^{\ast,\lf(FC\ri)}_{s+1,3\cdot s}\lf(r_{0}-\epsilon\ri),d^{\ast,\lf(FC\ri)}_{2 \lf(s+1\ri),3\cdot s}\lf(2r_{0}\ri)\ri)\nonumber\\
=d^{\ast,\lf(FC\ri)}_{2 \lf(s+1\ri),3\cdot s}\lf(2r_{0}\ri)
\end{align}}{\bal{eq:ConvThOptimalityandSuboptimalityofICinMACChannel20}
{&d^{\ast,\lf(FC\ri)}_{2,s+1,3\cdot s}\lf(r_{1},r_{2}\ri)=&
\nn\\
&\min \lf(d^{\ast,\lf(FC\ri)}_{s+1,3\cdot s}\lf(r_{0}+\epsilon\ri),d^{\ast,\lf(FC\ri)}_{s+1,3\cdot s}\lf(r_{0}-\epsilon\ri),d^{\ast,\lf(FC\ri)}_{2 \lf(s+1\ri),3\cdot s}\lf(2r_{0}\ri)\ri)&
\nn\\
&=d^{\ast,\lf(FC\ri)}_{2 \lf(s+1\ri),3\cdot s}\lf(2r_{0}\ri)&}}
where $0<\epsilon<\min \lf(r_{0}+\frac{r_{0}}{s}-1,s+\frac{1}{2}\ri)-r_{0}$. Hence, in order to prove
\eqref{eq:ConvThOptimalityandSuboptimalityofICinMACChannel15} we
need to show for certain $0<r_{0}<s+\frac{1}{2}$ that
\con{\begin{equation}\label{eq:ConvThOptimalityandSuboptimalityofICinMACChannel21}
\max_{\lf(D_{1},D_{2}\ri)}\min\lf(d^{\ast,D_{1}}_{s+1,3\cdot
s}\lf(r_{0}+\epsilon\ri),d^{\ast,D_{1}+D_{2}}_{2\lf(s+1\ri),3\cdot
s}\lf(2r_{0}\ri),d^{\ast,D_{2}}_{s+1,3\cdot
s}\lf(r_{0}-\epsilon\ri)\ri)<d^{\ast,\lf(FC\ri)}_{2\lf(s+1\ri),3\cdot
s}\lf(2r_{0}\ri)
\end{equation}}{\bal{eq:ConvThOptimalityandSuboptimalityofICinMACChannel21}
{&\max_{\lf(D_{1},D_{2}\ri)}\min\lf(d^{\ast,D_{1}}_{s+1,3\cdot
s}\lf(r_{0}+\epsilon\ri),\ri.
\nn\\
&\lf.d^{\ast,D_{1}+D_{2}}_{2\lf(s+1\ri),3\cdot
s}\lf(2r_{0}\ri),d^{\ast,D_{2}}_{s+1,3\cdot
s}\lf(r_{0}-\epsilon\ri)\ri)<&
\nn\\
&d^{\ast,\lf(FC\ri)}_{2\lf(s+1\ri),3\cdot
s}\lf(2r_{0}\ri)&}}
where $0<\epsilon<\min
\lf(r_{0}+\frac{r_{0}}{s}-1,s+\frac{1}{2}\ri)-r_{0}$. We begin the
proof by taking the symmetric case, i.e., $D_{1}=D_{2}$, as a
baseline. We assign
$D_{1}=D_{2}=D_{r_{l}}^{\ast}=D_{\lfloor\frac{l}{2}\rfloor}^{\ast}$.
From \eqref{eq:ConvThOptimalityandSuboptimalityofICinMACChannel14}
we get that $d^{\ast,2 D^{\ast}_{r_{l}}}_{2\lf(s+1\ri),3\cdot
s}\lf(2
r_{0}\ri)=d^{\ast,D_{\lfloor\frac{l}{2}\rfloor}^{\ast}}_{s+1,3\cdot
s}\lf(r_{0}\ri)=d^{\ast,\lf(FC\ri)}_{2\lf(s+1\ri),3\cdot
s}\lf(2r_{0}\ri)$. Hence for the symmetric case we get
\con{\begin{equation}\label{eq:ConvThOptimalityandSuboptimalityofICinMACChannel23}
\min\lf(d^{\ast,D_{\lfloor\frac{l}{2}\rfloor}^{\ast}}_{s+1,3\cdot
s}\lf(r_{0}+\epsilon\ri),d^{\ast,D_{\lfloor\frac{l}{2}\rfloor}^{\ast}}_{s+1,3\cdot
s}\lf(r_{0}-\epsilon\ri),d^{\ast,D_{\lfloor\frac{l}{2}\rfloor}^{\ast}}_{s+1,3\cdot
s}\lf(r_{0}\ri)\ri)
=d^{\ast,D_{\lfloor\frac{l}{2}\rfloor}^{\ast}}_{s+1,3\cdot
s}\lf(r_{0}+\epsilon\ri)<d^{\ast,\lf(FC\ri)}_{2\lf(s+1\ri),3\cdot
s}\lf(2r_{0}\ri).
\end{equation}}{\bal{eq:ConvThOptimalityandSuboptimalityofICinMACChannel23}
{&\min\lf(d^{\ast,D_{\lfloor\frac{l}{2}\rfloor}^{\ast}}_{s+1,3\cdot
s}\lf(r_{0}+\epsilon\ri),d^{\ast,D_{\lfloor\frac{l}{2}\rfloor}^{\ast}}_{s+1,3\cdot
s}\lf(r_{0}-\epsilon\ri),d^{\ast,D_{\lfloor\frac{l}{2}\rfloor}^{\ast}}_{s+1,3\cdot
s}\lf(r_{0}\ri)\ri)
=&
\nn\\
&d^{\ast,D_{\lfloor\frac{l}{2}\rfloor}^{\ast}}_{s+1,3\cdot
s}\lf(r_{0}+\epsilon\ri)<d^{\ast,\lf(FC\ri)}_{2\lf(s+1\ri),3\cdot
s}\lf(2r_{0}\ri).&}}
Since $s<r_{0}<s+\frac{1}{2}$ is not an anchor point, we get from
\eqref{eq:ConvThOptimalityandSuboptimalityofICinMACChannel14} and
the anchor point behavior presented in Corollary
\ref{prop:ICP2pDMTAnchorPoints} that
$d^{\ast,D_{1}+D_{2}}_{2\lf(s+1\ri),3\cdot
s}\lf(2r_{0}\ri)=d^{\ast,\lf(FC\ri)}_{2\lf(s+1\ri),3\cdot
s}\lf(2r_{0}\ri)$ if and only if
$D_{1}+D_{2}=2D_{r_{l}}^{\ast}=2D_{\lfloor\frac{l}{2}\rfloor}^{\ast}$.
Hence, in order for $d^{\ast,D_{1}+D_{2}}_{2\lf(s+1\ri),3\cdot
s}\lf(2r_{0}\ri)$
\eqref{eq:ConvThOptimalityandSuboptimalityofICinMACChannel21} to
attain the optimal DMT of finite constellations, we must choose
\begin{equation}\label{eq:ConvThOptimalityandSuboptimalityofICinMACChannel25}
 D_{1}+D_{2}=2D_{\lfloor\frac{l}{2}\rfloor}^{\ast}.
\end{equation}
From \eqref{eq:ConvThOptimalityandSuboptimalityofICinMACChannel17},
\eqref{eq:ConvThOptimalityandSuboptimalityofICinMACChannel23} we
know that
\begin{equation}
d^{\ast,D_{\lfloor\frac{l}{2}\rfloor}^{\ast}}_{s+1,3\cdot
s}\lf(r_{0}+\epsilon\ri)<d^{\ast,\lf(FC\ri)}_{2\lf(s+1\ri),3\cdot
s}\lf(2
r_{0}\ri)<d^{\ast,D_{\lfloor\frac{l}{2}\rfloor+1}^{\ast}}_{s+1,3\cdot
s}\lf(r_{0}+\epsilon\ri).
\end{equation}
Since $s<r_{0}<s+\frac{1}{2}$, and based on the anchor points
behavior presented in Corollary \ref{prop:ICP2pDMTAnchorPoints},
from which we know that for
$D_{\lfloor\frac{l}{2}\rfloor}^{\ast}<D<D_{\lfloor\frac{l}{2}\rfloor+1}^{\ast}$
there is an anchor point at $r=s$, we can see that there must exist
$D^{'}=D_{\lfloor\frac{l}{2}\rfloor}^{\ast}+\epsilon^{'}$, where
$0<\epsilon^{'}<D_{\lfloor\frac{l}{2}\rfloor+1}^{\ast}-D_{\lfloor\frac{l}{2}\rfloor}^{\ast}$,
such that
\begin{equation}\label{eq:ConvThOptimalityandSuboptimalityofICinMACChannel26}
d^{\ast,D^{'}}_{s+1,3\cdot
s}\lf(r_{0}+\epsilon\ri)=d^{\ast,\lf(FC\ri)}_{2\lf(s+1\ri),3\cdot
s}\lf(2 r_{0}\ri).
\end{equation}
We divide the assignment of $D_{1}$ into several cases. In the range
$0<D_{1}<D^{'}$ following the anchor point behavior of the straight
lines presented in Corollary \ref{prop:ICP2pDMTAnchorPoints}, and
also since $s<r_{0}+\epsilon<s+\frac{1}{2}$ is not an anchor point
we get
\begin{equation}\label{eq:ConvThOptimalityandSuboptimalityofICinMACChannel27}
d^{\ast,D_{1}}_{s+1,3\cdot
s}\lf(r_{0}+\epsilon\ri)<d^{\ast,D^{'}}_{s+1,3\cdot
s}\lf(r_{0}+\epsilon\ri)=d^{\ast,\lf(FC\ri)}_{2\lf(s+1\ri),3\cdot
s}\lf(2 r_{0}\ri).
\end{equation}
Hence in this range the optimal DMT of finite constellations is not
obtained. For
$D_{1}=D^{'}=D_{\lfloor\frac{l}{2}\rfloor}^{\ast}+\epsilon^{'}$, we
have shown
\eqref{eq:ConvThOptimalityandSuboptimalityofICinMACChannel26} that
$d^{\ast,D^{'}}_{s-1,3\cdot s}\lf(r_{0}+\epsilon\ri)$ equals to the
optimal DMT of finite constellations. According to
\eqref{eq:ConvThOptimalityandSuboptimalityofICinMACChannel25} we
need to assign
$D_{2}=D^{''}=D_{\lfloor\frac{l}{2}\rfloor}^{\ast}-\epsilon^{'}$ in
order to get $D_{1}+D_{2}=2D_{\lfloor\frac{l}{2}\rfloor}^{\ast}$ and
as a consequence
\begin{equation*}
d^{\ast,D^{'}}_{s+1,3\cdot
s}\lf(r_{0}+\epsilon\ri)=d^{\ast,2D_{\lfloor\frac{l}{2}\rfloor}^{\ast}}_{2\lf(s+1\ri),3\cdot
s}\lf(2r_{0}\ri)=d^{\ast,\lf(FC\ri)}_{2\lf(s+1\ri),3\cdot s}\lf(2
r_{0}\ri).
\end{equation*}
So far we have shown that the first two terms in the left side of
\eqref{eq:ConvThOptimalityandSuboptimalityofICinMACChannel21} can
attain the optimal DMT of finite constellations for $D_{1}=D^{'}$.
We are left with the third term that equals to the straight line
$d^{\ast,D^{''}}_{s+1,3\cdot s}\lf(r\ri)$. We consider two cases. In
the first case we assume $D^{''}\le r_{0}-\epsilon$ for which we get
\begin{equation}\label{eq:ConvThOptimalityandSuboptimalityofICinMACChannel29}
d^{\ast,D^{''}}_{s+1,3\cdot
s}\lf(r_{0}-\epsilon\ri)=0<d^{\ast,\lf(FC\ri)}_{2\lf(s+1\ri),3\cdot
s}\lf(2r_{0}\ri).
\end{equation}
In the second case we assume $D^{''}>r_{0}-\epsilon$. From symmetry
considerations it can be easily shown that the straight line
$d^{'}\lf(r\ri)$ that fulfils
$d^{'}\lf(s\ri)=d^{\ast,\lf(FC\ri)}_{s+1,3\cdot
s}\lf(s\ri)=d^{\ast,D^{'}}_{s+1,3\cdot s}\lf(s\ri)$ and
$d^{'}\lf(D^{''}\ri)=0$, also fulfills
\begin{equation}\label{eq:ConvThOptimalityandSuboptimalityofICinMACChannel28}
d^{'}\lf(r_{0}-\epsilon\ri)=d^{\ast,D^{'}}_{s+1,3\cdot
s}\lf(r_{0}+\epsilon\ri)=d^{\ast,\lf(FC\ri)}_{2\lf(s+1\ri),3\cdot
s}\lf(2r_{0}\ri).
\end{equation}
Since $D^{''}<D_{\lfloor\frac{l}{2}\rfloor}^{\ast}$, we get from
Corollary \ref{prop:ICP2pDMTAnchorPoints} that the anchor point of
the straight line $d^{\ast,D^{''}}_{s+1,3\cdot s}\lf(s\ri)$ is
smaller than $s$ and so
\begin{equation}
d^{\ast,D^{''}}_{s+1,3\cdot
s}\lf(s\ri)<d^{\ast,D_{\lfloor\frac{l}{2}\rfloor}^{\ast}}_{s+1,3\cdot
s}\lf(s\ri)=d^{'}\lf(s\ri).
\end{equation}
Since $d^{\ast,D^{''}}_{s+1,3\cdot
s}\lf(D^{''}\ri)=d^{'}\lf(D^{''}\ri)=0$ and these are straight lines
we get
\begin{equation}
d^{\ast,D^{''}}_{s+1,3\cdot s}\lf(r\ri)<d^{'}\lf(r\ri)\quad
0<r<D^{''}
\end{equation}
and so from
\eqref{eq:ConvThOptimalityandSuboptimalityofICinMACChannel28}
\begin{equation}\label{eq:ConvThOptimalityandSuboptimalityofICinMACChannel30}
d^{\ast,D^{''}}_{s+1,3\cdot
s}\lf(r_{0}-\epsilon\ri)<d^{'}\lf(r_{0}-\epsilon\ri)=d^{\ast,\lf(FC\ri)}_{2\lf(s+1\ri),3\cdot
s}\lf(2r_{0}\ri).
\end{equation}
Thus, the third term in the left side of
\eqref{eq:ConvThOptimalityandSuboptimalityofICinMACChannel21}
$d^{\ast,D_{2}}_{s+1,3\cdot s}\lf(r_{0}-\epsilon\ri)$ is smaller
than the optimal DMT of finite constellations. Finally, we consider
the case $D_{1}>D^{'}$. For this case we get
$D_{2}<D^{''}<D_{\lfloor\frac{l}{2}\rfloor}^{\ast}$, which based on
the anchor points behavior in Corollary
\ref{prop:ICP2pDMTAnchorPoints}, and similarly to the previously
mentioned arguments  leads to
\begin{equation}\label{eq:ConvThOptimalityandSuboptimalityofICinMACChannel31}
d^{\ast,D_{2}}_{s+1,3\cdot
s}\lf(r_{0}-\epsilon\ri)<d^{\ast,D^{''}}_{s+1,3\cdot
s}\lf(r_{0}-\epsilon\ri)<d^{\ast,\lf(FC\ri)}_{2\lf(s+1\ri),3\cdot
s}\lf(2r_{0}\ri).
\end{equation}
From
\eqref{eq:ConvThOptimalityandSuboptimalityofICinMACChannel27},\eqref{eq:ConvThOptimalityandSuboptimalityofICinMACChannel29},
\eqref{eq:ConvThOptimalityandSuboptimalityofICinMACChannel30} and
\eqref{eq:ConvThOptimalityandSuboptimalityofICinMACChannel31} we
have proved that
\con{\begin{equation}
\max_{\lf(D_{1},D_{2}\ri)}\min\lf(d^{\ast,D_{1}}_{s+1,3\cdot
s}\lf(r_{0}+\epsilon\ri),d^{\ast,D_{1}+D_{2}}_{2\lf(s+1\ri),3\cdot
s}\lf(2r_{0}\ri),d^{\ast,D_{2}}_{s+1,3\cdot
s}\lf(r_{0}-\epsilon\ri)\ri)<
d^{\ast,\lf(FC\ri)}_{2\lf(s+1\ri),3\cdot s}\lf(2r_{0}\ri).
\end{equation}}{\bal{}{&\max_{\lf(D_{1},D_{2}\ri)}\min\lf(d^{\ast,D_{1}}_{s+1,3\cdot
s}\lf(r_{0}+\epsilon\ri),\ri.&
\nn\\
&\lf. d^{\ast,D_{1}+D_{2}}_{2\lf(s+1\ri),3\cdot
s}\lf(2r_{0}\ri),d^{\ast,D_{2}}_{s+1,3\cdot
s}\lf(r_{0}-\epsilon\ri)\ri)<&
\nn\\
&d^{\ast,\lf(FC\ri)}_{2\lf(s+1\ri),3\cdot s}\lf(2r_{0}\ri).&}}
This concludes the proof.

\section{Proof of Theorem \ref{Th:DirectUpperBoundErrorProb}}\label{Append:DirectUpperBoundErrorProb}
We base our proof on the techniques developed by Poltyrev
\cite{PoltirevJournal} for the AWGN channel and extended in
\cite{YonaFederICOptimalDMT} to colored channels in the
point-to-point case. We begin by partitioning the error event into
several disjoint events of errors for subsets of the users. We
relate each of these error events to the point-to-point channel of
the relevant users pulled together. Then we use the bounds derived
in \cite{YonaFederICOptimalDMT} to upper bound each of the error
events probabilities.

When the ML decoder makes an error it means that the decoded word is
different from the transmitted signal for at least one of the users.
Hence, we can break the error probability into the following sum of
disjoint events
\begin{equation}\label{eq:DirectUpperBoundErrorProb1}
\ol{Pe}(H_{\eff}^{(l),K},\rho)=\sum_{ s\subseteq
\lf\{1,\dots,K\ri\}}\ol{Pe}(H_{\eff}^{(l),\lf(s\ri)},\rho)
\end{equation}
where $\ol{Pe}(H_{\eff}^{(l),\lf(s\ri)},\rho)$ is the probability of
error to words that induce error on the users in $s$. Note that the
event of error to users in $s$ depends only on
$H_{\eff}^{(l),\lf(s\ri)}$ and not on
$H_{\eff}^{(l),\lf(1,\dots,K\ri)}$. We wish to upper bound
$\ol{Pe}(H_{\eff}^{(l),\lf(s\ri)},\rho)$ for any $s\subseteq
\lf\{1,\dots,K\ri\}$.

Based on \cite{PoltirevJournal} we get the following upper bound on
the error probability of the joint ML decoder when transmitting
$\udl{x}^{'}\in S_{K\cdot D_{l}\cdot T_{l}}$
\ifthenelse{\equal{\singlecolumntype}{1}}
{\begin{equation}\label{eq:DirectUpperBoundErrorProb2}
Pe(\underline{x}^{'})\le
Pr(\lv\underline{\tilde{n}}_{\mathrm{ex}}\rv\ge R)+
\sum_{\underline{l}\in Ball(\underline{x}^{'},2R)\bigcap S_{K\cdot
D_{l}\cdot T_{l}}, \udl{l}\ne \udl{x}^{'}} Pr(\lv
\udl{l}-\udl{x}^{'}-\udl{\tilde{n}}_{ex}\rv <
\lv\udl{\tilde{n}}_{ex}\rv)
\end{equation}}
{\begin{align}\label{eq:DirectUpperBoundErrorProb2}
&Pe(\underline{x}^{'})\le
Pr(\lv\underline{\tilde{n}}_{\mathrm{ex}}\rv\ge R)+
\nonumber\\
&\sum_{\underline{l}\in Ball(\underline{x}^{'},2R)\bigcap
S_{K_{l}T_{l}}, \udl{l}\ne \udl{x}^{'}} Pr(\lv
\udl{l}-\udl{x}^{'}-\udl{\tilde{n}}_{ex}\rv <
\lv\udl{\tilde{n}}_{ex}\rv)
\end{align}}
where $S_{K\cdot D_{l}\cdot T_{l}}$ is the $K\cdot D_{l}\cdot
T_{l}$-complex dimensional effective IC of the $K$ users,
$Ball(\udl{x}^{'},2R)$ is a $K\cdot D_{l}\cdot T_{l}$-complex
dimensional ball of radius $2R$ centered around $\udl{x}^{'}$, and
$\udl{\tilde{n}}_{\ex}$ is the effective noise in the $K\cdot
D_{l}\cdot T_{l}$-complex dimensional hyperplane in which the
effective IC resides. Instead of calculating
\eqref{eq:DirectUpperBoundErrorProb2}, we focus on upper bounding
the probability of decoding words that lead to an error only for the
users in $s\subseteq \lf\{1,\dots,K\ri\}$
\eqref{eq:DirectUpperBoundErrorProb1}. This will lead to an upper
bound on the error probability. Hence, we begin by considering the
error probability of $x^{'}$ to words that are different from
$\udl{x}^{'}$ only in the entries of the users in $s$. Based on our
ensemble, this is the error event of users in $s$ almost surely
(with probability 1). This error event is equivalent to the error
event of a word $\udl{x}^{''}$, which is a vector of length
$|s|\cdot D_{l}\cdot T_{l}$ that resides within an $|s|\cdot
D_{l}\cdot T_{l}$-complex dimensional IC $S_{|s|\cdot D_{l}\cdot
T_{l}}$, when $\udl{x}^{''}$ equals to $\udl{x}^{'}$ in the entries
of the users in $s$, and the other words in $S_{|s|\cdot D_{l}\cdot
T_{l}}$ are equal, in the entries of the users in $s$, to words in
$S_{K\cdot D_{l}\cdot T_{l}}$, that lead to an error for the users
in $s$. Hence, we wish to upper bound the error probability of
$\udl{x}^{''}\in S_{|s|\cdot D_{l}\cdot T_{l}}$. Based on the
expressions in \eqref{eq:DirectUpperBoundErrorProb2} we get that
this upper bound can be written as
\con{\begin{equation}\label{eq:DirectUpperBoundErrorProb3}
Pr(\lv\underline{\tilde{n}}^{'}_{\mathrm{ex}}\rv\ge R^{'})+
\sum_{\underline{l}\in Ball(\underline{x}^{''},2R^{'})\bigcap
S_{|s|\cdot D_{l}\cdot T_{l}}, \udl{l}\ne \udl{x}^{''}} Pr(\lv
\udl{l}-\udl{x}^{''}-\udl{\tilde{n}}^{'}_{ex}\rv <
\lv\udl{\tilde{n}}^{'}_{ex}\rv)
\end{equation}}{\bal{eq:DirectUpperBoundErrorProb3}
{&Pr(\lv\underline{\tilde{n}}^{'}_{\mathrm{ex}}\rv\ge R^{'})+&
\nn\\
&\sum_{\underline{l}\in Ball(\underline{x}^{''},2R^{'})\bigcap
S_{|s|\cdot D_{l}\cdot T_{l}}, \udl{l}\ne \udl{x}^{''}} Pr(\lv
\udl{l}-\udl{x}^{''}-\udl{\tilde{n}}^{'}_{ex}\rv <&
\nn\\
&\lv\udl{\tilde{n}}^{'}_{ex}\rv)&}}
where $Ball(\udl{x}^{''},2R^{'})$ is a $|s|\cdot D_{l}\cdot
T_{l}$-complex dimensional ball of radius $2R^{'}$ centered around
$\udl{x}^{''}$, and $\udl{\tilde{n}}^{'}_{\ex}$ is the effective
noise in the $|s|\cdot D_{l}\cdot T_{l}$-complex dimensional
hyperplane where $S_{|s|\cdot D_{l}\cdot T_{l}}$ resides.

Next we upper bound the average decoding error probability of an
ensemble of finite constellations, which later we will extend to
ensemble of IC's. Note that the upper bounds on the error
probability of IC's in \eqref{eq:DirectUpperBoundErrorProb1},
\eqref{eq:DirectUpperBoundErrorProb2} also apply to finite
constellations. Assume user $j$ code-book contains $\lfloor
\gamma_{\mathrm{tr}}^{\lf(j\ri)}b^{2D_{l}\cdot T_{l}}\rfloor$ words,
where each word is drawn independently and uniformly within
$cube_{D_{l}\cdot T_{l}}(b)$, $j=1,\dots,K$. Recall from
\ref{sec:BasicDefinitions} that
$\gamma_{\mathrm{tr}}^{\lf(j\ri)}=\rho^{T r_{j}}$. The $K$ users
constitute together an ensemble of $\prod_{j=1}^{K}\lfloor
\gamma_{\mathrm{tr}}^{\lf(j\ri)}b^{2D_{l}\cdot T_{l}}\rfloor$ words,
where a word in the ensemble is sampled from a uniform distribution
in $\cube_{K\cdot D_{l}\cdot T_{l}}\lf(b\ri)$ (not all words are
drawn independently). In fact any subset of the users $s\subseteq
\lf\{1,\dots,K\ri\}$ corresponds to an ensemble of $\prod_{i\in
s}\lfloor \gamma_{\mathrm{tr}}^{\lf(i\ri)}b^{2D_{l}\cdot
T_{l}}\rfloor$ words, where a word in the ensemble is sampled from a
uniform distribution, this time in $\cube_{|s|\cdot D_{l}\cdot
T_{l}}\lf(b\ri)$. Hence, the number of codewords that are different
in the entries of the users in $s$ is upper bounded by $\prod_{i\in
s}\lfloor \gamma_{\mathrm{tr}}^{\lf(i\ri)}b^{2D_{l}\cdot
T_{l}}\rfloor$. These words are in fact drawn independently in the
entries of the users in $s$. Based on these arguments and since the
ML decoder decides on the word with minimal Euclidean distance from
the observation, we get for each word in the ensemble that the
probability of error for users in $s\subseteq \lf\{1,\dots,K\ri\}$
is upper bounded by the average decoding error probability of an
ensemble consisting of $\prod_{i\in s}\lfloor
\gamma_{\mathrm{tr}}^{\lf(i\ri)}b^{2D_{l}\cdot T_{l}}\rfloor$ words
drawn independently and uniformly within $\cube_{|s|\cdot D_{l}\cdot
T_{l}}\lf(b\ri)$, with effective channel
$H_{eff}^{\lf(l\ri),\lf(s\ri)}$. In \cite[Theorem
3]{YonaFederICOptimalDMT} an upper bound on the average decoding
error probability of this ensemble was derived. By choosing for any
$s\subseteq \lf\{1,\dots,K\ri\}$
\begin{equation*}
R^{2}_{\lf(s\ri)}=R_{\eff}^{2}=\frac{2|s|\cdot D_{l}\cdot
T_{l}}{2\pi e}\rho^{-\frac{\sum_{i\in s}r_{i}}{|s|\cdot
D_{l}}-\sum_{i=1}^{|s|\cdot D_{l}\cdot
T_{l}}\frac{\eta_{i}^{\lf(s\ri)}}{|s|\cdot D_{l}\cdot T_{l}}}.
\end{equation*}
we get for the ensemble the following upper bound on the probability
of error for users in $s$
\con{\begin{equation}\label{eq:DirectUpperBoundErrorProb4}
\ol{P_{e}^{FC}}^{\lf(s\ri)}(\rho,\udl{\eta}^{\lf(s\ri)})\le
D^{'}(|s|\cdot D_{l}\cdot T_{l})\rho^{-T_{l}(|s|\cdot
D_{l}-\sum_{i\in s}r_{i})+\sum_{i=1}^{|s|\cdot D_{l}\cdot
T_{l}}\eta_{i}^{\lf(s\ri)}} \quad\forall s\subseteq
\lf\{1,\dots,K\ri\}
\end{equation}}{\bal{eq:DirectUpperBoundErrorProb4}
{&\ol{P_{e}^{FC}}^{\lf(s\ri)}(\rho,\udl{\eta}^{\lf(s\ri)})\le D^{'}(|s|\cdot D_{l}\cdot T_{l})\times&
\nn\\
&\rho^{-T_{l}(|s|\cdot
D_{l}-\sum_{i\in s}r_{i})+\sum_{i=1}^{|s|\cdot D_{l}\cdot
T_{l}}\eta_{i}^{\lf(s\ri)}}&
\nn\\
&\forall s\subseteq\lf\{1,\dots,K\ri\}&}}
where $D^{'}(|s|\cdot D_{l}\cdot T_{l})\ge 1$ and
$\eta_{i}^{\lf(s\ri)}\ge 0$, $i=1,\dots, |s|\cdot D_{l}\cdot T_{l}$.

So far we have upper bounded the probability of error of users in
$s$, in an ensemble of \emph{finite} constellations, for any
$s\subseteq \lf\{1,\dots,K\ri\}$. We now extend this ensemble of
finite constellations into an ensemble of IC's with density
$\gamma_{tr}^{\lf(j\ri)}$ for user $j$, where $j=1,\dots,K$. We show
that extending the ensemble of finite constellations to ensemble of
IC's does not change the upper bound on the error probability. Let
us consider for user $j$ a certain finite constellation from the
ensemble $C_{0}^{j}(\rho,b)\subset cube_{D_{l}\cdot T_{l}}(b)$. In
accordance, for the ensemble of users relates to $s$ let us denote a
certain finite constellation from the effective ensemble by
$C_{0}^{\lf(s\ri)}(\rho,b)\subset cube_{|s| \cdot D_{l}\cdot
T_{l}}(b)$. We extend each finite constellation into IC by extending
each user finite constellation in the following manner
\begin{equation}\label{eq:DirectUpperBoundErrorProb5}
IC^{j}(\rho,D_{l}\cdot
T_{l})=C_{0}^{j}(\rho,b)+(b+b^{'})\cdot\mathbb{Z}^{2D_{l}\cdot
T_{l}}
\end{equation}
where without loss of generality \footnote{In case $cube_{D_{l}\cdot
T_{l}}(b)$ is a rotated cube within $\mathbb{C}^{M\cdot T_{l}}$,
then the replication is done according the corresponding $M\cdot
T_{l}\times D_{l}\cdot T_{l}$ matrix with orthonormal columns.} we
assumed that $cube_{D_{l}\cdot T_{l}}(b)\in\mathbb{C}^{D_{l}\cdot
T_{l}}$. Therefore for the users in $s\subseteq \lf\{1,\dots,K\ri\}$
we get an effective IC
\begin{equation}\label{eq:DirectUpperBoundErrorProb6}
IC^{\lf(s\ri)}(\rho,|s|\cdot D_{l}\cdot
T_{l})=C_{0}^{\lf(s\ri)}(\rho,b)+(b+b^{'})\cdot\mathbb{Z}^{2|s|\cdot
D_{l}\cdot T_{l}}.
\end{equation}
At the receiver we get
\con{\begin{equation}\label{eq:DirectUpperBoundErrorProb7}
IC^{\lf(s\ri)}(\rho,|s|\cdot D_{l}\cdot
T_{l},H_{\eff}^{(l),\lf(s\ri)})=H_{\eff}^{(l),\lf(s\ri)}\cdot
C_{0}(\rho,b)+(b+b^{'})H_{\eff}^{(l),\lf(s\ri)}\cdot\mathbb{Z}^{2|s|\cdot
D_{l}\cdot T_{l}}.
\end{equation}}{\bal{eq:DirectUpperBoundErrorProb7}
{&IC^{\lf(s\ri)}(\rho,|s|\cdot D_{l}\cdot
T_{l},H_{\eff}^{(l),\lf(s\ri)})=&\nn\\
&H_{\eff}^{(l),\lf(s\ri)}\cdot
C_{0}(\rho,b)+(b+b^{'})H_{\eff}^{(l),\lf(s\ri)}\cdot\mathbb{Z}^{2|s|\cdot
D_{l}\cdot T_{l}}.&}}
By extending each finite constellation in the ensemble into an IC
according to the method presented in
\eqref{eq:DirectUpperBoundErrorProb6},
\eqref{eq:DirectUpperBoundErrorProb7} we get a new ensemble of IC's.
We would like to set $b$ and $b^{'}$ to be large enough such that
the ensemble average decoding error probability has the same upper
bound as in \eqref{eq:DirectUpperBoundErrorProb4}, and the users
densities are equal to $\gamma_{tr}^{\lf(j\ri)}$ up to a
coefficient, where $j=1,\dots,K$. First we would like to set a value
for $b^{'}$. For a word within the set
$\{H_{\eff}^{(l),\lf(s\ri)}\cdot C_{0}^{\lf(s\ri)}(\rho,b)\}$,
increasing $b^{'}$ decreases the error probability inflicted by the
codewords outside the set $\{H_{\eff}^{(l),\lf(s\ri)}\cdot
C_{0}^{\lf(s\ri)}(\rho,b)\}$, for any $s\subseteq
\lf\{1,\dots,K\ri\}$. In \cite[Theorem 3]{YonaFederICOptimalDMT} we
have shown that for any $\eta_{i}^{\lf(s\ri)}\ge 0$, by choosing
$b^{'}=\sqrt{\frac{|s|\cdot D_{l}\cdot T_{l}}{\pi
e}}\rho^{\frac{T_{l}}{2}(|s|\cdot D_{l}-\sum_{i\in
s}r_{i})+\epsilon}$, where $\epsilon >0$, we get for $\rho\ge 1$
\con{\begin{align}\label{eq:DirectUpperBoundErrorProb8}
\ol{Pe}(H_{eff}^{\lf(l\ri),\lf(s\ri)},\rho)=E_{C_{0}}\big(P_{e}^{IC}(H_{\eff}^{(l),\lf(s\ri)}\cdot
C_{0})\big)\le D(|s|\cdot D_{l}\cdot T_{l})\rho^{-T_{l}(|s|\cdot
D_{l}-\sum_{i\in s}r_{i})+\sum_{i=1}^{|s|\cdot D_{l}\cdot
T_{l}}\eta_{i}^{\lf(s\ri)}}
\end{align}}{\bal{eq:DirectUpperBoundErrorProb8}
{&\ol{Pe}(H_{eff}^{\lf(l\ri),\lf(s\ri)},\rho)=E_{C_{0}}\big(P_{e}^{IC}(H_{\eff}^{(l),\lf(s\ri)}\cdot
C_{0})\big)\le&
\nn\\
&D(|s|\cdot D_{l}\cdot T_{l})\rho^{-T_{l}(|s|\cdot
D_{l}-\sum_{i\in s}r_{i})+\sum_{i=1}^{|s|\cdot D_{l}\cdot
T_{l}}\eta_{i}^{\lf(s\ri)}}&}}
where $E_{C_{0}}\big(P_{e}^{IC}(H_{\eff}^{(l),\lf(s\ri)}\cdot
C_{0})\big)$ is the average decoding error probability of the
ensemble of IC's defined in \eqref{eq:DirectUpperBoundErrorProb7},
and $D \lf(|s|\cdot D_{l}\cdot T_{l}\ri)\ge D^{'}\lf(|s|\cdot
D_{l}\cdot T_{l}\ri)$. Hence, choosing $b^{'}$ to be the maximal
value between $\sqrt{\frac{|s|\cdot D_{l}\cdot T_{l}}{\pi
e}}\rho^{\frac{T_{l}}{2}(|s|\cdot D_{l}-\sum_{i\in
s}r_{i})+\epsilon}$, where $s\subseteq \lf\{1,\dots,K\ri\}$ will
enable to satisfy \eqref{eq:DirectUpperBoundErrorProb8} for any $s$.
$s$.

Next, we set the value of $b$ to be large enough such that for each
user, each IC density from the ensemble in
\eqref{eq:DirectUpperBoundErrorProb7}, $\gamma_{rc}^{',\lf(j\ri)}$,
equals $\gamma_{rc}^{\lf(j\ri)}$ up to a factor of 2, where
$j=1,\dots,K$. By choosing $b=b^{'}\cdot\rho^{\epsilon}$ we get
\begin{equation*}
\gamma_{tr}^{',\lf(j\ri)}=\gamma_{tr}^{\lf(j\ri)}\cdot
(\frac{b}{b+b^{'}})^{2D_{l}\cdot T_{l}}=\gamma_{tr}^{\lf(j\ri)}\cdot
\frac{1}{1+\rho^{-\epsilon}}.
\end{equation*}
Hence, for $\rho\ge 1$ we get
\begin{equation}\label{eq:DirectUpperBoundErrorProb9}
\frac{1}{2}\gamma_{tr}^{\lf(j\ri)}\le\gamma_{tr}^{',\lf(j\ri)}\le\gamma_{tr}^{\lf(j\ri)}.
\end{equation}
As a result we also get
\begin{equation*}
\mu_{tr}^{\lf(j\ri)}\le\mu_{tr}^{',\lf(j\ri)}=\frac{(\gamma_{tr}^{',\lf(j\ri)})^{-\frac{1}{D_{l}T_{j}}}}{2\pi
e\sigma^{2}}\le 2\mu_{tr}^{\lf(j\ri)}.
\end{equation*}
Hence, from \eqref{eq:DirectUpperBoundErrorProb1} and
\eqref{eq:DirectUpperBoundErrorProb8} we get that
\con{\begin{equation}\label{eq:DirectUpperBoundErrorProb10}
\ol{Pe}(H_{\eff}^{(l),K},\rho)\le\sum_{s\subseteq
\lf\{1,\dots,K\ri\}}D(|s|\cdot D_{l}\cdot
T_{l})\rho^{-T_{l}(|s|D_{l}-\sum_{i\in
s}r_{i})}\cdot|H_{\eff}^{(l),\lf(s\ri)\dagger}H_{\eff}^{(l),\lf(s\ri)}|^{-1}
\end{equation}}{\bal{eq:DirectUpperBoundErrorProb10}
{&\ol{Pe}(H_{\eff}^{(l),K},\rho)\le\sum_{s\subseteq
\lf\{1,\dots,K\ri\}}D(|s|\cdot D_{l}\cdot
T_{l})\times&
\nn\\
&\rho^{-T_{l}(|s|D_{l}-\sum_{i\in
s}r_{i})}\cdot |H_{\eff}^{(l),\lf(s\ri)\dagger}H_{\eff}^{(l),\lf(s\ri)}|^{-1}&}}
and from \eqref{eq:DirectUpperBoundErrorProb9} we get that user $j$
has multiplexing gain $r_{j}$ as required, where $j=1,\dots,K$. This
concludes the proof.

\section{Proof of Lemma \ref{lem:DirectLowerBoundDeterminantHeff}}\label{Append:DirectLowerBoundDeterminantHeff}
$H_{\eff}^{(l),|s|}$ is a block diagonal matrix. Hence the
determinant of $|H_{\eff}^{(l),|s|\dagger}\cdot H_{\eff}^{(l),|s|}|$
can be expressed as
\begin{equation}
|H_{\eff}^{(l),|s|\dagger}\cdot
H_{\eff}^{(l),|s|}|=\prod_{i=1}^{T_{l}}|\widehat{H}_{i}^{\dagger}\cdot\widehat{H}_{i}|.
\end{equation}
Assume
$\widehat{H}_{i}=(\udl{\widehat{h}}_{1},\dots,\udl{\widehat{h}}_{m})$,
i.e., $\widehat{H}_{i}$ has $m$ columns. In this case we can state
that the determinant
$$|\widehat{H}_{i}^{\dagger}\cdot\widehat{H}_{i}|=\lv\udl{\widehat{h}}_{1}\rv^{2}\lv\udl{\widehat{h}}_{2\perp
1}\rv^{2}\dots\lv\udl{\widehat{h}}_{m\perp m-1,\dots,1}\rv^{2}.$$
Note that $\widehat{H}_{i}$ has more rows than columns. The columns
of $\widehat{H}_{i}$ are subset of the columns of the channel matrix
$H$. Hence, in order to quantify the contribution of a certain
column of $H$, $\udl{h}_{j}$, $j=1,\dots, K\cdot M$, to the
determinant we need to consider the blocks where it occurs. We know
that the contribution of $\udl{h}_{j}$ to these determinants can be
quantified by taking into account the columns to its left in each
block, i.e., by taking into account
$\lf\{\udl{h}_{1},\dots,\udl{h}_{j-1}\ri\}$.

Based on \eqref{eq:DirectAchieveOptimalDMT5} and
\eqref{eq:DirectAchieveOptimalDMT6} we can quantify the contribution
of $\udl{h}_{j}$ to $|H_{\eff}^{(l),|s|\dagger}\cdot
H_{\eff}^{(l),|s|}|$ by \ifthenelse{\equal{\singlecolumntype}{1}}
{\begin{equation}
\lv\udl{h}_{j}\rv^{2b_{j}^{\lf(|s|\ri)}(0)}\prod_{k=1}^{j-1}\lv\udl{h}_{j\perp
j-1,\dots,j-k}\rv^{2b_{j}^{\lf(|s|\ri)}(k)}\dot{=}
\rho^{-\sum_{k=0}^{j-1}b_{j}^{\lf(|s|\ri)}(k)\cdot \min_{z\in
\lf(k+1,\dots,N\ri)}\xi_{z,j}}
\end{equation}}
{\begin{align}
\lv\udl{h}_{j}\rv^{2b_{j}(0)}\prod_{k=1}^{\min(j,L)-1}\lv\udl{h}_{j\perp j-1,\dots,j-k}\rv^{2b_{j}(k)}\dot{=}\nonumber\\
\rho^{-\sum_{k=0}^{\min(j,L)-1}b_{j}(k)a(k,\udl{\xi}_{j})}
\end{align}}
where $b_{j}^{\lf(|s|\ri)}(k)$ is the number of occurrences of
$\udl{h}_{j}$ in the blocks of $H_{\eff}^{(l),|s|}$, with only
$\{\udl{h}_{j-1},\dots,\udl{h}_{j-k}\}$ to its left.
$b_{j}^{\lf(|s|\ri)}(0)$ is the number of occurrences of
$\udl{h}_{j}$ with no columns to its left. Hence, the determinant is
obtained by multiplying the contribution of each column in
$H_{\eff}^{(l),|s|}$
\con{\begin{equation}\label{eq:DirectAchieveOptimalDMT7}
|H_{\eff}^{(l),|s|\dagger}\cdot
H_{\eff}^{(l),|s|}|=\prod_{j=1}^{|s|\cdot
M}\lv\udl{h}_{j}\rv^{2b_{j}^{\lf(|s|\ri)}(0)}\prod_{k=1}^{j-1}\lv\udl{h}_{j\perp
j-1,\dots,j-k}\rv^{2b_{j}^{\lf(|s|\ri)}(k)}\dot{=}
\rho^{-\sum_{k=0}^{j-1}b_{j}^{\lf(|s|\ri)}(k)\cdot \min_{z\in
\lf(k+1,\dots,N\ri)}\xi_{z,j}}.
\end{equation}}{\bal{eq:DirectAchieveOptimalDMT7}
{&|H_{\eff}^{(l),|s|\dagger}\cdot
H_{\eff}^{(l),|s|}|=\prod_{j=1}^{|s|\cdot
M}\lv\udl{h}_{j}\rv^{2b_{j}^{\lf(|s|\ri)}(0)}\times&
\nn\\
&\prod_{k=1}^{j-1}\lv\udl{h}_{j\perp
j-1,\dots,j-k}\rv^{2b_{j}^{\lf(|s|\ri)}(k)}\dot{=}&
\nn\\
&\rho^{-\sum_{k=0}^{j-1}b_{j}^{\lf(|s|\ri)}(k)\cdot \min_{z\in
\lf(k+1,\dots,N\ri)}\xi_{z,j}}.&}}

We now lower bound the determinant
\eqref{eq:DirectAchieveOptimalDMT7} by lower bounding the
contribution of each column. Let us consider column $\udl{h}_{a\cdot
M+b}$, $a=0,\dots,|s|-1$, $b=1,\dots,M$. From Lemma
\ref{lem:DirectSubsecTheEffectiveChannel1} we know that
$\udl{h}_{a\cdot M+b}$ occurs $N-M+1$ times with
$\lf\{\udl{h}_{1},\dots,\udl{h}_{a\cdot M+b-1}\ri\}$ to its left,
i.e., $b_{a\cdot M+b}^{\lf(|s|\ri)}\lf(a\cdot M+b-1\ri)=N-M+1$. In
addition, $\udl{h}_{a\cdot M+b}$ occurs in $\widehat{H}_{N-M+2v+1}$,
$v=1,\dots, \min \lf(M-l-1,b-1\ri)$, with
\begin{equation}\label{eq:DirectAchieveOptimalDMT8}
\lf\{\udl{h}_{1},\dots,\udl{h}_{a\cdot
M+b-1}\ri\}\setminus\lf\{\bigcup_{z=0}^{a}\udl{h}_{z\cdot
M+1},\dots,\udl{h}_{z\cdot M+v}\ri\}
\end{equation}
to its left, i.e., when $v$ is increased by one the number of columns
to its left reduces by $a+1$. Finally, $\udl{h}_{a\cdot M+b}$ occurs
in $\widehat{H}_{N-M+2v}$, $v=1,\dots,\min \lf(M-l-1,M-b\ri)$, with
\begin{equation}\label{eq:DirectAchieveOptimalDMT9}
\lf\{\udl{h}_{1},\dots,\udl{h}_{a\cdot
M+b-1}\ri\}\setminus\lf\{\bigcup_{z=1}^{a}\udl{h}_{z\cdot
M-v+1},\dots,\udl{h}_{z\cdot M}\ri\}.
\end{equation}
to its left (for $a=0$ it occurs with
$\lf\{\udl{h}_{1},\dots,\udl{h}_{b-1}\ri\}$ to its left), i.e., when
$v$ is increased by one the number of columns to its left reduces by
$a$. We wish to quantify the change in the determinant when reducing
columns, and relate it to the PDF in
\eqref{eq:DirectAchieveOptimalDMT3}. In order to analyze the
performance we would like the set of columns in
\eqref{eq:DirectAchieveOptimalDMT8} to be a subset of the set of
columns in \eqref{eq:DirectAchieveOptimalDMT9}, which is not the
case. Hence, we assume a columns reduction that gives a lower bound
on the determinant induced by the reduction in
\eqref{eq:DirectAchieveOptimalDMT8} and
\eqref{eq:DirectAchieveOptimalDMT9}. We assume for
$\widehat{H}_{N-M+2v}$, $v=1,\dots,\min \lf(M-l-1,M-b\ri)$ that
$\udl{h}_{aM+b}$ occurs with $\lf\{\udl{h}_{1},\dots,
\udl{h}_{aM+b-1} \ri\}$ to its left instead of
\eqref{eq:DirectAchieveOptimalDMT9}. In this case, by adding columns
to \eqref{eq:DirectAchieveOptimalDMT9} we get a lower bound on the
contribution of $\udl{h}_{a\cdot M+b}$ to the determinant in each of
its occurrences, that equals to
\begin{equation}\label{eq:DirectAchieveOptimalDMT10}
\rho^{-\min_{z\in \lf\{aM+b,\dots,N\ri\}}\xi_{z,aM+b}}.
\end{equation}
for any $v=1\dots,\min \lf(M-l-1,M-b\ri)$. On the other hand for
\eqref{eq:DirectAchieveOptimalDMT8} we assume that only the left
most column is reduced when increasing $v$, instead of the $a+1$
columns. This leads to lower bound to the contribution of
\eqref{eq:DirectAchieveOptimalDMT8} to the determinant that equals
to
\begin{equation}\label{eq:DirectAchieveOptimalDMT11}
\rho^{-\min_{z\in \lf\{aM+b-v,\dots,N\ri\}}\xi_{z,aM+b}}
\end{equation}
where $v=1\dots,\min \lf(M-l-1,b-1\ri)$. Hence, we get that the set of columns corresponding to
\eqref{eq:DirectAchieveOptimalDMT11} is a subset of the set of
columns corresponding to \eqref{eq:DirectAchieveOptimalDMT10}. Thus,
from
\eqref{eq:DirectAchieveOptimalDMT10},\eqref{eq:DirectAchieveOptimalDMT11}
we get the following lower bound on the determinant
\con{\begin{align}\label{eq:DirectAchieveOptimalDMT12}
|H_{\eff}^{(l),|s|\dagger}\cdot
H_{\eff}^{(l),|s|}|\dot{\ge}\prod_{a=0}^{|s|-1}\prod_{b=1}^{M}\rho^{-\lf(N-M+1+\min
\lf(M-l-1,M-b\ri)\ri)\min_{z\in \lf\{aM+b,\dots,N\ri\}}\xi_{z,aM+b}}
\nonumber\\
\cdot\prod_{b^{'}=2}^{M}\rho^{-\sum_{i=1}^{\min
\lf(M-l-1,b^{'}-1\ri)}\min_
{z\in\lf\{aM+b^{'}-i,\dots,N\ri\}}\xi_{z,aM+b^{'}}}.
\end{align}}{\bal{eq:DirectAchieveOptimalDMT12}
{&|H_{\eff}^{(l),|s|\dagger}\cdot
H_{\eff}^{(l),|s|}|\dot{\ge}&
\nn\\
&\prod_{a=0}^{|s|-1}\prod_{b=1}^{M}\times&
\nn\\
&\rho^{-\lf(N-M+1+\min
\lf(M-l-1,M-b\ri)\ri)\min_{z\in \lf\{aM+b,\dots,N\ri\}}\xi_{z,aM+b}}\times&
\nn\\
&\prod_{b^{'}=2}^{M}\times&
\nn\\
&\rho^{-\sum_{i=1}^{\min
\lf(M-l-1,b^{'}-1\ri)}\min_
{z\in\lf\{aM+b^{'}-i,\dots,N\ri\}}\xi_{z,aM+b^{'}}}.&}}

\section{Proof of theorem \ref{Th:DirectLowerBoundDiversityOrder}}\label{Append:DirectLowerBoundDiversityOrder}
 In order to lower bound
the DMT of the transmission scheme we use the upper bound on the
average decoding error probability from Theorem
\ref{Th:DirectUpperBoundErrorProb} and the lower bound on the
determinant of $|H_{\eff}^{(l),|s|\dagger}H_{\eff}^{(l),|s|}|$
\eqref{eq:DirectAchieveOptimalDMT12}, to get a new upper bound on
the error probability. We average the new upper bound on the
realizations of $H$ to obtain the transmission scheme DMT.

First let us denote $l=\lfloor r_{max}\rfloor$. Recall from Theorem
\ref{Th:DirectUpperBoundErrorProb} that the upper bound on the error
probability applies to $\eta_{i}^{\lf(s\ri)}\ge 0$, for every
$i=0,\dots, |s|\cdot D_{l}\cdot T_{l}$ and for any $s\subseteq
\lf(1,\dots,K\ri)$. In our analysis we assume that $\xi_{i,j}\ge 0$
for $i=1,\dots,N$, $j=1,\dots, K\cdot M$. We wish to show that it
leads to $\eta_{i}^{\lf(s\ri)}\ge 0$, i.e., we can use the upper
bound on the error probability. We know that
$H_{eff}^{\lf(l\ri),\lf(s\ri)}$ is a block diagonal matrix, where
the set of columns of each block is a subset of
$\lf\{\udl{h}_{1},\dots,\udl{h}_{K\cdot M}\ri\}$. Let us denote the
set of indices of the columns of $H$ that take place in
$H_{eff}^{\lf(l\ri),\lf(s\ri)}$ by $a \lf(s\ri)$. In this case we
get from trace considerations
\begin{equation}\label{eq:DirectLowerBoundDiversityOrder1_A}
\sum_{i=1}^{N}\sum_{j\in a
\lf(s\ri)}\rho^{-\xi_{i,j}}\le\sum_{i=1}^{|s|\cdot D_{l}\cdot
T_{l}}\rho^{-\eta_{i}^{\lf(s\ri)}}\quad \forall s\subseteq
\lf\{1,\dots,K\ri\}.
\end{equation}
The inequality results from the fact that $a \lf(s\ri)$ represents
the indices of columns that take place in
$H_{eff}^{\lf(l\ri),\lf(s\ri)}$, whereas some of the columns may
appear more than once in $H_{eff}^{\lf(l\ri),\lf(s\ri)}$. However,
the number of appearances of each column is bounded, and so the
inequality in \eqref{eq:DirectLowerBoundDiversityOrder1_A} is up to
a constant. Therefore, we get the following exponential equality
(for large $\rho$)
\begin{equation}\label{eq:DirectLowerBoundDiversityOrder1}
\sum_{i=1}^{N}\sum_{j\in a
\lf(s\ri)}\rho^{-\xi_{i,j}}\dot{=}\sum_{i=1}^{|s|\cdot D_{l}\cdot
T_{l}}\rho^{-\eta_{i}^{\lf(s\ri)}}\quad \forall s\subseteq
\lf\{1,\dots,K\ri\}.
\end{equation}
From \eqref{eq:DirectLowerBoundDiversityOrder1} we get that
$\xi_{i,j}\ge 0$ for $i=1,\dots,N$, $j=1,\dots, K\cdot M$ if and
only if $\eta_{i}^{\lf(s\ri)}\ge 0$ for any $s\subseteq
\lf\{1,\dots,K\ri\}$ and $i=1,\dots,|s|\cdot D_{l}\cdot T_{l}$. It
follows that we can use the upper bound in Theorem
\ref{Th:DirectUpperBoundErrorProb}.

The upper bound on the error probability consists of the sum of
$\ol{Pe}(\udl{\eta}^{\lf(s\ri)},\rho)$ for all $s\subseteq
\lf\{1,\dots,K\ri\}$. We wish to show that the DMT of each of the
terms is lower bounded by
$d^{\ast,\lf(FC\ri)}_{M,N}\lf(r_{max}\ri)$. First note that $\forall
s\subseteq \lf\{1,\dots,K\ri\}$ we can write
\con{\begin{align}
\ol{Pe}(\udl{\eta}^{\lf(s\ri)},\rho)=\min&\lf(1,D \lf(|s|\cdot
D_{l}\cdot T_{l}\ri)\rho^{-T_{l}(|s|D_{l}-\sum_{i\in
s}r_{i})}\cdot|H_{\eff}^{(l),\lf(s\ri)\dagger}H_{\eff}^{(l),\lf(s\ri)}|^{-1}\ri)\nonumber\\
&\le \min\lf(1,D \lf(|s|\cdot D_{l}\cdot T_{l}\ri)\rho^{-|s|\cdot
T_{l}(D_{l}-r_{max})}\cdot|H_{\eff}^{(l),\lf(s\ri)\dagger}H_{\eff}^{(l),\lf(s\ri)}|^{-1}\ri)
\end{align}}{\bal{}{&\ol{Pe}(\udl{\eta}^{\lf(s\ri)},\rho)=\min\lf(1,\ri.&
\nn\\
&\lf. D \lf(|s|\cdot
D_{l}\cdot T_{l}\ri)\rho^{-T_{l}(|s|D_{l}-\sum_{i\in
s}r_{i})}\cdot|H_{\eff}^{(l),\lf(s\ri)\dagger}H_{\eff}^{(l),\lf(s\ri)}|^{-1}\ri)
\nn\\
&\le\min\lf(1,D \lf(|s|\cdot D_{l}\cdot T_{l}\ri)\rho^{-|s|\cdot
T_{l}(D_{l}-r_{max})}\times\ri.&
\nn\\
&\lf. |H_{\eff}^{(l),\lf(s\ri)\dagger}H_{\eff}^{(l),\lf(s\ri)}|^{-1}\ri) &}}
where the inequality comes from the fact that assuming all users
transmit at the maximal multiplexing gain increases the error
probability. By assigning $D_{l}=\frac{MN-l\cdot
\lf(l+1\ri)}{N+M-1-2\cdot l}$ and $T_{l}=N+M-1-2\cdot l$ we get
\con{\begin{equation}\label{eq:DirectLowerBoundDiversityOrder2}
\ol{Pe}(\udl{\eta}^{\lf(s\ri)},\rho)\le \min\lf(1,D \lf(|s|\cdot
D_{l}\cdot T_{l}\ri)\rho^{-|s|\cdot (MN-l\cdot
\lf(l+1\ri)-\lf(N+M-1-2l\ri)\cdot
r_{max})}\cdot|H_{\eff}^{(l),\lf(s\ri)\dagger}H_{\eff}^{(l),\lf(s\ri)}|^{-1}\ri).
\end{equation}}{\bal{eq:DirectLowerBoundDiversityOrder2}
{&\ol{Pe}(\udl{\eta}^{\lf(s\ri)},\rho)\le \min\lf(1,D \lf(|s|\cdot
D_{l}\cdot T_{l}\ri)\times\ri.&
\nn\\
&\lf.\rho^{-|s|\cdot (MN-l\cdot
\lf(l+1\ri)-\lf(N+M-1-2l\ri)\cdot
r_{max})}\times\ri.&
\nn\\
&\lf. |H_{\eff}^{(l),\lf(s\ri)\dagger}H_{\eff}^{(l),\lf(s\ri)}|^{-1}\ri).&}}
From \eqref{eq:DirectAchieveOptimalDMT4} we know that
$E_{H}\lf(\ol{Pe}(\udl{\eta}^{\lf(s\ri)},\rho)\ri)=E_{H}\lf(\ol{Pe}(\udl{\eta}^{\lf(1,\dots,|s|\ri)},\rho)\ri)$,
i,e, the term corresponding to the first $|s|$ users. Hence, for all
terms with the same $|s|$ we can consider
\con{\begin{equation}\label{eq:DirectLowerBoundDiversityOrder3}
\ol{Pe}(\udl{\eta}^{\lf(1,\dots,|s|\ri)},\rho)\le \min\lf(1,D
\lf(|s|\cdot D_{l}\cdot T_{l}\ri)\rho^{-|s|\cdot (MN-l\cdot
\lf(l+1\ri)-\lf(N+M-1-2l\ri)\cdot
r_{max})}\cdot|H_{\eff}^{(l),|s|\dagger}H_{\eff}^{(l),|s|}|^{-1}\ri).
\end{equation}}{\bal{eq:DirectLowerBoundDiversityOrder3}
{&\ol{Pe}(\udl{\eta}^{\lf(1,\dots,|s|\ri)},\rho)\le \min\lf(1,D
\lf(|s|\cdot D_{l}\cdot T_{l}\ri)\times\ri.&
\nn\\
&\lf.\rho^{-|s|\cdot (MN-l\cdot
\lf(l+1\ri)-\lf(N+M-1-2l\ri)\cdot
r_{max})}\times\ri.
\nn\\
&\lf.|H_{\eff}^{(l),|s|\dagger}H_{\eff}^{(l),|s|}|^{-1}\ri).&}}
Based on \eqref{eq:DirectAchieveOptimalDMT12} let us define
\begin{equation}
A \lf(a\cdot M+b,l \ri)= \lf(N-b+1\ri)\min_{z\in
\lf\{aM+b,\dots,N\ri\}}\xi_{z,aM+b}
\end{equation}
for $b=1$, $a=0,\dots,|s|-1$, and
\con{\begin{equation}
A \lf(a\cdot M+b,l \ri)= \lf(N-b+1\ri)\min_{z\in
\lf\{aM+b,\dots,N\ri\}}\xi_{z,aM+b}+\sum_{i=1}^{\min
\lf(M-l-1,b-1\ri)}\min_ {z\in\lf\{aM+b-i,\dots,N\ri\}}\xi_{z,aM+b}
\end{equation}}{\bal{}{&A \lf(a\cdot M+b,l \ri)= \lf(N-b+1\ri)\min_{z\in
\lf\{aM+b,\dots,N\ri\}}\xi_{z,aM+b}&
\nn\\
&+\sum_{i=1}^{\min\lf(M-l-1,b-1\ri)}\min_ {z\in\lf\{aM+b-i,\dots,N\ri\}}\xi_{z,aM+b}&}}
for $b=2,\dots,M$ and $a=0,\dots, |s|-1$. From the bounds in
\eqref{eq:DirectAchieveOptimalDMT10},
\eqref{eq:DirectAchieveOptimalDMT11},
\eqref{eq:DirectAchieveOptimalDMT12} and also since $N-M+1+\min
\lf(M-l-1,M-b\ri)\le N-b+1$, we get that $\rho^{-A \lf(a\cdot
M+b,l\ri)}$ gives a lower bound on the contribution of
$\udl{h}_{a\cdot M+b}$ to the determinant. As a result we get the
following upper bound
\begin{equation}
|H_{\eff}^{(l),|s|\dagger}H_{\eff}^{(l),|s|}|^{-1}\dot{\le}\prod_{a=0}^{|s|-1}\prod_{b=1}^{M}\rho^{A
\lf(a\cdot M+b,l\ri)}.
\end{equation}
By assigning in the bound from
\eqref{eq:DirectLowerBoundDiversityOrder3} we get
\con{\begin{equation}\label{eq:DirectLowerBoundDiversityOrder4}
\ol{Pe}(\udl{\eta}^{\lf(1,\dots,|s|\ri)},\rho)\dot{\le}\rho^{-\lf(|s|\cdot
\lf(MN-l \lf(l+1\ri)-\lf(N+M-1-2l\ri)r_{max}\ri)-\sum_{i=1}^{|s|M}A
\lf(i,l\ri)\ri)^{+}}
\end{equation}}{\bal{eq:DirectLowerBoundDiversityOrder4}
{&\ol{Pe}(\udl{\eta}^{\lf(1,\dots,|s|\ri)},\rho)\dot{\le}&
\nn\\
&\rho^{-\lf(|s|\cdot
\lf(MN-l \lf(l+1\ri)-\lf(N+M-1-2l\ri)r_{max}\ri)-\sum_{i=1}^{|s|M}A
\lf(i,l\ri)\ri)^{+}}&}}
where $\lf(x\ri)^{+}$ equals $x$ for  $x\ge 0$ and $0$ else; we omit
the constant $\min\lf(1,D \lf(|s|\cdot D_{l}\cdot T_{l}\ri)\ri)$ as
we consider the equality for asymptotically large $\rho$ in
\eqref{eq:DirectLowerBoundDiversityOrder4}.

Based on \eqref{eq:DirectLowerBoundDiversityOrder4} the average over
the channel realizations can be upper bounded by
\con{\begin{align}
E_{H}\lf(\ol{Pe}(\udl{\eta}^{\lf(s\ri)},\rho)\ri)=E_{H}&\lf(\ol{Pe}(\udl{\eta}^{\lf(1,\dots,|s|\ri)},\rho)\ri)\nonumber\\
&\dot{\le} \int_{\xi_{\udl{i},\udl{j}}\ge 0}\rho^{-\lf(|s|\cdot
\lf(MN-l \lf(l+1\ri)-\lf(N+M-1-2l\ri)r_{max}\ri)-\sum_{i=1}^{|s|M}A
\lf(i,l\ri)\ri)^{+}-\sum_{i=1}^{N}\sum_{j=1}^{K\cdot
M}\xi_{i,j}}d\xi_{\udl{i},\udl{j}}.
\end{align}}{\bal{}{&E_{H}\lf(\ol{Pe}(\udl{\eta}^{\lf(s\ri)},\rho)\ri)=E_{H}\lf(\ol{Pe}(\udl{\eta}^{\lf(1,\dots,|s|\ri)},\rho)\ri)\dot{\le}&
\nn\\
& \int_{\xi_{\udl{i},\udl{j}}\ge 0}\rho^{-\lf(|s|\cdot
\lf(MN-l \lf(l+1\ri)-\lf(N+M-1-2l\ri)r_{max}\ri)-\sum_{i=1}^{|s|M}A
\lf(i,l\ri)\ri)^{+}}&
\nn\\
&\times\rho^{-\sum_{i=1}^{N}\sum_{j=1}^{K\cdot
M}\xi_{i,j}}d\xi_{\udl{i},\udl{j}}.&}}
where $\xi_{\udl{i},\udl{j}}\ge 0$ means $\xi_{i,j}\ge 0$ for
$i=1,\dots,N$ and $j=1,\dots,K\cdot M$. We divide the integration
range to two sets
\con{\begin{equation}\label{eq:DirectLowerBoundDiversityOrder5}
\int_{\xi_{\udl{i},\udl{j}}\in \mathcal{A}}\rho^{-\lf(|s|\cdot
\lf(MN-l \lf(l+1\ri)-\lf(N+M-1-2l\ri)r_{max}\ri)-\sum_{i=1}^{|s|M}A
\lf(i,l\ri)\ri)^{+}-\sum_{i=1}^{N}\sum_{j=1}^{K\cdot
M}\xi_{i,j}}d\xi_{\udl{i},\udl{j}}+ \int_{\xi_{\udl{i},\udl{j}}\in
\ol{\mathcal{A}}}1\cdot \rho^{-\sum_{i=1}^{N}\sum_{j=1}^{K\cdot
M}\xi_{i,j}}d\xi_{\udl{i},\udl{j}}
\end{equation}}{\bal{eq:DirectLowerBoundDiversityOrder5}
{&\int_{\xi_{\udl{i},\udl{j}}\in \mathcal{A}}\rho^{-\lf(|s|\cdot
\lf(MN-l \lf(l+1\ri)-\lf(N+M-1-2l\ri)r_{max}\ri)-\sum_{i=1}^{|s|M}A
\lf(i,l\ri)\ri)^{+}}&
\nn\\\
&\times \rho^{-\sum_{i=1}^{N}\sum_{j=1}^{K\cdot
M}\xi_{i,j}}d\xi_{\udl{i},\udl{j}}&
\nn\\
&+ \int_{\xi_{\udl{i},\udl{j}}\in
\ol{\mathcal{A}}}1\cdot \rho^{-\sum_{i=1}^{N}\sum_{j=1}^{K\cdot
M}\xi_{i,j}}d\xi_{\udl{i},\udl{j}}&}}
where $\mathcal{A}=\lf\{\bigcap_{i=1}^{N}\bigcap_{j=1}^{K\cdot
M}0\le \xi_{i,j}\le K\cdot M\cdot N\ri\}$,
$\ol{\mathcal{A}}=\lf\{\bigcup_{i=1}^{N}\bigcup_{j=1}^{K\cdot
M}\xi_{i,j}> K\cdot M\cdot N\ri\}$, and for the second term in
\eqref{eq:DirectLowerBoundDiversityOrder5} we upper bounded the
error probability per channel realization by 1.

We begin by lower bounding the DMT of the first term in
\eqref{eq:DirectLowerBoundDiversityOrder5}. In a similar manner to
\cite{TseDivMult2003}, \cite{YonaFederICOptimalDMT}, for very large
$\rho$ and finite integration range, we can approximate the integral
by finding the most dominant exponential term. Hence, for large
$\rho$ the first term in \eqref{eq:DirectLowerBoundDiversityOrder5}
equals
\con{\begin{equation}
\rho^{-\min_{\xi_{\udl{i},\udl{j}}\in\mathcal{A}}\lf(\lf(|s|\cdot
\lf(MN-l \lf(l+1\ri)-\lf(N+M-1-2l\ri)r_{max}\ri)-\sum_{i=1}^{|s|M}A
\lf(i,l\ri)\ri)^{+}+\sum_{i=1}^{N}\sum_{j=1}^{K\cdot
M}\xi_{i,j}\ri)}.
\end{equation}}{\bal{}{&\max_{\xi_{\udl{i},\udl{j}}\in\mathcal{A}}&
\nn\\
&\rho^{-\lf(|s|\cdot
\lf(MN-l \lf(l+1\ri)-\lf(N+M-1-2l\ri)r_{max}\ri)-\sum_{i=1}^{|s|M}A
\lf(i,l\ri)\ri)^{+}}&
\nn\\
&\times\rho^{-\sum_{i=1}^{N}\sum_{j=1}^{K\cdot
M}\xi_{i,j}}.&}}
Hence, by showing that
\con{\begin{align}\label{eq:DirectLowerBoundDiversityOrder6}
\min_{\xi_{\udl{i},\udl{j}}\in\mathcal{A}}\lf(|s|\cdot \lf(MN-l
\lf(l+1\ri)-\lf(N+M-1-2l\ri)r_{max}\ri)-\sum_{i=1}^{|s|M}A
\lf(i,l\ri)\ri)^{+}+\sum_{i=1}^{N}\sum_{j=1}^{K\cdot
M}\xi_{i,j}\nonumber\\ \ge MN-l \lf(l+1\ri)-\lf(N+M-1-2l\ri)r_{max}
\end{align}}{\bal{eq:DirectLowerBoundDiversityOrder6}
{&\min_{\xi_{\udl{i},\udl{j}}\in\mathcal{A}}\lf(|s|\cdot \lf(MN-l
\lf(l+1\ri)\ri.\ri.&
\nn\\
&\lf.\lf. -\lf(N+M-1-2l\ri)r_{max}\ri)-\sum_{i=1}^{|s|M}A
\lf(i,l\ri)\ri)^{+}&
\nn\\
&+\sum_{i=1}^{N}\sum_{j=1}^{K\cdot
M}\xi_{i,j}&
\nn\\
&\ge MN-l \lf(l+1\ri)-\lf(N+M-1-2l\ri)r_{max}&}}
we get that the first term attains DMT which is lower bounded by
$d^{\ast,\lf(FC\ri)}_{M,N}\lf(r_{max}\ri)$. In order to show
\eqref{eq:DirectLowerBoundDiversityOrder6} we use the following
lemma.

\begin{lem}\label{lem:DirectLowerBoundDiversityOrderEquivalentOptimization}
The  solution for the minimization problem
\con{\begin{equation*}
\min_{\xi_{\udl{i},\udl{j}}\in\mathcal{A}}\lf(|s|\cdot \lf(MN-l
\lf(l+1\ri)-\lf(N+M-1-2l\ri)r_{max}\ri)-\sum_{i=1}^{|s|M}A
\lf(i,l\ri)\ri)^{+}+\sum_{i=1}^{N}\sum_{j=1}^{K\cdot M}\xi_{i,j}
\end{equation*}}{\baln{}{&\min_{\xi_{\udl{i},\udl{j}}\in\mathcal{A}}\lf(|s|\cdot \lf(MN-l
\lf(l+1\ri)\ri.\ri.
\nn\\
&\lf.\lf.-\lf(N+M-1-2l\ri)r_{max}\ri)-\sum_{i=1}^{|s|M}A
\lf(i,l\ri)\ri)^{+}&
\nn\\
&+\sum_{i=1}^{N}\sum_{j=1}^{K\cdot M}\xi_{i,j}&}}
equals to the solution of the following minimization problem
\begin{equation*}
\min_{\udl{\alpha}\in\mathcal{A}^{'}}\sum_{i=1}^{|s|\cdot
M}\lf(N-i+1\ri)\alpha_{i}
\end{equation*}
where $\udl{\alpha}=\lf(\alpha_{1},\dots,\alpha_{|s|\cdot
M}\ri)^{T}$, and the set $\mathcal{A}^{'}$ fulfils the following two
conditions: $0\le \alpha_{i}\le K\cdot M\cdot N$ for $i=1,\dots,
|s|\cdot M$ and also
\con{\begin{equation*}
\sum_{a=0}^{|s|-1}\sum_{b=1}^{M}\lf(N-b+1\ri)\alpha_{a\cdot M+b}=|s|
\lf(MN-l \lf(l+1\ri)-\lf(N+M-1-2l\ri)r_{max}\ri).
\end{equation*}}{\baln{}{&\sum_{a=0}^{|s|-1}\sum_{b=1}^{M}\lf(N-b+1\ri)\alpha_{a\cdot M+b}=&
\nn\\
&|s|\lf(MN-l \lf(l+1\ri)-\lf(N+M-1-2l\ri)r_{max}\ri).&}}
\end{lem}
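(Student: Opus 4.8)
The plan is to prove the claimed equality of the two optimal values by establishing it in both directions: that every feasible point of the $\xi$‑problem dominates a feasible point of the $\alpha$‑problem, and conversely. Throughout, abbreviate the per‑user target by $C = MN-l(l+1)-(N+M-1-2l)r_{max}$, so that the equality constraint defining $\mathcal{A}^{'}$ reads $\sum_{a,b}(N-b+1)\alpha_{aM+b}=|s|C$. For a column index $j=aM+b$ (with $a=0,\dots,|s|-1$, $1\le b\le M$, and $p_{j}=\min(M-l-1,b-1)$) write the nested minima entering $A(j,l)$ as $m_{0}=\min_{z\in\{j,\dots,N\}}\xi_{z,j}$ and $m_{i}=\min_{z\in\{j-i,\dots,N\}}\xi_{z,j}$, so that $A(j,l)=(N-b+1)m_{0}+\sum_{i=1}^{p_{j}}m_{i}$ with $m_{0}\ge m_{1}\ge\cdots\ge m_{p_{j}}\ge 0$.

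First I would record two elementary facts. (a) The indices appearing in the $A$‑terms run only over $j=1,\dots,|s|M$, whereas the remaining columns $j=|s|M+1,\dots,KM$ enter the $\xi$‑objective \emph{only} through the additive penalty $\sum_{i,j}\xi_{i,j}$; hence at any minimizer they may be set to $0$. (b) For each relevant column one has $\sum_{z=1}^{N}\xi_{z,j}\ge (N-j+1)m_{0}+\sum_{i=1}^{p_{j}}m_{i}$, because $\xi_{z,j}\ge m_{0}$ for the $N-j+1$ indices $z\ge j$, while $\xi_{j-i,j}\ge m_{i}$ for $i=1,\dots,p_{j}$ accounts for $p_{j}$ of the remaining coordinates.

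For the inequality $\min_{\alpha}\le\min_{\xi}$, given any $\xi\in\mathcal{A}$ I would set $\alpha_{j}=A(j,l)/(N-b+1)=m_{0}+\tfrac{1}{N-b+1}\sum_{i=1}^{p_{j}}m_{i}$ for $j=1,\dots,|s|M$, which makes $\sum_{a,b}(N-b+1)\alpha_{aM+b}=\sum_{j}A(j,l)$ automatically, and then adjust the single coordinate $\alpha_{|s|M}$ by $(|s|C-\sum_{j}A(j,l))/(N-M+1)$ to meet the equality constraint exactly. The key comparisons are $(N-j+1)\le(N-b+1)$ (since $a\ge0$) and $(N-|s|M+1)\le(N-M+1)$ (since $|s|\ge1$): by fact (b), each column contributes $(N-j+1)\alpha_{j}=(N-j+1)m_{0}+\tfrac{N-j+1}{N-b+1}\sum_{i}m_{i}\le(N-j+1)m_{0}+\sum_{i}m_{i}\le\sum_{z}\xi_{z,j}$ to $\sum_{i}(N-i+1)\alpha_{i}$, while the adjustment of $\alpha_{|s|M}$ changes the objective by at most $(|s|C-\sum_{j}A(j,l))^{+}$, which is exactly the $(\cdot)^{+}$ penalty in the $\xi$‑objective. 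Summing over columns and adding this penalty shows the resulting $\alpha$ is $\mathcal{A}^{'}$‑feasible with objective no larger than that of $\xi$. For the reverse inequality I would start from an optimal $\alpha^{\ast}$ and build $\xi$ by the step assignment $\xi_{z,aM+b}=\alpha^{\ast}_{aM+b}$ for $z\ge aM+b$ and $\xi_{z,j}=0$ otherwise (and $0$ on columns $j>|s|M$); then $m_{0}=\alpha^{\ast}_{j}$, all higher minima vanish, $A(j,l)=(N-b+1)\alpha^{\ast}_{j}$, so $\sum_{j}A(j,l)=|s|C$ by feasibility and the $(\cdot)^{+}$ term dies, the penalty equals $\sum_{j}(N-j+1)\alpha^{\ast}_{j}$, and $\xi\in\mathcal{A}$ since $0\le\alpha^{\ast}_{j}\le KMN$.

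The delicate points, where I would spend the most care, are two. The first is fact (b) together with the per‑column splitting: the inefficiency of the weight‑one minima $m_{1},\dots,m_{p_{j}}$ against the weight‑$(N-b+1)$ primary term $m_{0}$ is precisely what makes the conversion to $\alpha_{j}$ lossless in the objective, and this must be tracked coordinate by coordinate. The second, and the real obstacle, is verifying that the top‑up through the single most efficient coordinate $\alpha_{|s|M}$ never violates the box constraint $\alpha\le KMN$ in the regime that fixes the dominant exponent; here one relies on the optimum of interest being interior, the bounds in $\mathcal{A}$ and $\mathcal{A}^{'}$ having been introduced (as in \cite{TseDivMult2003}, \cite{YonaFederICOptimalDMT}) only to keep the integral finite rather than to bind at the extremizer.
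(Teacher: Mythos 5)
Your overall strategy is genuinely different from the paper's: the paper never compares feasible points of the two problems directly, but instead massages an arbitrary minimizer of the $\xi$-problem in stages (zeroing the coordinates that do not appear in any $A\lf(j,l\ri)$, then zeroing the remaining $\xi_{i,j}$ with $i<j$ via a case split on whether the $\lf(\cdot\ri)^{+}$ term is active, then equalizing $\xi_{z,j}=\alpha_{j}$ for $z\ge j$, and finally showing the minimum forces the equality constraint). Your direction ``$\min_{\xi}\le\min_{\alpha}$'' (the step assignment $\xi_{z,aM+b}=\alpha^{\ast}_{aM+b}$ for $z\ge aM+b$) is clean, correct, and arguably slicker than anything in the paper. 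The problem is in the other direction.

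The gap is that your map $\xi\mapsto\alpha$ need not land in $\mathcal{A}^{'}$. First, when $\sum_{j}A\lf(j,l\ri)>|s|C$ the top-up of $\alpha_{|s|M}$ is \emph{negative}, and if $A\lf(|s|M,l\ri)$ happens to be small while the other columns carry the excess, the adjusted $\alpha_{|s|M}$ drops below zero; your use of $\lf(\cdot\ri)^{+}$ shows you track the objective correctly in this case but you never address feasibility. (This is repairable: rescale all of $\udl{\alpha}$ by $t=|s|C/\sum_{j}(N-b+1)\alpha_{j}\in(0,1)$ instead of touching one coordinate; this meets the equality constraint, preserves the box, and only decreases the objective.) Second, the upper box constraint can already fail \emph{before} the adjustment: $\alpha_{j}=m_{0}+\frac{1}{N-b+1}\sum_{i=1}^{p_{j}}m_{i}$ exceeds $K M N$ whenever $m_{0}=KMN$ and $p_{j}\ge 1$, so the constraint is not merely a non-binding artifact of the integration domain as you suggest — you need an a priori argument that such $\xi$ (whose additive penalty alone already dwarfs the achievable minimum, e.g.\ the value $|s|C$ obtained from $\alpha_{1}=|s|C/N$, $\alpha_{j}=0$ otherwise) can be excluded before the reduction, or a capping argument. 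Your closing paragraph names this as the delicate point but defers it; as written the reduction is incomplete precisely there, whereas the paper's sequential argument sidesteps the issue because it only ever \emph{decreases} coordinates of a point already in $\mathcal{A}$.
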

\begin{proof}
The proof is in appendix
\ref{Append:DirectLowerBoundDiversityOrderEquivalentOptimization}.
\end{proof}

Based on Lemma
\ref{lem:DirectLowerBoundDiversityOrderEquivalentOptimization} we
can see that by proving
\con{\begin{equation}\label{eq:DirectLowerBoundDiversityOrder7}
\min_{\udl{\alpha}\in\mathcal{A}^{'}}\sum_{i=1}^{|s|\cdot
M}\lf(N-i+1\ri)\alpha_{i}\ge MN-l
\lf(l+1\ri)-\lf(N+M-1-2l\ri)r_{max}
\end{equation}}{\bal{eq:DirectLowerBoundDiversityOrder7}
{&\min_{\udl{\alpha}\in\mathcal{A}^{'}}\sum_{i=1}^{|s|\cdot
M}\lf(N-i+1\ri)\alpha_{i}\ge&
\nn\\
&MN-l\lf(l+1\ri)-\lf(N+M-1-2l\ri)r_{max}&}}
we also prove \eqref{eq:DirectLowerBoundDiversityOrder6}. Therefore,
we wish to show that any vector $\udl{\alpha}\in\mathcal{A}^{'}$
fulfils this inequality. Consider a certain vector
$\udl{\alpha}\in\mathcal{A}^{'}$. We define $\beta_{a\cdot
M+b}=\frac{\lf(N+1-b\ri)\cdot \alpha_{a\cdot M+b}}{|s|}$ for
$a=0,\dots, |s|-1$, $b=1,\dots,M$. From this definition we get
\con{\begin{equation}\label{eq:DirectLowerBoundDiversityOrder8}
\sum_{a=0}^{|s|-1}\sum_{b=1}^{M}\beta_{a\cdot
M+b}=\sum_{a=0}^{|s|-1}\sum_{b=1}^{M}\frac{\lf(N-b+1\ri)\alpha_{a\cdot
M+b}}{|s|}=MN-l \lf(l+1\ri)-\lf(N+M-1-2l\ri)r_{max}.
\end{equation}}{\bal{eq:DirectLowerBoundDiversityOrder8}
{&\sum_{a=0}^{|s|-1}\sum_{b=1}^{M}\beta_{a\cdot
M+b}=\sum_{a=0}^{|s|-1}\sum_{b=1}^{M}\frac{\lf(N-b+1\ri)\alpha_{a\cdot
M+b}}{|s|}=&
\nn\\
&MN-l \lf(l+1\ri)-\lf(N+M-1-2l\ri)r_{max}.&}}
By assigning in \eqref{eq:DirectLowerBoundDiversityOrder7} we get
\con{\begin{equation}\label{eq:DirectLowerBoundDiversityOrder9}
\sum_{a=0}^{|s|-1}\sum_{b=1}^{M}\lf(N-a\cdot M-b+1\ri)\alpha_{a\cdot
M+b}=\sum_{a=0}^{|s|-1}\sum_{b=1}^{M}\frac{|s|\lf(N-a\cdot
M-b+1\ri)\beta_{a\cdot M+b}}{N-b+1}.
\end{equation}}{\bal{eq:DirectLowerBoundDiversityOrder9}
{&\sum_{a=0}^{|s|-1}\sum_{b=1}^{M}\lf(N-a\cdot M-b+1\ri)\alpha_{a\cdot
M+b}=&
\nn\\
&\sum_{a=0}^{|s|-1}\sum_{b=1}^{M}\frac{|s|\lf(N-a\cdot
M-b+1\ri)\beta_{a\cdot M+b}}{N-b+1}.&}}
We use the following lemma to prove
\eqref{eq:DirectLowerBoundDiversityOrder7}.
\begin{lem}\label{lem:DirectLowerBoundDiversityOrderBetaLowerBound}
Consider $N\ge \lf(|s|+1\ri)M-1$, we get for any $a=0\dots, |s|-1$
and $b=1,\dots,M$
\begin{equation*}
\frac{|s|\lf(N-\lf(a\cdot M+b\ri)+1\ri)}{N-b+1}\ge 1.
\end{equation*}
\end{lem}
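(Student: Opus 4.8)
The plan is to prove this purely algebraic inequality by clearing the denominator and reducing it to the hypothesis $N \ge \lf(|s|+1\ri)M - 1$. Since $b \le M \le N$ we always have $N - b + 1 > 0$, so multiplying through by $N-b+1$ preserves the direction of the inequality, and the claim
$$\frac{|s|\lf(N - \lf(aM + b\ri) + 1\ri)}{N - b + 1} \ge 1$$
becomes equivalent to $|s|\lf(N - aM - b + 1\ri) \ge N - b + 1$. First I would expand both sides and collect terms; after cancelling the common $-b+1$ contributions this reduces to the cleaner equivalent statement
$$\lf(|s| - 1\ri)\lf(N - b + 1\ri) \ge |s|\cdot a \cdot M.$$

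Next I would bound the two sides using the admissible ranges of $a$ and $b$. Since $a \le |s| - 1$ we have $|s|\,a\,M \le |s|\lf(|s|-1\ri)M$, and since $b \le M$ we have $N - b + 1 \ge N - M + 1$, hence $\lf(|s|-1\ri)\lf(N-b+1\ri) \ge \lf(|s|-1\ri)\lf(N-M+1\ri)$. The decisive step is then to invoke the hypothesis $N \ge \lf(|s|+1\ri)M - 1$, which is precisely the rearranged statement $N - M + 1 \ge |s|M$. Substituting this in gives $\lf(|s|-1\ri)\lf(N-M+1\ri) \ge \lf(|s|-1\ri)|s|M$, and chaining the bounds yields
$$\lf(|s|-1\ri)\lf(N-b+1\ri) \ge |s|\lf(|s|-1\ri)M \ge |s|\,a\,M,$$
which is exactly the reduced inequality.

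Finally I would handle the degenerate case $|s| = 1$, though it is already subsumed: there $a$ can only equal $0$, so both sides of the reduced inequality $\lf(|s|-1\ri)\lf(N-b+1\ri) \ge |s|\,a\,M$ vanish and it holds with equality, while for $|s| \ge 2$ the chain above applies verbatim. I do not expect any genuine obstacle in this lemma—it is a short sequence of elementary inequalities—so the only thing that requires care is the bookkeeping in the algebraic reduction and confirming at each stage that the inequality direction is preserved (in particular that $N - b + 1 > 0$, which is why clearing the denominator is legitimate).
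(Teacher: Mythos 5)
Your proof is correct. It does, however, take a mechanically different route from the paper's. You clear the denominator (legitimate, since $N-b+1\ge N-M+1\ge |s|M>0$ under the hypothesis) and reduce the claim to the equivalent inequality $\lf(|s|-1\ri)\lf(N-b+1\ri)\ge |s|\,a\,M$, which you then verify by bounding each side at its extreme over the admissible ranges of $a$ and $b$ and invoking $N-M+1\ge |s|M$. The paper instead keeps the fraction intact: it first checks that equality holds at the corner case $a=|s|-1$, $b=M$, $N=\lf(|s|+1\ri)M-1$, and then propagates the bound to general $N$, $a$, $b$ via a monotonicity lemma for expressions of the form $\frac{x-c}{x-d}$ (increasing in $x$ when $c\ge d\ge 0$), applied twice. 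Both arguments implicitly identify the same tight case; yours is arguably shorter and avoids the auxiliary monotonicity fact, at the cost of an algebraic expansion, while the paper's makes the ``where is this tight'' structure more visible. Either is acceptable; there is no gap in your argument, and you correctly note that the $|s|=1$ case degenerates to $0\ge 0$.
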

\begin{proof}
The proof is in appendix
\ref{Append:DirectLowerBoundDiversityOrderBetaLowerBound}.
\end{proof}
Since $K\ge |s|$ and $N\ge \lf(K+1\ri)M-1$ we can assign the
inequality of Lemma
\ref{lem:DirectLowerBoundDiversityOrderBetaLowerBound} in
\eqref{eq:DirectLowerBoundDiversityOrder9} to get
\con{\begin{equation}
\sum_{a=0}^{|s|-1}\sum_{b=1}^{M}\lf(N-a\cdot M-b+1\ri)\alpha_{a\cdot
M+b}\ge \sum_{a=0}^{|s|-1}\sum_{b=1}^{M}\beta_{a\cdot M+b}=MN-l
\lf(l+1\ri)-\lf(N+M-1-2l\ri)r_{max}
\end{equation}}{\bal{}{&\sum_{a=0}^{|s|-1}\sum_{b=1}^{M}\lf(N-a\cdot M-b+1\ri)\alpha_{a\cdot
M+b}\ge&
\nn\\
&\sum_{a=0}^{|s|-1}\sum_{b=1}^{M}\beta_{a\cdot M+b}=&
\nn\\
&MN-l\lf(l+1\ri)-\lf(N+M-1-2l\ri)r_{max}&}}
where the equality results from
\eqref{eq:DirectLowerBoundDiversityOrder8}. This proves
\eqref{eq:DirectLowerBoundDiversityOrder7} and so proves that the
DMT of the first term in \eqref{eq:DirectLowerBoundDiversityOrder5}
is lower bounded by $d^{\ast,\lf(FC\ri)}_{M,N}\lf(r_{max}\ri)$.

Now let us show that the second term in
\eqref{eq:DirectLowerBoundDiversityOrder5} is also lower bounded by
$d^{\ast,\lf(FC\ri)}_{M,N}\lf(r_{max}\ri)$.
\con{\begin{equation*}
\int_{\xi_{\udl{i},\udl{j}}\in \ol{\mathcal{A}}}1\cdot
\rho^{-\sum_{i=1}^{N}\sum_{j=1}^{K\cdot
M}\xi_{i,j}}d\xi_{\udl{i},\udl{j}}\le\int_{\xi_{1,1}>K\cdot M\cdot
N}\rho^{-\xi_{1,1}}\dot{=}\rho^{-K\cdot M\cdot N}.
\end{equation*}}{\baln{}{&\int_{\xi_{\udl{i},\udl{j}}\in \ol{\mathcal{A}}}1\cdot
\rho^{-\sum_{i=1}^{N}\sum_{j=1}^{K\cdot
M}\xi_{i,j}}d\xi_{\udl{i},\udl{j}}\le\int_{\xi_{1,1}>K\cdot M\cdot N}\rho^{-\xi_{1,1}}\dot{=}&
\nn\\
&\rho^{-K\cdot M\cdot N}.&}}
Since $d^{\ast,\lf(FC\ri)}_{M,N}\lf(r_{max}\ri)\le K\cdot M\cdot N$
the DMT of the second term in
\eqref{eq:DirectLowerBoundDiversityOrder5} is also lower bounded by
$d^{\ast,\lf(FC\ri)}_{M,N}\lf(r_{max}\ri)$.

We have shown that for $l=\lfloor r_{max}\rfloor$ the DMT of
$E_{H}\lf(\ol{Pe}(\udl{\eta}^{\lf(s\ri)},\rho)\ri)$ is lower bounded
by $d^{\ast,\lf(FC\ri)}_{M,N}\lf(r_{max}\ri)=MN-l
\lf(l-1\ri)-\lf(M+N-1-2l\ri)r_{max}$ for any $s\subseteq
\lf\{1,\dots,K\ri\}$. Since
\begin{equation*}
\ol{Pe}(H_{\eff}^{(l),K},\rho)\le \sum_{s\subseteq
\lf\{1,\dots,K\ri\}}\ol{Pe}(\udl{\eta}^{\lf(s\ri)},\rho)
\end{equation*}
we get that the DMT of the $K$ sequences of IC's is also lower
bounded by $d^{\ast,\lf(FC\ri)}_{M,N}\lf(r_{max}\ri)$. This
concludes the proof.

\section{Proof of Lemma \ref{lem:DirectLowerBoundDiversityOrderEquivalentOptimization}}\label{Append:DirectLowerBoundDiversityOrderEquivalentOptimization}
Recall that the optimization problem
\con{\begin{equation}\label{eq:DirectLemReducingMinProblem0}
\min_{\xi_{\udl{i},\udl{j}}\in\mathcal{A}}\lf(|s|\cdot \lf(MN-l
\lf(l+1\ri)-\lf(N+M-1-2l\ri)r_{max}\ri)-\sum_{i=1}^{|s|M}A
\lf(i,l\ri)\ri)^{+}+\sum_{i=1}^{N}\sum_{j=1}^{K\cdot M}\xi_{i,j}
\end{equation}}{\bal{eq:DirectLemReducingMinProblem0}
{&\min_{\xi_{\udl{i},\udl{j}}\in\mathcal{A}}\lf(|s|\cdot \lf(MN-l
\lf(l+1\ri)-\ri.\ri.&
\nn\\
&\lf.\lf.\lf(N+M-1-2l\ri)r_{max}\ri)-\sum_{i=1}^{|s|M}A
\lf(i,l\ri)\ri)^{+}&
\nn\\
&+\sum_{i=1}^{N}\sum_{j=1}^{K\cdot M}\xi_{i,j}&}}
where
\begin{equation}\label{eq:DirectLemReducingMinProblem4}
A \lf(a\cdot M+b,l \ri)= \lf(N-b+1\ri)\min_{z\in
\lf\{aM+b,\dots,N\ri\}}\xi_{z,aM+b}
\end{equation}
for $b=1$ and $a=0,\dots,|s|-1$, and
\con{\begin{equation}\label{eq:DirectLemReducingMinProblem1}
A \lf(a\cdot M+b,l \ri)= \lf(N-b+1\ri)\min_{z\in
\lf\{aM+b,\dots,N\ri\}}\xi_{z,aM+b}+\sum_{i=1}^{\min
\lf(M-l-1,b-1\ri)}\min_ {z\in\lf\{aM+b-i,\dots,N\ri\}}\xi_{z,aM+b}
\end{equation}}{\bal{eq:DirectLemReducingMinProblem1}
{&A \lf(a\cdot M+b,l \ri)=&
\nn\\
&\lf(N-b+1\ri)\min_{z\in
\lf\{aM+b,\dots,N\ri\}}\xi_{z,aM+b}+&
\nn\\
&\sum_{i=1}^{\min
\lf(M-l-1,b-1\ri)}\min_ {z\in\lf\{aM+b-i,\dots,N\ri\}}\xi_{z,aM+b}&}}
for $b=2,\dots,M$ and $a=0,\dots, |s|-1$. For $|s|\cdot M+1\le j\le
K\cdot M$ and $1\le i\le N$, we get that $\xi_{i,j}$ occurs only in
the term $\sum_{i=1}^{N}\sum_{j=1}^{K\cdot M}\xi_{i,j}$ in
\eqref{eq:DirectLemReducingMinProblem0}, where $\xi_{i,j}\ge 0$.
Thus, the minimization is obtained for
\begin{equation}\label{eq:DirectLemReducingMinProblem2}
\xi_{i,j}=0\qquad |s|\cdot M+1\le j\le K\cdot M,\ \ 1\le i\le N.
\end{equation}
Therefore, we can rewrite the optimization problem
\con{\begin{equation}\label{eq:DirectLemReducingMinProblem3}
\min_{\xi_{\udl{i},\udl{j}}\in\mathcal{A}}\lf(|s|\cdot \lf(MN-l
\lf(l+1\ri)-\lf(N+M-1-2l\ri)r_{max}\ri)-\sum_{i=1}^{|s|M}A
\lf(i,l\ri)\ri)^{+}+\sum_{i=1}^{N}\sum_{j=1}^{|s|\cdot M}\xi_{i,j}.
\end{equation}}{\bal{eq:DirectLemReducingMinProblem3}
{&\min_{\xi_{\udl{i},\udl{j}}\in\mathcal{A}}\lf(|s|\cdot \lf(MN-l
\lf(l+1\ri)-\lf(N+M-1-2l\ri)\ri.\ri.&
\nn\\
&\lf.\lf.r_{max}\ri)-\sum_{i=1}^{|s|M}A
\lf(i,l\ri)\ri)^{+}+\sum_{i=1}^{N}\sum_{j=1}^{|s|\cdot M}\xi_{i,j}.&}}

We now wish to show that $\xi_{i,j}=0$ for $j=1,\dots, |s|\cdot M$
and $i=1,\dots,j-1$. Essentially, we show for $i<j$ that reducing
$\xi_{i,j}$ affects \eqref{eq:DirectLemReducingMinProblem3} more
than $-\min_{z\in \lf\{i,\dots,N\ri\}}\xi_{z,j}$ does. First let us
observe $\xi_{i,a\cdot M+b}$ for $i=1,\dots,a\cdot M+b-\min
\lf(M-l-1,b-1\ri)-1$, where $a=0,\dots, |s|-1$ , $b=1,\dots,M$. Note
that this values do not have any representation in $A \lf(a\cdot
M+b,l\ri)$. Therefore, they do not affect $\lf(\cdot\ri)^{+}$ and
only affect $\sum_{i=1}^{N}\sum_{j=1}^{|s|\cdot M}\xi_{i,j}$. Thus,
in order to obtain the minimum we must choose
\begin{equation*}
\xi_{i,a\cdot M+b}=0\quad i=1,\dots,a\cdot M+b-\min
\lf(M-l-1,b-1\ri)-1
\end{equation*}
for any $a=0,\dots, |s|-1$ and $b=1,\dots,M$. Note that the function
in \eqref{eq:DirectLemReducingMinProblem3} is continues. In the case
$\lf(\cdot\ri)^{+}=0$ the function in
\eqref{eq:DirectLemReducingMinProblem3} can be written as
\begin{equation}\label{eq:DirectLemReducingMinProblem4}
\sum_{a=0}^{|s|-1}\sum_{b=1}^{M}\sum_{i=a\cdot M+b-\min
\lf(M-l-1,b-1\ri)}^{N}\xi_{i,a\cdot M+b}
\end{equation}
In this case as long as $\lf(\cdot\ri)^{+}=0$ reducing
$\xi_{i,a\cdot M+b}$ for $a\cdot M+b- \min \lf(M-l-1,b-1\ri)\le i\le
a\cdot M+b-1$ and $a=0\dots,|s|-1$, $b=2,\dots,M$ also reduces
\eqref{eq:DirectLemReducingMinProblem4}. For $\lf(\cdot\ri)^{+}>0$
\eqref{eq:DirectLemReducingMinProblem3} can be written as
\con{\begin{align}\label{eq:DirectLemReducingMinProblem5}
|s|\cdot \lf(MN-l
\lf(l+1\ri)-\lf(N+M-1-2l\ri)r_{max}\ri)+\sum_{a=0}^{|s|-1}\sum_{b=2}^{M}\sum_{i=1}^{\min
\lf(M-l-1,b-1\ri)} \lf(\xi_{a\cdot M+b-i,a\cdot M+b}-\min_{z\in
\lf\{a\cdot M+b-i,\dots,N\ri\}}\xi_{z,a\cdot M+b}\ri)\nonumber\\
+\sum_{a=0}^{|s|-1}\sum_{b=1}^{M}\lf(\sum_{z=a\cdot
M+b}^{N}\xi_{z,a\cdot M+b}-\lf(N-b+1\ri)\min_{z\in \lf\{a\cdot
M+b,\dots,N\ri\}}\xi_{z,a\cdot M+b}\ri).
\end{align}}{\bal{eq:DirectLemReducingMinProblem5}
{&|s|\cdot \lf(MN-l
\lf(l+1\ri)-\lf(N+M-1-2l\ri)r_{max}\ri)+&
\nn\\
&\sum_{a=0}^{|s|-1}\sum_{b=2}^{M}\sum_{i=1}^{\min
\lf(M-l-1,b-1\ri)} \lf(\xi_{a\cdot M+b-i,a\cdot M+b}-\ri.&
\nn\\
&\lf.\min_{z\in
\lf\{a\cdot M+b-i,\dots,N\ri\}}\xi_{z,a\cdot M+b}\ri)&
\nn\\
&+\sum_{a=0}^{|s|-1}\sum_{b=1}^{M}\lf(\sum_{z=a\cdot
M+b}^{N}\xi_{z,a\cdot M+b}-\ri.&
\nn\\
&\lf.\lf(N-b+1\ri)\min_{z\in \lf\{a\cdot
M+b,\dots,N\ri\}}\xi_{z,a\cdot M+b}\ri).&}}
Since $\xi_{a\cdot M+b-i,a\cdot M+b}\ge \min_{z\in \lf\{a\cdot
M+b-i,\dots,N\ri\}}\xi_{z,a\cdot M+b}$, reducing $\xi_{a\cdot
M+b-i,a\cdot M+b}$ also reduces
\eqref{eq:DirectLemReducingMinProblem5}. Since the function is
continues, considering these two cases is sufficient in order to
state that the minimum is obtained when
\begin{equation}\label{eq:DirectLemReducingMinProblem6}
\xi_{i,j}=0\quad j=1,\dots,|s|\cdot M, \ i=1\dots,j-1.
\end{equation}
This is due to the fact that for any value of $\xi_{z,a\cdot M+b}\ge
0$, $a=0,\dots,|s|-1$, $b=1,\dots,M$ and $z=a\cdot M+b,\dots, N$ the
terms in
\eqref{eq:DirectLemReducingMinProblem4},\eqref{eq:DirectLemReducingMinProblem5}
are reduced when decreasing $\lf\{\xi_{a\cdot M+b-i,a\cdot
M+b}\ri\}_{i=1}^{\min \lf(M-l-1,b-1\ri)}$, and also since the
function is continues. Note that from
\eqref{eq:DirectLemReducingMinProblem5} we can see that decreasing
$\sum_{z=a\cdot M+b}^{N}\xi_{z,a\cdot M+b}$ does not necessarily
decrease the function. This is due to the fact that $N-b+1\ge
N-\lf(a\cdot M+b\ri)+1$, and so the contribution of
$\lf(N-b+1\ri)\min_{z\in \lf\{a\cdot M+b,\dots,N\ri\}}\xi_{z,a\cdot
M+b}$ may be more significant than $\sum_{z=a\cdot
M+b}^{N}\xi_{z,a\cdot M+b}$.

Based on \eqref{eq:DirectLemReducingMinProblem6} we can rewrite the
function in the following manner
\con{\begin{equation}\label{eq:DirectLemReducingMinProblem7}
\lf(|s|\cdot \lf(MN-l
\lf(l+1\ri)-\lf(N+M-1-2l\ri)r_{max}\ri)-\sum_{a=0}^{|s|-1}\sum_{b=1}^{M}\lf(N-b+1\ri)\min_{z\in
\lf\{a\cdot M+b,\dots,N\ri\}}\xi_{z,a\cdot
M+b}\ri)^{+}+\sum_{a=0}^{|s|-1}\sum_{b=1}^{M}\sum_{z=a\cdot
M+b}^{N}\xi_{z,a\cdot M+b}.
\end{equation}}{\bal{eq:DirectLemReducingMinProblem7}
{&\lf(|s|\cdot \lf(MN-l
\lf(l+1\ri)-\lf(N+M-1-2l\ri)r_{max}\ri)-\ri.&
\nn\\
&\lf.\sum_{a=0}^{|s|-1}\sum_{b=1}^{M}\lf(N-b+1\ri)\min_{z\in
\lf\{a\cdot M+b,\dots,N\ri\}}\xi_{z,a\cdot
M+b}\ri)^{+}&
\nn\\
&+\sum_{a=0}^{|s|-1}\sum_{b=1}^{M}\sum_{z=a\cdot
M+b}^{N}\xi_{z,a\cdot M+b}.&}}
From \eqref{eq:DirectLemReducingMinProblem7} we can see that the
minimum is obtained when
\begin{equation}
\xi_{z,a\cdot M+b}=\alpha_{a\cdot M+b}\quad a\cdot M+b\le z\le N
\end{equation}
for $a=0,\dots,|s|-1$, $b=1,\dots, M$. This is due to the fact that
when the values are not equal, reducing the values to the minimal
value will reduce $\sum_{z=a\cdot M+b}^{N}\xi_{z,a\cdot M+b}$ while
not changing $\min_{z\in \lf\{a\cdot M+b,\dots,N\ri\}}\xi_{z,a\cdot
M+b}$. Therefore, we can write
\eqref{eq:DirectLemReducingMinProblem7} as follows
\con{\begin{equation}
\lf(|s|\cdot \lf(MN-l
\lf(l+1\ri)-\lf(N+M-1-2l\ri)r_{max}\ri)-\sum_{a=0}^{|s|-1}\sum_{b=1}^{M}\lf(N-b+1\ri)\alpha_{a\cdot
M+b}\ri)^{+}+\sum_{a=0}^{|s|-1}\sum_{b=1}^{M} \lf(N-\lf(a\cdot
M+b\ri)+1\ri)\alpha_{a\cdot M+b}
\end{equation}}{\bal{}{&\lf(|s|\cdot \lf(MN-l
\lf(l+1\ri)-\lf(N+M-1-2l\ri)r_{max}\ri)\ri.&
\nn\\
&\lf.-\sum_{a=0}^{|s|-1}\sum_{b=1}^{M}\lf(N-b+1\ri)\alpha_{a\cdot
M+b}\ri)^{+}+&
\nn\\
&\sum_{a=0}^{|s|-1}\sum_{b=1}^{M} \lf(N-\lf(a\cdot
M+b\ri)+1\ri)\alpha_{a\cdot M+b}&}}
where $0\le \alpha_{i}\le K\cdot M\cdot N$, $i=1,\dots,|s|\cdot M$.

We wish to show that the minimum is obtained for
\con{\begin{equation*}
\sum_{a=0}^{|s|-1}\sum_{b=1}^{M}\lf(N-b+1\ri)\alpha_{a\cdot
M+b}=|s|\cdot \lf(MN-l \lf(l+1\ri)-\lf(N+M-1-2l\ri)r_{max}\ri).
\end{equation*}}{\baln{}{&\sum_{a=0}^{|s|-1}\sum_{b=1}^{M}\lf(N-b+1\ri)\alpha_{a\cdot
M+b}=&
\nn\\
&|s|\cdot \lf(MN-l \lf(l+1\ri)-\lf(N+M-1-2l\ri)r_{max}\ri).&
}}
Again, note that the function is continues. For
$\lf(\cdot\ri)^{+}=0$ we get
\begin{equation}\label{eq:DirectLemReducingMinProblem8}
\sum_{a=0}^{|s|-1}\sum_{b=1}^{M} \lf(N-\lf(a\cdot
M+b\ri)+1\ri)\alpha_{a\cdot M+b}.
\end{equation}
This is attained for
$\sum_{a=0}^{|s|-1}\sum_{b=1}^{M}\lf(N-b+1\ri)\alpha_{a\cdot
M+b}\ge|s|\cdot \lf(MN-l \lf(l+1\ri)-\lf(N+M-1-2l\ri)r_{max}\ri)$.
Evidently for this case the minimal values occur at
$\sum_{a=0}^{|s|-1}\sum_{b=1}^{M}\lf(N-b+1\ri)\alpha_{a\cdot
M+b}=|s|\cdot \lf(MN-l \lf(l+1\ri)-\lf(N+M-1-2l\ri)r_{max}\ri)$. On
the other hand for $\lf(\cdot\ri)^{+}>0$ we get
\con{\begin{equation}\label{eq:DirectLemReducingMinProblem9}
|s|\cdot \lf(MN-l
\lf(l+1\ri)-\lf(N+M-1-2l\ri)r_{max}\ri)-\sum_{a=0}^{|s|-1}\sum_{b=1}^{M}\lf(a\cdot
M\ri)\alpha_{a\cdot M+b}.
\end{equation}}{\bal{eq:DirectLemReducingMinProblem9}
{&|s|\cdot \lf(MN-l
\lf(l+1\ri)-\lf(N+M-1-2l\ri)r_{max}\ri)-&
\nn\\
&\sum_{a=0}^{|s|-1}\sum_{b=1}^{M}\lf(a\cdot
M\ri)\alpha_{a\cdot M+b}.&}}
Hence increasing $\sum_{a=0}^{|s|-1}\sum_{b=1}^{M}\lf(a\cdot
M\ri)\alpha_{a\cdot M+b}$ decreases the function as long as
$\lf(\cdot\ri)^{+}>0$ which means
\con{\begin{equation*}
\sum_{a=0}^{|s|-1}\sum_{b=1}^{M}\lf(N-b+1\ri)\alpha_{a\cdot
M+b}<|s|\cdot \lf(MN-l\lf(l+1\ri)-\lf(N+M-1-2l\ri)r_{max}\ri).
\end{equation*}}{\baln{}{&\sum_{a=0}^{|s|-1}\sum_{b=1}^{M}\lf(N-b+1\ri)\alpha_{a\cdot
M+b}<&
\nn\\
&|s|\cdot \lf(MN-l\lf(l+1\ri)-\lf(N+M-1-2l\ri)r_{max}\ri).&}}
Hence, based on the fact that the function is continues we get again
that for this case the minimal values occur at
\con{\begin{equation*}
\sum_{a=0}^{|s|-1}\sum_{b=1}^{M}\lf(N-b+1\ri)\alpha_{a\cdot
M+b}=|s|\cdot \lf(MN-l \lf(l+1\ri)-\lf(N+M-1-2l\ri)r_{max}\ri).
\end{equation*}}{\baln{}{&\sum_{a=0}^{|s|-1}\sum_{b=1}^{M}\lf(N-b+1\ri)\alpha_{a\cdot
M+b}=&
\nn\\
&|s|\cdot \lf(MN-l \lf(l+1\ri)-\lf(N+M-1-2l\ri)r_{max}\ri).&}}

The event
$\sum_{a=0}^{|s|-1}\sum_{b=1}^{M}\lf(N-b+1\ri)\alpha_{a\cdot
M+b}=|s|\cdot \lf(MN-l \lf(l+1\ri)-\lf(N+M-1-2l\ri)r_{max}\ri)$,
where $\alpha_{i}\ge 0$, $i=1,\dots,|s|\cdot M$, is within the range
$0\le \alpha_{i}\le K\cdot M\cdot N$, $i=1,\dots,|s|\cdot M$. This
is because in order to fulfil the equality we get
\con{\begin{equation*}
\max \lf(\alpha_{1},\dots,\alpha_{|s|\cdot M}\ri)\le \frac{|s|\cdot
\lf(MN-l \lf(l+1\ri)-\lf(N+M-1-2l\ri)r_{max}\ri)}{N-b+1}\le K\cdot
M\cdot N.
\end{equation*}}{\baln{}{&\max \lf(\alpha_{1},\dots,\alpha_{|s|\cdot M}\ri)\le&
\nn\\
&\frac{|s|\cdot\lf(MN-l \lf(l+1\ri)-\lf(N+M-1-2l\ri)r_{max}\ri)}{N-b+1}\le K\cdot
M\cdot N.&}}
Therefore, the minimization problem solution is obtained for
\begin{equation*}
\min_{\udl{\alpha}\in\mathcal{A}^{'}}\sum_{a=0}^{|s|-1}\sum_{b=1}^{M}
\lf(N-\lf(a\cdot M+b\ri)+1\ri)\alpha_{a\cdot M+b}
\end{equation*}
where the set $\mathcal{A}^{'}$ is defined by the following two
conditions: $0\le \alpha_{i}\le K\cdot M\cdot N$,
$i=1,\dots,|s|\cdot M$, and
\con{\begin{equation*}
\sum_{a=0}^{|s|-1}\sum_{b=1}^{M}\lf(N-b+1\ri)\alpha_{a\cdot
M+b}=|s|\cdot \lf(MN-l \lf(l+1\ri)-\lf(N+M-1-2l\ri)r_{max}\ri).
\end{equation*}}{\baln{}{&\sum_{a=0}^{|s|-1}\sum_{b=1}^{M}\lf(N-b+1\ri)\alpha_{a\cdot
M+b}=&
\nn\\
&|s|\cdot \lf(MN-l \lf(l+1\ri)-\lf(N+M-1-2l\ri)r_{max}\ri).&}}

\section{Proof of Lemma \ref{lem:DirectLowerBoundDiversityOrderBetaLowerBound}}\label{Append:DirectLowerBoundDiversityOrderBetaLowerBound}
We begin by analyzing the case $a=|s|-1$ and $b=M$. For this case
let us consider $N=\lf(|s|+1\ri)M-1$. In this case we get
\begin{equation}\label{eq:DirectLemLowerBoundDiversityOrderBetaLowerBound1}
\frac{|s|\lf(N-|s|\cdot M+1\ri)}{N-M+1}=\frac{|s|\lf(M\ri)}{|s|M}=
1.
\end{equation}
Note that for $c\ge d\ge 0$ and $x_{2}>x_{1}\ge c$ we get
\begin{equation}\label{eq:DirectLemLowerBoundDiversityOrderBetaLowerBound2}
\frac{x_{2}-c}{x_{2}-d}\ge\frac{x_{1}-c}{x_{1}-d}.
\end{equation}
Hence, based on
\eqref{eq:DirectLemLowerBoundDiversityOrderBetaLowerBound2},
\eqref{eq:DirectLemLowerBoundDiversityOrderBetaLowerBound1}, we get
for $N>\lf(|s|+1\ri)M-1$
\begin{equation}\label{eq:DirectLemLowerBoundDiversityOrderBetaLowerBound3}
\frac{|s|\lf(N-\lf(|s|\cdot M-1\ri)\ri)}{N-\lf(M-1\ri)}\ge
\frac{|s|\lf(M\ri)}{|s|M}= 1.
\end{equation}

So far we have proved the lemma for $a=|s|-1$, $b=M$ and $N\ge
\lf(|s|+1\ri)M-1$. For the general case we consider
$\frac{|s|\lf(N-\lf(a\cdot M+b-1\ri)\ri)}{N-\lf(b-1\ri)}$. In this
case we get
\con{\begin{equation}\label{eq:DirectLemLowerBoundDiversityOrderBetaLowerBound4}
\frac{|s|\lf(N-\lf(a\cdot
M+b-1\ri)\ri)}{N-\lf(b-1\ri)}=|s|\frac{\lf(N+|s|M-a\cdot
M-b\ri)-\lf(|s|M-1\ri)}{\lf(N+M-b\ri)-\lf(M-1\ri)}\ge
|s|\frac{\lf(N+|s|M-a\cdot M-b\ri)-\lf(|s|M-1\ri)}{\lf(N+|s|M-a\cdot
M-b\ri)-\lf(M-1\ri)}
\end{equation}}{\bal{eq:DirectLemLowerBoundDiversityOrderBetaLowerBound4}
{&\frac{|s|\lf(N-\lf(a\cdot
M+b-1\ri)\ri)}{N-\lf(b-1\ri)}=&
\nn\\
&|s|\frac{\lf(N+|s|M-a\cdot
M-b\ri)-\lf(|s|M-1\ri)}{\lf(N+M-b\ri)-\lf(M-1\ri)}\ge&
\nn\\
&|s|\frac{\lf(N+|s|M-a\cdot M-b\ri)-\lf(|s|M-1\ri)}{\lf(N+|s|M-a\cdot
M-b\ri)-\lf(M-1\ri)}&}}
where the inequality results from the fact that $M-b\le |s|M-a\cdot
M-b$. From
\eqref{eq:DirectLemLowerBoundDiversityOrderBetaLowerBound2} and
\eqref{eq:DirectLemLowerBoundDiversityOrderBetaLowerBound3} we get
that
\con{\begin{equation}\label{eq:DirectLemLowerBoundDiversityOrderBetaLowerBound5}
|s|\frac{\lf(N+|s|M-a\cdot M-b\ri)-\lf(|s|M-1\ri)}{\lf(N+|s|M-a\cdot
M-b\ri)-\lf(M-1\ri)}\ge |s|\frac{N-\lf(|s|M-1\ri)}{N-\lf(M-1\ri)}\ge
1.
\end{equation}}{\bal{eq:DirectLemLowerBoundDiversityOrderBetaLowerBound5}
{&|s|\frac{\lf(N+|s|M-a\cdot M-b\ri)-\lf(|s|M-1\ri)}{\lf(N+|s|M-a\cdot
M-b\ri)-\lf(M-1\ri)}\ge&
\nn\\
&|s|\frac{N-\lf(|s|M-1\ri)}{N-\lf(M-1\ri)}\ge
1.&}}
From \eqref{eq:DirectLemLowerBoundDiversityOrderBetaLowerBound4},
\eqref{eq:DirectLemLowerBoundDiversityOrderBetaLowerBound5} we get
the proof of the lemma also for any $a=0,\dots,|s|-1$ and
$b=1,\dots,M$. This concludes the proof.

\section{Proof of Theorem \ref{Th:LowerBoundDiversityOrderLattices}}\label{Append:LatticesDiversityOrder}
We prove that there exists $K$ sequences of $2\cdot D_{l}\cdot
T_{l}$-real dimensional lattices (as a function of $\rho$) that
attains the optimal DMT for $N\ge \lf(K+1\ri)M-1$. We rely on the
extension of the \emph{Minkowski-Hlawaka} Theorem to the
multiple-access channel presented in \cite[Theorem
2]{NamElGamalOptimalDMT2007}. We upper bound the error probability
of the ensemble of lattices for each channel realization, and
average the upper bound over all channel realizations to obtain the
optimal DMT.

We consider $K$ ensembles of $2\cdot D_{l}\cdot T_{l}$-real
dimensional lattices, one for each user, transmitted using
$G_{l}^{\lf(1,\dots,K\ri)}$ defined in
\ref{subsec:TheTransmissionScheme}. For user $i$, the first
$D_{l}\cdot T_{l}$ dimensions of the lattice are spread on the real
part of the non-zero entries of $G_{l}^{\lf(i\ri)}$, and the other
$D_{l}\cdot T_{l}$ dimensions of the lattice on the imaginary part
of the non-zero entries of $G_{l}^{\lf(i\ri)}$. The volume of the
Voronoi region of the lattice of user $i$ equals
$V_{f}^{\lf(i\ri)}=\lf(\gamma_{tr}^{\lf(i\ri)}\ri)^{-1}=\rho^{-r_{i}T_{l}}$,
i.e., multiplexing gain $r_{i}$. Since the users are distributed, the
effective lattice at the transmitter can be written as
$\Lambda_{tr}=\Lambda_{1}\times\Lambda_{2}\times\dots\times\Lambda_{K}$,
where $\Lambda_{i}$ is the lattice transmitted by user $i$. At the
receiver the channel induces a new lattice
$H_{eff}^{(l),K}\cdot\udl{x}^{'}$, where
$\udl{x}^{'}\in\Lambda_{tr}$.  For lattices with regular lattice
decoding, the error probability is equal among all codewords. Hence,
it is sufficient to analyze the lattice's zero codeword error
probability. Without loss of generality let us assume that the
receiver rotates $\udl{y}_{\ex}$ such that the channel can be
rewritten as
\begin{equation}
\udl{y}_{\ex}=B\cdot\udl{x}+\tilde{\udl{n}}_{\mathrm{ex}}
\end{equation}
where
$B^{\dagger}B=H_{eff}^{\lf(l\ri),K\dagger}H_{eff}^{\lf(l\ri),K}$,
and $\tilde{\udl{n}}_{\ex}\sim
\CN(\underline{0},\rho^{-1}\cdot\frac{2}{2\pi e}\cdot I_{K\cdot
D_{l}\cdot T_{l}})$.

We define the indication function of a $2\cdot K\cdot D_{l}\cdot
T_{l}$ dimensional ball with radius $2R$ centered around zero by
$$I_{Ball(2R)}(\udl{x})=\left\{ \begin{array}{ll}
1, & \lv\udl{x}\rv\le 2R\\
0, & else
\end{array}\right..$$
In addition let us define the continues function of bounded support
$f_{rc}(\udl{x})=I_{{Ball(2R_{\eff})}}(\udl{x})\cdot Pr(\rv
\udl{\tilde{n}}_{\ex}\lv>\rv \udl{x}-\udl{\tilde{n}}_{\ex}\lv)$.
Based on \eqref{eq:DirectUpperBoundErrorProb2} we can state that for
each lattice induced at the receiver, $\Lambda_{\rc}$, the lattice
zero codeword error probability is upper bounded by
\begin{equation}\label{eq:LatticeUpperBoundErrorProb}
\sum_{\udl{x}\in\Lambda_{\rc},\udl{x}\neq
0}f_{\mathrm{rc}}(\udl{x})+Pr(\lVert\underline{\tilde{n}}_{\mathrm{ex}}\rVert\ge
R_{\eff}).
\end{equation}
where
$\frac{R_{\eff}^{2}}{2K_{l}T_{l}\sigma^{2}}=\mu_{rc}=\rho^{1-\frac{\sum_{i=1}^{K}r_{i}}{K\cdot
D_{l}}}\cdot |H_{eff}^{\lf(l\ri),K\dagger}\cdot
H_{eff}^{\lf(l\ri),K}|^{\frac{1}{K\cdot D_{l}}}$. For regular
lattice decoding we can equivalently consider
\begin{equation}\label{eq:TheEffectiveChannelModelWithTheTransmitter}
\udl{y}_{\ex}^{'}=B^{-1}\cdot\udl{y}_{\ex}=\udl{x}+\hat{\udl{n}}_{\ex}.
\end{equation}
where $\udl{\hat{n}}_{\ex}\sim
\CN\big(0,(H_{\eff}^{(l),K\dagger}H_{\eff}^{(l),K})^{-1}\big)$, i.e.,
the lattice at the receiver remains $\Lambda_{tr}$ and the affect of
the channel realization is passed on to the additive noise. In
addition let us denote an indication function over an ellipse
centered around zero by
$$I_{ellipse(B,2R)}(\udl{x})=\left\{ \begin{array}{ll}
1, & \rv B\cdot\udl{x}\lv\le 2R\\
0, & else
\end{array}\right.,$$
By defining the continues function
$g_{rc}(\udl{x})=I_{{ellipse(B,2R_{\eff})}}(\udl{x})\cdot Pr\big(\rv
B\udl{\hat{n}}_{\ex}\lv> \rv B(\udl{x}-\udl{\hat{n}}_{\ex})\lv\big)$
we get the following upper bound for the error probability
\begin{equation}\label{eq:LatticeTransmiterEnsembleUpperBoundErrorProbCrudeForm}
\sum_{\udl{x}\in\Lambda_{\tr},\udl{x}\neq
0}g_{rc}(\udl{x})+Pr(\lVert
B\cdot\underline{\hat{n}}_{\mathrm{ex}}\rVert\ge R_{\eff})
\end{equation}
that equals to the upper bound in
\eqref{eq:LatticeUpperBoundErrorProb}. In addition, since
$f_{rc}\lf(B\cdot\udl{x}\ri)=g_{rc}\lf(\udl{x}\ri)$, and based on
the fact that $H_{eff}^{\lf(l\ri),K}$ is a block diagonal matrix we
get
\con{\begin{equation}\label{eq:LaticeEnsembleEqualityBetweenExpressions}
|H_{eff}^{\lf(l\ri),\lf(S\ri)\dagger}H_{eff}^{\lf(l\ri),\lf(S\ri)}|^{-1}\cdot\int_{\udl{x}\in\mathbb{R}^{2\cdot
|S|\cdot D_{l}\cdot
T_{l}}}f_{rc}\lf(\udl{x}^{\lf(S\ri)}\ri)d\udl{x}^{\lf(S\ri)}=\int_{\udl{x}\in\mathbb{R}^{2\cdot
|S|\cdot D_{l}\cdot
T_{l}}}g_{rc}\lf(\udl{x}^{\lf(S\ri)}\ri)d\udl{x}^{\lf(S\ri)}\quad\forall
S\subseteq \lf\{1,\dots,K\ri\}
\end{equation}}{\bal{eq:LaticeEnsembleEqualityBetweenExpressions}
{&|H_{eff}^{\lf(l\ri),\lf(S\ri)\dagger}H_{eff}^{\lf(l\ri),\lf(S\ri)}|^{-1}\cdot\int_{\udl{x}\in\mathbb{R}^{2\cdot
|S|\cdot D_{l}\cdot
T_{l}}}f_{rc}\lf(\udl{x}^{\lf(S\ri)}\ri)d\udl{x}^{\lf(S\ri)}=&
\nn\\
&\int_{\udl{x}\in\mathbb{R}^{2\cdot
|S|\cdot D_{l}\cdot
T_{l}}}g_{rc}\lf(\udl{x}^{\lf(S\ri)}\ri)d\udl{x}^{\lf(S\ri)}\quad\forall
S\subseteq \lf\{1,\dots,K\ri\}&}}
where $\udl{x}^{\lf(S\ri)}$ equals zero in the entries corresponding
to $\lf\{1,\dots,K\ri\}\setminus S$ and the other entries are in
$\mathbb{R}^{2\cdot |S|\cdot D_{l}\cdot T_{l}}$.

In \cite[Theorem 2]{NamElGamalOptimalDMT2007} Nam and El Gamal
extended the Minkowski-Hlawka theorem to the multiple-access channel
by using Loeliger ensembles of lattices
\cite{LoeligerAveragingBounds} for each user. From this theorem we
get that for a certain Riemann integrable function of bounded
support $f \lf(\udl{x}\ri)$
\con{\begin{equation}\label{eq:LatticeTransmiterEnsembleMinkHlwForMAC}
E_{\Lambda_{tr}} \lf(\sum_{\udl{x}\in\Lambda_{tr},\udl{x}\neq 0}f
\lf(\udl{x}\ri)\ri)=\sum_{S\subseteq \lf\{1,\dots,K\ri\}}\prod_{s\in
S}\frac{1}{V_{f}^{\lf(s\ri)}}\int_{\udl{x}^{\lf(S\ri)}\in\mathbb{R}^{2\cdot
|S|\cdot D_{l}\cdot T_{l}}}f \lf(\udl{x}^{\lf(S\ri)}\ri)d
\udl{x}^{\lf(S\ri)}.
\end{equation}}{\bal{eq:LatticeTransmiterEnsembleMinkHlwForMAC}
{&E_{\Lambda_{tr}} \lf(\sum_{\udl{x}\in\Lambda_{tr},\udl{x}\neq 0}f
\lf(\udl{x}\ri)\ri)=\sum_{S\subseteq \lf\{1,\dots,K\ri\}}\prod_{s\in
S}\times&
\nn\\
&\frac{1}{V_{f}^{\lf(s\ri)}}\int_{\udl{x}^{\lf(S\ri)}\in\mathbb{R}^{2\cdot
|S|\cdot D_{l}\cdot T_{l}}}f \lf(\udl{x}^{\lf(S\ri)}\ri)d
\udl{x}^{\lf(S\ri)}.&}}
For each channel realization $B$, the function
$g_{rc}\lf(\udl{x}\ri)$ is bounded, and so by averaging over the
Loeliger ensembles for the multiple-access channel, we get based on
\eqref{eq:LatticeTransmiterEnsembleUpperBoundErrorProbCrudeForm},
\eqref{eq:LatticeTransmiterEnsembleMinkHlwForMAC} that the upper
bound on the error probability using regular lattice decoding is
\con{\begin{equation}\label{eq:LatticeTransmiterEnsembleMinkHlwForMACElipseErrorUpBound}
\sum_{S\subseteq \lf\{1,\dots,K\ri\}}\prod_{s\in
S}\frac{1}{V_{f}^{\lf(s\ri)}}\int_{\udl{x}^{\lf(S\ri)}\in\mathbb{R}^{2\cdot
|S|\cdot D_{l}\cdot T_{l}}}g_{rc} \lf(\udl{x}^{\lf(S\ri)}\ri)d
\udl{x}^{\lf(s\ri)}+Pr(\lVert
B\cdot\underline{\hat{n}}_{\mathrm{ex}}\rVert\ge R_{\eff}).
\end{equation}}{\bal{eq:LatticeTransmiterEnsembleMinkHlwForMACElipseErrorUpBound}
{&\sum_{S\subseteq \lf\{1,\dots,K\ri\}}\prod_{s\in
S}\frac{1}{V_{f}^{\lf(s\ri)}}\int_{\udl{x}^{\lf(S\ri)}\in\mathbb{R}^{2\cdot
|S|\cdot D_{l}\cdot T_{l}}}g_{rc} \lf(\udl{x}^{\lf(S\ri)}\ri)d
\udl{x}^{\lf(s\ri)}&
\nn\\
&+Pr(\lVert
B\cdot\underline{\hat{n}}_{\mathrm{ex}}\rVert\ge R_{\eff}).&}}
By assigning the relation of
\eqref{eq:LaticeEnsembleEqualityBetweenExpressions} in \eqref{eq:LatticeTransmiterEnsembleMinkHlwForMACElipseErrorUpBound} we get
\con{\begin{equation}\label{eq:LaticeEnsembleEqualityIntegralFrc}
\sum_{S\subseteq \lf\{1,\dots,K\ri\}}\rho^{T_{l}\sum_{s\in
S}r_{s}} \cdot |H_{eff}^{\lf(l\ri),\lf(S\ri)\dagger}H_{eff}^{\lf(l\ri),\lf(S\ri)}|^{-1}\int_{\udl{x}^{\lf(S\ri)}\in\mathbb{R}^{2\cdot
|S|\cdot D_{l}\cdot T_{l}}}f_{rc} \lf(\udl{x}^{\lf(S\ri)}\ri)d
\udl{x}^{\lf(s\ri)}+Pr(\lVert
\underline{\tilde{n}}_{\mathrm{ex}}\rVert\ge R_{\eff}).
\end{equation}}{\bal{eq:LaticeEnsembleEqualityIntegralFrc}
{&\sum_{S\subseteq \lf\{1,\dots,K\ri\}}\rho^{T_{l}\sum_{s\in
S}r_{s}}\cdot |H_{eff}^{\lf(l\ri),\lf(S\ri)\dagger}H_{eff}^{\lf(l\ri),\lf(S\ri)}|^{-1}\times &
\nn\\
&\int_{\udl{x}^{\lf(S\ri)}\in\mathbb{R}^{2\cdot
|S|\cdot D_{l}\cdot T_{l}}}f_{rc} \lf(\udl{x}^{\lf(S\ri)}\ri)d
\udl{x}^{\lf(s\ri)}&
\nn\\
&+Pr(\lVert
\underline{\tilde{n}}_{\mathrm{ex}}\rVert\ge R_{\eff}).&}}
Based on the bounds derived in \cite[Theorem
3]{YonaFederICOptimalDMT}, we can upper bound the integral of the
first term in \eqref{eq:LaticeEnsembleEqualityIntegralFrc} by
\con{\begin{equation*}
\sum_{S\subseteq \lf\{1,\dots,K\ri\}}\frac{4^{|S|\cdot D_{l}\cdot
T_{l}}}{2e^{|S|\cdot D_{l}\cdot T_{l}}}\rho^{-T_{l}\lf(|S|\cdot
D_{l}-\sum_{s\in
S}r_{s}\ri)}|H_{eff}^{\lf(l\ri),\lf(S\ri)\dagger}H_{eff}^{\lf(l\ri),\lf(S\ri)}|^{-1}.
\end{equation*}}{\baln{}{&\sum_{S\subseteq \lf\{1,\dots,K\ri\}}\frac{4^{|S|\cdot D_{l}\cdot
T_{l}}}{2e^{|S|\cdot D_{l}\cdot T_{l}}}\rho^{-T_{l}\lf(|S|\cdot
D_{l}-\sum_{s\in
S}r_{s}\ri)}\times&
\nn\\
&|H_{eff}^{\lf(l\ri),\lf(S\ri)\dagger}H_{eff}^{\lf(l\ri),\lf(S\ri)}|^{-1}.&}}
Since we consider radius of $R_{eff}$, for large values of $\rho$
the second term in \eqref{eq:LaticeEnsembleEqualityIntegralFrc} is
negligible compared to the first term \cite[Theorem
3]{YonaFederICOptimalDMT}. Hence, the remaining step is calculating
the average over all channel realizations. We divide the average
into two ranges $\mathcal{A}$ and $\ol{\mathcal{A}}$ as depicted in
Theorem \ref{Th:DirectLowerBoundDiversityOrder}. For each channel
realizations in $\ol{\mathcal{A}}$ we upper bound the error
probability by one. As shown in Theorem
\ref{Th:DirectLowerBoundDiversityOrder}, the probability of
receiving channel realizations in this range has exponent that is
lower bounded by the optimal DMT. For channel realizations in
$\mathcal{A}$ we get that $g_{rc}\lf(\udl{x}\ri)$ has bounded support, and so
we can use the Minkowski-Hlawka theorem to get the upper bound in
\eqref{eq:LaticeEnsembleEqualityIntegralFrc}. This bound coincides
with the upper bound in Theorem \ref{Th:DirectLowerBoundDiversityOrder}
which was shown to obtain the optimal DMT. this concludes the proof.

\section{Proof of Corollary \ref{Cor:SequencesLattocesAttainOptimalDMT}}\label{Append:SequencesLattocesAttainOptimalDMT}
We first consider the symmetric case $r_{1}=\dots =r_{K}=r_{\max}$.
Similarly to \cite[Corollary 3]{YonaFederICOptimalDMT} we can state
that if a sequence of $K$ lattices attains diversity order $d$ for
symmetric multiplexing gain $r_{max}=0$, it also attains diversity
order
\begin{equation}\label{Eq:AppendSequencesLattocesAttainOptimalDMT}
d \lf(1-\frac{r_{max}}{D_{\lfloor r_{max}\rfloor}T_{\lfloor
r_{max}\rfloor}}\ri)
\end{equation}
for any symmetric multiplexing gain $0<r_{max}\le D_{\lfloor
r_{max}\rfloor}T_{\lfloor r_{max}\rfloor}$. This is due to the fact
that changing $r_{max}$ merely has the effect of scaling the
effective lattice at the receiver. From Theorem
\ref{Th:LowerBoundDiversityOrderLattices} we get that there exists a
sequence of $K$ lattices (one for each user) that attains for
symmetric multiplexing gain $r_{max}=l$ the optimal DMT
$d^{\ast,\lf(FC\ri)}_{M,N}\lf(l\ri)$, where $l=0,\dots, M-1$. In
this case we also get from
\eqref{Eq:AppendSequencesLattocesAttainOptimalDMT} and Theorem
\ref{Th:LowerBoundDiversityOrderLattices} that this sequence also
attains the optimal DMT $d^{\ast,\lf(FC\ri)}_{M,N}\lf(r_{max}\ri)$,
when the symmetric multiplexing gain is in the range $l\le
r_{max}\le l+1$.

Now consider for the same sequence of lattices a multiplexing gains
tuple $\lf(r_{1},\dots, r_{K}\ri)$ with $r_{max}$ as its maximal
multiplexing gain. The performance can only improve compared to the
symmetric case since some of the multiplexing gains of the users are
smaller than $r_{max}$. Since the DMT can not be any larger than
$d^{\ast,\lf(FC\ri)}_{M,N}\lf(r_{max}\ri)$, which is already
obtained in the symmetric case, we get that
$d^{\ast,\lf(FC\ri)}_{M,N}\lf(r_{max}\ri)$ is obtained by any
multiplexing gains tuple with $r_{max}$ as its maximal value.

\end{appendices}

\bibliographystyle{IEEEtran}
\bibliography{IEEEabrv,YairRef}

\end{document}